\tikzstyle{player}=[draw, thick, circle, fill=gray!15,inner sep=2pt, minimum width=12pt]
\tikzstyle{vplayer}=[draw, thick, circle, fill=gray!15,inner sep=0.5pt, minimum width=12pt]
\tikzstyle{random}=[draw, thick, diamond, rounded corners, fill=gray!15,inner sep=2pt, minimum width=12pt]
\tikzstyle{vrandom}=[draw, thick, diamond, rounded corners, fill=gray!15,inner sep=0.5pt, minimum width=12pt]
\newsavebox{\@brx}
\newcommand{\llangle}[1][]{\savebox{\@brx}{\(\m@th{#1\langle}\)}%
  \mathopen{\copy\@brx\mkern2mu\kern-0.9\wd\@brx\usebox{\@brx}}}
\newcommand{\rrangle}[1][]{\savebox{\@brx}{\(\m@th{#1\rangle}\)}%
  \mathclose{\copy\@brx\mkern2mu\kern-0.9\wd\@brx\usebox{\@brx}}}
\newcommand{\as}[1]{\llangle 1 \rrangle_\textit{as}\left(#1\right)}
\newcommand{\set}[1]{\{#1\}}
\newcommand{\lu}{\textup{(}}
\newcommand{\ru}{\textup{)}\xspace}
\newcommand{\upbr}[1]{\lu #1\ru}
\newcommand{\at}{\mathit{Attr}}
\newcommand{\ate}{\mathit{Attr}^+}
\newcommand{\Inf}{\mathrm{Inf}}
\newcommand{\pr}[3]{\mathrm{Pr}^{#1}_{#2}\left(#3\right)}
\newcommand{\reacht}[1]{\textrm{Reach}\left(#1\right)}
\newcommand{\streett}[1]{\textrm{Streett}\left(#1\right)}
\newcommand{\objsty}[2]{\textrm{#1}\left(#2\right)}
\newcommand{\SP}{\mathrm{SP}}
\renewcommand{\RP}{\mathrm{RP}}
\newcommand{\pat}{\omega\xspace}
\newcommand{\Path}{\Omega\xspace}
\newcommand{\str}{\sigma\xspace}
\newcommand{\Str}{\Sigma\xspace}
\newcommand{\obj}{\psi\xspace}
\newcommand{\sseq}{\langle v_0,v_1,v_2,\ldots\rangle}
\newcommand{\mdp}{P\xspace}
\newcommand{\vo}{V_1\xspace}
\newcommand{\vr}{V_R\xspace}
\newcommand{\trans}{\delta\xspace}
\newcommand{\target}{T\xspace}
\newcommand{\intarget}{\expandafter\MakeLowercase\expandafter{\target}\xspace}
\newcommand{\ec}{X\xspace}
\newcommand{\inec}{\expandafter\MakeLowercase\expandafter{\ec}\xspace}
\newcommand{\scc}{C\xspace}
\newcommand{\inscc}{\expandafter\MakeLowercase\expandafter{\scc}\xspace}
\newcommand{\mecalg}{\ProcNameSty{allMECs}}
\newcommand{\allsccalg}{\ProcNameSty{SCCs}}
\newcommand{\sccalg}{\ProcNameSty{SmallestBSCC}}
\newcommand{\reach}{\ProcNameSty{GraphReach}}
\newcommand{\good}{\DataSty{goodEC}}
\newcommand{\winning}{\DataSty{winMEC}}
\newcommand{\badv}{B\xspace}
\newcommand{\rev}{\mathit{RevG}}
\newcommand{\blue}{\mathit{Bl}}
\newcommand{\OutDeg}{\mathit{Outdeg}}
\newcommand{\InDeg}{\mathit{Indeg}}
\newcommand{\remove}{\ProcNameSty{Remove}}
\newcommand{\bad}{\ProcNameSty{Bad}}
\newcommand{\construct}{\ProcNameSty{Construct}}
\newcommand{\bits}{\mathit{bits}\xspace}
\newcommand{\ds}{\mathit{D}\xspace}
\newif\iffullversion
\newcommand{\infull}[1]{\iffullversion #1\fi}
\newcommand{\inshort}[1]{\iffullversion \else #1\fi}
\declaretheorem[numberwithin=section]{theorem}
\declaretheorem[numberlike=theorem]{lemma}
\declaretheorem[numberlike=theorem]{proposition}
\declaretheorem[numberlike=theorem]{corollary}
\declaretheorem[numberlike=theorem]{definition}
\declaretheorem[numberlike=theorem]{observation}
\declaretheorem[numbered=no,name=Observation]{observation*}
\declaretheorem[numberlike=theorem]{reduction}
\declaretheorem[numberlike=theorem]{conjecture}
\declaretheorem[numberlike=theorem]{invariant}
\declaretheorem[numberlike=theorem]{remark}
\begin{document}
\title{Model and Objective Separation with Conditional Lower Bounds: Disjunction is Harder than Conjunction}
\author[1]{Krishnendu Chatterjee}
\affil[1]{IST Austria}
\author[2]{Wolfgang Dvo{\v r}{\' a}k}
\author[2]{Monika Henzinger}
\author[2]{Veronika~Loitzenbauer}
\affil[2]{University of Vienna, Faculty of Computer Science}

\maketitle

\begin{abstract}
 % abstract
Given a model of a system and an objective, the model-checking question asks 
whether the model satisfies the objective. We study polynomial-time problems in 
two classical models, graphs and Markov Decision Processes (MDPs), with respect to 
several fundamental $\omega$-regular objectives, e.g., Rabin and Streett objectives. 
For many of these problems the best-known upper bounds are quadratic or cubic, 
yet no super-linear lower bounds are known. 
In this work our contributions are two-fold: First, we present several improved 
algorithms, and second, we present the first conditional super-linear lower bounds 
based on widely believed assumptions about the complexity of CNF-SAT and combinatorial Boolean 
matrix multiplication. 
A separation result for two models with respect to an objective means a conditional 
lower bound for one model that is strictly higher than the existing upper bound 
for the other model, and similarly for two objectives with respect to a model. 
Our results establish the following separation results: 
(1) A separation of models (graphs and MDPs) for disjunctive queries of 
reachability and B\"uchi objectives.
(2) Two kinds of separations of objectives, both for graphs and MDPs, namely, 
(2a) the separation of dual objectives such as \infull{reachability/safety (for 
disjunctive questions) and }Streett/Rabin objectives, and (2b) the separation of
conjunction and disjunction of multiple objectives of the same type such as safety, B\"uchi, and coB\"uchi. 
In summary, our results establish the first model and objective separation 
results for graphs and MDPs for various classical $\omega$-regular objectives. 
Quite strikingly, we establish conditional lower bounds for the disjunction of
objectives that are strictly higher than the existing upper bounds for the 
conjunction of the same objectives.

\end{abstract}

 % intro
\section{Introduction}
The fundamental problem in formal verification is the \emph{model-checking} 
question that given a model of a system and a property asks whether the model 
satisfies the property. 
The model can be, for example, a standard graph, or a probabilistic extension of 
graphs, and the property describes the desired behaviors\infull{ (or infinite paths)}
of the model. 
For several basic model-checking questions, though polynomial-time 
algorithms are known, the best-known existing upper bounds are quadratic or 
cubic, yet no super-linear lower bounds are known.
In graph algorithmic problems unconditional super-linear lower bounds are very 
rare when polynomial-time solutions exist.
However, recently there have been many interesting results that establish 
\emph{conditional lower bounds}~\cite{AbboudW14,AbboudWY15,AbboudBW15a}.
These are lower bounds based on the assumption that 
for some well-studied problem such as 3-SUM~\cite{GajentaanO12} or All-Pairs 
Shortest Paths~\cite{WilliamsW10,RodittyZ11} no (polynomially\footnote{In particular
improvements by polylogarithmic factors are not excluded.}) faster algorithm 
exists (compared to the best known algorithm).
The lower bounds in this work assume
(A1)~there is no combinatorial\footnote{Combinatorial here means avoiding fast matrix multiplication~\cite{LeGall14}, see also the 
discussion in~\cite{HenzingerKNS15}.} algorithm with running time of
$O(n^{3-\varepsilon})$ for any $\varepsilon > 0$
to multiply two $n \times n$ Boolean matrices;
or (A2)~for all $\varepsilon>0$ there exists a $k$ such that there is no algorithm 
for the $k$-CNF-SAT problem that runs in $2^{(1-\varepsilon) \cdot n} \cdot \operatorname{poly}(m)$ time, where $n$ is the number of variables and $m$ the number of clauses. 
These two assumptions have been used to establish lower bounds for 
several well-studied problems, such as dynamic graph algorithms~\cite{AbboudW14,AbboudWY15}, 
measuring the similarity of strings~\cite{AbboudWW14,Bringmann14,
BringmannK15,BackursI15,AbboudBW15b}, context-free grammar
parsing~\cite{Lee02,AbboudBW15a}, and verifying first-order graph 
properties~\cite{PatrascuW10,Williams14}. 
No relation between conjectures (A1) and (A2) is known. 
In this work we present conditional lower bounds that are super-linear 
for fundamental model-checking problems.

\smallskip\noindent{\em Models.} 
The two most classical models in formal verification are 
\emph{standard graphs} and \emph{Markov decision processes \upbr{MDPs}}. 
MDPs are probabilistic extensions of graphs, 
and an MDP consists of a finite 
directed graph $(V,E)$ with a partition of the vertex set~$V$ into 
player~1 vertices $\vo$ and random vertices $\vr$
and a probabilistic transition function that specifies for vertices in $\vr$
a probability distribution over their successor vertices. 
Let $n = |V|$ and $m = |E|$. 
An infinite path in an MDP is obtained by the following process. 
A token is placed on an initial vertex and the token is moved indefinitely 
as follows: At a vertex $v \in \vo$ a choice is made to move the token along 
one of the outedges of~$v$, and at a vertex $v \in \vr$ the token is moved 
according to the probabilistic transition function. 
Note that if $\vr=\emptyset$, then we have a standard graph, and 
if $\vo=\emptyset$, then we have a Markov chain.
Thus MDPs generalize standard graphs and Markov chains.

\smallskip\noindent{\em Objectives.}
Objectives (or properties) are subsets of infinite paths that specify the 
desired set of paths. \inshort{Let $\target \subseteq V$ be a set of target
vertices.}
The most basic objective is \emph{reachability} where\infull{, given a set 
$\target \subseteq V$ of \emph{target} vertices,} an infinite path satisfies the 
objective if the path visits a vertex of $\target$ {\em at least once}.
The dual objective to reachability is \emph{safety} where\infull{, given a set 
$\target \subseteq V$ of \emph{target} vertices,} an infinite path satisfies the 
objective if the path does \emph{not} visit any vertex of $\target$.
The next extension of a reachability objective is the 
\emph{Büchi} objective that requires the set\infull{ of target vertices}\inshort{~$T$} 
to be reached \emph{infinitely often}. Its dual, the \emph{coBüchi} objective, 
requires the set\infull{ of target vertices}\inshort{~$T$} to be reached only \emph{finitely often}.
A natural extension of single objectives are \emph{conjunctive} and \emph{disjunctive} 
\emph{objectives}~\cite{FijalkowH12,Wolper00,ChatterjeeHP07}. For two objectives
$\obj_1$ and $\obj_2$ their conjunctive objective is equal to $\obj_1 \cap \obj_2$
and their disjunctive objective is equal to $\obj_1 \cup \obj_2$.
The conjunction of reachability (resp.\ Büchi) objectives is known as 
\emph{generalized reachability \upbr{resp.\ Büchi}}~\cite{FijalkowH12,Wolper00}.
A very central and canonical class of objectives in formal verification are 
\emph{Streett} (strong fairness) objectives and their dual \emph{Rabin} objectives~\cite{Thomas97}.
A \emph{one-pair Streett} objective for two sets of vertices $L$ and $U$ specifies
that if the Büchi objective for target set $L$ is satisfied, then also the Büchi
objective for target set $U$ has to be satisfied; in other words, a one-pair 
Streett objective is the disjunction of a coBüchi objective (with target set 
$L$) and a Büchi objective (with target set $U$).
The dual \emph{one-pair Rabin} objective for two vertex sets $L$ and $U$ 
is the conjunction of a Büchi objective with target set $L$ and a
coBüchi objective with target set $U$.
A Streett objective is the conjunction of $k$ one-pair Streett objectives
and its dual Rabin objective is the disjunction of $k$ one-pair Rabin objectives.

\smallskip\noindent{\em Algorithmic questions.}
\infull{The algorithmic question given a model and an objective is as follows: 
(a)~for standard graphs, the model-checking question asks whether there is a 
path that satisfies the objective; and 
(b)~for MDPs, the basic model-checking question asks 
}
\inshort{Given a model and an objective, the algorithmic question 
(a)~for standard graphs is whether there is a path that satisfies
the objective and (b)~for MDPs is}
whether there is a 
\infull{\emph{policy} (or a }\emph{strategy} that resolves the 
non-deterministic choices of outgoing 
edges\infull{)} for player~1 to ensure that the objective is satisfied with 
probability~1.
Observe that if we consider the model-checking question for MDPs with 
$\vr=\emptyset$, then it exactly corresponds to the model-checking question 
for standard graphs.
Given $k$ objectives, the \emph{conjunctive query} question asks whether
there is a \infull{policy}\inshort{strategy} for player~1 to ensure that \emph{all}
the objectives 
are satisfied with probability~1, and the \emph{disjunctive query}
question asks whether there is a \infull{policy}\inshort{strategy} for player~1 to ensure that \emph{one}
of the objectives is satisfied with probability~1.
Conjunctive queries coincide with conjunctive objectives on graphs and MDPs,
while disjunctive queries coincide with disjunctive objectives on graphs but not MDPs\infull{ (see 
Observations~\ref{obs:conj} and~\ref{obs:disjgraph})}.

\smallskip\noindent{\em Significance of model and objectives.} 
Standard graphs are the model for non-deterministic systems, and provide the 
framework to model hardware and software systems~\cite{SPIN,NUSMV}, as well
as many basic logic-related questions such as automata emptiness. 
MDPs model systems with both non-deterministic and probabilistic behavior; 
and provide the framework for a wide range of applications from randomized 
communication and security protocols, to stochastic distributed systems, 
to biological systems~\cite{prism,baierbook}.
In verification, reachability objectives are the most basic objectives for 
safety-critical systems.
In general all properties that arise in verification (such as liveness, 
fairness) are $\omega$-regular languages ($\omega$-regular languages extend 
regular languages to infinite words), and every $\omega$-regular language can 
be expressed as a Streett objective (or a Rabin objective). Important special
cases of Streett (resp.\ Rabin) objectives are Büchi and coBüchi objectives~\cite{ChatterjeeH14}.
Thus the algorithmic questions we consider are the most basic questions in 
formal verification.

\smallskip\noindent{\em Model separation and objective separation questions.} 
In this work our results (upper and conditional lower bounds) aim to establish 
the following two fundamental separations:

\begin{itemize}
\item {\em Model separation.} 
Consider an objective where the algorithmic question for both graphs and MDPs 
can be solved in polynomial time, and establish a conditional lower bound for 
MDPs that is strictly higher than the best-known upper bound for graphs. 
\infull{In other words, the conditional lower bound would separate the model of 
graphs and MDPs for problems (i.e., w.r.t.\ the objective) that can be solved in 
polynomial time.}

\item {\em Objective separation.} 
Consider a model (either graphs or MDPs) with two different objectives and 
show that, though the algorithmic question for both objectives can be solved in 
polynomial time, there is a conditional lower bound for one objective that 
is strictly higher than the best-known upper bound for the other objective.
\infull{In other words, the conditional lower bound would separate the two objectives 
w.r.t.\ the model though they both can be solved in polynomial time.}
\end{itemize}
To the best of our knowledge, there is no previous work that establish any 
model separation or objective separation result in the literature.

\smallskip\noindent{\em Our results.} 
In this work we present improved algorithms as well as 
the first conditional lower bounds that are super-linear for algorithmic 
problems in model checking that can be solved in polynomial time, and together
they establish both model separation and objective separation results. 
An overview of the results for the different objectives is given in Table~\ref{tab:comparison},
where our results are highlighted in boldface.
We use  
\textsc{MEC} to refer to the time to compute the maximal end-component 
decomposition of an MDP. An end-component is a\infull{ (non-trivial)}
strongly connected sub-MDP
that has no outgoing edges for random vertices. We have $\textsc{MEC} = O(\min(n^2, 
m^{1.5}))$~\cite{ChatterjeeH14}\infull{ and assume $\textsc{MEC} = \Omega(m)$ 
and $m\ge n$}.
Moreover, we use 
$k$ to denote the number of combined objectives in the case of conjunction 
or disjunction of multiple objectives and 
$b$ to denote the total number of elements in all the target sets
that specify the objectives.
We first describe Table~\ref{tab:comparison} and our main results and then 
discuss the significance of our results for model and objective separation.

\begin{table*}[!t]
\renewcommand{\arraystretch}{1.3}
\inshort{\nocaptionrule} \caption{Upper and lower bounds.
Our results are boldface and respective results are referred.}\label{tab:comparison}
\centering
\small\scriptsize
\begin{tabular}{@{}*{2}{l}*{4}{c}@{}}
\toprule
 &  & \multicolumn{2}{c}{Graphs} & \multicolumn{2}{c}{MDPs} \\
\cmidrule{3-4}\cmidrule{5-6}
& & upper bound & lower bound$^*$ & upper bound & lower bound$^*$ \\
\midrule
Reach & Conj. & \multicolumn{2}{c}{\NP-c~\cite{ChatterjeeAM13}}\phantom{abcdef} & \multicolumn{2}{c}{\PSPACE-c~\cite{FijalkowH12}} \\
\cmidrule{2-6}
& Disj.\ Obj. & \multicolumn{2}{c}{\multirow{2}{*}{$\Theta(m + b)$\phantom{abcdef}}}  
& $O(\textsc{MEC} + b)$~\cite{ChatterjeeJH03,ChatterjeeH14} \\
& Disj.\ Qu. & \multicolumn{2}{c}{} & $\mathbf{O(k \cdot m + \textsc{\bf MEC})}$ \inshort{[Th.~\ref{thm:reach_alg}]} & 
$\mathbf{k\cdot n^{2-o(1)}}$ \inshort{[Th.~\ref{thm:reach_STChard}]}, $\mathbf{m^{2-o(1)}}$ \inshort{[Th.~\ref{thm:reach_OVChard}]}\\
\midrule
Safety  & Conj. & \multicolumn{2}{c}{$\Theta(m + b)$\phantom{abcdef}} & 
\multicolumn{2}{c}{$\Theta(m + b)$}\\
\cmidrule{2-6}
& Disj.\ Obj. & \multirow{2}{*}{$O(k \cdot m)$} & 
\multirow{2}{*}{$\mathbf{k\cdot n^{2-o(1)}}$ \inshort{[Th.~\ref{thm:safety_STChard}]}}
& \multicolumn{2}{c}{\PSPACE-c~\cite{FijalkowH12}} \\
& Disj.\ Qu. & & & $O(k \cdot m)$ & $\mathbf{k\cdot n^{2-o(1)}}$ \inshort{[Th.~\ref{thm:safety_STChard}]}, $\mathbf{m^{2-o(1)}}$ \inshort{[Th.~\ref{thm:safety_OVChard}]}\\
\midrule
B{\"u}chi  & Conj. & \multicolumn{2}{c}{$\Theta(m + b)$\phantom{abcdef}} & $O(\textsc{MEC} + b)$ \\
\cmidrule{2-6}
& Disj.\ Obj. & \multicolumn{2}{c}{\multirow{2}{*}{$\Theta(m + b)$\phantom{abcdef}}} 
& $O(\textsc{MEC} + b)$~\cite{ChatterjeeJH03,ChatterjeeH14} \\
& Disj.\ Qu. & \multicolumn{2}{c}{} & $\mathbf{O(k \cdot m + \textsc{\bf MEC})}$ \inshort{[Th.~\ref{thm:reach_alg}]} & 
$\mathbf{k\cdot n^{2-o(1)}}$ \inshort{[Cor.~\ref{cor:STC}]}, $\mathbf{m^{2-o(1)}}$ \inshort{[Cor.~\ref{cor:OVC}]}\\
\midrule
coB{\"u}chi & Conj. & \multicolumn{2}{c}{$\Theta(m + b)$\phantom{abcdef}} & $O(\textsc{MEC} + b)$ \\
\cmidrule{2-6}
& Disj.\ Obj. & \multirow{2}{*}{$O(k \cdot m)$} & 
\multirow{2}{*}{$\mathbf{k\cdot n^{2-o(1)}}$ \inshort{[Cor.~\ref{cor:STC}]}}
& $\mathbf{O(k \cdot m + \textsc{\bf MEC})}$ \inshort{[Th.~\ref{thm:cobuchi_alg}]} & $\mathbf{k\cdot n^{2-o(1)}}$ \inshort{[Cor.~\ref{cor:STC}]}, $\mathbf{m^{2-o(1)}}$ \inshort{[Cor.~\ref{cor:OVC}]}\\
& Disj.\ Qu. & & & $\mathbf{O(k \cdot m + \textsc{\bf MEC})}$ \inshort{[Th.~\ref{thm:cobuchi_alg}]} & $\mathbf{k\cdot n^{2-o(1)}}$ \inshort{[Cor.~\ref{cor:STC}]}, $\mathbf{m^{2-o(1)}}$ \inshort{[Cor.~\ref{cor:OVC}]}\\
\cmidrule{2-6}
Singleton & Disj.\ Obj.& \multicolumn{2}{c}{\multirow{2}{*}{$\mathbf{\Theta(m)}$ \inshort{[Th.~\ref{thm:singleton_alg}]}\phantom{abcdef}}} & & $\mathbf{m^{2-o(1)}}$ \inshort{[Cor.~\ref{cor:OVC}]}\\
& Disj.\ Qu. & & & & $\mathbf{m^{2-o(1)}}$ \inshort{[Cor.~\ref{cor:OVC}]}\\
\midrule
Streett & & \multicolumn{2}{l}{$O(\min(n^2, m \sqrt{m \log n}, k m) + b \log n)$
\cite{HenzingerT96,ChatterjeeHL15}}
& \multicolumn{2}{l}{$\mathbf{O(\textbf{min}(n^2, m \sqrt{m\,\textbf{log}\,n}) + b\,\textbf{log}\,n)}$ \inshort{[Th.~\ref{thm:streett_alg}]}}\\
\midrule
Rabin & & $O(k \cdot m)$ & 
$\mathbf{k\cdot n^{2-o(1)}}$ \inshort{[Cor.~\ref{cor:STC}]}
& $O(k \cdot \textsc{MEC})$ & 
$\mathbf{k\cdot n^{2-o(1)}}$ \inshort{[Cor.~\ref{cor:STC}]}, $\mathbf{m^{2-o(1)}}$ \inshort{[Cor.~\ref{cor:OVC}]}\\
\bottomrule
\multicolumn{6}{l}{$^*$ $\mathbf{k\cdot n^{2-o(1)}}$ lower bounds  are based on the \inshort{combinatorial Boolean Matrix Multiplication}\infull{BMM} Conjecture / Strong Triangle Conjecture~(A1)
} \\ 
\multicolumn{6}{l}{\phantom{$^*$} $\mathbf{m^{2-o(1)}}$ lower bounds are based on the Orthogonal Vectors Conjecture
/ Strong ETH~(A2)
}
\end{tabular}
\end{table*}

\begin{enumerate}
\item
\emph{Conjunctive and Disjunctive Reachability \upbr{and Büchi} Problems.} 
    First, we consider conjunctive and disjunctive reachability objectives and 
    queries. Recall that conjunctive objectives and queries in general
    and disjunctive objectives and queries on graphs coincide. For reachability
    further the disjunctive objective can be reduced to a single objective\infull{ (see 
    Observation~\ref{obs:disjobjreach})}.
    The following results are known: the algorithmic question for conjunctive 
    reachability objectives is \NP-complete for 
    graphs~\cite{ChatterjeeAM13}, and \PSPACE-complete for MDPs~\cite{FijalkowH12};
    and the disjunctive objective 
    can be solved in linear time for graphs
    and in $O(\min(n^2, m^{1.5}) + b)$ time in MDPs~\cite{ChatterjeeJH03,ChatterjeeH14}.
    We present three results for disjunctive reachability queries in MDPs: 
    (i)~We present an $O(k m + \textsc{MEC})$-time algorithm\infull{\footnote{This implies an $O(\textsc{MEC} + b)$-time algorithm for disjunctive objective but does not 
    improve the running time for this case.}}.
    (ii)~We show that under assumption (A1) there does not exist a combinatorial
    $O(k \cdot n^{2-\varepsilon})$ algorithm for any $\varepsilon > 0$.
    (iii)~We show that for $k = \Omega(m)$ 
    there does not exist an $O(m^{2-\varepsilon})$
     time algorithm for any $\varepsilon > 0$ under assumption (A2). 
     Hence we establish an upper bound and matching conditional lower bounds
     based on (A1) and (A2).

		Disjunctive Büchi objectives (on graphs and MDPs) 
		can be reduced in linear time to disjunctive
		reachability objectives and vice versa, therefore 
		the same results apply to disjunctive Büchi problems\infull{ (see Observation~\ref{obs:Reachability_Buchi})}.
		The basic algorithm for conjunctive Büchi objectives runs in time $O(m+b)$
		on graphs and in time $O(\textsc{MEC}+b)$ on MDPs.
 
\item
\emph{Conjunctive and Disjunctive Safety Problems.}
    Second, we consider conjunctive and disjunctive safety objectives and queries.
    The following results are known: the conjunctive problem can be reduced
    to a single objective and can be solved in linear 
    time, both in graphs and MDPs (see e.g.~\cite{ChatterjeeDH10}); disjunctive queries for MDPs can be solved in $O(k\cdot m)$ time; and disjunctive objectives for MDPs
    are \PSPACE-complete~\cite{FijalkowH12}.
    We present two results: 
    (i)~We show that for the disjunctive problem in graphs 
    under assumption (A1) there does not exist a combinatorial
    $O(k \cdot n^{2-\varepsilon})$ algorithm for any $\varepsilon > 0$.
    This implies the same conditional lower bound for disjunctive queries
    and objectives in MDPs and matches the upper bound for graphs 
    and disjunctive queries in MDPs.
    (ii)~We present, for $k = \Omega(m)$,
    an $\Omega(m^{2-o(1)})$ lower bound for disjunctive 
    objectives and queries in MDPs under assumption (A2). 
    Again this lower bound matches the upper bound of $O(k \cdot m)$
    for disjunctive queries.

\item
\emph{Conjunctive and Disjunctive coBüchi Problems.}
	For coBüchi, a conjunctive objective can be reduced to a single objective.
	For single objectives the basic algorithm runs in time $O(\textsc{MEC}+b)$ on MDPs
	and in time $O(m+b)$ on graphs.
	\infull{Since the conditional lower bounds for disjunctive safety objectives and 
	queries actually already apply for the non-emptiness of the winning set, 
	the reductions also hold for coBüchi (see 
	Observation~\ref{obs:nonempty_safety_coBuchi}).}
	\inshort{Our conditional lower bounds for disjunctive safety objectives and 
	queries also hold for coBüchi objectives.}
	Here the running times and the conditional lower bounds are matching for both
	disjunctive queries and disjunctive objectives.
	For the conditional lower bound based on assumption (A2)
	only \emph{singleton coBüchi} objectives, i.e., coBüchi objectives with target 
	sets of cardinality one, are needed, 
	therefore the bound already holds for this case. 
	We additionally present two results:
	(i)~We present $O(k m + \textsc{MEC})$-time algorithms for 
    disjunctive queries and objectives in MDPs.
   (ii)~We present a linear time algorithm for 
   disjunctive singleton coBüchi objectives in graphs.
	
\item
\emph{Rabin and Streett objectives.}
    Finally, we consider Rabin and Streett objectives.
    The basic algorithm for Rabin objectives runs in time $O(k \cdot m)$
    on graphs and in time $O(k \cdot \textsc{MEC})$ on MDPs.
    As disjunctive coBüchi objectives are a special case of Rabin objectives,
    the conditional lower bounds for coBüchi objectives 
    of $\Omega(k \cdot n^{2-o(1)})$ on graphs and 
    additionally $\Omega(m^{2-o(1)})$ on MDPs extend to Rabin objectives. 
    The conditional lower bound for graphs is matching (for combinatorial algorithms).
    Furthermore, we extend the results
    of~\cite{HenzingerT96,ChatterjeeHL15} from graphs 
    to MDPs to show that MDPs with Streett objectives can be solved in 
    $O(\min(m \sqrt{m \log n}, n^2) + b \log n)$ time. 
\end{enumerate}

\begin{table}[!t]
\renewcommand{\arraystretch}{1.3}
\inshort{\nocaptionrule} \caption{Model Separation.}\label{tab:model}
\centering
\small\scriptsize
\begin{tabular}{@{}lll@{}}
\toprule
 & upper bound Graphs & lower bounds MDPs\\
\midrule
Reach/Büchi Disj.\ Qu. & $m + nk$ & $\mathbf{k\cdot n^{2-o(1)},m^{2-o(1)}}$\\
coBüchi \infull{Singleton }\inshort{Singl.\ }Disj.\infull{\ Obj./Qu.}& $\mathbf{m}$
& $\mathbf{m^{2-o(1)}}$\\
\bottomrule
\end{tabular}
\end{table}

\begin{table}[!t]
\renewcommand{\arraystretch}{1.3}
\inshort{\nocaptionrule} \caption{Dual Objective Separation for Graphs.}\label{tab:obj}
\centering
\small\scriptsize
\begin{tabular}{@{}llll@{}}
\toprule
& upper bound & lower bound & \\
\midrule
Reach Disj. & $m + nk$ & $\mathbf{k\cdot n^{2-o(1)}}$ & Safety Disj.\\
B{\"u}chi Disj. & $m + nk$ & $\mathbf{k\cdot n^{2-o(1)}}$ & coB{\"u}chi Disj.\\
\midrule
B{\"u}chi Conj. & $m + nk$ & $\mathbf{k\cdot n^{2-o(1)}}$ & coB{\"u}chi Disj.\\
Streett & $n^2 + nk \log n$
& $\mathbf{k\cdot n^{2-o(1)}}$ & Rabin\\
\bottomrule
\end{tabular}
\end{table}

\begin{table}[!t]
\renewcommand{\arraystretch}{1.3}
\inshort{\nocaptionrule} \caption{Dual Objective Separation for MDPs.}\label{tab:objMDP}
\centering
\small\scriptsize
\inshort{\setlength\tabcolsep{4pt}}
\begin{tabular}{@{}llll@{}}
\toprule
& upper bound & lower bound & \\
\midrule
B{\"u}chi Disj.\ \infull{Obj.}\inshort{O.} & $\min(n^2, m^{1.5}) + nk$ & 
$\mathbf{k\cdot n^{2-o(1)},m^{2-o(1)}}$ & \infull{coB{\"u}chi }\inshort{coB.\ }Disj.\ \infull{Obj.}\inshort{O.} \\
\midrule
B{\"u}chi Conj. & $\min(n^2, m^{1.5}) + nk$ & 
$\mathbf{k\cdot n^{2-o(1)},m^{2-o(1)}}$ & \infull{coB{\"u}chi }\inshort{coB.\ }Disj.\ \infull{Obj.}\inshort{O.} \\ 
Streett & $\mathbf{\inshort{\widetilde{O}(}\textbf{min}(n^2, \inshort{m^{1.5}}\infull{m \sqrt{m\,\textbf{log}\,n}}) \infull{ + nk\,\textbf{log}\,n}\inshort{)}}$ & $\mathbf{k\cdot n^{2-o(1)},m^{2-o(1)}}$ & Rabin\\
\bottomrule
\end{tabular}
\end{table}

\smallskip\noindent{\em Significance of our results.} 
We now describe the model and objective separation results that are obtained 
from the results we established.
\begin{enumerate}
\item
\emph{Model Separation.} 
    Table~\ref{tab:model} shows our results that 
    separate graphs and MDPs regarding their complexity for certain 
    objectives and queries under assumptions (A1) and (A2).
    First, 
    for reachability and Büchi objectives disjunction in graphs is in linear time 
    while in MDPs we have $\Omega(kn^{2-o(1)})$ and $\Omega(m^{2-o(1)})$ conditional 
    lower bounds for disjunctive queries. 
    Second, \infull{for coBüchi we have a separation when restricted to the 
    class where each target set is a singleton. 
    For these objectives disjunction in graphs}\inshort{in graphs 
    the disjunction of coBüchi objectives where each target set is a singleton} 
    is in linear time while 
    we establish an $\Omega(m^{2-o(1)})$ conditional lower bound for MDPs for both 
    disjunctive objectives and queries.
    
\item
\emph{Objective Separation.}
    \infull{Further we identify complexity separations between different objectives. 
    Here w}\inshort{W}e consider two aspects, separations between dual objectives like Büchi and 
    coBüchi (Tables~\ref{tab:obj} and~\ref{tab:objMDP}), and separations between 
    conjunction and disjunction of objectives (Table~\ref{tab:condis}).
    We compare dual objectives in two ways: (i) we show that single objectives
    that are dual to each other behave differently when we consider 
    disjunction for each of them and (ii) we compare conjunctive objectives
    and their dual disjunctive objectives. For (ii) we have that 
    conjunctive Büchi objectives are dual to disjunctive coBüchi objectives,
    and Streett objectives\infull{, the conjunction of 1-pair Streett objectives,} are 
    dual to Rabin objectives\infull{, the disjunction of 1-pair Rabin objectives}.
    
    \begin{enumerate}
\item
\emph{Separating Dual Objectives in Graphs.}
		\infull{First, we consider reachability and safety objectives. }In graphs we have 
		that for reachability objectives disjunction
		is in linear time while for disjunctive safety objectives we establish an 
		$\Omega(k n^{2-o(1)})$ 
		lower bound under assumption (A1). \inshort{Analogous results hold for Büchi and coBüchi objectives.}\infull{Analogously, we have 
		disjunctive Büchi objectives are in linear time on graphs
		while we establish an $\Omega(k n^{2-o(1)})$ conditional lower bound for disjunction of coBüchi objectives.}
		Further, conjunctive Büchi objectives are in linear time and thus can 
		be separated from their dual objective, the disjunctive coBüchi objectives.
		Finally, for Streett objectives in graphs with $b = O(n^2 / \log n)$ 
		we have an $O(n^2)$ algorithm while we establish an 
		$\Omega(n^{3-o(1)})$ lower bound for Rabin objectives when $k = \Theta(n)$.
		
\item
\emph{Separating Dual Objectives in MDPs.}	
		\infull{First, consider Büchi and coBüchi objectives in MDPs.}
		On MDPs disjunctive Büchi objectives are in time $O(\textsc{MEC} + b)$, which is in 
		$O(\min(n^2, m^{1.5}) + n k)$, while for coBüchi objectives
		we show $\Omega(k n^{2-o(1)})$ and $\Omega(m^{2-o(1)})$ conditional lower bounds for 
		both disjunctive queries and disjunctive objectives. This separates the 
		two objectives for both sparse and dense graphs.
		Further conjunctive Büchi objectives can be solved in $O(\textsc{MEC} + b)$
		time and thus there is also a separation between disjunctive coBüchi
		objectives and their dual.
		Finally, for Streett objectives in MDPs with $b = O(\min(n^2, m^{1.5}) / \log n)$ 
		we show both an $O(n^2)$-time and an $O(m^{1.5})$-time algorithm while 
		we establish $\Omega(n^{3-o(1)})$ and 
		$\Omega(m^{2-o(1)})$ conditional lower bounds for Rabin objectives
		when $k = \Theta(n)$.
		
\item
\emph{Separating Conjunction and Disjunction in Graphs and MDPs.}
		Except for reachability, i.e., in particular for all considered 
		polynomial-time problems, we observe that the disjunction of objectives is
		computationally harder than the conjunction of these objectives (under assumptions (A1), (A2)).
		First, for safety objectives conjunction is in linear time even for MDPs
		while for disjunctive queries (disjunctive objectives are $\PSPACE$-complete)
		we present $\Omega(kn^{2-o(1)})$ and $\Omega(m^{2-o(1)})$ conditional lower bounds, where the 
		first bound also holds for graphs.
		Second, for Büchi and coBüchi objectives conjunction is in $O(\textsc{MEC} + b)$ on MDPs
		(and $O(m+b)$ on graphs) while 
		we show $\Omega(kn^{2-o(1)})$ and $\Omega(m^{2-o(1)})$ conditional lower bounds for 
		disjunctive coBüchi objectives and disjunctive Büchi / coBüchi queries on MDPs.
		\infull{The $\Omega(m^{2-o(1)})$ bound even holds for the disjunction of singleton coBüchi
		objectives. }Further, for coBüchi objectives our $\Omega(kn^{2-o(1)})$ bound also holds on graphs,
		which separates conjunction and disjunction also in this setting.
		Third, \infull{we can also see the results for Streett and Rabin objectives 
		as a separation between conjunction and disjunction. Recall that Streett 
		objectives are the conjunction of one-pair Streett objectives and 
		Rabin objectives are the disjunction of one-pair Rabin objectives.
		Further, both Büchi and coBüchi objectives are special cases of each of 
		one-pair Streett and one-pair Rabin objectives. In particular the following
		separations are easy observations or corollaries of our results:
		For the disjunction of one-pair Streett objectives the same conditional 
		lower bounds (and the same upper bound, see Observation~\ref{obs:disjStreett}) as 
		for the disjunction of coBüchi objectives apply.
		Thus the disjunction of one-pair Streett objectives is harder than the 
		conjunction of one-pair Streett objectives (under assumptions (A1)/(A2)).
		The conjunction of one-pair Rabin objectives can be solved in the same time 
		as conjunctive Büchi objectives. Thus also the disjunction of one-pair 
		Rabin objectives is harder than their conjunction.}
		\inshort{corollaries of our results are that
		for each of one-pair Streett objectives and one-pair Rabin objectives their
		disjunction is harder than their conjunction.}
	\end{enumerate}
\end{enumerate}

\begin{table}[!t]
\renewcommand{\arraystretch}{1.3}
\inshort{\nocaptionrule} \caption{Separating Conjunction and Disjunction.}\label{tab:condis}
\centering
\small\scriptsize
\inshort{\setlength\tabcolsep{5pt}}
\begin{tabular}{@{}lllll@{}}
\toprule
& & Conjunction & Disjunction \\
\midrule 
Safety & Graphs & $m + nk$ & $\mathbf{k\cdot n^{2-o(1)}}$ \\
& MDP Qu. & $m + nk$ & $\mathbf{k\cdot n^{2-o(1)},m^{2-o(1)}}$\\
\midrule
B{\"u}chi & MDPs Qu. & $\min(n^2, m^{1.5}) + nk$ & 
$\mathbf{k\cdot n^{2-o(1)},m^{2-o(1)}}$\\
\midrule
coB{\"u}chi & Graphs & $m + nk$ & $\mathbf{k\cdot n^{2-o(1)}}$ \\
& MDPs\infull{ Obj./Qu.} & $\min(n^2, m^{1.5}) + nk$ & $\mathbf{k\cdot n^{2-o(1)},m^{2-o(1)}}$\\
\midrule
1-pair Streett & Graphs & $n^2 + nk \log n$
& $\mathbf{k\cdot n^{2-o(1)}}$ \\
& MDPs\infull{ Obj./Qu.} & $\mathbf{\inshort{\widetilde{O}(}\textbf{min}(n^2, \inshort{m^{1.5}}\infull{m \sqrt{m\,\textbf{log}\,n}}) \infull{ + nk\,\textbf{log}\,n}\inshort{)}}$ & $\mathbf{k\cdot n^{2-o(1)},m^{2-o(1)}}$\\
\midrule
1-pair Rabin & Graphs & $m + nk$ & $\mathbf{k\cdot n^{2-o(1)}}$ \\
& MDPs\infull{ Obj./Qu.} & $\min(n^2, m^{1.5}) + nk$ & $\mathbf{k\cdot n^{2-o(1)},m^{2-o(1)}}$ \\
\bottomrule
\end{tabular}
\end{table}

\smallskip\noindent{\em Remark about Streett and Rabin objective separation.} 
One remarkable aspect of our objective separation result is that we achieve it
for Rabin and Streett objectives (both in graphs and MDPs), which are dual. 
In more general models such as games on graphs, Rabin objectives are 
\NP-complete and Streett objectives are \coNP-complete~\cite{EmersonJ99}.
In graphs and MDPs, both Rabin and Streett objectives can be solved in 
polynomial time.  
Since Rabin and Streett objectives are dual, and they belong to the 
complementary complexity classes (either both in P, or one is \NP-complete, 
other \coNP-complete), they were considered to be equivalent for algorithmic 
purposes for graphs and MDPs. 
Quite surprisingly we show that under some widely believed assumptions, both 
for MDPs and graphs, Rabin objectives are algorithmically harder than Streett objectives.

\smallskip\noindent{\em Technical contributions.}

\emph{Algorithms.}
	(1)~We show that given the MEC-decomposition of an MDP, the \emph{almost-sure reachability problem}
	can be solved in linear time on the MDP where each MEC is contracted to 
	a player~1 vertex. This yields to the improved algorithms 
	for disjunctive queries of reachability and Büchi objectives on MDPs.
	(2)~For \emph{MDPs} with \emph{disjunctive coBüchi} objectives and disjunctive queries of 
	coBüchi objectives we use the MEC-decomposition in a different way;
	namely, we show that it is sufficient to do a linear-time computation in
	each MEC per coBüchi objective to solve both disjunctive questions.
	(3)~Further we show that for \emph{graphs} with a 
	\emph{disjunctive coBüchi} objective for which the target set of each of the
	single coBüchi objectives has \emph{cardinality one}
	the problem can be solved with a breadth-first search like algorithm in linear time.
	(4)~Finally, we provide faster algorithms for \emph{MDPs with Streett objectives}.
	The straight-forward algorithm repeatedly
	computes MEC-decompositions in a black-box manner; we show that one can 
	open this black-box and combine the current
	best algorithms for MEC-decomposition~\cite{ChatterjeeH14} and graphs with 
	Streett objectives~\cite{HenzingerT96,ChatterjeeHL15}
	to achieve almost the same running time for MDPs with Streett objectives
	as for graphs.

\emph{Conditional Lower Bounds.}
(a)~Conjecture (A1)
is equivalent to the conjecture that there is no combinatorial $O(n^{3-\varepsilon})$
time algorithm to detect whether an $n$-vertex graph contains a triangle~\cite{WilliamsW10}.
	We show that triangle-detection in graphs can be linear-time reduced 
	to \emph{disjunctive queries of almost-sure reachability} in MDPs and thus 
	that the latter is hard assuming (A1).
(b)~For the hardness under (A2) we consider the intermediate problem 
	Orthogonal Vectors, which is known to be hard under (A2)~\cite{Williams05},
	and linear-time reduce it to \emph{disjunctive queries of almost-sure 
	reachability} in MDPs.
(c)~For \emph{disjunctive safety problems} we give a linear-time reduction from  
	triangle-detection that only requires player~1 vertices and thus
	hardness also holds in graphs when assuming (A1).
(d)~However, the reduction we give from Orthogonal Vectors
	to \emph{disjunctive safety problems} requires random vertices and thus hardness
	under (A2) only holds on MDPs.
(e)~\infull{Based on the hardness results for {almost-sure reachability} and
	safety, w}\inshort{W}e then exploit reductions between the different types of
	objectives to obtain the hardness results for Büchi, coBüchi,
	and Rabin.

\smallskip\noindent{\em Outline.}
In Section~\ref{sec:prelim} we provide formal definitions, describe the connections
between different objectives, and state the 
conjectures on which the conditional lower bounds are based.
Section~\ref{sec:reach} is about disjunctive reachability queries on MDPs; we 
first present the improved algorithm and then the conditional lower bounds.
In Section~\ref{sec:safety} we describe the conditional lower bounds for disjunctive
safety problems on graphs and MDPs. In Section~\ref{sec:streett} we provide
the improved algorithms for MDPs with Streett objectives. In Section~\ref{sec:rabin}
we show how the conditional lower bounds extend from reachability and safety
to Büchi, coBüchi, and Rabin and present algorithms for MDPs with Rabin objectives and 
for MDPs with disjunctive objectives and queries of Büchi and coBüchi objectives.
In Section~\ref{sec:singleton} we describe the linear time algorithm for disjunctive
coBüchi objectives on graphs for the special case when all target sets are singletons.
We conclude in Section~\ref{sec:conclusion}.
 % prelim
\section{Preliminaries}\label{sec:prelim}
\smallskip\noindent{\em Markov Decision Processes \upbr{MDPs} and Graphs.} 
An MDP~$\mdp = ((V, E), \allowbreak(\vo, \vr), \allowbreak\trans)$ consists of a finite directed 
graph with vertices $V$ and edges $E$ with a partition of the vertices into 
\emph{player~1 vertices} $\vo$ and \emph{random vertices} $\vr$ and a 
probabilistic transition function~$\trans$. We call an edge $(u,v)$ with $u \in \vo$
\emph{player~1 edge} and an edge $(v, w)$ with $v \in \vr$ a \emph{random edge}.
The probabilistic transition function is a function from $\vr$ to $\mathcal{D}(V)$, 
where $\mathcal{D}(V)$ is the set of probability distributions over $V$ and 
a random edge $(v, w) \in E$ if and only if $\trans(v)[w] > 0$.\infull{ For the purpose
of this paper we assume for simplicity that, for each random vertex $v$, 
$\trans(v)[w]$ is the uniform distribution over all $w \in V$ with $(v, w) \in E$; this is w.l.o.g.\ as we are only ask whether a probability is zero or one 
(qualitative analysis) or zero or larger than zero.} Graphs are a special case
of MDPs with $\vr = \emptyset$.
\infull{

\smallskip\noindent{\em Sub-MDPs and Maximal End-Components.}
A sub-MDP of an MDP $\mdp$ induced by a vertex set $\ec\infull{ \subseteq V}$ 
is defined as $\mdp[\ec] = ((\ec, E \cap (\ec \times \ec), (\vo \cap \ec, 
\vr \cap \ec), \trans')$, where $\trans': \ec \rightarrow \mathcal{D}(\ec)$ 
is for each $v \in \vr \cap \ec$ the uniform distribution over all $w \in \ec$ 
with $(v, w) \in E$.
An \emph{end-component} \upbr{EC} of an MDP $\mdp$ is a set of 
vertices $\ec\infull{ \subseteq V}$ such that \upbr{a} the induced sub-MDP 
$\mdp[\ec]$
is strongly connected, \upbr{b} all outgoing edges in $E$ of vertices in 
$\ec \cap \vr$ are contained in $\mdp[\ec]$, and \upbr{c} $\mdp[\ec]$ contains at least one edge.
An end-component is a \emph{maximal end-component} \upbr{MEC} if it is maximal 
under set inclusion.
An end-component is \emph{trivial} if it consists of a single vertex \upbr{with
a self-loop}, otherwise it is \emph{non-trivial}.
The \emph{MEC-decomposition} of an MDP consists of all MECs of the MDP and the 
set of vertices that do not belong to any MEC.
}

\smallskip\noindent{\em Plays and Strategies.} 
A \emph{play} or infinite path in $\mdp$ is an infinite sequence $\pat = \langle v_0, 
v_1, v_2, \ldots \rangle$ such that $(v_i, v_{i+1}) \in E$ for all $i \in \mathbb{N}$;
we denote by $\Path$ the set of all plays.
A player~1 \emph{strategy}~$\str: V^* \cdot \vo \rightarrow V$ is a function that 
assigns to every finite prefix~$\pat \in V^* \cdot \vo$ of a play that ends in a 
player~1 vertex~$v$ a successor vertex $\str(\pat) \in V$ such that 
there exists an edge $(v, \str(\pat)) \in E$; we denote by $\Str$ the 
set of all player~1 strategies. A strategy is \emph{memoryless} if we have 
$\str(\pat) = \str(\pat')$ for any $\pat, \pat' \in V^* \cdot \vo$ that 
end in the same vertex $v \in \vo$.

\smallskip\noindent{\em Objectives and Almost-Sure Winning Sets.} 
An \emph{objective} $\obj$
is a subset of $\Path$ said to be winning for player~1. We say that a 
play $\pat \in \Path$ \emph{satisfies the objective} if $\pat \in \obj$. 
For any measurable set of plays $A \subseteq \Path$
we denote by $\pr{\str}{v}{A}$ the probability that a play starting at $v \in V$
belongs to $A$ when player~1 plays strategy~$\str$. 
A strategy~$\str$ is \emph{almost-sure\infull{ \upbr{a.s.}} winning} from a vertex $v \in V$
for an objective $\obj$ if $\pr{\str}{v}{\obj} = 1$. In graphs the existence 
of an almost-sure winning strategy corresponds to the existence of a play in the objective. The \emph{almost-sure 
winning set} $\as{\mdp, \obj}$
of player~1 is the set of vertices for which player~1 has an
almost-sure winning strategy. \infull{Computing the almost-sure winning set for some 
objective is also called \emph{qualitative analysis} of MDPs. Below we define 
the objectives used in this work. }Let $\Inf(\pat)$ for $\pat \in \Path$ denote
the set of vertices that occurs infinitely often in $\pat$.
\begin{description}
    \item[Reachability] For a vertex 
    set~$\target \subseteq V$ the reachability 
	objective is the set of infinite paths that contain a vertex of $\target$, i.e., 
	$\reacht{\target} = \set{\sseq \in \Path \mid \exists j \ge 0: v_j \in \target}$.
	
	\item[Safety] For a vertex set~$\target \subseteq V$ the safety 
	objective is the set of infinite paths that \emph{do not contain} any vertex 
	of $\target$, i.e., 
	$\objsty{Safety}{\target} = \set{\sseq \in \Path \mid \forall j \ge 0: v_j \notin \target}$.
	
    \item[B{\"u}chi] For a vertex set~$\target \subseteq V$ the B{\"u}chi 
	objective is the set of infinite paths in which a vertex of $\target$ occurs
	\emph{infinitely often}, i.e., 
	$\objsty{B{\"u}chi}{\target} = \set{\pat \in \Path \mid \Inf(\pat) \cap \target \ne \emptyset}$.
	
    \item[coB{\"u}chi] For a 
    vertex set~$\target \subseteq V$ the coB{\"u}chi 
	objective is the set of infinite paths for which \emph{no} vertex of $\target$ 
	occurs \emph{infinitely often}, i.e., 
	$\objsty{coB{\"u}chi}{\target} = \set{\pat \in \Path \mid \Inf(\pat) \cap \target = \emptyset}$.
	
    \item[Streett] Given a set $\SP$ of
	$k$ pairs $(L_i, U_i)$ of vertex sets $L_i, U_i
	\subseteq V$ with $1 \le i \le k$, the Streett objective is the set of infinite paths 
	for which it holds \emph{for each} $1 \le i \le k$ that whenever a vertex of $L_i$ 
	occurs infinitely often, then a vertex of $U_i$ occurs infinitely often, i.e., 
	$\streett{\SP} = \set{\pat \in \Path \mid
	L_i \cap \Inf(\pat) = \emptyset \text{ or } U_i \cap \Inf(\pat) \ne \emptyset 
	\text{ for all } 1 \le i \le k}$.
	
    \item[Rabin] Given a set $\RP$ of $k$ pairs $(L_i, U_i)$ of vertex sets $L_i, U_i 
	\subseteq V$ with $1 \le i \le k$, the Rabin objective is the set of infinite paths 
	for which there \emph{exists} an $i$, $1 \le i \le k$, such that a vertex of $L_i$ 
	occurs infinitely often but no vertex of $U_i$ occurs infinitely often, i.e., 
	$\objsty{Rabin}{\RP} = \set{\pat \in \Path \mid 
	L_i \cap \Inf(\pat) \ne \emptyset \text{ and } U_i \cap \Inf(\pat) = \emptyset 
	\text{ for some } 1 \le i \le k}$.
\end{description}

Given $c$ objectives $\obj_1, \ldots, \obj_c$, the \emph{conjunctive objective} 
$\obj = \obj_1 \cap \ldots \cap \obj_c$ is given by the intersection of the $c$
objectives, and the \emph{disjunctive objective} $\obj = 
\obj_1 \cup \ldots \cup \obj_c = \bigvee_{i = 1}^c \obj_i$ 
is given by the union of the $c$
objectives. For the \emph{conjunctive query} of $c$ objectives 
$\obj_1, \ldots, \obj_c$ we define the (almost-sure) winning set to be 
the set of vertices that have one strategy that is (almost-sure) winning for \emph{each of}
the objectives $\obj_1, \ldots, \obj_c$.
Analogously, a vertex is in 
the (almost-sure) winning set $\bigvee_{i=1}^c \as{\mdp, \obj_i}$ for the \emph{disjunctive query} of the $c$ 
objectives if it is in a (almost-sure) winning set for \emph{at least 
one} of the $c$ objectives (i.e. we take the union of the winning sets).

\infull{
Below we present several observations that interlink different types of objectives.

\begin{observation}\label{obs:conj}
	The almost-sure winning set for a \emph{conjunctive objective} is the same as for 
	the corresponding \emph{conjunctive query}.
\end{observation}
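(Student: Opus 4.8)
The plan is to reduce the claim to the elementary measure-theoretic fact that a finite intersection of probability-one events again has probability one, while conversely each of the events individually has probability one. The point to keep in mind is that both sides of the asserted equality range over the \emph{same} existential quantifier on player~1 strategies --- the conjunctive query explicitly asks for \emph{one} strategy that is almost-sure winning for each $\obj_i$ --- so it suffices to prove the pointwise statement: for every vertex $v$ and every strategy $\str$, $\pr{\str}{v}{\obj_1 \cap \ldots \cap \obj_c} = 1$ if and only if $\pr{\str}{v}{\obj_i} = 1$ for all $1 \le i \le c$.

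First I would note that each $\obj_i$ is $\omega$-regular, hence a Borel (in particular measurable) subset of $\Path$, so that finite intersections of the $\obj_i$ are measurable and all probabilities above are well defined; this is the only place where anything has to be checked. For the \emph{if} direction, fix $v$ and $\str$ with $\pr{\str}{v}{\obj_i} = 1$ for every $i$; then $\Path \setminus (\obj_1 \cap \ldots \cap \obj_c) = \bigcup_{i=1}^c (\Path \setminus \obj_i)$ is a finite union of sets of measure zero, so by finite subadditivity it has measure zero, i.e.\ $\pr{\str}{v}{\obj_1 \cap \ldots \cap \obj_c} = 1$. For the \emph{only if} direction, since $\obj_1 \cap \ldots \cap \obj_c \subseteq \obj_i$ for each $i$, monotonicity of the measure gives $\pr{\str}{v}{\obj_i} \ge \pr{\str}{v}{\obj_1 \cap \ldots \cap \obj_c} = 1$, hence equality.

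Finally I would assemble the pieces: $v \in \as{\mdp, \obj_1 \cap \ldots \cap \obj_c}$ iff some strategy $\str$ satisfies $\pr{\str}{v}{\obj_1 \cap \ldots \cap \obj_c} = 1$, which by the pointwise statement holds iff some $\str$ satisfies $\pr{\str}{v}{\obj_i} = 1$ simultaneously for all $i$, i.e.\ iff $v$ lies in the almost-sure winning set of the conjunctive query. I do not expect a genuine obstacle here: the proof is essentially a one-line probability argument, and the only subtlety worth flagging in the write-up is that the conjunctive query demands a single common witnessing strategy (rather than a separate one per objective), which is precisely why the equivalence is first established at the level of a fixed $\str$ before quantifying over strategies.
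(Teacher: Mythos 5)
Your proposal is correct and follows essentially the same route as the paper, which proves exactly the pointwise statement that for any fixed $v$ and $\str$ one has $\pr{\str}{v}{\obj_1 \cap \obj_2} = 1$ iff each $\pr{\str}{v}{\obj_i} = 1$, and then quantifies over strategies. Your write-up merely makes the finite-subadditivity and monotonicity steps explicit.
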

\begin{proof}
	We have for any $v \in V$ and $\str \in \Str$ and any two objectives $\obj_1$, $\obj_2$ that $\pr{\str}{v}{\obj_1 \land \obj_2} = 1$ 
	iff $\pr{\str}{v}{\obj_1} = 1$ and $\pr{\str}{v}{\obj_2} = 1$.
\end{proof}

\begin{observation}\label{obs:disjgraph}
	On \emph{graphs} \upbr{i.e. $\vr = \emptyset$}
	the winning set for a \emph{disjunctive objective} is the same as for the 
	corresponding \emph{disjunctive query}.
\end{observation}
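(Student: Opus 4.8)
The plan is to prove the equality of the two winning sets by two inclusions. Write $\obj = \obj_1 \cup \cdots \cup \obj_c$ for the disjunctive objective and recall that the winning set for the disjunctive query is $\bigvee_{i=1}^c \as{\mdp,\obj_i} = \bigcup_{i=1}^c \as{\mdp,\obj_i}$. The inclusion $\bigcup_{i=1}^c \as{\mdp,\obj_i} \subseteq \as{\mdp,\obj}$ is the easy direction and in fact holds for every MDP, not just for graphs: if $v \in \as{\mdp,\obj_i}$, take an almost-sure winning strategy $\str$ for $\obj_i$ from $v$; since $\obj_i \subseteq \obj$ we get $\pr{\str}{v}{\obj} \ge \pr{\str}{v}{\obj_i} = 1$, hence $v \in \as{\mdp,\obj}$. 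Only this direction survives for general MDPs.

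For the converse $\as{\mdp,\obj} \subseteq \bigcup_{i=1}^c \as{\mdp,\obj_i}$ I would invoke the characterization recalled in the Preliminaries that on graphs (where $\vr = \emptyset$) a vertex $v$ is almost-sure winning for an objective $\obj'$ if and only if there is a play $\pat$ from $v$ with $\pat \in \obj'$. This is because with $\vr = \emptyset$ a strategy $\str$ from $v$ induces a \emph{unique} play $\pat^{\str}$, so $\pr{\str}{v}{\obj'} \in \{0,1\}$ and equals $1$ exactly when $\pat^{\str} \in \obj'$, while conversely any play from $v$ is induced by the (well-defined) strategy that always follows it. Applying this to $\obj$: if $v \in \as{\mdp,\obj}$, there is a play $\pat$ from $v$ with $\pat \in \obj = \obj_1 \cup \cdots \cup \obj_c$, so $\pat \in \obj_i$ for some $i$; applying the characterization again, now to $\obj_i$, gives $v \in \as{\mdp,\obj_i}$. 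The two inclusions together yield $\as{\mdp,\obj} = \bigcup_{i=1}^c \as{\mdp,\obj_i}$, which is exactly the claim.

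The one step that needs care — and the reason the statement is phrased for graphs rather than MDPs — is the ``play exists iff almost-sure winning'' equivalence, so I would spell it out rather than only cite it; after that the proof is a short set manipulation. For emphasis I would note where the argument breaks for general MDPs: a strategy must then cope with several random outcomes simultaneously, and distinct outcomes may only be steered into distinct $\obj_i$, so the induced plays collectively lie in $\obj_1 \cup \cdots \cup \obj_c$ without any single $\obj_i$ being almost-surely attained. This gap is precisely what later sections exploit to separate disjunctive queries from disjunctive objectives on MDPs.
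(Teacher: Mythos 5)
Your proof is correct and follows essentially the same route as the paper: the paper's one-line argument is just the tautology that a play lies in $\obj_1 \cup \cdots \cup \obj_c$ iff it lies in some $\obj_i$, combined with the fact (stated in the Preliminaries) that on graphs almost-sure winning is equivalent to the existence of a play in the objective. You simply spell out that equivalence and the easy inclusion explicitly, which the paper leaves implicit.
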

\begin{proof}
	For any two objectives $\obj_1$, $\obj_2$ we have for each $\pat \in \Path$ 
	that $\pat \in (\obj_1 \cup \obj_2)$ iff $\pat \in \obj_1$ or $\pat \in \obj_2$.
\end{proof}

\begin{observation}\label{obs:disjobjreach}
	The \emph{disjunctive objective} of \emph{B{\"u}chi} \upbr{resp.\ reachability} 
	objectives is the same as the B{\"u}chi \upbr{resp.\ reachability} objective of the 
	union of the target sets.
\end{observation}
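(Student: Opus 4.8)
The plan is to observe that this statement is a purely set-theoretic identity between subsets of $\Path$: no probabilistic or graph-theoretic reasoning is involved, and the two sides of the claimed equality can be shown to contain exactly the same plays by unwinding the definitions of the reachability and B\"uchi objectives. I would treat the reachability case and the B\"uchi case separately, each by a two-line derivation, and then remark that the consequence for almost-sure winning sets is immediate since equal objectives have equal winning sets.

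First I would handle reachability. Fix reachability objectives with target sets $\target_1,\dots,\target_c$ and put $\target=\bigcup_{i=1}^c\target_i$. A play $\pat=\sseq$ lies in $\bigcup_{i=1}^c\reacht{\target_i}$ iff there is some $i\in\{1,\dots,c\}$ and some index $j\ge 0$ with $v_j\in\target_i$; swapping the two existential quantifiers ``$\exists i$'' and ``$\exists j$'' (and using the definition of set union), this is equivalent to the existence of an index $j\ge 0$ with $v_j\in\target$, i.e.\ to $\pat\in\reacht{\target}$. This is the whole argument for the reachability case.

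Second, for B\"uchi I would argue analogously, the only extra ingredient being distributivity of intersection over union: $\Inf(\pat)\cap\target=\bigcup_{i=1}^c(\Inf(\pat)\cap\target_i)$. Hence $\Inf(\pat)\cap\target\ne\emptyset$ iff $\Inf(\pat)\cap\target_i\ne\emptyset$ for some $i$, which says precisely that $\pat\in\objsty{B{\"u}chi}{\target}$ iff $\pat\in\bigcup_{i=1}^c\objsty{B{\"u}chi}{\target_i}$.

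There is essentially no obstacle here; the only point worth keeping in mind is that the statement is about \emph{objectives} (sets of plays) rather than about winning sets, so the assertion is literally an equality of sets of plays, and the two short derivations above establish it. The corresponding equalities of (almost-sure) winning sets then follow trivially, since if two objectives are equal as subsets of $\Path$ then $\as{\mdp,\cdot}$ returns the same set for both.
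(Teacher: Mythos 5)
Your proof is correct and follows essentially the same route as the paper, which also establishes the claim as a direct set-theoretic identity (the paper writes out the Büchi case via $\Inf(\pat)\cap(\target_1\cup\target_2)\ne\emptyset$ and notes that reachability is analogous). Your additional remark that equality of objectives immediately yields equality of the winning sets is a harmless, correct elaboration.
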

\begin{proof}
We show the claim for Büchi, the proof for reachability is analogous. For two 
target sets $T_1, T_2 \subseteq V$ we have 
$\set{\pat \in \Path \mid \Inf(\pat) \cap \target_1 \ne \emptyset} \cup 
\set{\pat \in \Path \mid \Inf(\pat) \cap \target_2 \ne \emptyset}=
\set{\pat \in \Path \mid \Inf(\pat) \cap (\target_1 \cup \target_2) \ne \emptyset}$.	
\end{proof}

\begin{observation}\label{obs:conjsafety}
	The \emph{conjunctive objective} of \emph{coB{\"u}chi} \upbr{resp.\ safety} 
	objectives is the same as the coB{\"u}chi \upbr{resp.\ safety} objective of the 
	union of the target sets.
\end{observation}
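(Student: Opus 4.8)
The statement is a purely set-theoretic identity about subsets of $\Path$, so the plan is to prove both halves by unfolding the definitions of the objectives and applying an elementary distributive/De Morgan-type manipulation; this mirrors the argument for the dual statement in Observation~\ref{obs:disjobjreach}. It suffices to handle the case of two target sets $\target_1, \target_2 \subseteq V$, since the general case of $c$ target sets follows by induction on $c$ (or by directly writing out the $c$-fold intersection). No reasoning about strategies or probabilities is needed here, because we are comparing objectives (sets of plays), not winning sets.

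First I would treat safety. Unfolding the definition, $\objsty{Safety}{\target_1} \cap \objsty{Safety}{\target_2}$ is the set of plays $\pat = \sseq$ such that $v_j \notin \target_1$ for all $j\ge 0$ \emph{and} $v_j \notin \target_2$ for all $j \ge 0$. Since ``$v_j \notin \target_1$ and $v_j \notin \target_2$'' is equivalent to ``$v_j \notin \target_1 \cup \target_2$'', this set equals $\{\sseq \in \Path \mid \forall j \ge 0 : v_j \notin \target_1 \cup \target_2\} = \objsty{Safety}{\target_1 \cup \target_2}$, as claimed.

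Next I would treat coBüchi, which is completely analogous once one observes that $\Inf(\pat) \cap (\target_1 \cup \target_2) = (\Inf(\pat) \cap \target_1) \cup (\Inf(\pat) \cap \target_2)$ by distributivity of intersection over union. Hence $\Inf(\pat) \cap (\target_1 \cup \target_2) = \emptyset$ iff $\Inf(\pat) \cap \target_1 = \emptyset$ and $\Inf(\pat) \cap \target_2 = \emptyset$, which gives $\objsty{coB{\"u}chi}{\target_1} \cap \objsty{coB{\"u}chi}{\target_2} = \objsty{coB{\"u}chi}{\target_1 \cup \target_2}$. There is essentially no obstacle in this proof; the only thing to be careful about is to phrase it at the level of individual plays $\pat$ and the membership conditions defining the objectives, rather than invoking any MDP-specific machinery. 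I would keep the write-up to one or two sentences per objective.
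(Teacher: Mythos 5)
Your proposal is correct and matches the paper's own argument: both unfold the definitions and reduce the claim to the elementary identity that $\Inf(\pat)$ (resp.\ each $v_j$) misses $\target_1 \cup \target_2$ iff it misses both $\target_1$ and $\target_2$, with the paper writing out only the coB\"uchi case for two target sets and declaring safety analogous. Your extra remarks on induction for $c$ sets and on working at the level of plays rather than winning sets are harmless elaborations of the same proof.
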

\begin{proof}
We show the claim for coBüchi, the proof for safety is analogous. For two 
target sets $T_1, T_2 \subseteq V$ we have 
$\set{\pat \in \Path \mid \Inf(\pat) \cap \target_1 = \emptyset} \cap 
\set{\pat \in \Path \mid \Inf(\pat) \cap \target_2 = \emptyset}=
\set{\pat \in \Path \mid \Inf(\pat) \cap (\target_1 \cup \target_2) = \emptyset}$.	
\end{proof}

By definition each path winning for a safety objective 
is also winning
for the corresponding coBüchi objective while the converse is not always true.
However, when it comes to the non-emptiness of winning sets these two objectives become equivalent.

\begin{observation}\label{obs:nonempty_safety_coBuchi}
	For a fixed MDP $\mdp$ the winning set for $\objsty{Safety}{\target}$ is non-empty
	iff the winning set for $\objsty{coB{\"u}chi}{\target}$ is non-empty.
	This equivalence extends also to conjunctions and disjunctions of safety and coBüchi objectives.
\end{observation}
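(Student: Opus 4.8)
I would split the equivalence into its two directions and settle the single‑objective case first. The ``only if'' direction is immediate from an inclusion of play sets: every play that never visits $\target$ visits $\target$ only finitely often, so $\objsty{Safety}{\target} \subseteq \objsty{coB{\"u}chi}{\target}$, and hence any strategy that is almost‑sure winning for $\objsty{Safety}{\target}$ from a vertex $v$ is also almost‑sure winning for $\objsty{coB{\"u}chi}{\target}$ from $v$; thus $\as{\mdp,\objsty{Safety}{\target}} \subseteq \as{\mdp,\objsty{coB{\"u}chi}{\target}}$ and non‑emptiness transfers. For the ``if'' direction I would argue the contrapositive: assuming $\as{\mdp,\objsty{Safety}{\target}} = \emptyset$, show $\as{\mdp,\objsty{coB{\"u}chi}{\target}} = \emptyset$. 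Here I would first record that for a safety objective almost‑sure winning coincides with \emph{surely} avoiding $\target$ (reaching $\target$ is a countable union of finite‑prefix events), so $\as{\mdp,\objsty{Safety}{\target}}$ is exactly the complement of the attractor of $\target$ in which the random vertices $\vr$ are treated adversarially.

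The heart of the proof, and the step I expect to need the most care, is the following amplification claim: if $\as{\mdp,\objsty{Safety}{\target}} = \emptyset$, i.e.\ every vertex lies in that adversarial attractor, then there is a constant $p = p(\mdp) > 0$ such that from \emph{every} vertex, under \emph{every} player~1 strategy $\str$, a vertex of $\target$ is reached within $n$ steps with probability at least $p$. I would prove this by a standard rank argument: from a vertex of attractor‑rank $r \le n$, under any player~1 strategy the next step lands in a vertex of rank $< r$ with probability at least $1/n$ (every transition distribution is uniform and thus bounded below by $1/n$), so iterating $r \le n$ times gives $p = n^{-n}$. I would then partition an infinite play into consecutive blocks of $n$ steps; applying the bound conditionally on the past, the probability of avoiding $\target$ throughout the first $kn$ steps is at most $(1-p)^k$, and letting $k \to \infty$ (summing over the possible time of a last visit) shows that from every vertex, under every strategy, a vertex of $\target$ is visited infinitely often with probability $1$. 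In particular no vertex is almost‑sure winning for $\objsty{coB{\"u}chi}{\target}$, i.e.\ $\as{\mdp,\objsty{coB{\"u}chi}{\target}} = \emptyset$. Everything beyond this claim is bookkeeping.

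Finally I would lift the statement to conjunctions and disjunctions. For a conjunction of safety (resp.\ coB{\"u}chi) objectives with target sets $T_1,\dots,T_c$, Observation~\ref{obs:conjsafety} rewrites it as the single safety (resp.\ coB{\"u}chi) objective for $\bigcup_i T_i$, and by Observation~\ref{obs:conj} the conjunctive query has the same winning set as the conjunctive objective, so this case follows directly from the single‑objective statement. For disjunctions, the winning set of a disjunctive \emph{query} is the union $\bigcup_i \as{\mdp,\objsty{Safety}{T_i}}$ (resp.\ with coB{\"u}chi), which is non‑empty iff one of its members is, so the single‑objective equivalence applies termwise. For a disjunctive \emph{objective}, the ``only if'' direction again follows from the inclusion $\bigvee_i \objsty{Safety}{T_i} \subseteq \bigvee_i \objsty{coB{\"u}chi}{T_i}$ of play sets; for the ``if'' direction, since $\bigcup_i \as{\mdp,\objsty{Safety}{T_i}}$ is contained in the winning set of $\bigvee_i \objsty{Safety}{T_i}$, it suffices to show that if every $\as{\mdp,\objsty{Safety}{T_i}}$ is empty then $\bigvee_i \objsty{coB{\"u}chi}{T_i}$ has empty winning set. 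Applying the amplification claim to each index $i$, from every vertex and under every strategy each $T_i$ is visited infinitely often with probability $1$; intersecting these finitely many probability‑$1$ events, with probability $1$ \emph{all} of $T_1,\dots,T_c$ are visited infinitely often, so with probability $1$ the play belongs to no $\objsty{coB{\"u}chi}{T_i}$ and hence lies outside the disjunction — which gives the desired emptiness.
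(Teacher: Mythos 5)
Your proof is correct, but it takes a genuinely different route from the paper. The paper disposes of the observation in one line by invoking the end-component characterization (from Courcoubetis--Yannakakis, re-derived in Section~\ref{sec:gec}): for a fixed MDP the winning set of $\objsty{Safety}{\target}$ and of $\objsty{coB{\"u}chi}{\target}$ are each non-empty if and only if there exists an end-component $\ec$ with $\ec \cap \target = \emptyset$, and the same criterion (an end-component avoiding $\bigcup_i \target_i$, respectively avoiding some $\target_i$) settles the conjunctive and disjunctive variants at once. You instead prove the easy inclusion $\objsty{Safety}{\target} \subseteq \objsty{coB{\"u}chi}{\target}$ directly and handle the hard direction by a self-contained quantitative argument: if $\as{\mdp,\objsty{Safety}{\target}} = \emptyset$, then every vertex lies in the attractor of $\target$ with random vertices treated adversarially, and your rank/amplification argument (rank decreases each step with probability at least $1/n$, hence $\target$ is hit within $n$ steps with probability at least $n^{-n}$, blocks of length $n$ give visits infinitely often almost surely under every strategy) shows $\as{\mdp,\objsty{coB{\"u}chi}{\target}} = \emptyset$; intersecting the finitely many probability-one events handles the disjunctive objective, and Observations~\ref{obs:conjsafety} and~\ref{obs:conj} reduce the conjunctive case to the single-objective one. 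In effect you reprove, for this special case, the completeness half of the good-end-component machinery (Proposition~\ref{prop:geccompl-gen}) rather than citing it: your argument is more elementary and self-contained (it needs no measure-theoretic grouping of paths by $\Inf(\pat)$), while the paper's appeal to the common combinatorial criterion is shorter and makes the extension to conjunctions and disjunctions immediate. One cosmetic remark: your bound $1/n$ per step uses the paper's w.l.o.g.\ uniform-distribution convention; without it, replace $1/n$ by the minimum positive transition probability, which changes nothing.
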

\begin{proof}
	By \cite[p.~891]{CourcoubetisY95} (see also Section~\ref{sec:gec})
	the winning set for $\objsty{Safety}{\target}$ resp.\ $\objsty{coB{\"u}chi}{\target}$
	is non-empty if and only if there exists an end-component $\ec$ with $\ec \cap \target = \emptyset$.
\end{proof}

\begin{observation}\label{obs:Reachability_Buchi}
	Disjunctive \upbr{Obj./Qu.} Reachability in MDPs can be linear time reduced to 
	disjunctive \upbr{Obj./Qu.} Büchi-Objectives in MDPs and vice versa.
\end{observation}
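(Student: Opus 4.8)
The plan is to prove both reductions by explicit constructions, handling the disjunctive‑objective and the disjunctive‑query versions in parallel, and to note that the direction from Büchi to reachability is linear only once the MEC‑decomposition is at hand (the quantity $\textsc{MEC}$ that anyway shows up in the relevant running times).

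First I would reduce disjunctive reachability to disjunctive Büchi. Given an MDP $\mdp$ with target sets $T_1,\ldots,T_k$, I build $\mdp'$ by adding, for each $i$, a fresh absorbing player‑1 vertex $s_i$ (with a new self‑loop) and taking $\{s_i\}$ as the $i$‑th Büchi target. For a target vertex $v\in T_i\cap\vo$ I simply add the edge $(v,s_i)$; for $v\in T_i\cap\vr$ — where an extra outgoing edge would corrupt the distribution — I insert a fresh player‑1 vertex $\hat v$, redirect all in‑edges of $v$ to $\hat v$, and give $\hat v$ the edges $(\hat v,v)$ and $(\hat v,s_i)$ (a vertex lying in several target sets gets an edge to each matching sink). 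This is an $O(n+m+b)$ transformation. The claim is that every original vertex lies in the disjunctive almost‑sure winning set (objective resp.\ query) for reachability in $\mdp$ iff it lies in the corresponding set for Büchi in $\mdp'$. For the forward direction player~1 simulates a reachability strategy in $\mdp'$ and, the first time the play is at a \emph{representative} of a target vertex (the vertex itself if it is in $\vo$, the inserted $\hat v$ otherwise), diverts into the matching sink and loops there forever, so that some $s_i$ is visited infinitely often almost surely. Conversely, a strategy winning $\objsty{B{\"u}chi}{\{s_i\}}$ almost surely must reach the absorbing vertex $s_i$ almost surely, and since every in‑edge of $s_i$ leaves a representative of a vertex of $T_i$, the induced strategy in $\mdp$ reaches $T_i$ almost surely. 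For the objective version one may instead first pass to a single reachability target by Observation~\ref{obs:disjobjreach} and use a single sink.

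For the reverse direction I would invoke the standard characterization of almost‑sure Büchi winning sets via end‑components: for each $i$, $\as{\mdp,\objsty{B{\"u}chi}{T_i}}=\as{\mdp,\reacht{U_i}}$, where $U_i$ is the union of the maximal end‑components of $\mdp$ that intersect $T_i$. (Both inclusions are standard: inside a MEC meeting $T_i$ player~1 can visit $T_i$ infinitely often almost surely, and under any almost‑sure Büchi‑winning strategy the set of infinitely visited vertices is almost surely an end‑component, hence contained in some MEC counted in $U_i$.) Given the MEC‑decomposition, the sets $U_i$ are computed in linear time by marking, for each $t\in T_i$, the MEC that contains $t$. The reduction keeps $\mdp$ and uses the reachability targets $U_1,\ldots,U_k$: the query winning set is $\bigcup_i\as{\mdp,\objsty{B{\"u}chi}{T_i}}=\bigcup_i\as{\mdp,\reacht{U_i}}$, and for the objective version $\bigvee_i\objsty{B{\"u}chi}{T_i}=\objsty{B{\"u}chi}{\bigcup_i T_i}$, whose winning set is $\as{\mdp,\reacht{\bigcup_i U_i}}$ since the MECs meeting $\bigcup_i T_i$ are exactly $\bigcup_i U_i$, after which Observation~\ref{obs:disjobjreach} applies.

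I expect the only real obstacle to be in the Büchi‑to‑reachability direction: one has to use the end‑component machinery correctly, and — unlike the other direction, which is a genuinely linear graph transformation whose only subtlety is the handling of random target vertices — this reduction is linear only relative to the MEC‑decomposition, which is exactly why $O(k\cdot m+\textsc{MEC})$, rather than a purely linear bound, is what transfers to the disjunctive Büchi problems.
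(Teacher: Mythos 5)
Your Reachability-to-Büchi direction is correct and is essentially the paper's reduction in a slightly different guise: the paper splits each target vertex $t$ into $t_{\text{in}}\in\vo$ (receiving the in-edges, with a self-loop) and $t_{\text{out}}$, whereas you attach absorbing sinks $s_i$; both are linear, local gadgets and the correctness arguments coincide. The only point you should make explicit is that for a random target vertex $v$ the original vertex must be identified with its new player-1 representative $\hat v$ in the winning-set correspondence (otherwise membership of the target vertices themselves is not preserved), a detail at the same level of implicitness as in the paper.

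The genuine gap is in the Büchi-to-Reachability direction. The observation asserts a \emph{linear-time} reduction in both directions, but what you propose is to compute the MEC-decomposition and take as reachability targets the unions $U_i$ of MECs meeting $T_i$. The characterization $\as{\mdp,\objsty{B{\"u}chi}{T_i}}=\as{\mdp,\reacht{U_i}}$ is correct, but producing the sets $U_i$ costs $\textsc{MEC}=O(\min(n^2,m^{1.5}))$ time, not $O(m)$, so this does not prove the stated observation; and your closing claim that linearity ``only relative to the MEC-decomposition'' is the best one can hope for is false. The paper's reduction is genuinely linear and purely local: each target vertex $t$ is replaced by $t_{\text{in}}\in\vr$, $t_r\in\vo$ and $t_{\text{out}}$ (owned by the same player as $t$), the in-edges of $t$ are redirected to $t_{\text{in}}$, its out-edges leave $t_{\text{out}}$, and one adds the edges $(t_{\text{in}},t_{\text{out}})$, $(t_{\text{in}},t_r)$, $(t_r,t_{\text{out}})$, with the reachability target being the set of vertices $t_r$. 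Each pass through $t_{\text{in}}$ is diverted to $t_r$ with probability $1/2$ and then rejoins the original play at $t_{\text{out}}$, so $t_r$ is reached almost surely iff the play can be forced to return to $t$ (i.e.\ $t_{\text{in}}$) again and again, i.e.\ iff $t$ can almost surely be visited infinitely often; no end-component computation is needed. Your MEC-based argument would still suffice to transfer the $O(km+\textsc{MEC})$ upper bound to disjunctive Büchi queries, but it does not establish the observation as stated, whose point is precisely that the two problems are linear-time equivalent.
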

\begin{proof}
	Reachability $\Rightarrow$ Büchi: For each target set $\target$ 
	replace each $t \in \target$ with two vertices: $t_\text{in} \in \vo$ and 
	$t_\text{out}$, where $t_\text{out}$ belongs to the same player as $t$.
	Assign all incoming edges of $t$ to $t_\text{in}$
	and all outgoing edges of $t$ to $t_\text{out}$, and add the edge $(t_\text{in},
	t_\text{out})$ and the self-loop $(t_\text{in}, t_\text{in})$. Let the 
	corresponding target set for Büchi be the union of $t_\text{in}$ for all 
	$t \in \target$. A vertex $t_\text{in}$ in the modified MDP can be visited 
	infinitely often  almost surely iff in the original MDP the vertex $t$ can be
	reached almost surely.
	
	Büchi $\Rightarrow$ Reachability: For each target set $\target$ 
	replace each $t \in \target$ with three vertices: $t_\text{in} \in \vr$,
	$t_r \in \vo$, and $t_\text{out}$, where $t_\text{out}$ belongs to the 
	same player as $t$. Assign all incoming edges of $t$ to $t_\text{in}$
	and all outgoing edges of $t$ to $t_\text{out}$, and add the edges 
	$(t_\text{in}, t_\text{out})$, $(t_\text{in}, t_r)$, and $(t_r, t_\text{out})$.
	Let the corresponding target set for Reachability be the union of $t_r$ for all 
	$t \in \target$. A vertex $t_r$ in the modified MDP can be reached almost 
	surely iff in the original MDP the vertex $t$ can almost surely
	be visited infinitely often.
\end{proof}

}

\infull{
\begin{observation}\label{obs:ConjBuchiStreet}
	Conjunctive Büchi \upbr{resp.\ coBüchi} objectives are special instances of Streett objectives.
\end{observation}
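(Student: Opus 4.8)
The plan is to realize every conjunctive Büchi (resp.\ coBüchi) objective as the Streett objective of an explicitly chosen family of pairs, over the same arena, so that no computation is involved beyond renaming the target data. The single fact that makes this work is that on a finite graph every infinite path $\pat$ visits some vertex infinitely often, i.e.\ $\Inf(\pat)\neq\emptyset$. Hence a one-pair Streett constraint $(L,U)$ degenerates: if $L=V$ the antecedent ``$L\cap\Inf(\pat)=\emptyset$'' is never true, so the constraint becomes the pure Büchi condition ``$U\cap\Inf(\pat)\neq\emptyset$''; if $U=\emptyset$ the consequent ``$U\cap\Inf(\pat)\neq\emptyset$'' is never true, so the constraint becomes the pure coBüchi condition ``$L\cap\Inf(\pat)=\emptyset$''.

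First, given target sets $T_1,\dots,T_k$ defining the conjunctive Büchi objective $\bigcap_{i=1}^k\objsty{B{\"u}chi}{T_i}$, I would use the pair set $\SP=\set{(V,T_i)\mid 1\le i\le k}$. By the finiteness observation, for every $\pat\in\Path$ and every $i$ the $i$-th Streett clause reduces to ``$T_i\cap\Inf(\pat)\neq\emptyset$'', which is exactly the defining condition of $\objsty{B{\"u}chi}{T_i}$. Since a Streett objective asks all of its clauses to hold, this yields $\streett{\SP}=\bigcap_{i=1}^k\objsty{B{\"u}chi}{T_i}$.

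Second, given target sets $T_1,\dots,T_k$ defining the conjunctive coBüchi objective $\bigcap_{i=1}^k\objsty{coB{\"u}chi}{T_i}$, I would use the pair set $\SP=\set{(T_i,\emptyset)\mid 1\le i\le k}$. Now for every $\pat\in\Path$ and every $i$ the $i$-th Streett clause reduces to ``$T_i\cap\Inf(\pat)=\emptyset$'', the defining condition of $\objsty{coB{\"u}chi}{T_i}$, so $\streett{\SP}=\bigcap_{i=1}^k\objsty{coB{\"u}chi}{T_i}$. In both cases the number of Streett pairs equals the number of conjuncts and the pairs have total description size $O(n+b)$, so this is a genuine (linear-time) instance reduction that also preserves the parameter $k$. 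I expect no real obstacle here: the only point that must be stated carefully is the $\Inf(\pat)\neq\emptyset$ step. Finally, since the reduction leaves the arena unchanged and only rewrites the objective, Observation~\ref{obs:conj} lifts it verbatim to queries, so conjunctive Büchi (resp.\ coBüchi) queries on MDPs are likewise special cases of Streett queries.
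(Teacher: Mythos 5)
Your proposal is correct and matches the paper's proof exactly: the paper also sets $L_i = V$, $U_i = T_i$ for conjunctive Büchi and $L_i = T_i$, $U_i = \emptyset$ for conjunctive coBüchi. Your additional remarks (the $\Inf(\pat)\neq\emptyset$ justification and the lift to queries via Observation~\ref{obs:conj}) are fine but just make explicit what the paper leaves implicit.
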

\begin{proof}
	For Büchi let $L_i = V$ and $U_i = \target_i$, for coBüchi let $L_i = \target_i$
	and $U_i = \emptyset$.
\end{proof}

\begin{observation}\label{obs:DisjBuchiRabin}
	Disjunctive Büchi \upbr{resp.\ coBüchi} objectives are special instances of Rabin objectives.
\end{observation}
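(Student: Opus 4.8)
The plan is to mirror, in dual form, the construction used for Observation~\ref{obs:ConjBuchiStreet}: in both cases it suffices to exhibit an explicit choice of Rabin pairs on the same vertex set so that the Rabin objective coincides, path by path, with the given disjunction. Recall that a play $\pat$ satisfies a Rabin pair $(L_i, U_i)$ iff $L_i \cap \Inf(\pat) \ne \emptyset$ and $U_i \cap \Inf(\pat) = \emptyset$, and that $\objsty{Rabin}{\RP}$ requires \emph{some} pair to be satisfied; this ``exists a pair'' quantifier is exactly the disjunction we are trying to match.

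First I would treat the disjunctive B{\"u}chi case. Given target sets $\target_1, \dots, \target_k$, put $L_i = \target_i$ and $U_i = \emptyset$ for each $1 \le i \le k$. The condition $U_i \cap \Inf(\pat) = \emptyset$ is then vacuously satisfied, so the $i$-th pair holds for $\pat$ iff $\target_i \cap \Inf(\pat) \ne \emptyset$, i.e.\ iff $\pat \in \objsty{B{\"u}chi}{\target_i}$. Hence $\objsty{Rabin}{\RP} = \bigvee_{i=1}^{k} \objsty{B{\"u}chi}{\target_i}$, and the construction alters neither the graph nor the partition of the vertices.

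Next I would treat the disjunctive coB{\"u}chi case. Given target sets $\target_1, \dots, \target_k$, put $L_i = V$ and $U_i = \target_i$. Here the first condition $V \cap \Inf(\pat) \ne \emptyset$ holds for \emph{every} play, since $V$ is finite and hence $\Inf(\pat) \ne \emptyset$ for every infinite path; so the $i$-th pair holds for $\pat$ iff $\target_i \cap \Inf(\pat) = \emptyset$, i.e.\ iff $\pat \in \objsty{coB{\"u}chi}{\target_i}$, giving $\objsty{Rabin}{\RP} = \bigvee_{i=1}^{k} \objsty{coB{\"u}chi}{\target_i}$.

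The only step that needs any care — and it is a minor one — is precisely this use of finiteness of $V$ in the coB{\"u}chi case, which guarantees $\Inf(\pat) \ne \emptyset$ and hence that the dummy component $L_i = V$ never falsifies a pair; this is what lets us ``drop'' the $L$-part of a Rabin pair. Beyond that I expect no real obstacle: both reductions leave the underlying structure untouched and only relabel the objective, so they are trivially linear time, exactly as for Observation~\ref{obs:ConjBuchiStreet}.
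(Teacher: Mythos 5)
Your proposal is correct and matches the paper's proof exactly: the paper also sets $L_i = \target_i$, $U_i = \emptyset$ for the B\"uchi case and $L_i = V$, $U_i = \target_i$ for the coB\"uchi case. Your additional remark that finiteness of $V$ guarantees $\Inf(\pat) \ne \emptyset$ is a fine (if implicit in the paper) justification for why the $L_i = V$ component never falsifies a pair.
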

\begin{proof}
	For Büchi let $L_i = \target_i$ and $U_i = \emptyset$, for coBüchi let 
	$L_i = V$ and $U_i = \target_i$.
\end{proof}
}

\subsection{Conjectured Lower Bounds}\label{sec:conjectures}

While classical complexity results are based on standard complexity-theoretical assumptions,
e.g., $\P \ne \NP$, polynomial lower bounds are often based on widely believed,
conjectured lower bounds 
about well studied algorithmic problems.
Our lower bounds will be conditioned on the popular conjectures discussed below.

First, we consider conjectures on Boolean matrix multiplication~\cite{WilliamsW10,AbboudW14} and 
triangle detection~\cite{AbboudW14} in graphs, which build the basis for our lower bounds on dense graphs. A triangle in a graph is a triple $x, y, z$ of vertices 
such that $(x, y), (y, z), (z, x) \in E$.

\begin{conjecture}[Combinatorial Boolean Matrix Multiplication Conjecture (BMM)]\label{conj:bmm}
There is no $O(n^{3 - \varepsilon})$ time combinatorial algorithm for computing the boolean product of 
two $n \times n$ matrices for any $\varepsilon > 0$.
\end{conjecture}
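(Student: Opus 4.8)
The plan here must begin with the observation that the final statement is a \textbf{conjecture}, i.e., a hardness hypothesis assumed without proof, and not a claim that admits a derivation from weaker premises. Accordingly, I would not attempt to establish it; instead I would explain why no proof is available and why none is expected. The conjecture asserts an essentially cubic lower bound for an explicit, polynomial-time-solvable problem against a concrete model of computation. An unconditional lower bound of this form---even a merely super-linear one---for any explicit problem against a general model is far beyond the reach of current complexity theory, and establishing it would constitute a major breakthrough. This is precisely why the statement is packaged as a conjecture and then used, in the conditional lower bounds of Table~\ref{tab:comparison}, as an \emph{assumption} rather than as a theorem.

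A second, more subtle point I would make is that Conjecture~\ref{conj:bmm} is not even a fully formalized mathematical statement, because the qualifier \emph{combinatorial} has no universally agreed definition. As the footnote in the introduction indicates, ``combinatorial'' is an informal stipulation that the algorithm avoid the fast matrix-multiplication techniques of~\cite{LeGall14} and the algebraic cancellations that they exploit. Any genuine proof attempt would first have to fix a precise model of computation capturing this restriction and then prove an $\Omega(n^{3-o(1)})$ lower bound within it; for every natural candidate model this is itself an open problem. The hard part---indeed the insurmountable obstacle---is exactly this: we possess no technique that forces a combinatorial algorithm to spend cubic time, just as we possess no technique that rules out a truly subcubic combinatorial algorithm.

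What I would instead offer in support of the conjecture is the accumulated evidence. Decades of algorithmic effort have failed to produce a combinatorial algorithm for the Boolean matrix product running in time $O(n^{3-\varepsilon})$ for any $\varepsilon > 0$; the best known combinatorial bounds shave only polylogarithmic, and with more work sub-polynomial, factors. Moreover, by the equivalence recorded in~\cite{WilliamsW10} and already invoked in the paper's technical overview, the conjecture is interchangeable with the assertion that triangle detection in an $n$-vertex graph has no combinatorial $O(n^{3-\varepsilon})$ algorithm, so the same barrier is observed from two independent angles. These facts make the conjecture a widely believed and standard foundation for fine-grained conditional lower bounds---which is exactly the use to which the remainder of the paper puts it---even though a proof remains, and is expected to remain, out of reach.
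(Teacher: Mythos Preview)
Your assessment is correct and matches the paper's treatment: Conjecture~\ref{conj:bmm} is stated as a standing hypothesis without proof, and the paper only remarks (citing~\cite{WilliamsW10}) that it is equivalent to the combinatorial part of the Strong Triangle Conjecture before using it as an assumption for the conditional lower bounds. There is nothing further to prove.
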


\begin{conjecture}[Strong Triangle Conjecture (STC)]\label{conj:triangle}
There is \infull{no $O(\min\{n^{\omega - \varepsilon},$ $m^{2 \omega/(\omega + 1) - \varepsilon}\})$ expected time algorithm and }no
$O(n^{3 - \varepsilon})$ time combinatorial algorithm that can detect whether a graph contains a triangle for any $\varepsilon > 0$\infull{, 
where $\omega < 2.373$ is the matrix multiplication exponent}. 
\end{conjecture}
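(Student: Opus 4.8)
The statement is a hardness \emph{conjecture}, not a theorem: it asserts a polynomial lower bound for triangle detection, a problem that is itself solvable in polynomial time, and no current technique can prove such a bound unconditionally. So there is genuinely nothing to \emph{prove} here; what a proposal can sensibly do is explain why the two quoted running times are believed to be tight. The plan is therefore to argue that each of the two bounds in the $\min$, together with the separate combinatorial bound, matches exactly the fastest algorithm known for the respective regime, so that the conjecture is precisely the assertion that none of these long-standing upper bounds admits a polynomial improvement. I would organize the justification along the two halves of the statement.

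For the combinatorial half (no $O(n^{3-\varepsilon})$ combinatorial algorithm) I would appeal to the classical equivalence between triangle detection and Boolean matrix multiplication. The easy direction is explicit: an $n$-vertex graph with adjacency matrix $A$ contains a triangle if and only if $(A^2)_{ij} = 1$ for some edge $(i,j) \in E$, equivalently if and only if $A^3$ has a nonzero diagonal entry; hence one Boolean product suffices to detect a triangle, so a fast combinatorial BMM algorithm would give a fast combinatorial triangle detector. The converse, that a truly subcubic combinatorial triangle detector would yield a truly subcubic combinatorial BMM algorithm, is the nontrivial content of~\cite{WilliamsW10} (via a block/self-reduction on a tripartite graph whose two known edge classes encode the input matrices). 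Together the two directions show that the combinatorial half of the conjecture is equivalent to Conjecture~\ref{conj:bmm}, so the two hypotheses stand or fall together and believing one is the same as believing the other.

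For the algebraic half I would observe that the two terms in the $\min$ match the best known matrix-multiplication-based algorithms. The bound $O(n^{\omega - \varepsilon})$ matches the Itai--Rodeh algorithm, which squares the adjacency matrix and checks whether some entry $(A^2)_{ij}$ with $(i,j) \in E$ is nonzero, running in $O(n^{\omega})$ time with $\omega < 2.373$~\cite{LeGall14}; and $O(m^{2\omega/(\omega+1) - \varepsilon})$ matches the Alon--Yuster--Zwick sparse-triangle algorithm, which balances a high-degree/low-degree vertex split against a single fast matrix multiplication on the high-degree part. Thus the conjecture is exactly the assertion that these two upper bounds are \emph{simultaneously} optimal up to subpolynomial ($n^{o(1)}$) factors, a statement consistent with decades of failure to beat them.

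The main obstacle, and the reason the statement is and must remain a conjecture, is that a genuine proof would require an unconditional near-cubic (respectively near-$n^{\omega}$) lower bound for a polynomial-time problem, which is entirely beyond the reach of present lower-bound techniques. The most one can do is \emph{use} the conjecture as a hypothesis and derive conditional hardness from it via reductions, exactly as the remainder of this paper does in reducing triangle detection to disjunctive almost-sure reachability and safety.
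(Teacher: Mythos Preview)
Your proposal is correct and aligns with the paper's treatment: the statement is a conjecture, not a theorem, and the paper offers no proof of it but simply states it and notes (citing~\cite{WilliamsW10}) that the combinatorial part is equivalent to the Boolean Matrix Multiplication Conjecture. Your additional discussion of why the quoted bounds match the Itai--Rodeh and Alon--Yuster--Zwick algorithms is accurate supplementary context that the paper itself omits, but it goes beyond what is required since the paper merely \emph{assumes} the conjecture as a basis for conditional lower bounds rather than attempting to justify it.
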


\inshort{BMM is equivalent to STC~\cite{WilliamsW10}.}
\infull{By a result of Vassilevska~Williams and Ryan Williams~\cite{WilliamsW10}, we have that 
BMM is equivalent to the combinatorial part of STC.} 
\infull{Moreover, if we do not restrict ourselves to combinatorial algorithms, STC still gives a super-linear lower bound.}\inshort{A weaker assumption, without the 
restriction to combinatorial algorithms, is that detecting a triangle in a graph
takes super-linear time.}

Second, we consider the Strong Exponential Time Hypothesis~\cite{ImpagliazzoPZ01,CalabroIP09}
and the Orthogonal Vectors Conjecture~\cite{AbboudWW15}, the former dealing with satisfiability in propositional logic and the latter with 
the \emph{Orthogonal Vectors Problem}.

\emph{The Orthogonal Vectors Problem} (OV). Given two sets $S_1, S_2$ of $d$-bit 
vectors with $|S_i |\leq N$, $d \in \Theta(\log N)$, are there $u \in S_1$ 
and $v \in S_2$ such that $\sum_{i=1}^{d} u_i \cdot  v_i = 0$?

\begin{conjecture}[Strong Exponential Time Hypothesis (SETH)]\label{conj:seth}
For each  $\varepsilon >0$ there is a $k$ such that k-CNF-SAT on $n$ variables and $m$ clauses
cannot be solved in $O(2^{(1-\varepsilon)n} \operatorname{poly}(m))$ time.
\end{conjecture}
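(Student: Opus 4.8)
The plan is complicated by the fact that the final statement is a \emph{conjecture}: no unconditional proof is known, and—as I explain below—a genuine proof appears entirely out of reach of current techniques. I will therefore describe both the natural line of attack and the barrier that blocks it. Reformulating the statement, for each width $k$ let $s_k = \inf\{\delta \ge 0 : k\text{-CNF-SAT admits a }2^{\delta n}\operatorname{poly}(m)\text{ algorithm}\}$. Since brute force gives $s_k \le 1$ and larger $k$ subsumes smaller $k$, the sequence $s_k$ is nondecreasing and bounded by $1$, so SETH asserts precisely that $\lim_{k \to \infty} s_k = 1$, i.e.\ that as the clause width grows the base of the best exponential algorithm cannot be bounded away from $2$.

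First I would eliminate the dependence on the number of clauses $m$ in favor of the number of variables $n$, so that the running time is governed solely by $n$. This is exactly the content of the sparsification lemma of Impagliazzo, Paturi, and Zane~\cite{ImpagliazzoPZ01,CalabroIP09}: for every $\mu > 0$ a $k$-CNF formula on $n$ variables can be written, in $2^{\mu n}\operatorname{poly}(n)$ time, as a disjunction of at most $2^{\mu n}$ many $k$-CNF formulas, each with only $O(n)$ clauses. A fast algorithm on sparse instances therefore lifts to one on general instances (choosing $\mu$ smaller than the target $\varepsilon$), so the $\operatorname{poly}(m)$ factor is harmless and the conjecture is \emph{equivalent} to the sparse statement in which $\operatorname{poly}(m) = \operatorname{poly}(n)$ is negligible against the exponential term. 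The remaining—and decisive—step would be to exhibit, for every $\varepsilon > 0$, a width $k$ together with a family of $k$-CNF instances on which no algorithm beats $2^{(1-\varepsilon)n}$; this is where a proof would have to supply an \emph{unconditional} exponential lower bound for satisfiability.

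The hard part, indeed the fatal obstacle, is precisely this last step, and it is why the statement must be taken as a hypothesis rather than a theorem. Any proof of SETH would in particular yield a super-polynomial lower bound for the \NP-complete problem CNF-SAT, and hence would imply $\P \ne \NP$; more strongly, exponential lower bounds of the form demanded here are known to entail circuit lower bounds far beyond anything currently provable, via the connection between faster-than-brute-force satisfiability algorithms and circuit separations. Since even the weaker ETH is not known to follow from $\P \ne \NP$, and since no technique is known that can rule out so much as a $2^{n/2}$ algorithm for general CNF-SAT unconditionally, there is no avenue for discharging the final step. The most one can do is marshal evidence: the best known algorithms (Schöning, PPSZ, and their refinements) achieve $s_k = 1 - \Theta(1/k)$, so $s_k \to 1$ exactly as SETH predicts, yet decades of effort have produced no algorithm that pushes any $s_k$ below this threshold nor any that would refute the conjecture. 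I would therefore not attempt an unconditional proof, but treat SETH—as the paper does—as a plausibility hypothesis resting on this algorithmic barrier and on its equivalence, through sparsification, to the absence of a genuinely sub-$2^n$ algorithm for satisfiability.
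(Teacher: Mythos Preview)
Your assessment is correct: the statement is a \emph{conjecture}, and the paper offers no proof whatsoever---it is simply stated as a hypothesis on which the conditional lower bounds are based. Your discussion of sparsification and the barriers to an unconditional proof is informative background but goes well beyond anything the paper attempts; the paper's own ``proof'' is empty, so there is nothing to compare against.
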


\begin{conjecture}[Orthogonal Vectors Conjecture (OVC)]\label{conj:ov}
There is no $O(N^{2-\varepsilon})$ time algorithm for \inshort{OV}\infull{the Orthogonal Vectors Problem} for any $\varepsilon > 0$.
\end{conjecture}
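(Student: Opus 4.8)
The statement is a conjecture, so I do not expect to establish it outright; the natural target — and, given the citation to \cite{Williams05} in the technical-contributions overview, presumably the intended content — is the implication that OVC follows from SETH (Conjecture~\ref{conj:seth}), i.e.\ that an $O(N^{2-\varepsilon})$-time algorithm for OV would refute SETH. The plan is a Williams-style split-and-list reduction from $k$-CNF-SAT to OV. Fix $\varepsilon > 0$, assume OV admits an $O(N^{2-\varepsilon})$-time algorithm, and derive, for every fixed $k$, a $k$-SAT algorithm running in $O(2^{(1-\delta)n}\operatorname{poly}(m))$ time for some $\delta=\delta(\varepsilon)>0$ independent of $k$, contradicting SETH.

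First I would normalize the clause count. Given a $k$-CNF formula $\phi$ on $n$ variables with $m$ clauses, apply the Sparsification Lemma: for any $\mu>0$ one can, in $2^{\mu n}\operatorname{poly}(m)$ time, produce at most $2^{\mu n}$ many $k$-CNF formulas on the same variable set, each with at most $c(k,\mu)\cdot n$ clauses, whose disjunction is satisfiable iff $\phi$ is. Hence it suffices to decide satisfiability of a $k$-CNF $\phi$ with $m = O(n)$ clauses; the parameter $\mu$ will be chosen small compared to $\varepsilon$ at the very end.

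Next the core reduction. Split the variables into two halves $X_1,X_2$ of size $n/2$ (w.l.o.g.\ $n$ even). For each assignment $a\in\{0,1\}^{X_1}$ form a vector $u_a\in\{0,1\}^m$ with $u_a[j]=0$ if the $j$-th clause already has a literal satisfied by $a$, and $u_a[j]=1$ otherwise; symmetrically form $v_b\in\{0,1\}^m$ for each $b\in\{0,1\}^{X_2}$. Set $S_1=\{u_a\}$ and $S_2=\{v_b\}$, so $|S_i| = 2^{n/2} =: N$ and the dimension is $d = m = O(n) = \Theta(\log N)$ — a legitimate OV instance. The key observation is that $\langle u_a, v_b\rangle = \sum_j u_a[j]\,v_b[j] = 0$ iff no clause is left unsatisfied by both $a$ and $b$, i.e.\ iff the combined assignment $a\cup b$ satisfies $\phi$; thus $\phi$ is satisfiable iff $S_1,S_2$ contain an orthogonal pair. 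Building $S_1,S_2$ costs $2^{n/2}\operatorname{poly}(n)$, and the assumed OV algorithm then runs in $O(N^{2-\varepsilon}) = O(2^{(1-\varepsilon/2)n})$; combined with the $2^{\mu n}$ factor from sparsification and choosing $\mu = \varepsilon/4$, this decides $k$-SAT in $2^{(1-\varepsilon/4)n}\operatorname{poly}(m)$ time for every fixed $k$, contradicting SETH with $\delta = \varepsilon/4$.

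The main obstacle — and essentially the only non-elementary ingredient — is the dimension constraint $d\in\Theta(\log N)$ built into the definition of OV: the raw reduction only gives $d=m$, which may be superlinear in $n$, so the Sparsification Lemma is doing the real work. The remaining points are routine but worth stating carefully: the orthogonality-iff-satisfiability equivalence above, the bookkeeping that $c(k,\mu)$ enters only the $\operatorname{poly}$ factor (and thus does not spoil the exponent for fixed $k$), and respecting the $\forall\varepsilon\,\exists k$ quantifier order inherited from SETH — here harmlessly, since our $\delta$ does not depend on $k$.
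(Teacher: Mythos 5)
This statement is a conjecture in the paper: it is never proved, only adopted as a hardness hypothesis, with the one provable fact about it --- that SETH (Conjecture~\ref{conj:seth}) implies OVC (Conjecture~\ref{conj:ov}) --- delegated to the citation of Williams~\cite{Williams05}. You correctly identified that this implication is the only reasonable target, and your reconstruction of it is the standard split-and-list argument and is correct. The clause-indicator encoding is right: $u_a[j]\,v_b[j]=1$ exactly when clause $j$ is satisfied by neither half-assignment, so the inner product vanishes iff $a\cup b$ satisfies $\phi$, matching the paper's definition of OV. You also correctly put your finger on the one non-trivial ingredient: the paper's definition insists on $d\in\Theta(\log N)$, and the raw reduction only gives $d=m$, so the Sparsification Lemma (Impagliazzo--Paturi--Zane~\cite{ImpagliazzoPZ01}) is genuinely needed to bring the clause count down to $O(n)=O(\log N)$. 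The run-time bookkeeping ($2^{\mu n}\cdot 2^{(1-\varepsilon/2)n}$ with $\mu=\varepsilon/4$) and the quantifier handling (your $\delta$ independent of $k$, refuting SETH for every $k$ simultaneously) are both handled correctly. Two cosmetic remarks: if the sparsified clause count falls below the constant demanded by $\Theta(\log N)$ you can pad with all-zero coordinates, and the constant $c(k,\mu)$ in the dimension does grow with $k$ --- harmless here, since $\Theta(\log N)$ tolerates any fixed constant, but worth stating since the dimension bound is not uniform in $k$. Since the paper itself contains no proof of this statement, there is nothing further to compare; your argument is exactly the content of the result the paper cites.
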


\infull{By a result of Williams~\cite{Williams05} 
we know that SETH implies OVC}\inshort{SETH implies OVC~\cite{Williams05}}, i.e.,
whenever a problem is hard assuming OVC, it is also hard when assuming SETH.
Hence, it is preferable to use OVC for proving lower bounds.
Finally, to the best of our knowledge, no relations between the former two conjectures and the latter two conjectures 
are known.
\begin{remark}
	The conjectures that no \emph{polynomial} improvements over the best known
	running times are possible do not exclude improvements by sub-polynomial 
	factors such as poly-logarithmic factors or factors of, e.g., $2^{\sqrt{\log n}}$
	as in \cite{Williams14a}.
\end{remark}

 % reachability
\infull{
\section{Reachability in MDPs}\label{sec:reach}

First let us briefly discuss reachability on Graphs.
The winning set for disjunctive reachability can simply be computed by union all target sets and then starting
a breadth-first search which is in $O(m)$.
On the other hand, the problem becomes $\NP$-complete when considering conjunctive reachability~\cite{FijalkowH12},
as with conjunction one can require a path to contain several vertices
and in particular one can embed the well-known $\NP$-hard problem of Hamiltonian path.

Turning to MDPs, notice that in MDPs based on acyclic graphs almost-sure reachability is equivalent to 
computing the winning set for a player with reachability objectives in a 2-player graph-game where 
all the random vertices are owned by the opponent 
(as random will play the optimal strategy for the opponent with non-zero probability).
As computing the winning set for conjunctive reachability in the 2-player graph-game is $\PSPACE$-hard~\cite{FijalkowH12}
even for acyclic graphs, we have that conjunctive almost-sure reachability in MDPs is $\PSPACE$-hard as well.
Moreover, as we will show later,  compared to graphs, also disjunctive reachability becomes harder, 
i.e., we will provide polynomial lower-bound based on popular conjectures.

In the first part of this  section we present an improved algorithm for disjunctive reachability queries in MDPs.
As disjunctive reachability objectives can be easily reduced to a single reachability objective
by taking the union of all target sets, the algorithm mentioned above is also
an algorithm for disjunctive reachability objectives (by setting $k=1$).
In the second part we present two lower bounds for disjunctive reachability queries,
an $\Omega(n^{3-o(1)})$ lower bound based on STC and an $\Omega(m^{2-o(1)})$ lower bound based on OVC (resp.\ SETH).

\subsection{Algorithm for Disjunctive Reachability Queries in MDPs}
 % disjreachalg
In this section we present an algorithm to compute the almost-sure winning
set for disjunctive reachability queries in MDPs. 
In particular we show the following theorem:
\begin{theorem}\label{th:timedrmdp}
	For an MDP~$\mdp$ and target sets $\target_i \subseteq V$ 
for $1 \le i \le k$ the almost-sure winning set for disjunctive reachability
queries can be computed in $O(k m + \textsc{MEC})$ time, where MEC is the time 
needed to compute a MEC-decomposition.
\end{theorem}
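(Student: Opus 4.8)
The plan is to split the disjunctive query into $k$ independent almost-sure reachability computations while paying for the MEC-decomposition only once. Since the disjunctive query asks for the union $\bigvee_{i=1}^{k}\as{\mdp,\reacht{\target_i}}$ of the individual almost-sure winning sets, it suffices to compute one MEC-decomposition of $\mdp$ (cost $\textsc{MEC}$) and then to show that, given the MEC-decomposition, each set $\as{\mdp,\reacht{\target_i}}$ can be obtained in $O(m+n)$ time; summing over $i$ and charging $O(b)=O(nk)$ for reading the target sets yields $O(km+\textsc{MEC})$ because of the standing assumptions $\textsc{MEC}=\Omega(m)$ and $m\ge n$.

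For a fixed target set the plan is to work on the \emph{MEC-quotient}. First I would contract every maximal end-component $\scc$ to a single \emph{player-1} vertex $v_\scc$ whose out-edges are exactly the edges of $\mdp$ leaving $\scc$; this is sound because all out-edges of the random vertices of an end-component stay inside it, so every exit of $\scc$ is a player-1 edge, and a single-vertex (trivial) MEC simply turns its vertex into a player-1 vertex. Then I would delete every self-loop at a vertex that still has another out-edge. Call the result $\widehat{\mdp}$: by maximality it has no non-trivial end-components, and by construction no random vertex carries a self-loop. For each $i$ the lifted target set $\widehat{\target}_i$ consists of the vertices of $\target_i$ outside all non-trivial MECs together with the $v_\scc$ with $\scc\cap\target_i\ne\emptyset$. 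Using the standard fact that a MEC is internally almost-sure connected — from any of its vertices player~1 can almost surely reach any prescribed non-empty subset while staying inside the MEC — one gets $\scc\subseteq\as{\mdp,\reacht{\target_i}}$ whenever $\scc\cap\target_i\ne\emptyset$, and, by a routine strategy-transfer argument, $v\in\as{\mdp,\reacht{\target_i}}$ iff the image of $v$ in $\widehat{\mdp}$ (the vertex itself, or $v_\scc$ if $v\in\scc$) lies in $\as{\widehat{\mdp},\reacht{\widehat{\target}_i}}$. The quotient $\widehat{\mdp}$ and the vertex-to-MEC map are built once in $O(m)$ time, and each $\widehat{\target}_i$ costs $O(|\target_i|)$ to assemble.

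The core step — and the one I expect to be the main obstacle — is to compute almost-sure reachability on an MDP $\widehat{\mdp}$ that has no non-trivial end-components (and no random self-loops) in $O(m+n)$ time. Here the nested fixpoint that in general defines almost-sure reachability should collapse to a single adversarial attractor. Since the only end-components of $\widehat{\mdp}$ are single self-loop vertices, from every vertex and under every strategy the play almost surely gets absorbed in such a self-loop vertex (almost surely the set of states visited infinitely often forms an end-component). Make every vertex of $\widehat{\target}_i$ absorbing and let $B$ be the set of the remaining self-loop vertices (the ``bad sinks''). Then a vertex should fail to be almost-sure winning for $\reacht{\widehat{\target}_i}$ precisely if player~2 (who owns the random vertices) can force the play into $B$: if not, player~1 can surely avoid $B$, so the play is almost surely absorbed in a self-loop vertex not in $B$, i.e.\ in $\widehat{\target}_i$; conversely, a player-2 forcing strategy reaches $B$ with positive probability, since only finitely many random choices occur along a forcing path and $\widehat{\target}_i$ (being absorbing) is never entered, which witnesses a positive probability of never reaching $\widehat{\target}_i$. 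The set of vertices from which player~2 can force reaching $B$ is then exactly the player-2 attractor $\at_2(B)$ in $\widehat{\mdp}$ with $\widehat{\target}_i$ absorbing, computable by the standard linear-time alternating attractor procedure, so $\as{\widehat{\mdp},\reacht{\widehat{\target}_i}}$ is its complement; pulling this set back through the vertex-to-MEC map costs $O(n)$. The delicate point to get right is precisely this reduction — showing that contracting all maximal end-components makes the outer greatest-fixpoint loop of the general almost-sure reachability algorithm vacuous — whereas the quotient construction and the strategy transfers of the first two paragraphs are routine once the internal-connectivity property of MECs is in hand.
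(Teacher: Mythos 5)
Your proposal is correct and follows the same overall plan as the paper's proof of Theorem~\ref{th:timedrmdp}: compute the MEC-decomposition once, contract every MEC to a player-1 vertex so that the quotient has no non-trivial end-components (Observation~\ref{obs:mecfree}), solve almost-sure reachability for each target set in linear time on the quotient, and return the union, giving $O(km+\textsc{MEC})$. The difference lies in how the per-target linear-time step (which you rightly identify as the crux) is realized. The paper (Algorithm~\ref{alg:drmdp}, Proposition~\ref{prop:corrdrmdp}) first computes the set $S'$ of vertices that can graph-reach the lifted target and then removes the \emph{extended} random attractor $\ate(\mdp', V' \setminus S', \target'_i)$, whose self-loop rule and explicit exclusion of the target exist precisely because contracted MECs keep self-loops and the target is not made absorbing; you instead delete the redundant self-loops, make the target absorbing, and take the complement of the ordinary random attractor of the non-target self-loop sinks $B$, justifying correctness via the general fact that under any strategy $\Inf(\pat)$ is almost surely an end-component --- a fact the paper only establishes later, in Section~\ref{sec:gec} (Lemmas~\ref{lem:zeroprob_paths} and~\ref{lem:random_closure}), its reachability proof arguing instead through the two structural properties of $W_i$ and a shortest-path-strategy probability bound. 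Your characterization is sound: the complement of the attractor is closed under random edges, every player-1 vertex in it keeps an edge inside it, so player~1 can confine the play there, and every end-component inside the complement is a target sink; conversely, from the attractor every player-1 strategy reaches $B$ with positive probability while never touching the (absorbing) target. Two details deserve a patch in a full write-up: a MEC with no outgoing edges becomes, under your contraction, a vertex with no out-edges at all, so you should give such vertices a self-loop (placing the non-target ones in $B$) or observe that the player-1 attractor rule absorbs them vacuously --- as stated, your claim that every play is absorbed in a self-loop vertex does not literally apply to these dead ends; and you should justify that deleting self-loops is harmless (looping at a non-target player-1 vertex never helps reachability, which is exactly what the paper's extended-attractor rule encodes, and removing a random self-loop changes only probabilities, not supports, on which qualitative winning sets depend).
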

A vertex~$v$ is in the almost-sure winning
set if player~1 has a strategy to reach one of the $k$ target sets $\target_i$
with probability~1 starting from $v$.
Note that the sets $\target_i$ are not absorbing in contrast to what is often 
assumed for the reachability objective in MDPs. The trivial algorithm would be 
to invoke an algorithm for almost-sure reachability in MDPs $k$ times (for one 
target set $\target_i$ at a time, temporarily making the set $\target_i$ absorbing
if necessary). The 
crucial observation to improve upon this is that given an MDP without 
non-trivial end-components, almost-sure reachability in MDPs can be solved in 
linear time.

We further observe that, for each target set,
either all vertices of an end-component are winning (almost-surely) or none.
Thus if we know the MEC-decomposition of an MDP, 
we can contract the MECs 
to single vertices with self-loops and solve almost-sure reachability on the 
derived MDP. This derived MDP does not have non-trivial end-components,
therefore given the MEC decomposition, the problem can be solved in linear time
per target set. Our algorithm implies that
almost-sure reachability (i.e.\ $k=1$) can be solved in the same asymptotic 
time needed to determine the MEC-decomposition of an MDP.

\begin{definition}[Contraction of MECs]
	Contracting a MEC $\ec$ in an MDP $\mdp$ 
	creates a modified MDP $\mdp'$ from $\mdp$ where the vertices of $\ec$ are 
	replaced by a single vertex~$u$ that belongs to player~1 and the edges to or 
	from a vertex in $\ec$ are replaced with edges to or from, respectively, the 
	vertex~$u$; parallel edges are omitted from $P'$, for parallel random edges
	the probabilities are added up.
\end{definition}

\begin{observation}[\cite{ChatterjeeH11}]\label{obs:mecfree}
	The MDP $\mdp'$ that is constructed from the MDP $\mdp$ by contracting all 
	MECs of $\mdp$ does not contain any non-trivial end-components.
\end{observation}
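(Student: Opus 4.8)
The plan is to argue by contradiction, lifting a hypothetical non-trivial end-component of $\mdp'$ back to an end-component of $\mdp$ that violates the maximality of some MEC. Suppose $X'$ is a non-trivial end-component of $\mdp'$, and let $X \subseteq V$ be obtained from $X'$ by replacing every contracted vertex $u_C \in X'$ with the vertices of the MEC $C$ it represents, while keeping every non-contracted vertex of $X'$ unchanged. The first step is to verify that $X$ is an end-component of $\mdp$. Strong connectivity of $\mdp[X]$ follows by composing paths: each MEC $C$ with $u_C \in X'$ induces a strongly connected $\mdp[C]$, and each edge of $\mdp'[X']$ arises from an edge of $\mdp$ between the corresponding vertex sets, so a strongly connected walk in $\mdp'[X']$ lifts to one in $\mdp[X]$. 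For the closure condition on random edges, take a random vertex $v \in X$: if $v$ lies in a MEC $C$ with $u_C \in X'$, all its successors lie in $C \subseteq X$ since $C$ is an end-component; if $v$ is a non-contracted vertex of $X'$, then $v$ is still random in $\mdp'$, and since contraction only merges and relabels successors without ever deleting one from the support, every successor $w$ of $v$ in $\mdp$ corresponds to a successor of $v$ in $\mdp'$ (namely $w$, or $u_C$ if $w \in C$) that lies in $X'$, hence $w \in X$. Finally $\mdp[X]$ has an edge because $X'$, being non-trivial, has at least two vertices, so $\mdp[X]$ is strongly connected on at least two vertices.

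Next I would derive the contradiction from the maximality of MECs, invoking the standard fact that every end-component is contained in a maximal one (the union of two overlapping end-components is again an end-component, so the end-components through a fixed vertex have a unique maximal element). Fix a MEC $C^\star$ with $X \subseteq C^\star$. Since $X'$ is non-trivial it has at least two vertices, and there are two cases. If $X'$ contains a non-contracted vertex $w$, then $w \in X \subseteq C^\star$, so $w$ belongs to a MEC of $\mdp$ — contradicting that $w$, as a non-contracted vertex of $\mdp'$, belongs to no MEC of $\mdp$. Otherwise all (at least two) vertices of $X'$ are contracted vertices $u_{C_1} \neq u_{C_2}$, and then $C_1, C_2 \subseteq X \subseteq C^\star$; since $C_1$ and $C_2$ are \emph{maximal} end-components contained in the end-component $C^\star$, maximality gives $C_1 = C^\star = C_2$, contradicting $C_1 \neq C_2$. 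In either case we reach a contradiction, so $\mdp'$ has no non-trivial end-component.

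The graph-theoretic heart of the argument — lift the strongly connected set and invoke maximality of MECs — is routine. The step I expect to require the most care is the bookkeeping around the contraction of parallel and random edges in the closure argument: one must check that contraction preserves the support of each random vertex's distribution (it only adds up probability mass for parallel random edges, never drops a successor), so that "all random successors stay inside" transfers faithfully in both directions between $\mdp$ and $\mdp'$.
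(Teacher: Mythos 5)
Your proof is correct and follows the same route as the paper's: lift the hypothetical non-trivial end-component of $\mdp'$ back to an end-component of $\mdp$ and contradict the maximality of the contracted MECs. You merely spell out the verification of the end-component properties and the case analysis that the paper's one-line argument leaves implicit.
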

\begin{proof}
Assume by contradiction that the MDP $\mdp'$ contains an end-component~$\ec'$ 
with at least two vertices. Let $\ec$ be the set of vertices corresponding 
to the vertices of $\ec'$ in the original MDP~$\mdp$. Then $\ec$ is an 
end-component in $\mdp$, a contradiction to the definition of~$\mdp'$.
\end{proof}

In the derived MDP we basically apply, for each target set,
one iteration of the classical almost-sure reachability algorithm but with a slightly
modified random attractor computation defined below. The classical algorithm
repeatedly executes the following two steps:
1) Compute the vertices $S$ from which player~1 can reach the target set~$\target$.
2a) If $S = V$, output $S$ as the (almost-sure) winning set of player~1.
2b) If $S \subsetneq V$, remove the random attractor of $V \setminus S$ 
from the graph (and from $V$) and repeat.
Intuitively, a \emph{random attractor} of a set of vertices $W$ contains the vertices
from which there is a positive probability to reach $W$ for every strategy of
player~1. The \emph{extended random attractor}, formally defined below
and used implicitly in~\cite{ChatterjeeH14}, additionally 
includes player~1 vertices for which the only player~1 strategy to avoid a 
positive probability to reach $W$ is using a self-loop of a vertex not in 
the target set.
Additionally, we explicitly avoid adding vertices in the considered target set 
to the attractor. In the classical algorithm this was achieved by making the 
target set absorbing, which would not work for the extended random attractor.

\begin{definition}[Extended Random Attractor]\label{def:extattr}
Let $E(v)$ denotes the set of vertices $u \in V$
for which $(v, u) \in E$.
In an MDP $\mdp = ((V, E), (\vo, \vr), \trans)$ 
the \emph{extended random attractor} $\ate(\mdp, W, \target)$ for sets of vertices 
$W, \target \subseteq V$ is defined as $\ate(\mdp, W, \target) = \bigcup_{j \ge 0} Z_j$ 
where $Z_0 = W \setminus \target$ and 
$Z_j$ for $j > 0$ is defined recursively as $Z_{j+1} = Z_j \cup \set{v \in \vr 
\mid E(v) \cap Z_j \ne \emptyset} \cup \set{v \in \vo \mid E(v) \subseteq Z_j \cup
\set{v}} \setminus \target$. In contrast to a random attractor \upbr{a} a set of 
vertices~$\target$ can be specified that is never included in $\ate(\mdp, W, \target)$ 
and \upbr{b} a player~1 vertex is also included in $Z_{j+1}$ if all its outgoing 
edges apart from its self-loop are contained in $Z_j$.
The extended random attractor $A = \ate(\mdp, W, \target)$
can be computed in $O(\sum_{v \in 
A} \InDeg(v) + \lvert \vo \setminus \target \rvert)$ time~\cite{Beeri80,Immerman81}.
\end{definition}

Putting the pieces together, our algorithm looks as follows: First, the 
MEC-decomposition of the input MDP~$\mdp$ is computed. Then all MECs of $\mdp$
are contracted to construct the derived MDP~$\mdp'$, which does not contain 
any non-trivial MECs. For each target set we execute one iteration of the classical
algorithm, replacing the usual random attractor with the extended random attractor.
The union of the winning sets determined for each target set then gives the 
winning set of player~1 for disjunctive reachability.

\begin{algorithm}
	\SetAlgoRefName{DisjReachMDP}
	\caption{Disjunctive Query Reachability in MDPs}
	\label{alg:drmdp}
	\SetKwInOut{Input}{Input}
	\SetKwInOut{Output}{Output}
	\BlankLine
	\Input{an MDP $\mdp = ((V, E), (\vo, \vr), \trans)$ and 
	target sets $\target_i \subseteq V$ for $1 \le i \le k$
	}
	\Output
	{
	$\bigvee_{1 \le i \le k} \as{\mdp, \reacht{\target_i}}$
	}
	\BlankLine
	compute MEC decomposition of $\mdp$\;
	let $P'$ be $P$ with all MECs contracted\;
	let $\target_i'$ for $1 \le i \le k$ be the set
	of vertices of $P'$ that represent some vertex of $\target_i$\;
	$W' \gets \emptyset$\;
	\For{$i \gets 1$ \KwTo $k$}{
		$S' \gets \reach(\mdp', \target'_i)$\;
		$A' \gets \ate(\mdp', V' \setminus S', \target'_i)$\;
		$W' \gets W' \cup V' \setminus A'$\;
	}
	let $W$ be the vertices in $W'$ after undoing contraction\;
	\Return{$W$}\;
\end{algorithm}

\begin{proposition}[Runtime]\label{prop:timedrmdp}
	Algorithm~\ref{alg:drmdp} runs in time $O(k m + \textsc{MEC})$.
\end{proposition}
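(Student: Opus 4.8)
The plan is to account for the running time of Algorithm~\ref{alg:drmdp} line by line, using the already-stated facts about MEC-decomposition and extended random attractor computation. First, computing the MEC-decomposition of~$\mdp$ takes $O(\textsc{MEC})$ time by definition. Second, constructing $\mdp'$ by contracting all MECs can be done in $O(m)$ time: once the MEC-decomposition is known, we relabel each vertex by the representative of its MEC (or itself if it lies in no MEC), scan all edges once to redirect endpoints, and drop parallel edges (adding probabilities for parallel random edges) using bucketing by target vertex; this is linear in $n + m$. Computing the sets $\target_i'$ across all $i$ costs $O(\sum_i |\target_i|)$, which by the definition of~$b$ is $O(b) = O(\sum_i |\target_i|)$; since a single target set cannot cost more than $O(m)$ to process and there are $k$ of them, this is absorbed into the $O(km)$ term — but more directly it is $O(b)$, and one should note that $b = O(km)$ is not automatic, so I would keep a $+\,b$ or observe $b \le k n \le k m$ under the standing assumption $m \ge n$.

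Next I would bound the cost of the main loop. For each $i \in \{1,\dots,k\}$ the loop does three things: (1) a graph reachability computation $\reach(\mdp',\target_i')$ to find all vertices from which player~1 can reach~$\target_i'$ in the underlying graph, which is $O(m)$ by a backward BFS/DFS from $\target_i'$ along all edges treating random vertices like player~1 vertices (since for mere reachability in a graph the vertex partition is irrelevant); (2) the extended random attractor $\ate(\mdp', V' \setminus S', \target_i')$, which by Definition~\ref{def:extattr} runs in time $O\big(\sum_{v \in A'} \InDeg(v) + |\vo' \setminus \target_i'|\big) = O(m + n) = O(m)$ since $\mdp'$ has at most $m$ edges and $n$ vertices; (3) the union update $W' \gets W' \cup (V' \setminus A')$, which is $O(n)$ if winning sets are maintained as a Boolean array over~$V'$. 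Hence each iteration is $O(m)$, and the whole loop is $O(km)$. Undoing the contraction to recover $W$ from $W'$ is a single pass: a vertex $v \in V$ is in~$W$ iff its representative in~$\mdp'$ is in~$W'$, so this is $O(n)$.

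Summing up: $O(\textsc{MEC})$ for the decomposition, $O(m)$ for building $\mdp'$ and the $\target_i'$ (plus an $O(b)$ term, absorbed since $b \le km$ under $m \ge n$), $O(km)$ for the loop, and $O(n)$ for undoing the contraction. Under the standing assumptions $\textsc{MEC} = \Omega(m)$ and $m \ge n$, every term except the loop is dominated by $O(km + \textsc{MEC})$, giving the claimed bound $O(km + \textsc{MEC})$. The only mildly delicate point — the main obstacle such as it is — is verifying that the contraction step is genuinely implementable in linear time with the parallel-edge handling, and confirming that the extended-attractor time bound from Definition~\ref{def:extattr} does indeed telescope to $O(m+n)$ per call rather than accumulating the $\sum_{v\in A'}\InDeg(v)$ term badly across iterations; since the loop does not modify $\mdp'$ between iterations, each call is independently bounded by $O(m)$, so no subtle amortization is needed here.
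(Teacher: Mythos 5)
Your proposal is correct and follows essentially the same route as the paper's proof: $O(\textsc{MEC})$ for the decomposition, $O(m)$ for contracting and later undoing the contraction, and $O(m)$ per iteration of the for-loop for the reachability and extended-attractor computations, giving $O(km+\textsc{MEC})$ overall. Your extra care about the $O(b)$ cost of forming the sets $\target_i'$ and about the per-call attractor bound is fine but not needed beyond the standing assumptions $m\ge n$ and $\textsc{MEC}=\Omega(m)$.
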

\begin{proof}
	Contracting all MECs can be done in time $O(m)$ as we have to consider each 
	edge (and vertex) at most twice. The for-loop is executed $k$ times.
	Within the for-loop both the vertices~$S$ that can reach $\target_i$ and the 
	extended random attractor $A = \ate(\mdp', V \setminus S, \target_i)$ can be found 
	in linear time, that is, in $O(k m)$ time over all iterations of the for-loop.
	Undoing the contraction takes again at most $O(m)$ time.
\end{proof}

\begin{proposition}[Correctness]\label{prop:corrdrmdp}
For an MDP~$\mdp$ and target sets $\target_i \subseteq V$ 
for $1 \le i \le k$ Algorithm~\ref{alg:drmdp} returns the set
$\bigvee_{1 \le i \le k} \as{\mdp, \reacht{\target_i}}$.
\end{proposition}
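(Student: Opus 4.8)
The plan is to show that in the $i$-th iteration of the for-loop the set $V' \setminus A'$ (before undoing the contraction) equals $\as{\mdp', \reacht{\target_i'}}$; correctness of the algorithm then follows because $\bigvee_{1\le i \le k}\as{\mdp,\reacht{\target_i}}$ is by definition the union of the sets $\as{\mdp,\reacht{\target_i}}$, because undoing the contraction commutes with unions, and because contracting MECs preserves the almost-sure winning set for reachability. For the last point: inside a MEC $\ec$ the sub-MDP $\mdp[\ec]$ is strongly connected and has no outgoing random edges, so from any vertex of $\ec$ player~1 can reach any other fixed vertex of $\ec$ with probability~$1$ by always moving along a shortest path inside $\mdp[\ec]$ (at random vertices some successor is strictly closer, with probability bounded below, so the distance to the goal decreases within a bounded number of steps with probability bounded below); hence all vertices of $\ec$ share the same almost-sure winning status for $\reacht{\target_i}$, and, noting that the contracted vertex $u_\ec$ is placed in $\target_i'$ precisely when $\ec \cap \target_i \neq \emptyset$, this status is the one of $u_\ec$ in $\mdp'$ for $\reacht{\target_i'}$. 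Thus it suffices to prove, for the MEC-free MDP $\mdp'$ with $S' = \reach(\mdp',\target_i')$ and $A' = \ate(\mdp', V'\setminus S', \target_i')$, that $V'\setminus A' = \as{\mdp',\reacht{\target_i'}}$.

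For the inclusion $\as{\mdp',\reacht{\target_i'}} \subseteq V'\setminus A'$ I would show that every vertex of $A'$ is almost-surely losing for $\reacht{\target_i'}$, by induction on the layers $Z_j$ of Definition~\ref{def:extattr}. The layer $Z_0 = (V'\setminus S')\setminus\target_i'$ consists of vertices from which $\target_i'$ is not even graph-reachable, so $\reacht{\target_i'}$ has probability~$0$ there. In the inductive step a random vertex with an edge into $Z_j$ moves to an almost-surely losing vertex with positive probability, and a player~1 vertex $v\notin\target_i'$ with $E(v)\subseteq Z_j\cup\set{v}$ can only move to an almost-surely losing vertex or use its self-loop forever without ever visiting $\target_i'$; in either case player~1 cannot win with probability~$1$.

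For the reverse inclusion put $W := V'\setminus A'$. Since $A'$ contains every random vertex with an out-edge into $A'$, every successor of a random vertex of $W$ lies in $W$; and since $A'$ contains every player~1 vertex all of whose out-edges, apart from a possible self-loop, enter $A'$, every player~1 vertex of $W$ has a non-self-loop out-edge into $W$. The \emph{key step} is to show that every vertex of $W$ has a path inside $W$ to $\target_i'$. If not, let $R\subseteq W$ be the set of vertices that cannot reach $\target_i'$ within $W$; then $R\neq\emptyset$, $R\cap\target_i'=\emptyset$, every $W$-successor of a vertex of $R$ is again in $R$, and (by the two closure properties above, together with the fact that a random vertex with only a self-loop would lie in $V'\setminus S'\subseteq A'$) every vertex of $R$ has an out-edge to some other vertex of $W$, hence of $R$. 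Take a bottom strongly connected component $\scc$ of $\mdp'[R]$: every random vertex of $\scc$ keeps all its out-edges inside $\scc$, and every vertex of $\scc$ has an out-edge inside $\scc$; moreover $\scc$ cannot be a single vertex with a self-loop, since such a vertex would either be a player~1 vertex whose required non-self-loop $W$-edge must leave $\scc$ (impossible as $\scc$ is bottom) or a random vertex with only a self-loop (impossible, as just noted). Hence $\scc$ is a non-trivial end-component of $\mdp'$, contradicting Observation~\ref{obs:mecfree}. Therefore $R=\emptyset$, and the memoryless strategy that at each vertex of $W\cap\vo$ moves to a successor strictly closer (inside $W$) to $\target_i'$ keeps every play inside $W$ and reaches $\target_i'$ with probability~$1$ from every vertex of $W$, by the same distance-decreasing argument as above.

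The step I expect to be the main obstacle is this reverse inclusion, specifically the extraction of a non-trivial end-component from the putative trap $R$: one has to handle trivial end-components and the self-loops introduced by the contraction correctly, and one has to use precisely the two special features of Definition~\ref{def:extattr} — the explicit exclusion of $\target_i'$ and the inclusion of player~1 vertices whose only surviving move is a self-loop — so that the complement of $W$ is neither too large nor too small; this is also exactly where the assumption that $\mdp'$ is MEC-free is essential.
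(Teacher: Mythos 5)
Your analysis of the contracted MDP $\mdp'$ is sound and is essentially the paper's own argument: the layer-by-layer induction over the extended attractor for one inclusion, and for the other the extraction of a bottom SCC of the non-reaching region $R$, whose closure properties make it a non-trivial end-component contradicting Observation~\ref{obs:mecfree}, followed by the distance-decreasing shortest-path strategy. (Your two closure properties should be stated for vertices outside $\target_i'$, since target vertices are never added to the attractor, but you only use them on $R$, which avoids the target, so this is cosmetic.)

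The genuine gap is the step you dispatch in one sentence: that contracting MECs preserves almost-sure reachability, i.e.\ that $v \in \as{\mdp, \reacht{\target_i}}$ iff its image lies in $\as{\mdp', \reacht{\target_i'}}$. What you actually argue (a.s.\ navigation within a MEC) only shows that all vertices of one MEC have the same status \emph{in $\mdp$}; it does not yet relate $\mdp$ to $\mdp'$ at all. The direction needed for $W_i \subseteq \as{\mdp, \reacht{\target_i}}$ — lifting an a.s.\ winning strategy of $u_{\ec}$ in $\mdp'$ back to $\mdp$ — is at least close to your navigation lemma (one still must compose infinitely many traversal phases and use that every edge leaving a MEC starts at a player-1 vertex), but the converse direction, needed for $\as{\mdp, \reacht{\target_i}} \subseteq W_i$, is not addressed: an a.s.\ winning strategy in $\mdp$ has no immediate counterpart in $\mdp'$, since an entire MEC (possibly containing random vertices) is replaced by a single player-1 vertex, and producing one requires an argument about the projected plays and their probabilities. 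This is exactly where the paper proceeds differently: it never invokes a contraction-equivalence, but proves both inclusions directly against $\as{\mdp, \reacht{\target_i}}$, pulling the attractor layers $Z'_j$ and the two structural properties of $W_i$ back to the original MDP (a player-1 vertex of $Z'_{j+1}$ corresponds either to a player-1 vertex or to a whole MEC all of whose edges lead into $Z_j$ or back into itself). So either prove the transfer lemma in both directions, or run your induction and your trap argument on the uncontracted sets as the paper does; as written, the reduction to $\mdp'$ is asserted rather than established.
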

\begin{proof}
We assume that in the MDP $\mdp$ each vertex 
has at least one outgoing edge and each random vertex has at least one outgoing 
edge that is not a self-loop. This is w.l.o.g.\ because $\bigvee_{1 \le i \le k} \as{\mdp,\reacht{\target_i}}$ does not change if we replace each vertex without 
outgoing edges by a vertex with a self-loop and treat a random vertex whose
only outgoing edge is a self-loop as a player~1 vertex.

First note that by definition a vertex is in $\bigvee_{1 \le i \le k} \as{\mdp, \reacht{\target_i}}$
if and only if it is in $\as{\mdp, \reacht{\target_i}}$ for some $1 \le i \le k$. 
Hence we can consider the $k$ target
sets separately by showing that in the $i$-th iteration of the for-loop of 
Algorithm~\ref{alg:drmdp} the set $\as{\mdp, \reacht{\target_i}}$ is identified.

Let $\mdp'$ be the MDP derived from the MDP $\mdp$ by contracting all
MECs of $\mdp$ and let $\target_i'$ be the set of contracted vertices that 
represent some vertex of $\target_i$ as in Algorithm~\ref{alg:drmdp}.
We use the superscript $'$ to denote sets related to the MDP $\mdp'$ and 
omit the superscript for sets related to the original MDP $\mdp$.
 Note that since only strongly connected subgraphs are 
contracted in $\mdp'$, it clearly holds that a vertex $v \in V$ can reach 
another vertex $u \in V$ if and only if the vertex $v' \in 
V'$ corresponding to $v$ can reach the vertex $u' \in V'$ corresponding to $u$.

Fix some iteration $i$ and let 
$S' = \reach(\mdp', \target'_i)$, let $A' = \ate(\mdp', V' \setminus S', 
\target'_i)$, and let $W'_i = V' \setminus A'$,
that is, $W'_i$ is the set added to $W'$ in the $i$-th iteration of the 
for-loop of Algorithm~\ref{alg:drmdp}. Let the same letters without superscript
denote the corresponding sets of vertices after reverting the contraction
of the MECs of $\mdp$.
We prove the lemma by first showing $\as{\mdp, \reacht{\target_i}} \subseteq 
W_i$ and then $W_i \subseteq \as{\mdp, \reacht{\target_i}}$.\smallskip

We prove  $\as{\mdp, \reacht{\target_i}} \subseteq W_i$ by showing $A \subseteq 
V \setminus \as{\mdp, \reacht{\target_i}}$ by induction on the 
recursive definition of $A' = \ate(\mdp', V' \setminus S', \target'_i) =
\cup_{j \ge 0} Z'_j$, where the sets $Z'_j$ are defined as in 
Definition~\ref{def:extattr} and the sets $Z_j$ are the corresponding
sets after reverting the contraction of the MECs of $\mdp$. Since the attractor
computation is done on $\mdp'$, each
set $Z_j$ either contains all vertices of a MEC of $\mdp$ or none.
Clearly $A \cap \target_i = \emptyset$ as vertices
in $\target_i'$ are explicitly excluded from $A'$.
Player~1 cannot reach $\target_i$ almost surely from the vertices in
$Z_0 = V \setminus S$ because these vertices cannot reach any vertex in 
$\target_i$. Assume the claim holds for $Z_j$, i.e., for all vertices $z \in Z_j$
and any strategy $\str$ of player~1 we have $\pr{\str}{z}{\mdp, 
\reacht{\target_i}} < 1$. 
By the definition of $Z'_{j+1}$, for a random vertex $v'$ in $Z'_{j+1} \setminus Z'_j$
there is a positive probability to reach a vertex in $Z'_j$; thus, 
$\pr{\str'}{v'}{\mdp', \reacht{\target_i'}} < 1$ for any strategy $\str'$ 
of player~1. Random vertices in $\mdp'$ were not contracted, thus the same
argument holds for $Z_{j+1}$ and $\mdp$. 
A player~1 vertex $\inec'$ in $Z'_{j+1} \setminus Z'_j$ corresponds to either a 
player~1 vertex $\inec$ or a MEC $\ec$ in $Z_{j+1} \setminus Z_j$. In both cases
all the edges from $\inec$ resp.\ $\ec$ lead to vertices in $Z_j$ or to $\inec$
resp.\ $\ec$ itself. Hence since $\inec \notin \target_i$ resp.\ $\ec \cap \target_i 
= \emptyset$, we also have $\pr{\str}{\inec}{\reacht{\target_i}} < 
1$ for any strategy $\str$ of player~1 and $\inec$ resp.\ all $\inec \in \ec$. 
\smallskip

We next show $W_i \subseteq  \as{\mdp, \reacht{\target_i}}$.
Let $G[W_i] = (W_i, E \cap (W_i \times W_i))$ be the subgraph induced by 
the vertices in $W_i$.
We establish two properties:
(1) all outgoing edges of random vertices $\vr \cap W_i$  lead to vertices in $W_i$, and
(2) all vertices in $W_i \setminus \target_i$ can reach $\target_i$ in $G[W_i]$.
The claim follows from these two properties using the same proof as for the 
classical algorithm for almost-sure reachability in MDPs (see below).

\begin{enumerate}
  \item[(1)] For vertices in $\vr$ we distinguish whether they are contained in a MEC of $\mdp$
	or not. In the first case property~(1) follows from the fact that a MEC has no
	outgoing random edges and every MEC is either completely contained in $W_i$
	or completely contained in $V \setminus W_i$.
	In the second case property~(1) follows from the definition of an extended 
	random extractor because a vertex in $\vr \cap W'_i$ with an edge to a vertex in 
	$A$ would have been included in $A$.

  \item[(2)] To show property~(2) we will use that by Observation~\ref{obs:mecfree} the 
	MDP $\mdp'$ does not contain any non-trivial MEC. 
	Assume by contradiction that some vertices in 
	$W_i \setminus \target_i$ cannot reach $\target_i$ in $G[W_i]$.
	Then there exists a bottom SCC $\scc$ (i.e.\ an SCC without outgoing edges, 
	possibly a single vertex)
	in $G[W_i]$ with $\scc \cap \target_i = \emptyset$. Note that 
	every MEC in $G[W_i]$ is completely contained in one of the SCCs of $G[W_i]$.
	By property~(1) $\scc$ has no outgoing random edges in $\mdp$; by this and the fact 
	that $\scc$ is strongly connected,
	the corresponding set $\scc'$ of vertices in $\mdp'$ 
	would be a non-trivial MEC in $\mdp'$ if it contained
	more than one vertex. Thus $\scc'$ can contain only one vertex~$\inscc'$ and this 
	vertex has either no outgoing 
	edge or only a self-loop in $G'_{W'_i}$. If $\inscc'$ was 
	a player~1 vertex, then all its outgoing edges would go to vertices in $A'$ 
	or be a self-loop, hence $\inscc'$ would have been included in the attractor $A'$.
	If $\inscc'$ was a random vertex, then by the assumption that in $\mdp$, and thus in 
	$\mdp'$, every random vertex 
	has an outgoing edge that is not a self-loop we would get a contradiction to
	property~(1). Thus no such bottom SCC $\scc$ can exist, that is, every bottom SCC
	of $G[W_i]$ contains a vertex of $\target_i$ and thus property~(2) holds.
\end{enumerate}

To see that the two established properties imply 
$W_i \subseteq \as{\mdp, \reacht{\target_i}}$, let for a vertex $u \in W_i$
be $d(u)$ the shortest path distance to a vertex in $\target_i$. Consider the 
following strategy~$\str$ of player~1: For a player~1 vertex $u$, 
choose an edge to a vertex $v$ such that $d(v) < d(u)$. For a random vertex $u$, 
there is always an edge to a vertex $v$ such that $d(v) < d(u)$.
Let $\ell = \lvert W_i \rvert$ and let $\alpha$ be the minimum positive
transition probability in the MDP~$\mdp$. For all vertices $v \in W_i$
the probability that $\target_i$ is reached within $\ell$ steps 
is at least $\alpha^\ell$, that is, the probability that $\target_i$ is not 
reached within $b \cdot \ell$ steps is at most $(1-\alpha^\ell)^b$, 
which goes to $0$ as $b$ goes to $\infty$. Thus for all $v \in W_i$ strategy~$\str$
ensures that $\target_i$ is reached with probability~1.
\end{proof}

\subsection{Conditional Lower Bounds for Disjunctive Reachability in MDPs}
\label{subsec:reach_lowerbounds}

}
\inshort{
\subsection{Disjunctive Reachability in MDPs}
\label{subsec:reach_lowerbounds}
}

\infull{
Here we complement the above algorithm by conditional lower bounds for disjunctive reachability queries in MDPs.
These lower bound will be based on the conjectures STC, SETH, and OVC 
introduced in Section~\ref{sec:conjectures}.
}

We first present our lower bound for dense MDPs based on STC.
\begin{theorem}\label{thm:reach_STChard}
  There is no combinatorial $O(n^{3-\epsilon})$ or $O((k\cdot n^2)^{1-\epsilon})$ algorithm \upbr{for any $\epsilon > 0$} for disjunctive reachability queries in MDPs under Conjecture~\ref{conj:triangle} \upbr{i.e., unless STC and BMM fail}. 
  \infull{In particular, there is no such algorithm deciding whether the winning set is non-empty
  or deciding whether a specific vertex is in the winning set.}
  The bounds hold for dense MDPs with $m = \Theta(n^2)$.
\end{theorem}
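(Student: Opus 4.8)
The plan is to reduce triangle detection to the non-emptiness (and also to the single-vertex membership) question for disjunctive reachability queries in MDPs. Given a graph $G=(V_G,E_G)$ with $|V_G|=N$, I would build an MDP on roughly $\Theta(N)$ vertices together with $\Theta(N)$ target sets, each of size $\Theta(N)$, so that the total size is $m=\Theta(N^2)=\Theta(n^2)$ and $b = \Theta(N^2)$; then a combinatorial $O(n^{3-\varepsilon})$ algorithm, or even an $O((k\cdot n^2)^{1-\varepsilon})$ algorithm, would detect a triangle in $G$ in time $O(N^{3-\varepsilon'})$, contradicting Conjecture~\ref{conj:triangle}. The degrees of freedom I want to exploit: the random vertices let player~1 be ``tested'' against all possible continuations simultaneously (a random vertex forces player~1 to win from every successor), while the $k$ disjuncts let player~1 pick which target set (indexed by a candidate triangle vertex) it is trying to reach.

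The construction I have in mind: make three layers of copies of $V_G$, say $X,Y,Z$, plus a designated start vertex $s$. Put $s\in V_1$ with edges to each $x\in X$ (so player~1 guesses the first triangle vertex $x$). From $x\in X$ go to a random vertex that uniformly branches to all $y\in Y$ with $(x,y)\in E_G$ (so player~1 must be able to cope with \emph{every} neighbour $y$ of $x$ — this is where a triangle, which works for a \emph{specific} $y$, seems too weak, so I expect the roles of the quantifiers to need care). From $y\in Y$ player~1 (now $y\in V_1$) moves to some $z\in Z$ with $(y,z)\in E_G$; and the target set $\target_x$, indexed by $x\in X$, consists of exactly those $z\in Z$ with $(z,x)\in E_G$. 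Choosing disjunct $i$ corresponds to committing to the first vertex being $x_i$. Dead ends (vertices with no outgoing edge in $G$ along the relevant step) get self-loops so the process is well-defined; player~1 loses almost surely from such sinks. I then argue: $s$ is in $\bigvee_i \as{\mdp,\reacht{\target_i}}$ iff there exists $x$ such that for \emph{every} neighbour $y$ of $x$ there is $z$ with $(y,z),(z,x)\in E_G$ — which is not exactly ``$G$ has a triangle''. So I would instead have player~1, not the random vertex, choose $y$, and use a random vertex only to enforce a ``for all'' over the \emph{third} vertex, or more likely flip the construction so that the existential choices ($x$, then $y$, then $z$) are all player~1 choices and the random vertex merely certifies that the chosen $x$ coincides with the disjunct index; the winning condition then collapses to ``$\exists x,y,z: (x,y),(y,z),(z,x)\in E_G$''. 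The cleanest version is probably: $s\to x$ (player~1), $x\to$ random vertex that with equal probability either (a) checks $x$ is the ``root'' by going to a sink unless the chosen disjunct is $\target_x$, or (b) proceeds $x\to y\to z$ via player~1 edges following $E_G$, with $\target_x=\{z : (z,x)\in E_G\}$. Then almost-sure reachability of $\target_x$ requires both branches to succeed, branch (a) pins the disjunct to $x$, branch (b) finds a triangle through $x$.

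The main steps, in order: (i) formalize the MDP and the $k=N$ target sets, verify $n=\Theta(N)$, $m=\Theta(N^2)$, $b=\Theta(N^2)$, and that the reduction is computable in $O(N^2)=O(m)$ time; (ii) prove soundness — if $G$ has a triangle $x,y,z$ then using disjunct $\target_x$ and the memoryless strategy following the triangle edges, player~1 reaches $\target_x$ with probability~1 from $s$, because every random branch leads to a finite path ending in $\target_x$; (iii) prove completeness — if $s\in\bigvee_i\as{\mdp,\reacht{\target_i}}$, then fixing the winning disjunct $\target_x$ and extracting player~1's (memoryless, w.l.o.g.) choices $y$ from $x$ and $z$ from $y$ yields edges $(x,y),(y,z)\in E_G$, while membership $z\in\target_x$ gives $(z,x)\in E_G$, hence a triangle; (iv) conclude the lower bound by plugging the assumed fast algorithm into this reduction, including the $O((k\cdot n^2)^{1-\varepsilon})$ case since $k\cdot n^2 = \Theta(N^3)$. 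I expect step (iii) to be the delicate one: I must make sure that no ``cheating'' strategy — e.g.\ staying in a self-loop, or exploiting a disjunct whose index does not match the first guessed vertex — can win almost surely, which is exactly what the random ``index-check'' branch is designed to rule out, but getting that gadget to simultaneously (a) keep the instance size linear, (b) force the index match, and (c) not accidentally create a spurious end-component from which player~1 wins, is the real work.
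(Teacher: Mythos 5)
Your overall strategy is the same as the paper's (Reduction~\ref{red:TriangletoMDPReach}): reduce triangle detection to deciding whether a start vertex $s$ lies in $\bigvee_x \as{\mdp,\reacht{\target_x}}$, with $k=\Theta(n)$ disjuncts indexed by the candidate triangle vertex $x$, layered copies of $G$ carrying the existential choices along edges of $G$, and a probability-$\tfrac12$ random branch whose sole purpose is to force the vertex chosen by player~1 to agree with the index of the disjunct being pursued; the size and the $O(n^{3-\epsilon})$ / $O((k\cdot n^2)^{1-\epsilon})$ accounting then go through exactly as you say. The genuine gap is in your index-check gadget as written. An MDP transition cannot depend on ``the chosen disjunct'': the MDP is built once, and only afterwards one asks, for each $i$, whether $s\in\as{\mdp,\reacht{\target_i}}$. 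So branch (a), ``go to a sink unless the chosen disjunct is $\target_x$'', must be realized structurally, namely as an edge to a per-vertex absorbing vertex $g_x$ that belongs to $\target_x$ and to no other target set. With your explicit definition $\target_x=\{z:(z,x)\in E_G\}$ this fails: the sink reached in branch (a) lies in no target set, so even when $G$ has a triangle through $x$ the play is absorbed there with probability $\tfrac12$, no disjunct is almost-surely winnable from $s$, and soundness breaks. The fix is one line for your variant ($\target_x=\{g_x\}\cup\{z:(z,x)\in E_G\}$), and it is precisely the paper's $g_v$ device.

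For comparison, the paper places the random check \emph{after} the three graph-edge steps: player~1 walks $x^1\to y^2\to z^3\to x^4$ through four copies of $V$, and the random vertex $x^4$ goes with probability $\tfrac12$ to the absorbing $g_x$ and with probability $\tfrac12$ back to $x^1$, allowing a retry; hitting $g_v$ for $v\neq x$ is fatal, which pins the fourth-layer vertex (and hence the closing edge $(z,x)$) to the disjunct index. This keeps every target set a singleton $\{g_x\}$ and needs no graph-dependent targets, while your front-loaded check with targets in the last (absorbing) layer also works once $g_x$ is added to $\target_x$; your completeness argument (the $\tfrac12$-branch forces the chosen first vertex to equal $x$, then the deterministic continuation $y,z$ with $z\in\target_x$ yields the triangle) is then essentially the paper's, and memorylessness is not actually needed for it.
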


The above theorem is by the following reduction from the triangle detection problem.
\begin{reduction}\label{red:TriangletoMDPReach}
 Given an instance of triangle detection, i.e., a graph $G=(V,E)$,
 we build the following MDP $\mdp$. 
 \begin{itemize}
  \item  The vertices $V'$ of $\mdp$ are given by four copies 
  $V^1, V^2, V^3, V^4$ of $V$, a start vertex~$s$, and absorbing vertices 
  $F=\{g_v \mid v \in V\}$. The edges $E'$ of $\mdp$ are defined as follows:
  There is an edge from $s$ to the first copy~$v^1 \in V^1$ of every $v \in V$ 
  and the last copy~$v^4 \in V^4$ of every $v \in V$ is connected to its first copy~$v^1$
  and its corresponding absorbing vertex~$g_v \in F$; further for $1 \le i \le 3$
  there is an edge from $v^i$ to $u^{i+1}$ iff $(v,u) \in E$.
	  
  \item The set of vertices $V'$ is partitioned into player~1 vertices $V'_1=\{s\} \cup V^1 \cup V^2 \cup V^3\cup F$
	and random vertices $V'_R= V^4$.
	Moreover, the probabilistic transition function for each vertex $v \in V'_R$ 
	chooses among $v$'s successors with equal probability $1/2$ each.
	
 \end{itemize}
\end{reduction}
\begin{figure}
 \centering
 \begin{tikzpicture}[yscale=0.8, xscale=1.6,>=stealth]
 % example_reachSTC
 \small
		\draw  (0,-1.5)node[player](s){$s$};
  		\path 	(1,0)node[player](a1){$a^1$}
			++(0,-1.5)node[player](b1){$b^1$}
			++(0,-1.5)node[player](c1){$c^1$}
			;
  		\path 	(2,0)node[player](a2){$a^2$}
			++(0,-1.5)node[player](b2){$b^2$}
			++(0,-1.5)node[player](c2){$c^2$}
			;
  		\path 	(3,0)node[player](a3){$a^3$}
			++(0,-1.5)node[player](b3){$b^3$}
			++(0,-1.5)node[player](c3){$c^3$}
			;
		\path 	(4,0)node[random](a4){$a^4$}
			++(0,-1.5)node[random](b4){$b^4$}
			++(0,-1.5)node[random](c4){$c^4$}
			;
		\path 	(5,0)node[player](g1){$g_a$}
			++(0,-1.5)node[player](g2){$g_b$}
			++(0,-1.5)node[player](g3){$g_c$}
			;	
			
		\path [->, thick]
			(s) edge (a1)
			(s) edge (b1)
			(s) edge (c1)
			(a1) edge (b2)
			(a2) edge (b3)
			(a3) edge (b4)
			(b1) edge (c2)
			(b2) edge (c3)
			(b3) edge (c4)
			(c1) edge (a2)
			(c2) edge (a3)
			(c3) edge (a4)
			(b1) edge (a2)
			(b2) edge (a3)
			(b3) edge (a4)
			(a4) edge (g1)
			(b4) edge (g2)
			(c4) edge (g3)
			;
		\draw[->,thick, bend right, out= -45 , in=-135]	(a4) edge (a1);
		\draw[left,thick,->,rounded corners=5pt]  (b4) -- (4,-2.25) -- (1,-2.25) -- (b1);
		\draw[->,thick, bend left,out= 45 , in=135]	(c4) edge (c1);

 \end{tikzpicture}
 \infull{
 \caption{Illustration of Reduction~\ref{red:TriangletoMDPReach}, with $G=(\{a,b,c\},\{(a,b),(b,a),(b,c),(c,a)\})$. Vertices drawn as cycle are owned by player~1,
vertices drawn as diamond are random vertices.}
 }
 \inshort{
 \caption{Illustration of Reduction~\ref{red:TriangletoMDPReach}, with $G=(\{a,b,c\},$ $\{(a,b),(b,a),(b,c),(c,a)\})$. Vertices drawn as cycle are owned by player~1,
vertices drawn as diamond are random vertices.}
 }
 
 \label{fig:TriangletoMDPReach}
\end{figure}
The reduction is illustrated in Figure~\ref{fig:TriangletoMDPReach}.  \infull{
As all random choices are uniformly at random we omit the exact probabilities in the figures.

}
Next we prove that Reduction~\ref{red:TriangletoMDPReach} is indeed a valid reduction 
from triangle detection to disjunctive reachability queries in MDPs.

\begin{lemma}
A graph $G$ has a triangle iff $s$ is contained in $\bigvee_{v \in V} \as{\mdp, \reacht{\target_v}}$, where 
$\mdp$ is the MDP  given by Reduction~\ref{red:TriangletoMDPReach} and  $\target_v=\{g_v\}$ for $v \in V$.
\end{lemma}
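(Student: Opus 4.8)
The plan is to establish the two directions of the iff separately, in both cases by analyzing what it means for player~1 to win almost-surely for the disjunctive reachability query from $s$ in the MDP $\mdp$ of Reduction~\ref{red:TriangletoMDPReach}. The key structural observation is that from $s$ player~1 picks some $v^1 \in V^1$, and thereafter the only genuine choices are at the player~1 layers $V^1, V^2, V^3$, where moving $v^i \to u^{i+1}$ corresponds exactly to traversing an edge $(v,u) \in E$ of $G$; the layer $V^4$ consists of random vertices, each $v^4$ going with probability $1/2$ to the absorbing sink $g_v$ and with probability $1/2$ back to $v^1$. So a play that reaches $v^4$ for some $v$ either gets absorbed in $g_v$ (and this is the only way to reach any absorbing vertex) or returns to $v^1$ to repeat.

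For the ``only if'' direction, suppose $G$ has a triangle $x \to y \to z \to x$. Then from $s$ player~1 moves to $x^1$, then along $x^1 \to y^2 \to z^3 \to x^4$ (using the three triangle edges, which are present by construction since $1\le i\le 3$ and $(x,y),(y,z),(z,x)\in E$). At the random vertex $x^4$ the play either is absorbed in $g_x$ or returns to $x^1$, from which player~1 repeats the same length-three path back to $x^4$. Hence with probability~1 the play eventually reaches $g_x$, so $s \in \as{\mdp, \reacht{\target_x}}$ with $\target_x = \{g_x\}$, and a fortiori $s \in \bigvee_{v\in V}\as{\mdp,\reacht{\target_v}}$. (I would note that memoryless strategies suffice here: always play the fixed triangle-edge move at $x^1, y^2, z^3$.)

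For the ``if'' direction, I argue contrapositively: if $G$ has no triangle, then $s \notin \as{\mdp,\reacht{\target_v}}$ for every $v$. Fix any $v$ and any player~1 strategy $\str$. A play under $\str$ reaches $g_v$ only by passing through $v^4$, and the only way to reach $v^4$ is along a path $s \to a^1 \to b^2 \to c^3 \to v^4$ with $(a,b),(b,c),(c,v)\in E$ — equivalently $a,b,c,v$ form a closed walk $a\to b\to c\to v$ in $G$ of length three from $a$. Returning from $v^4$ lands back at $v^1$, and from $v^1$ the only way to reach $v^4$ again is along $v^1 \to b'^2 \to c'^3 \to v^4$ with $(v,b'),(b',c'),(c',v)\in E$, i.e.\ a triangle through $v$ (now with $a=v$). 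Since $G$ is triangle-free, no such path from $v^1$ to $v^4$ exists. Consequently, after the first visit to $v^4$ (if any) the play returns to $v^1$ with probability $1/2$ and then can never reach $v^4$, hence never $g_v$; and the first visit to $v^4$ succeeds with probability at most $1$ but the subsequent absorption probability is at most $1/2 < 1$. Therefore $\pr{\str}{s}{\reacht{\target_v}} \le 1/2 < 1$ for every $\str$, so $s \notin \as{\mdp,\reacht{\target_v}}$; taking the union over $v\in V$ gives $s \notin \bigvee_{v\in V}\as{\mdp,\reacht{\target_v}}$.

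The main obstacle is the bookkeeping in the ``if'' direction: one must carefully distinguish the \emph{first} arrival at some layer-4 vertex (which may start from an arbitrary $a^1$, not necessarily from $v^1$) from all \emph{subsequent} arrivals (which necessarily start from $v^1$ after a random return), and argue that triangle-freeness of $G$ blocks exactly the second kind — so that after one random ``return'' step the target $g_v$ becomes unreachable, capping the almost-sure winning probability strictly below~$1$. The rest is a routine translation between layered paths in $\mdp$ and closed walks/triangles in $G$, plus the standard fact that in a finite MDP a reachability objective is won almost-surely iff from every reachable vertex the target is reachable in the surviving graph.
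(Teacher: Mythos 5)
Your ``only if'' direction is the same as the paper's and is correct. The gap is in your contrapositive direction, at the claim that after the play bounces from $v^4$ back to $v^1$, ``the only way to reach $v^4$ again is along $v^1 \to b'^2 \to c'^3 \to v^4$'' and hence, by triangle-freeness, the play ``can never reach $v^4$, hence never $g_v$.'' That reachability claim is false: from $v^1$ the play can move $v^1 \to b^2 \to c^3 \to z^4$ for some $z \ne v$, survive the coin flip at $z^4$ (with probability $1/2$ it returns to $z^1$ rather than being absorbed in $g_z$), and from $z^1$ reach $v^4$ by another length-three path; none of this requires a triangle. Triangle-freeness only rules out the \emph{direct} length-three return from $v^1$ to $v^4$; it does not make $v^4$ unreachable from $v^1$ in the underlying graph. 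For the same reason your numerical bound $\pr{\str}{s}{\reacht{\target_v}} \le 1/2$ is wrong in general: a strategy can collect an extra success probability on later visits to $v^4$ reached via such detours, so the probability can exceed $1/2$ (it is still strictly below $1$, but your argument does not establish this).

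The missing ingredient is exactly the first step of the paper's proof: one must first argue that visits to vertices $z^4$ with $z \ne v$ are fatal for almost-sure winning, because each such visit sends the play to $g_z$ with probability $1/2$, and from $g_z$ the target $g_v$ is unreachable, so the success probability is bounded away from $1$. Hence any almost-surely winning strategy (equivalently, in your contrapositive, the only plays you need to control) must confine the play to avoid $V^4 \setminus \{v^4\}$ altogether. Only under that restriction does your argument go through: after the bounce to $v^1$, the play must return to $v^4$ by a path avoiding all other fourth-copy vertices, which forces the direct path $v^1 \to b^2 \to c^3 \to v^4$ and yields the triangle $(v,b),(b,c),(c,v)$ in $G$. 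So your proposal needs this case distinction (``strategy visits some $z^4$, $z\ne v$, with positive probability'' versus ``strategy avoids them'') inserted before the triangle-freeness step; with it, your argument becomes essentially the paper's proof.
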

\begin{proof}
 For the only if part assume that $G$ has a triangle with vertices $a,b,c$ and 
 let $a^i$,$b^i$,$c^i$ be the copies of $a,b,c$ in $V^i$.
 Now a strategy for player~1 in the MDP $\mdp$ to reach $g_a$ with probability~1 is as follows:
 When in $s$, go to $a^1$; when in $a^1$, go to $b^2$; when in $b^2$, go to $c^3$;
 when in $c^3$, go to $a^4$.  
 As $a,b,c$ form a triangle, all the edges required by the above strategy exist.
 When player~1 starts in $s$ and follows the above strategy the only random vertex he 
 encounters is $a^4$.
 The random choice sends him to the target vertex $g_a$ and to vertex $a^1$
 with probability $1/2$ each.
 In the former case he is done, in the latter case he continues playing his strategy and will reach $a^4$ again after three steps.
 The probability that player~1 has reached $g_a$ after $3q+1$ steps is $1-(1/2)^q$
 which converges to $1$ with $q$ going to infinity.
 Thus we have found a strategy to reach $g_a$ with probability $1$.

 For the if part assume that $s \in \bigvee_{v \in V} \as{\mdp, \reacht{\target_v}}$. 
 That is, there is an $a \in V$ such that $s \in \as{\mdp, \reacht{\target_a}}$.
 Let us consider a corresponding strategy for reaching $\target_a = \{g_a\}$.
 First, assume that the strategy would visit a vertex $v^4$ for $v \in V \setminus \{a\}$. 
 Then with probability $1/2$ player~1 would end up in the vertex $g_v$ which has no path to $g_a$, 
 a contradiction to $s \in \as{\mdp, \reacht{\target_a}}$. 
 Thus the strategy has to avoid visiting vertices $v^4$ for $v \in V \setminus \{a\}$.
 Second, as the only way to reach $g_a$ is $a^4$, the strategy has to choose $a^4$.
 But then with probability $1/2$ it will be send to $a^1$
 and there must be a path from $a^1$ to $g_a$ that doesn't not cross $V^4 \setminus \{a^4\}$.
 By the latter this path must be of the form $a^1, b^2, c^3, a^4, g_a$ for some $b,c \in V$.
 Now by the construction of $G'$ in the MDP $\mdp$ the vertices $a,b,c$ form a triangle in the original graph $G$.
\end{proof}

The size and the construction time of the MDP $\mdp$, constructed by Reduction~\ref{red:TriangletoMDPReach}, is linear in the size of the 
original graph $G$ and we have $k = \Theta(n)$ target sets.
Thus if we would have a combinatorial $O(n^{3-\epsilon})$ or 
$O((k\cdot n^2)^{1-\epsilon})$ algorithm for 
disjunctive queries of reachability objectives in MDPs for any $\epsilon > 0$, we
would immediately get a combinatorial  $O(n^{3-\epsilon})$ algorithm for 
triangle detection, which contradicts STC and BMM.

Next we present a lower bound for sparse MDPs based on OVC and SETH.

\begin{theorem}\label{thm:reach_OVChard}
  There is no $O(m^{2-\epsilon})$ or $O((k\cdot m)^{1-\epsilon})$ algorithm
  \upbr{for any $\epsilon > 0$} for 
  disjunctive reachability queries in MDPs under Conjecture~\ref{conj:ov} \upbr{i.e., unless OVC and SETH fail}.
  \infull{In particular, there is no such algorithm deciding whether the winning set is non-empty
  or deciding whether a specific vertex is in the winning set.}
\end{theorem}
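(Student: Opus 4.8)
The plan is to give a linear-time reduction from the Orthogonal Vectors problem to disjunctive reachability queries in MDPs, mirroring the structure of Reduction~\ref{red:TriangletoMDPReach} but replacing the ``four copies of $V$ threaded by the edge relation'' gadget by a ``two-layer vector-checking'' gadget. Concretely, given an OV instance with sets $S_1,S_2$ of $d$-bit vectors, $d = \Theta(\log N)$ and $|S_i| \le N$, I would build an MDP with: a start vertex $s \in \vo$; for each $u \in S_1$ a player~1 vertex $x_u$; for each coordinate $\ell \in \{1,\dots,d\}$ a random vertex $c_\ell \in \vr$ that branches uniformly to a ``survive'' vertex and (only if $u_\ell = 1$) to a ``fail'' sink; and for each $v \in S_2$ a player~1 vertex $y_v$ together with an absorbing target vertex $g_v$. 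The token goes $s \to x_u$ (player~1 picks a candidate $u$), then $x_u$ walks through the coordinate gadgets, and at the end player~1 at some vertex picks a candidate $v \in S_2$ and, provided that for every coordinate $\ell$ with $v_\ell = 1$ the token ``survived'' the corresponding random branch (i.e.\ $u_\ell = 0$), reaches $g_v$; a failed random branch sends the token to a dead-end sink with no path to any $g_{v'}$. As in the triangle reduction we make the target absorbing vertices re-enter the gadget with probability $1/2$ so that the relevant random branch is retaken infinitely often, forcing probability~1 reachability exactly when a surviving path exists.

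The key steps, in order, are: (1) describe the reduction precisely as a formal \emph{reduction} environment, with the vertex partition into $\vo$ and $\vr$ and the uniform transition function; (2) verify that the number of vertices and edges is $O(Nd) = O(N\log N)$ and that there are $k = \Theta(N)$ target sets $\target_v = \{g_v\}$, so that $m = \widetilde{O}(N)$; (3) prove the correctness lemma: $s \in \bigvee_{v \in S_2} \as{\mdp,\reacht{\target_v}}$ iff there exist $u \in S_1$, $v \in S_2$ with $\sum_\ell u_\ell v_\ell = 0$. For the ``if'' direction, given an orthogonal pair $(u,v)$, the strategy ``go to $x_u$, walk through, then go to $y_v$'' never hits a fail sink on the coordinates where $v_\ell = 1$ (since $u_\ell = 0$ there), and the $1/2$-reentry argument of the triangle lemma gives probability~$1$. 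For the ``only if'' direction, if $s$ is almost-surely winning for some $\target_a$, the strategy must avoid every fail sink with positive probability, which pins down a single $u \in S_1$ used on the $s\to x_u$ edge and forces $u_\ell = 0$ for every $\ell$ with $a_\ell = 1$, i.e.\ $u$ and $a$ are orthogonal; (4) conclude that a hypothetical $O(m^{2-\epsilon})$ or $O((k\cdot m)^{1-\epsilon})$ algorithm would solve OV in $N^{2-\epsilon'}\operatorname{poly}(\log N)$ time, contradicting OVC (and hence SETH by~\cite{Williams05}). I would also remark, as in Theorem~\ref{thm:reach_STChard}, that the hardness already holds for deciding non-emptiness of the winning set and for deciding membership of the fixed vertex~$s$.

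The main obstacle I expect is the correctness argument in the ``only if'' direction: I need to rule out strategies that use \emph{randomization over which $u$ or $v$ to pick} or that revisit the gadget in complicated ways, and argue that such mixing cannot help, because any strategy with positive probability of reaching a fail sink has reachability probability bounded away from~$1$ (the fail sinks are absorbing and disconnected from all $g_v$), so an almost-sure winning strategy must, with probability~$1$, follow a path that survives all random branches on the $v_\ell = 1$ coordinates — and the structure of the MDP means such a path exists only when some $u$ is orthogonal to some $v$. The secondary subtlety is getting the $d = \Theta(\log N)$ bookkeeping right so that $m$ is genuinely near-linear in $N$ and the reduction runs in linear time; this is routine but must be stated carefully since the theorem is about sparse MDPs. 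The reentry-probability computation is identical in spirit to the one in the triangle lemma, so I would only sketch it.
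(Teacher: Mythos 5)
Your overall plan (a linear-time reduction from OV with $k=\Theta(N)$ singleton targets $g_v$, an MDP of size $O(N\log N)$, and the probability-$1/2$ recycle trick to turn ``a surviving path exists'' into almost-sure reachability) is exactly the right template, and your step (4) and the non-emptiness/membership remark are fine. The gap is in the gadget itself. As you describe it, each coordinate gadget is a \emph{random} vertex whose fail branch (present iff $u_\ell=1$) leads to a dead-end sink with no path to any $g_{v'}$, and the comparison with $v$ is only supposed to happen ``at the end''. But almost-sure reachability of $g_v$ forbids \emph{any} positive-probability move to a vertex from which $g_v$ is unreachable, so with dead-end fail sinks the condition your MDP actually checks is ``$u_\ell=0$ for every coordinate whose gadget the walk traverses'', independently of $v$. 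The ``if'' direction of your correctness lemma already fails: for an orthogonal pair with $u_\ell=1$ and $v_\ell=0$ for some $\ell$, the walk through that gadget dies in the fail sink with constant probability, so $g_v$ is not reached almost surely. You cannot repair this by letting the traversal depend on $v$ (it is chosen only at the end, and if player~1 controls the route it simply skips all risky coordinates), and making the fail branches depend on the pair $(u_\ell,v_\ell)$ jointly needs per-pair vertices, i.e.\ $\Theta(N^2)$ size, which destroys the sparse lower bound.

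A second, related problem is the player/random assignment: you make $x_u$ a player-1 vertex and the coordinate vertices random, which loses the universal quantification ``for \emph{every} coordinate $\ell$ with $u_\ell=1$'' that the reduction must enforce against the single chosen $v$. The paper's Reduction~\ref{red:OVtoMDPReach} resolves both issues by encoding the bits in the bipartite edges $S_1\to\mathcal{C}\to S_2$: the $S_1$- and $S_2$-vertices are \emph{random}, the coordinate vertices are \emph{player~1}, there is an edge $x\to c_i$ iff $x_i=1$ and $c_i\to y$ iff $y_i=0$, and each $y\in S_2$ goes with probability $1/2$ to $g_y$ and back to $s$. Then the random choice at $x$ ranges over all $1$-coordinates of $x$, player~1 must steer from every such $c_i$ to the \emph{same} $y$ (anything else gets absorbed in a wrong $g_{y'}$ with probability $1/2$), and this is possible precisely when $x$ and $y$ are orthogonal. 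Your ``only if'' concerns (randomized or history-dependent strategies) are handled there by exactly this absorption argument, so that part of your outline is fine once the gadget is replaced by the bipartite edge encoding with the correct $\vo/\vr$ split.
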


To prove the above we give a reduction from OVC to disjunctive reachability queries
in MDPs.

\begin{reduction}\label{red:OVtoMDPReach}
 Given two sets $S_1, S_2$ of $d$-dimensional vectors, we build the following MDP~$\mdp$.  
 \begin{itemize}
  \item The vertices $V$ of the MDP~$\mdp$
  are given by a start vertex~$s$, vertices $S_1$ and $S_2$ representing the 
  sets of vectors, vertices $\mathcal{C}=\{c_i \mid 1 \leq i \leq d\}$ representing the 
  coordinates, and absorbing vertices $F=\{g_v \mid v \in S_2\}$.
  The edges $E$ of~$\mdp$ are defined as follows: the start vertex~$s$
  has an edge to every vertex of $S_1$ and every vertex $v \in S_2$ has an edge to $s$
  and to its corresponding absorbing vertex $g_v \in F$; further for each $x \in S_1$
  there is an edge \emph{to} $c_i \in \mathcal{C}$ iff $x_i=1$ and for each $y \in S_2$
  there is an edge \emph{from} $c_i \in \mathcal{C}$ iff $y_i=0$.
	  
  \item The set of vertices $V$ is partitioned into player~1 vertices $V_1=\{s\} \cup \mathcal{C} \cup F$
	and random vertices $V_R= S_1 \cup S_2$.
	The probabilistic transition function for each vertex $v \in V_R$ chooses among $v$'s successors
	uniformly at random.
 \end{itemize}
\end{reduction}

\begin{figure}
 \centering
 \begin{tikzpicture}[yscale=0.8,>=stealth]
     % example_reachOVC
\footnotesize
 \tikzstyle{vrandom}=[draw, thick, diamond, rounded corners, fill=gray!15,inner sep=0.5pt, minimum width=12pt]
  		\path 	node[player](s){$s$}
			++(2,2) node[vrandom](x1){$(\!1,\!0,\!0\!)$}
			++(0,-2)node[vrandom] (x2){$(\!1,\!1,\!1\!)$}
			++(0,-2) node[vrandom](x3){$(\!0,\!1,\!1\!)$}
			(4,2) node[player](c1){$c_1$}
			++(0,-2) node[player](c2){$c_2$}
			++(0,-2) node[player](c3){$c_3$}
			(6,2) node[vrandom](y1){$(\!1,\!1,\!0\!)$}
			++(0,-2)node[vrandom] (y3){$(\!0,\!1,\!0\!)$}
			++(0,-2) node[vrandom](y4){$(\!0,\!0,\!1\!)$}
			(8,2) node[player](g1){$g_{110}$}
			++(0,-2)node[player] (g3){$g_{010}$}
			++(0,-2) node[player](g4){$g_{001}$}
			;
		\normalsize
		\path [->, thick]
			(s) edge (x1)
			(s) edge (x2)
			(s) edge (x3)
			(x1) edge (c1)
			(x2) edge (c1)
			(x2) edge (c2)
			(x2) edge (c3)
			(x3) edge (c2)
			(x3) edge (c3)
			[<-]
			(y1) edge (c3)
			(y3) edge (c1)
			(y3) edge (c3)
			(y4) edge (c1)
			(y4) edge (c2)
			[->]
 			(y1) edge (g1)    
 			(y3) edge (g3)    
 			(y4) edge (g4)  
			;		
		\draw (7,3.1) coordinate[](rc);
		\draw (0,3.1) coordinate[](lc);
		\draw[left,thick,->,rounded corners=5pt]  (y1) -- ++(1,0) -- (rc) -- (lc) -- (s);
		\draw[left,thick,->,rounded corners=5pt]  (y3) -- ++(1,0) -- (rc) -- (lc) -- (s);
		\draw[left,thick,->,rounded corners=5pt]  (y4) -- ++(1,0) -- (rc) -- (lc) -- (s);

 \end{tikzpicture}
  \infull{
 \caption{Illustration of Reduction~\ref{red:OVtoMDPReach} 
	  for $S_1=\{(1,0,0), (1,1,1), (0,1,1)\}$ and $S_2=\{(1,1,0), (0,1,0), (0,0,1)\}$.}
 }
 \inshort{
 \caption{Illustration of Reduction~\ref{red:OVtoMDPReach} 
	  for $S_1=\{(1,0,0),$ $(1,1,1), (0,1,1)\}$ and $S_2=\{(1,1,0), (0,1,0), (0,0,1)\}$.}
 }
 \label{fig:OVtoMDPReach}
\end{figure}
The reduction is illustrated on an example in Figure~\ref{fig:OVtoMDPReach}. 

\begin{lemma}
There exist orthogonal vectors $x \in S_1$, $y \in S_2$ iff $s \in \bigvee_{v \in V} \as{\mdp, \reacht{\target_v}}$ where 
$\mdp$ is the MDP  given by Reduction~\ref{red:OVtoMDPReach} and  $\target_v=\{g_v\}$ for $v \in V$.
\end{lemma}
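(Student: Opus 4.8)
The plan is to prove both directions directly on the level of plays in $\mdp$, relying only on two structural facts about the construction: each $g_v$ is absorbing and, for $v' \neq v$, there is no path from $g_v$ to $g_{v'}$ (so reaching a ``wrong'' $g_v$ permanently traps a play); and the unique edge entering $g_v$ comes from the vector vertex $v \in S_2$. We may assume without loss of generality that no vector in $S_1 \cup S_2$ is the all-zeros vector, since such a vector makes the OV instance a trivial yes-instance (and can be detected in $O(Nd)$ time); in particular every vertex of $\mdp$ then has an outgoing edge. Note that the disjunction ranges effectively over $v \in S_2$, i.e.\ $\bigvee_{v\in V}\as{\mdp,\reacht{\target_v}} = \bigvee_{a\in S_2}\as{\mdp,\reacht{\{g_a\}}}$.

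\emph{``If'' direction.} Suppose $x \in S_1$ and $y \in S_2$ are orthogonal, i.e.\ $\sum_{i=1}^{d} x_i y_i = 0$. I would exhibit the memoryless player~1 strategy $\str$ that plays $s \to x$ and, from every coordinate vertex $c_j$ with $x_j = 1$, plays $c_j \to y$ (elsewhere arbitrary). This is well defined: $x_j = 1$ together with orthogonality forces $y_j = 0$, so the edge $c_j \to y$ exists, and $x$ is non-zero so it has a successor. Every play from $s$ consistent with $\str$ then proceeds in rounds $s \to x \to c_j \to y \to \{\,g_y\ \text{or}\ s\,\}$, where the step out of $y$ hits $g_y$ with probability $\tfrac12$ independently of the past; hence $g_y$ is reached within $q$ rounds with probability $1 - 2^{-q}$, and so with probability $1$. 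Therefore $s \in \as{\mdp,\reacht{\{g_y\}}}$.

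\emph{``Only if'' direction.} Suppose $s \in \as{\mdp,\reacht{\{g_a\}}}$ for some $a \in S_2$, witnessed by a strategy $\str$. I would argue in three steps. (1)~Under $\str$ from $s$, no vertex $g_v$ with $v \neq a$ and no vector vertex $v \in S_2 \setminus \{a\}$ is reached with positive probability: the latter because from $v$ the play moves to $g_v$ with probability $\tfrac12$, and $g_v$ never reaches $g_a$, contradicting $\pr{\str}{s}{\reacht{\{g_a\}}}=1$. (2)~Since $s$ has no self-loop, the first move is $s \to x$ for some $x = \str(s) \in S_1$; as $x$ is a random vertex, \emph{every} $c_j$ with $x_j = 1$ is then reached with positive probability. (3)~Conditioned on reaching such a $c_j$, the continuation still reaches $g_a$ almost surely; its first move goes to a successor of $c_j$, which lies in $S_2$, and by~(1) the only admissible successor is $a$ itself. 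Hence the edge $c_j \to a$ exists, i.e.\ $a_j = 0$, for every $j$ with $x_j = 1$; equivalently $\sum_i x_i a_i = 0$, so $x$ and $a$ are orthogonal.

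The main obstacle is step~(3): one must use that \emph{all} successors of the random vertex $x$ are visited with positive probability to conclude that player~1 is forced to route \emph{every} coordinate vertex reachable from $x$ to the \emph{same} target vector $a$, which is exactly what encodes $x_j = 1 \Rightarrow a_j = 0$. The remaining bookkeeping (vertex ownership, the absence of edges from $\mathcal{C}$ into $F$, absorbingness of $F$, and the harmless all-zeros assumption) is routine, and the same analysis incidentally shows memoryless player~1 strategies suffice here.
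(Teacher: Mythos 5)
Your proof is correct and follows essentially the same route as the paper's: the same witness strategy cycling $s \to x \to c_j \to y$ for the direction from orthogonal vectors to almost-sure winning, and for the converse the same key observations that a winning strategy must avoid $S_2 \setminus \{a\}$ and that the random vertex $x$ sends positive probability to every $c_j$ with $x_j = 1$, forcing each such $c_j$ to be routed to $a$ and hence $a_j = 0$. (One negligible quibble: excluding all-zero vectors does not by itself guarantee that every coordinate vertex has an outgoing edge --- if every vector of $S_2$ has a $1$ in position $i$ then $c_i$ has none --- but such vertices are never reached under the strategies you consider, so the argument is unaffected.)
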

\begin{proof}
 For the only if part assume that there are orthogonal vectors $x \in S_1$, $y \in S_2$.
 Now a strategy for player~1 in the MDP $\mdp$ to reach $g_y$ with probability~1 is as follows:
 When in $s$, go to $x$; when in some $c \in \mathcal{C}$, go to $y$.
 As $x$ and $y$ are orthogonal, each $c_i \in \mathcal{C}$ reachable from $x$ has an edge to $y$, i.e.,
 for $x_i=1$ it must be that $y_i=0$.
 When player~1 starts in $s$ and follows the above strategy, 
 he reaches $y$ after three steps. There the random choice sends
 him to the target vertex $g_y$ and back to vertex $y$ with probability $1/2$ each.
 In the former case he is done, in the latter case
 he continues playing his strategy and will reach $y$ again after three steps.
 The probability that player~1 has reached $g_y$ after $3q$ steps is $1-(1/2)^q$,
 which converges to $1$ with $q$ going to infinity.
 Thus we have found a strategy to reach $g_y$ with probability $1$.

 For the if part assume that $s \in \bigvee_{v \in V} \as{\mdp, \reacht{\target_v}}$. 
 That is, there is an $y \in S_2$ such that $s \in \as{\mdp, \reacht{\target_y}}$.
 Let us consider a corresponding strategy for reaching $\target_y = \{g_y\}$.
 First, assume that the strategy would visit a vertex $y' \in S_2$ for $y'\not= y$.
 Then with probability $1/2$ the player would end up in the vertex $g_{y'}$ which has no path to $g_{y}$, 
 a contradiction to $s \in \as{\mdp, \reacht{\target_y}}$. 
 Thus the strategy has to avoid visiting vertices $S_2 \setminus \{y\}$.
 Second, as the only way to reach $g_y$ is $y$, the strategy has to choose $y$.
 But then with probability $1/2$ it will be send to $s$
 and thus there must be a strategy to reach $g_y$ from $s$ with probability $1$ that does not cross $S_2 \setminus \{y\}$.
 As $y$ is the only predecessor of $g_y$, there must also be such a strategy to reach $y$.
 In other words, there must be an $x \in S_1$ such that for each successor $c_i \in \mathcal{C}$ there 
 is an edge to $y$. 
 By the construction of the MDP~$\mdp$ this is equivalent to the existence of
 an $x \in S_1$  such that whenever $x_i=1$ then $y_i=0$, 
 and thus $x$ and $y$ are orthogonal vectors.
\end{proof}

The number of vertices in $\mdp$, constructed by Reduction~\ref{red:OVtoMDPReach},
is $O(N)$ and the construction can be performed in 
$O(N \log N)$ time (recall that $d \in O(\log N)$). 
The number of edges~$m$ is $O(N \log N)$ (thus we consider $\mdp$
to be a sparse MDP) and the number of target sets $k \in \Theta(N) = \theta(m/\log N)$.
Finally, if we would have an  $O(m^{2-\epsilon})$ or $O((k\cdot m)^{1-\epsilon})$ algorithm for disjunctive reachability queries in MDPs for any $\epsilon > 0$, we
would immediately get an $O(N^{2-\epsilon})$ algorithm for OV, which contradicts OVC (and thus SETH).

 % safety
\infull{\section{Safety Objectives}\label{sec:safety}
It is well-known that computing the a.s.\ winning set for a single safety objective 
in an MDP is equivalent to computing
the winning set of player~1 for safety objectives in the 2-player graph-game where all the random vertices are owned by the opponent, called player~2 
(see e.g.~\cite{ChatterjeeDH10}).
A 2-player graph-game is defined as a graph with a partition 
of the vertices into player~1 vertices~$V_1$ and player~2 vertices~$V_2$. 
A player~2 strategy is defined analogous to a player~1 strategy (replacing the 
vertices~$V_1$ with the vertices~$V_2$ in the definition). 
The objective of player~2 is the dual of the objective of player~1.

Safety objectives in 2-player graph-games can be computed in $O(m)$~time by computing
a player~2 attractor (the definition of a player~1 or player~2 attractor is 
analogous to the definition of a random attractor in Definition~\ref{def:attr}).
Thus in MDPs the a.s.\ winning set for a single safety objective can be computed in $O(m)$~time by computing a random attractor,
and the a.s.\ winning set for a disjunctive query can be determined 
in $O(k\cdot m)$ time by computing $k$ random attractors and union the winning sets.
Conjunctive safety can be reduced to a single safety objective in $O(b)$ time 
by taking the union of all the sets $\target_i$.

Turning to disjunctive safety objectives, we have the same equivalence to 2-player graph-games as for single objectives (Observation~\ref{obs:SafetyGames}).
In this 2-player game the disjunctive safety objective is the 
complementary objective to the 
conjunctive reachability objective with the same sets and, as the game is determined~\cite{FijalkowH12}\footnote{A graph-game is determined if the winning 
set of player~1 is the complement of the winning set of player~2.},
the $\PSPACE$-hardness shown in~\cite{FijalkowH12} also applies to disjunctive safety objectives.

\begin{observation}\label{obs:SafetyGames}
    Computing the a.s.\ winning set for a disjunctive safety objective in an MDP 
    with player~1 vertices~$\vo$ and random vertices~$\vr$ is equivalent to 
    computing the same disjunctive safety objective in the 2-player graph-game
    with the same edges and the same player~1 vertices and 
    player~2 vertices~$V_2 = \vr$.
\end{observation}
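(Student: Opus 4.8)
The plan is to reduce the claim to the standard characterization of almost-sure winning for objectives whose complement is an \emph{open} set of plays, of which the single-safety result quoted above is the prototype. Write $\obj = \bigcup_{i=1}^{k} \objsty{Safety}{\target_i}$ for the disjunctive safety objective. Its complement $\Path \setminus \obj = \bigcap_{i=1}^{k}\set{\pat \in \Path \mid \pat \text{ visits } \target_i}$ is open: it is the intersection of \emph{finitely many} open ``eventually visit $\target_i$'' events, so a play lies in it iff there is a finite index $N$ after which \emph{every} $\target_i$ has already been visited; in particular $\Path\setminus\obj$ is a countable union of cylinder sets. I would also record up front the standing convention from the preliminaries that every random edge has positive probability, so that for a fixed player~1 strategy $\str$ the $\str$-consistent plays from $v$ (those with positive probability under $\pr{\str}{v}{\cdot}$) are exactly the plays from $v$ that follow $\str$ with the outgoing choices at $\vr$ otherwise unrestricted --- i.e.\ precisely the plays consistent with $\str$ in the $2$-player game $G$ obtained by giving the vertices $\vr$ to player~2.

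The key step is the equivalence: for every $v$ and every player~1 strategy $\str$, $\pr{\str}{v}{\obj} = 1$ iff no $\str$-consistent play from $v$ lies in $\Path\setminus\obj$, equivalently iff no $\str$-consistent play from $v$ visits all of $\target_1,\dots,\target_k$. The direction ``$\Leftarrow$'' is immediate, since $\pr{\str}{v}{\cdot}$ is supported on $\str$-consistent plays and the complement then contains none of them, hence has measure $0$. For ``$\Rightarrow$'' I argue the contrapositive: if some $\str$-consistent play $\pat$ from $v$ visits every $\target_i$ within some finite number $N$ of steps, then the prefix $\pat[0..N]$ has positive probability --- it is a product of finitely many positive transition probabilities, the player~1 factors being $1$ by $\str$-consistency --- and its cylinder is contained in $\Path\setminus\obj$, so $\pr{\str}{v}{\obj} < 1$.

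It remains to translate this into the game. By definition of a winning strategy in a $2$-player graph-game, $\str$ wins $\obj$ from $v$ in $G$ iff every play from $v$ consistent with $\str$ belongs to $\obj$, i.e.\ avoids some $\target_i$, i.e.\ (again) no $\str$-consistent play from $v$ visits all $\target_i$. Combining with the previous paragraph and the identification of consistent plays, $\str$ is almost-sure winning for $\obj$ from $v$ in $\mdp$ iff $\str$ is winning for $\obj$ from $v$ in $G$; quantifying over $\str$ yields $\as{\mdp,\obj} = \set{v \mid \text{player~1 wins } \obj \text{ from } v \text{ in } G}$. (As for ordinary safety games, memoryless strategies suffice on both sides, though this is not needed for the equivalence.) The only point that is not entirely routine is the first one: one must check that disjunction does not destroy the ``open complement'' structure that makes this MDP/game dictionary exact --- concretely, that membership of a play in $\Path\setminus\obj$ is still witnessed by a \emph{single} finite $\str$-consistent prefix, which is exactly where finiteness of the index set $\{1,\dots,k\}$ enters; for a single safety objective this is the textbook argument and the disjunctive case is a mild extension of it.
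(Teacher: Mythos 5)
Your proof is correct, but it takes a different route than the paper's. The paper argues at the level of winning sets in two asymmetric steps: a strategy that wins surely in the graph-game is trivially almost-sure winning in the MDP, and for the converse it invokes determinacy of the two-player game (if player~1 does not win, player~2 has a strategy visiting all target sets) and observes that the random vertices mimic that player-2 strategy for the finitely many relevant steps with positive probability. You instead prove a per-strategy equivalence: for every fixed player-1 strategy $\str$, almost-sure winning and sure winning against the adversarial reading of $\vr$ coincide, because the complement of the disjunctive safety objective is open --- any violation is witnessed by a finite $\str$-consistent prefix, which has positive probability since random edges carry positive probability and there are only finitely many target sets. This buys a slightly stronger statement (the equivalence holds strategy-wise, not only for the winning sets) and avoids the appeal to determinacy, which the paper imports from \cite{FijalkowH12}; the paper's route is shorter but leans on that external fact and leaves implicit that its bound on the number of steps depends on the fixed player-1 strategy. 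Two cosmetic caveats on your write-up: individual infinite plays have probability zero, so the parenthetical identification of $\str$-consistent plays as ``those with positive probability'' should be dropped (your working definition, and the argument via finite prefixes, are what is actually used and are correct); and the closing aside that memoryless strategies suffice on both sides is false for disjunctive safety --- player~1 may need to remember which target sets have already been visited --- but, as you note, nothing in the equivalence relies on it.
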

\begin{proof}
 We show that a vertex $s$ is almost sure winning in the MDP if and only if it is winning for player~1 in the game graph.
 
 \noindent $\Leftarrow:$ Assume $s$ is not winning for player~1 in the graph-game. 
	       Then $s$ is winning for player~2 and thus player~2 has a strategy to visit all target sets from $s$.
	       As there are only finitely many target sets, all these target sets are visited after a finite number of steps,
	       lets say after $l$ steps.
	       Now consider the corresponding MDP; with some constant probability the random choices in the MDP will follow exactly the strategy of player~2 in the 
	       graph-game for the first $l$
	       steps and in that case player~1 cannot win almost surely from $s$.
	       Hence, $s$ is not in the a.s.\ winning set.
 
 \noindent $\Rightarrow:$ Assume player~1 has a winning strategy for the graph-game
 starting in $s$.
    By definition this strategy is also winning for the MDP (if it is winning for each possible choice of player~2 then it also
    winning for a random choice).
\end{proof}

\subsection{Conditional Lower Bounds for Safety Objectives}\label{subsec:safety_lowerbounds}
}
\inshort{
\subsection{Safety Objectives}\label{subsec:safety_lowerbounds}
}

We first present a lower bound for disjunctive safety based on STC that even holds on graphs.

\begin{theorem}\label{thm:safety_STChard}
  There is no combinatorial $O(n^{3-\epsilon})$ or $O((k\cdot n^2)^{1-\epsilon})$ algorithm \upbr{for any $\epsilon > 0$} for disjunctive safety \upbr{objectives or queries} in graphs under  Conjecture~\ref{conj:triangle} \upbr{i.e., unless STC and BMM fail}. 
  \infull{In particular, there is no such algorithm deciding whether the winning set is non-empty
  or deciding whether a specific vertex is in the winning set.}
\end{theorem}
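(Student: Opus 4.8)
The plan is to give a linear-time reduction from triangle detection to disjunctive safety in graphs, echoing Reduction~\ref{red:TriangletoMDPReach} but using \emph{only} player~1 vertices (no random vertices are needed, which is why the hardness already holds on graphs) and replacing the reachability targets by an appropriate family of safety target sets. Given a graph $G=(V,E)$, I would build the graph $H$ on vertex set $\{s\}\cup V^1\cup V^2\cup V^3$, where $V^1,V^2,V^3$ are three copies of $V$, with edges $s\to v^1$ for every $v\in V$, $v^1\to u^2$ and $v^2\to u^3$ whenever $(v,u)\in E$, and $v^3\to u^1$ whenever $(v,u)\in E$. For each $v\in V$ I would take the safety target set $\target_v=V^1\setminus\{v^1\}$, so $H$ has $\Theta(n)$ vertices, $n+3m$ edges (hence $\Theta(n^2)$ in the dense regime $m=\Theta(n^2)$), and $k=n$ target sets, all constructible in linear time.

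The correctness claim is: $s\in\bigvee_{v\in V}\as{H,\objsty{Safety}{\target_v}}$ iff $G$ has a triangle. For the ``if'' direction, a triangle $(a,b),(b,c),(c,a)\in E$ yields the infinite path $s,a^1,b^2,c^3,a^1,b^2,c^3,\dots$, which uses only the layer-$1$ vertex $a^1$ and therefore avoids $\target_a$, so $s$ is winning for $\objsty{Safety}{\target_a}$. For the ``only if'' direction, if some infinite path $\pi$ from $s$ avoids $\target_v$, then since $H$ is layered and the only layer-$1$ vertex not in $\target_v$ is $v^1$, $\pi$ must read $s,v^1,u^2,w^3,x^1,\dots$ with $x=v$, and the three edges $v^1\to u^2$, $u^2\to w^3$, $w^3\to v^1$ witness $(v,u),(u,w),(w,v)\in E$. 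The same layered observation — any infinite path in a finite graph reaches a cycle, and the only cycles in $H\setminus\target_v$ run through $v^1$ and thus spell out a triangle through $v$ — shows that the \emph{winning set} of the disjunctive objective is nonempty iff $G$ has a triangle, so the hardness covers both the ``fixed vertex'' and the ``nonempty winning set'' formulations; by Observation~\ref{obs:disjgraph} it covers disjunctive objectives and queries alike. Combining the pieces, a combinatorial $O(|V(H)|^{3-\varepsilon})$ or $O((k\cdot |V(H)|^2)^{1-\varepsilon})$ algorithm would give a combinatorial $O(n^{3-\varepsilon})$ triangle detector, contradicting Conjecture~\ref{conj:triangle}.

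The part I expect to require the most care is the ``only if'' direction, i.e.\ showing that \emph{every} infinite safety-respecting play from $s$ is forced into the tight $3$-cycle pattern corresponding to a triangle. The three-copy layering together with the choice to place \emph{all but one} layer-$1$ vertex into $\target_v$ is precisely what turns ``$\pi$ avoids $\target_v$'' into ``the layer-$1$ trace of $\pi$ never leaves $v^1$'', which in turn forces the triangle; getting this correspondence exactly right (and checking the degenerate cases permitted by the paper's definition of a triangle, e.g.\ when two of $v,u,w$ coincide, and the fact that copies of out-degree-$0$ vertices are harmless dead ends) is the crux. A minor secondary point is size bookkeeping: the target sets have total size $\Theta(n^2)$, which is harmless since it matches the size of $H$ in the dense regime the reduction targets.
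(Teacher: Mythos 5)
Your proposal is correct and takes essentially the same approach as the paper: a linear-time layered reduction from triangle detection in which ``all-but-one'' safety target sets force any safe play to trace out a triangle, with hardness for nonemptiness and for the fixed start vertex $s$ following from the same cycle analysis, and Observation~\ref{obs:disjgraph} covering objectives and queries alike. The only (immaterial) difference is that you use three layers with wrap-around edges $v^3\to u^1$ encoding $E$ and targets on layer~1 only, whereas the paper uses four layers routed back through $s$ and puts targets on layers~1 and~4 to enforce the closing edge.
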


The above is by the linear time reduction from triangle detection to disjunctive safety in graphs 
provided below.

\begin{reduction}\label{red:TriangletoGraphs}
 Given a graph $G=(V,E)$ \upbr{for triangle detection}, we build a graph $G'=(V',E')$ \upbr{for disjunctive safety} as follows.
 As vertices $V'$ we have four copies $V^1, V^2, V^3, V^4$ of $V$ and a vertex $s$.
 A vertex $v^i \in V^i$ has an edge to a vertex $u^{i+1} \in V^{i+1}$ iff $(v,u) \in E$. 	  
 Finally, $s$ has an edge to all vertices in $V^1$ and all vertices in $V^4$ have an edge to~$s$.
\end{reduction}
\begin{figure}
 \centering
 \begin{tikzpicture}[yscale=0.8, xscale=1.6,>=stealth]
 % example_safetySTC
 \small
		\draw  (0,-2.25)node[player](s){$s$};
  		\path 	(1,0)node[player](a1){$a^1$}
			++(0,-1.5)node[player](b1){$b^1$}
			++(0,-1.5)node[player](c1){$c^1$}
			++(0,-1.5)node[player](d1){$d^1$}
			;
  		\path 	(2,0)node[player](a2){$a^2$}
			++(0,-1.5)node[player](b2){$b^2$}
			++(0,-1.5)node[player](c2){$c^2$}
			++(0,-1.5)node[player](d2){$d^2$}
			;
  		\path 	(3,0)node[player](a3){$a^3$}
			++(0,-1.5)node[player](b3){$b^3$}
			++(0,-1.5)node[player](c3){$c^3$}
			++(0,-1.5)node[player](d3){$d^3$}
			;
		\path 	(4,0)node[player](a4){$a^4$}
			++(0,-1.5)node[player](b4){$b^4$}
			++(0,-1.5)node[player](c4){$c^4$}
			++(0,-1.5)node[player](d4){$d^4$}
			;
		\path [->, thick]
			(s) edge (a1)
			(s) edge (b1)
			(s) edge (c1)
			(s) edge (d1)
			(a1) edge (b2)
			(a2) edge (b3)
			(a3) edge (b4)
			(b1) edge (c2)
			(b2) edge (c3)
			(b3) edge (c4)
			(c1) edge (a2)
			(c2) edge (a3)
			(c3) edge (a4)
			(b1) edge (a2)
			(b2) edge (a3)
			(b3) edge (a4)
			
			(c1) edge (d2)
			(c2) edge (d3)
			(c3) edge (d4)
			
			(d1) edge (a2)
			(d2) edge (a3)
			(d3) edge (a4)
			;
		\draw[left,thick,->,rounded corners=5pt]  
		(a4) -- ++(0.5,0) -- (4.5,-5.2) -- (0,-5.2) -- (s);
		\draw[left,thick,->,rounded corners=5pt]  
		(b4) -- ++(0.5,0) -- (4.5,-5.2) -- (0,-5.2) -- (s);
		\draw[left,thick,->,rounded corners=5pt]  
		(c4) -- ++(0.5,0) -- (4.5,-5.2) -- (0,-5.2) -- (s);
		\draw[left,thick,->,rounded corners=5pt]  
		(d4) -- ++(0.5,0) -- (4.5,-5.2) -- (0,-5.2) -- (s);

 \end{tikzpicture}
 \infull{
 \caption{Illustration of Reduction~\ref{red:TriangletoGraphs}, with $G=(\{a,b,c,d\},\{(a,b), (b,a),(b,c), (c,a),(c,d),$ $(d,a)\})$.
 The target sets for disjunctive safety are $T_a = \set{b^1, c^1, d^1, b^4, c^4, d^4}$, $T_b = \set{a^1, c^1, d^1, a^4, c^4, d^4}$, $T_c = \set{a^1, b^1, d^1, a^4, b^4, d^4}$, and $T_d = \set{a^1, b^1, c^1, a^4, b^4, c^4}$.
 }
 }
 \inshort{
 \caption{Illustration of Reduction~\ref{red:TriangletoGraphs}, with $G=(\{a,b,c,d\},$ $\{(a,b),(b,a),(b,c),(c,a),(c,d),(d,a)\})$.
 The target sets for disjunctive safety are $T_a = \set{b^1, c^1, d^1, b^4, c^4, d^4}$, $T_b = \set{a^1, c^1, d^1, a^4, c^4, d^4}$, $T_c = \set{a^1, b^1, d^1, a^4, b^4, d^4}$, and $T_d = \set{a^1, b^1, c^1, a^4, b^4, c^4}$.}
 }
 \label{fig:TriangletoGraphs}
\end{figure}
Reduction~\ref{red:TriangletoGraphs} is illustrated in Figure~\ref{fig:TriangletoGraphs}.

\begin{lemma}
 Let $G'$ be the graph given by Reduction~\ref{red:TriangletoGraphs} for a graph~$G$
 and  
 let $\target_v= (V^1 \setminus \{v^1\}) \cup (V^4 \setminus \{v^4\})$. Then 
    \infull{the following statements are equivalent.
  \begin{enumerate}
  \item
  $G$ has a triangle.
  \item $s$ is in the winning set of  $(G',\bigvee_{v \in V} \objsty{Safety}{\target_v})$.
  \item The winning set of $(G', \bigvee_{v \in V} \objsty{Safety}{\target_v})$ is non-empty.
  \end{enumerate}
  }
  \inshort{$G$ has a triangle iff $s$ is in the winning set of  $(G',\bigvee_{v \in V} \objsty{Safety}{\target_v})$.}
\end{lemma}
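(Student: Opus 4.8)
The plan is to prove the three statements equivalent through the cycle of implications: a triangle in $G$ yields an infinite path from $s$ inside the objective (so $s$ is winning), $s$ being winning makes the winning set non-empty, and a non-empty winning set forces a triangle in $G$; the middle step is immediate. Throughout I use that $G'$ has no random vertices, so a vertex is winning for a safety objective exactly when some infinite path from it stays inside the objective, and that an infinite path $\pi$ satisfies $\bigvee_{v\in V}\objsty{Safety}{\target_v}$ iff there is a \emph{single} $v\in V$ such that $\pi$ never visits a vertex of $\target_v=(V^1\setminus\{v^1\})\cup(V^4\setminus\{v^4\})$; equivalently, $\pi$ visits only $v^1$ among the layer-$1$ vertices and only $v^4$ among the layer-$4$ vertices.

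For the first implication, let $a,b,c$ be a triangle of $G$, so $(a,b),(b,c),(c,a)\in E$. By the definition of $E'$, the edges $(a^1,b^2)$, $(b^2,c^3)$, $(c^3,a^4)$ are present, and so are $(s,a^1)$ and $(a^4,s)$. Hence the play that repeats the cycle $s,a^1,b^2,c^3,a^4$ forever is an infinite path from $s$ visiting only $a^1$ in layer~$1$ and only $a^4$ in layer~$4$, so it avoids $\target_a$; therefore $s\in\as{G',\objsty{Safety}{\target_a}}$ and $s$ lies in the winning set of $(G',\bigvee_{v\in V}\objsty{Safety}{\target_v})$, which is thus non-empty.

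For the last implication I would first observe the structural fact that in $G'$ the only edges out of layer~$4$ go to $s$, the only edges out of $s$ go to layer~$1$, and every other edge goes from layer~$i$ to layer~$i+1$; consequently every cycle of $G'$ passes through $s$. Now suppose the winning set is non-empty, witnessed by an infinite path $\pi$ that avoids some $\target_v$. Since $G'$ is finite, $\pi$ must eventually traverse a cycle and hence visit $s$; the suffix of $\pi$ starting at the first occurrence of $s$ is again an infinite path, now from $s$ and still avoiding $\target_v$. From $s$ it moves to a layer-$1$ vertex, which must be $v^1$ as every other layer-$1$ vertex lies in $\target_v$; the edge structure then sends it through some layer-$2$ vertex $b^2$, some layer-$3$ vertex $c^3$, and some layer-$4$ vertex, which must be $v^4$ as every other layer-$4$ vertex lies in $\target_v$. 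By the definition of $E'$ this means $(v,b),(b,c),(c,v)\in E$, so (using that the triangle-detection instance $G$ is simple, hence $v,b,c$ are pairwise distinct) $v,b,c$ is a triangle of $G$.

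I expect no serious obstacle; the two points that need care are the observation that every cycle of $G'$ is forced through $s$ --- this is what lets the ``non-empty winning set'' formulation collapse to ``$s$ is winning'' and makes the suffix argument work --- and using the disjunctive-objective semantics correctly, namely that a single play must avoid one fixed target set $\target_v$, and is not allowed to alternate between different $\target_v$ along the way. The distinctness of $v,b,c$ relies only on simplicity of the input graph, which is standard for triangle detection.
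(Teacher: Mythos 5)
Your proposal is correct and follows essentially the same argument as the paper: the triangle yields the periodic safe play $s,a^1,b^2,c^3,a^4$, a safe play from $s$ is forced through $v^1$ and $v^4$ and so exhibits a triangle, and the observation that $G'$ minus $s$ is acyclic (every cycle passes through $s$) handles the non-emptiness statement, which you merely fold into the implication $(3)\Rightarrow(1)$ instead of proving $(2)\Leftrightarrow(3)$ separately as the paper does.
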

\begin{proof}
 \infull{(1)$\Rightarrow$(2): Assume}\inshort{For the only if part assume} that
 $G$ has a triangle with vertices $a,b,c$ and 
 let $a^i$,$b^i$,$c^i$ be the copies of $a,b,c$ in $V^i$.
 Now a strategy for player~1 in $G'$ to satisfy $\objsty{Safety}{\target_a}$ is as follows:
 When in $s$, go to $a^1$; when in $a^1$, go to $b^2$; when in $b^2$, go to $c^3$;
 when in $c^3$, go to $a^4$; and when in $a^4$, go to $s$.
 As $a,b,c$ form a triangle, all the edges required by the above strategy exist.
 When player~1 starts in~$s$ and follows the above strategy,
 then he plays an infinite path
 that only uses vertices $s,a^1,b^2,c^3,a^4$ and thus satisfies $\objsty{Safety}{\target_a}$.

 \infull{(2)$\Rightarrow$(1): Assume}\inshort{For the if part assume} that there is a
 winning play starting in $s$ and satisfying $\objsty{Safety}{\target_a}$.
 Starting from $s$, this play has to first go to $a^1$, as all other successors of $s$ would violate
 the safety constraint. Then the play continues on some vertex $b^2 \in V^2$ and $c^3 \in V^3$
 and then, again by the safety constraint, has to enter $a^4$.
 Now by construction of $G'$ we know that there must be edges $(a,b), (b,c), (c,a)$ in the original graph $G$,
 i.e.\ there is a triangle in $G$.  \infull{
 
 (2)$\Leftrightarrow$(3): Notice that when removing $s$ from $G'$ we get an acyclic graph and thus each infinite
 path has to contain $s$ infinitely often. Thus, if the winning set is non-empty,
 there is a cycle winning for some vertex and then 
 this cycle is also winning for $s$. For the converse direction we have that if $s$ is in the winning set, then the winning set is non-empty.}
\end{proof}

The size and the construction time of the graph $G'$, constructed 
by Reduction~\ref{red:TriangletoGraphs}, is linear in the size of the 
original graph $G$ and we have $k = \Theta(n)$ target sets.
Thus if we would have a combinatorial $O(n^{3-\epsilon})$  or $O((k\cdot n^2)^{1-\epsilon})$ algorithm for disjunctive 
safety objectives or queries in graphs, we
would immediately get a combinatorial  $O(n^{3-\epsilon})$ algorithm for triangle detection, which contradicts STC (and thus BMM).

The above reduction uses a linear number of safety constraints which are all of linear size. 
Thus, a natural question is whether smaller safety sets would make the problem any easier. 
Next we argue that our result even holds for safety sets that are of logarithmic size.
To this end we modify Reduction~\ref{red:TriangletoGraphs} as follows. We remove all edges incident to $s$ and 
replace them by two complete binary trees. The first tree with $s$ as root and the vertices $V^1$ as leaves is directed towards the leaves, 
the second tree with root~$s$ and leaves~$V^4$ is directed towards~$s$.
Now for each pair $v^1,v^4$ one can select one vertex of each level of the trees (except for the root levels) 
for the set $\target_v$ such that the only safe path starting in $s$ has to use 
$v^1$ and each safe path to $s$ must pass $v^4$.
As the depth of the trees is logarithmic in the number of leaf vertices, we get
sets of logarithmic size. 
The construction with the binary trees 
is illustrated in Figure~\ref{fig:TriangletoGraphs_tree}.
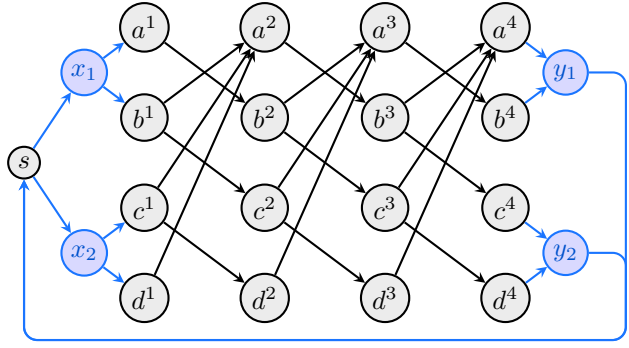
\begin{figure}
\centering
 \begin{tikzpicture}[yscale=0.8, xscale=1.6,>=stealth]
 % example_safetySTC_tree
\tikzstyle{trcolor}=[color=blue!60!cyan!90!white,text=blue!60!cyan!80!black]
\tikzstyle{playerc}=[player,trcolor,fill=blue!15,]
\tikzstyle{edgec}=[thick,trcolor]
 \small
		\draw  (0,-2.25)node[player](s){$s$};
		\path (0.5,-0.75)node[playerc](x1){$x_1$}
			++(0,-3)node[playerc](x2){$x_2$}
			;
  		\path 	(1,0)node[player](a1){$a^1$}
			++(0,-1.5)node[player](b1){$b^1$}
			++(0,-1.5)node[player](c1){$c^1$}
			++(0,-1.5)node[player](d1){$d^1$}
			;
  		\path 	(2,0)node[player](a2){$a^2$}
			++(0,-1.5)node[player](b2){$b^2$}
			++(0,-1.5)node[player](c2){$c^2$}
			++(0,-1.5)node[player](d2){$d^2$}
			;
  		\path 	(3,0)node[player](a3){$a^3$}
			++(0,-1.5)node[player](b3){$b^3$}
			++(0,-1.5)node[player](c3){$c^3$}
			++(0,-1.5)node[player](d3){$d^3$}
			;
		\path 	(4,0)node[player](a4){$a^4$}
			++(0,-1.5)node[player](b4){$b^4$}
			++(0,-1.5)node[player](c4){$c^4$}
			++(0,-1.5)node[player](d4){$d^4$}
			;
		\path (4.5,-0.75)node[playerc](y1){$y_1$}
			++(0,-3)node[playerc](y2){$y_2$}
			;
		\path [->, thick]			
			(a1) edge (b2)
			(a2) edge (b3)
			(a3) edge (b4)
			(b1) edge (c2)
			(b2) edge (c3)
			(b3) edge (c4)
			(c1) edge (a2)
			(c2) edge (a3)
			(c3) edge (a4)
			(b1) edge (a2)
			(b2) edge (a3)
			(b3) edge (a4)
			
			(c1) edge (d2)
			(c2) edge (d3)
			(c3) edge (d4)
			
			(d1) edge (a2)
			(d2) edge (a3)
			(d3) edge (a4)
			;
		\path [->,edgec]
			(s) edge (x1)
			(s) edge (x2)
			(x1) edge (a1)
			(x1) edge (b1)
			(x2) edge (c1)
			(x2) edge (d1)
			(a4) edge (y1)
			(b4) edge (y1)
			(c4) edge (y2)
			(d4) edge (y2)
			;
		\draw[left,edgec,->,rounded corners=5pt]  
		(y1) -- ++(0.5,0) -- (5,-5.2) -- (0,-5.2) -- (s);
		\draw[left,edgec,->,rounded corners=5pt]  
		(y2) -- ++(0.5,0) -- (5,-5.2) -- (0,-5.2) -- (s);

 \end{tikzpicture}
  \caption{Illustration of how to reduce the number of entries in the target sets in Reduction~\ref{red:TriangletoGraphs} with two complete binary trees. Here
  $G=(\{a,b,c,d\},\{(a,b),(b,a),(b,c),(c,a),(c,d),(d,a)\})$ and the target sets
  for disjunctive safety are $T_a = \set{b^1, x_2, b^4, y_2}$, 
  $T_b = \set{a^1, x_2, a^4, y_2}$, $T_c = \set{d^1, x_1, d^4, y_1}$, and 
  $T_d = \set{c^1, x_1, c^4, y_1}$.}
 \label{fig:TriangletoGraphs_tree}
\end{figure}

Next we present an $\Omega(m^{2-o(1)})$ lower bound for disjunctive objective/query safety in sparse MDPs.

\begin{theorem}\label{thm:safety_OVChard}
  There is no $O(m^{2-\epsilon})$ or $O((k\cdot m)^{1-\epsilon})$ algorithm \upbr{for any $\epsilon > 0$} for disjunctive safety objectives/queries in MDPs under Conjecture~\ref{conj:ov} \upbr{i.e., unless
  OVC\infull{ and }\inshort{ \& }SETH fail}.
  \infull{In particular, there is no such algorithm for deciding whether the winning set is non-empty
  or deciding whether a specific vertex is in the winning set.}
\end{theorem}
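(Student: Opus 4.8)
The plan is to prove Theorem~\ref{thm:safety_OVChard} by a linear-time reduction from the Orthogonal Vectors problem to disjunctive safety queries in MDPs, structurally close to Reduction~\ref{red:OVtoMDPReach} but arranged so that a vertex wins by \emph{looping forever inside a safe region} instead of by reaching an absorbing target. Given $S_1,S_2\subseteq\{0,1\}^d$ with $d=\Theta(\log N)$, I would build an MDP whose vertices are a start vertex $s$, one vertex per $x\in S_1$, one vertex per $y\in S_2$, one coordinate vertex $c_i$ for $1\le i\le d$, and a trap vertex $\bot$ with a self-loop; the edges are $s\to x$ for every $x\in S_1$, $x\to c_i$ whenever $x_i=1$, $c_i\to y$ whenever $y_i=0$ together with $c_i\to\bot$, and $y\to s$ for every $y\in S_2$. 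The random vertices are exactly the $S_1$-vertices (each choosing a successor uniformly), and all remaining vertices belong to player~1. For each $y\in S_2$ the target set of the $y$-th safety objective is $\target_y=(S_2\setminus\{y\})\cup\{\bot\}$. This is the point where randomness is essential: player~1 commits to an $x$ at $s$, and only then does the random vertex $x$ pick a coordinate, so player~1 must cope with \emph{every} coordinate $i$ with $x_i=1$.

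The core step is the lemma: there are orthogonal $x\in S_1$, $y\in S_2$ iff $s\in\bigvee_{y\in S_2}\as{\mdp,\objsty{Safety}{\target_y}}$. For the easy direction, from an orthogonal pair $(x^*,y^*)$ the memoryless strategy ``from $s$ go to $x^*$; from any $c_i$ go to $y^*$; from $y^*$ go to $s$'' is well-defined, because any $c_i$ reached from $x^*$ has $x^*_i=1$, hence $y^*_i=0$, so the edge $c_i\to y^*$ exists; the resulting play cycles only through $s$, $x^*$, coordinate vertices, and $y^*$, and thus never visits $\target_{y^*}$, giving $s\in\as{\mdp,\objsty{Safety}{\target_{y^*}}}$. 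For the converse, suppose player~1 has a strategy $\str$ with $\pr{\str}{s}{\objsty{Safety}{\target_y}}=1$ for some fixed $y$, and let $x\in S_1$ be a vertex that $\str$ moves to from $s$ with positive probability; conditioned on this, the uniform choice at $x$ reaches $c_i$ with positive probability for every $i$ with $x_i=1$, and from $c_i$ the only continuation avoiding $\target_y$ (i.e.\ avoiding $\bot$ and every $y'\neq y$) is the edge $c_i\to y$, which exists only when $y_i=0$. If some coordinate had $x_i=1$ and $y_i=1$, the play would hit $\target_y$ with positive probability, contradicting almost-sure safety; hence $x\perp y$. Since this argument inspects only the first $s\to x\to c_i\to\cdot$ segment, history-dependent or randomized strategies buy nothing.

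The same reasoning upgrades the bound to disjunctive safety \emph{objectives}: letting $A_y$ be the set of plays from $s$ that forever avoid $\bot$ and visit no $S_2$-vertex besides $y$, one checks $\objsty{Safety}{\target_y}=A_y$ on plays from $s$, the $A_y$ are pairwise disjoint, and if no orthogonal pair exists then at every visit to an $S_1$-vertex the play leaves $A_y$ within two steps with probability at least $1/d$ (route the random choice to a coordinate witnessing non-orthogonality), so $\pr{\str}{s}{\bigvee_y\objsty{Safety}{\target_y}}=\sum_y\pr{\str}{s}{A_y}\le\lim_k(1-1/d)^k=0$ for every $\str$; hence $s$ lies in the disjunctive-objective winning set iff in the disjunctive-query winning set iff OV is a yes-instance. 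A few degenerate cases are dispatched in linear time first (if $S_1$ or $S_2$ is empty, or some vector is all-zero, the OV answer is immediate; the trap $\bot$ also ensures every $c_i$ has an outgoing edge, so the MDP is well-defined). The bookkeeping then mirrors Reduction~\ref{red:OVtoMDPReach}: the MDP has $O(N)$ vertices, $m=O(N\log N)$ edges, is built in $O(N\log N)$ time, with $k=\lvert S_2\rvert=\Theta(N)=\Theta(m/\log N)$ target sets, so an $O(m^{2-\epsilon})$ or $O((k\cdot m)^{1-\epsilon})$ algorithm for disjunctive safety would yield an $O(N^{2-\epsilon'})$ algorithm for OV, contradicting OVC (and hence SETH).

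I expect the main obstacle to be the converse direction of the lemma, together with its objective version: one has to argue that no adaptive, possibly randomized player-1 strategy can evade the orthogonality constraint, and the feature that makes this tractable is that it suffices to examine a single round, after which the uniform random choice at the first chosen $x\in S_1$ forces $y_i=0$ on every coordinate $i$ with $x_i=1$. The construction itself and the size analysis are routine once Reduction~\ref{red:OVtoMDPReach} is in hand.
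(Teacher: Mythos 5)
Your equivalence lemma is essentially correct (both directions, and the query/objective equivalence via the disjoint events $A_y$), but the reduction as a whole does not establish the theorem, because your safety target sets are too large. Each $\target_y=(S_2\setminus\{y\})\cup\{\bot\}$ has $\Theta(N)$ elements and there are $k=\Theta(N)$ of them, so the explicit description of the objectives --- which is part of the input to the disjunctive safety problem --- has total size $\Theta(N^2)$. Hence the instance is \emph{not} ``built in $O(N\log N)$ time'' as you claim: writing down the target sets alone already costs $\Theta(N^2)$. Consequently, composing your construction with a hypothetical $O(m^{2-\epsilon})$ or $O((k\cdot m)^{1-\epsilon})$ algorithm (here $m=O(N\log N)$, $k=\Theta(N)$) gives an OV procedure whose total running time is $\Theta(N^2)+O(N^{2-\epsilon}\mathrm{polylog}\,N)=\Omega(N^2)$, which does not contradict OVC; OVC only rules out $O(N^{2-\epsilon})$-time algorithms. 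So the final step ``an $O(m^{2-\epsilon})$ or $O((k\cdot m)^{1-\epsilon})$ algorithm would yield an $O(N^{2-\epsilon'})$ OV algorithm'' fails for your instance.

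The paper's Reduction~\ref{red:OVtoMDPsafety} avoids exactly this blow-up: the coordinate vertices are also random, the edges are reversed relative to yours ($c_i\to y$ iff $y_i=1$, so the random walk after the chosen $x$ hits precisely the $y$'s that conflict with $x$), and the $y$-th objective is safety of the \emph{singleton} $\target_y=\{y\}$, giving total target size $O(N)$; the converse direction is then argued via memoryless strategies. If you insist on keeping player~1 in control at the coordinate vertices, you would have to compress ``all of $S_2$ except $y$'' to sublinear size, e.g.\ by routing entry into $S_2$ through a complete binary tree and putting the $O(\log N)$ off-path sibling subtree roots into $\target_y$ (the trick the paper uses in Figure~\ref{fig:TriangletoGraphs_tree} for the triangle reduction); as written, your targets make the reduction quadratic rather than linear. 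A secondary omission: the full-version statement also asserts hardness of deciding non-emptiness of the winning set, which your write-up does not address (in your construction it would follow, since every play avoiding $\bot$ passes through $s$, but it needs to be said).
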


To prove the above, we give a linear time reduction from OV to disjunctive safety objectives/queries.

\begin{reduction}\label{red:OVtoMDPsafety}
 Given two sets $S_1, S_2$ of $d$-dimensional vectors, we build the following MDP~$\mdp$.  
 \begin{itemize}
  \item   The vertices $V$ of the MDP~$\mdp$
  are given by a start vertex~$s$, vertices $S_1$ and $S_2$ representing the 
  sets of vectors, and 
  vertices $\mathcal{C}=\{c_i \mid 1 \leq i \leq d\}$ representing the 
  coordinates. The edges $E$ of~$\mdp$ are defined as follows: the start vertex~$s$
  has an edge to every vertex of $S_1$ and every vertex $v \in S_2$ has an edge to $s$;
  further for each $x \in S_1$
  there is an edge \emph{to} $c_i \in \mathcal{C}$ iff $x_i=1$ and for each $y \in S_2$
  there is an edge \emph{from} $c_i \in \mathcal{C}$ iff $y_i=1$.
  \item The set of vertices $V$ is partitioned into player~1 vertices $V_1=\{s\}\cup S_2$
	and random vertices $V_R= S_1 \cup \mathcal{C}$.
	Moreover, the probabilistic transition function for each vertex $v \in V_R$ chooses among $v$'s successors
	uniformly at random.	
 \end{itemize}
\end{reduction}
The reduction is illustrated on an example in Figure~\ref{fig:OVtoMDPsafety}. 

\begin{lemma}\label{lem:OVtoMDPsafety}
  Given two sets $S_1, S_2$ of $d$-dimensional vectors,
  the corresponding MDP $\mdp$ given by Reduction~\ref{red:OVtoMDPsafety} and  
  $\target_v=\{v\}$ for $v \in S_2$
  the following statements are equivalent
  \begin{enumerate}
    \item There exist orthogonal vectors $x \in S_1$, $y \in S_2$.
    \item $s \in \bigvee_{v \in S_2} \as{\mdp, \objsty{Safety}{\target_v}}$
    \item $s \in \as{\mdp, \bigvee_{v \in S_2} \objsty{Safety}{\target_v}}$
    \infull{\item The winning set $\bigvee_{v \in S_2} \as{\mdp, \objsty{Safety}{\target_v}}$ is non-empty.
    \item The winning set $\as{\mdp, \bigvee_{v \in S_2} \objsty{Safety}{\target_v}}$ is non-empty.}
  \end{enumerate}
\end{lemma}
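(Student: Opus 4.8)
The plan is to prove the equivalences via the implications $(1)\Rightarrow(2)$, $(5)\Rightarrow(1)$, together with the four routine implications $(2)\Rightarrow(3)$, $(2)\Rightarrow(4)$, $(3)\Rightarrow(5)$, $(4)\Rightarrow(5)$; chasing these arrows makes all five statements equivalent to $(1)$ and hence to one another. The routine implications rest on two trivialities: a strategy that is almost-surely winning for $\objsty{Safety}{\target_v}$ is almost-surely winning for $\bigvee_{v'\in S_2}\objsty{Safety}{\target_{v'}}$ (since $\objsty{Safety}{\target_v}\subseteq\bigvee_{v'\in S_2}\objsty{Safety}{\target_{v'}}$), and a winning set containing a vertex is non-empty. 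Before the two substantial implications I would record the w.l.o.g.\ assumptions that no vector in $S_1\cup S_2$ is all-zero and that every coordinate $i$ is set in some $y\in S_2$ (standard preprocessing for OV: zero vectors and coordinates unused by $S_2$ can be removed/zeroed without affecting orthogonality). These assumptions guarantee that every vertex of $\mdp$ has an outgoing edge, so $\mdp$ really has the layered cyclic shape $s\to S_1\to\mathcal C\to S_2\to s$ and every infinite play eventually cycles through $s$ forever.

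For $(1)\Rightarrow(2)$: given orthogonal $x\in S_1$ and $y\in S_2$, consider the memoryless player~1 strategy $\str$ that moves from $s$ to $x$; this determines $\str$ completely, since the only other player~1 vertices are those of $S_2$, each with a unique edge to $s$. Under $\str$ from $s$ the play only ever enters coordinate vertices $c_i$ with $x_i=1$, and from such a $c_i$ it only reaches $y'\in S_2$ with $y'_i=1$; orthogonality forbids $x_i=y_i=1$, so $y$ is never visited on any consistent play. Hence $s\in\as{\mdp,\objsty{Safety}{\target_y}}\subseteq\bigvee_{v\in S_2}\as{\mdp,\objsty{Safety}{\target_v}}$, which is $(2)$.

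The hard part will be $(5)\Rightarrow(1)$, which I would prove contrapositively: if $S_1,S_2$ contain no orthogonal pair, then $\as{\mdp,\bigvee_{v\in S_2}\objsty{Safety}{\target_v}}=\emptyset$. Fix any vertex $w$ and any player~1 strategy $\str$. By the layered structure the play from $w$ reaches $s$ within three steps and thereafter returns to $s$ every three steps, so it splits into rounds $s,x^{(t)},c^{(t)},y^{(t)},s,\dots$, where in round $t$ player~1 picks $x^{(t)}\in S_1$ and the next two moves are random. Fix $y\in S_2$. Since $x^{(t)}\not\perp y$, there is a coordinate $i$ with $x^{(t)}_i=y_i=1$, so conditioned on the history up to the start of round~$t$, the random choices at $x^{(t)}$ and then at $c_i$ route the play through $y$ with probability at least $p:=1/(d\cdot|S_2|)>0$, no matter which $x^{(t)}$ was chosen. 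A chain-rule estimate then bounds the probability that $y$ is avoided throughout rounds $1,\dots,T$ by $(1-p)^T\to0$, so $y$ is visited almost surely; as $S_2$ is finite, almost surely \emph{every} $v\in S_2$ is visited, and hence $\str$ satisfies $\bigvee_{v\in S_2}\objsty{Safety}{\target_v}$ from $w$ with probability $0$. Since $w$ and $\str$ were arbitrary, the disjunctive winning set is empty. The crux is exactly this probabilistic step: setting up the round decomposition and verifying that, as long as no orthogonal pair exists, each $v\in S_2$ is hit with a constant probability in every round regardless of player~1's move. One could alternatively argue via end-components (by Observation~\ref{obs:nonempty_safety_coBuchi} a non-empty safety winning set requires an end-component disjoint from the target, and with no orthogonal pair one checks that every end-component of $\mdp$ contains all of $S_2$), but the probabilistic argument has the advantage of handling the disjunctive objective in statement~$(5)$ directly.
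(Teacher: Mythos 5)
Your proof is correct, and its probabilistic core coincides with the paper's, but the overall route is organized differently. The paper proves the cycle $(1)\Rightarrow(2)\Rightarrow(3)\Rightarrow(1)$ and then treats the non-emptiness statements separately via $(2)\Leftrightarrow(4)$ and $(3)\Leftrightarrow(5)$, using that every infinite play passes through $s$, so a winning strategy from any vertex yields one from $s$; its hard step $(3)\Rightarrow(1)$ first invokes the sufficiency of memoryless strategies (citing~\cite{Thomas95}) to extract one fixed $x\in S_1$ chosen at $s$, and then argues that if $x$ has no orthogonal partner all of $S_2$ is visited almost surely. You instead make the weakest statement the target of the single hard implication $(5)\Rightarrow(1)$, reach $(4)$ and $(5)$ from $(2)$ and $(3)$ by plain monotonicity, and prove the contrapositive of $(5)\Rightarrow(1)$ directly for arbitrary history-dependent strategies and arbitrary start vertices via the round decomposition: whatever $x^{(t)}$ player~1 picks in round $t$, the fixed $y$ is hit with conditional probability at least $1/(d\,\lvert S_2\rvert)$, so every vertex of $S_2$ is visited almost surely. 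This buys independence from the memoryless-determinacy citation and absorbs the paper's separate $(2)\Leftrightarrow(4)$ and $(3)\Leftrightarrow(5)$ arguments into one estimate, at the price of carrying the conditioning on histories explicitly; the paper's memoryless reduction makes each round a computation for a single fixed $x$. Two small blemishes, neither fatal: removing an all-zero vector from $S_1$ does change the OV answer (such instances are trivially ``yes'' and should be decided directly, whereas the paper's device of adding the all-ones vector to $S_2$ fixes the out-degree of the $c_i$ but likewise ignores all-zero vectors in $S_1$); and a round $s,x,c,y,s$ takes four steps rather than three, which does not affect the argument.
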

\begin{proof}
 W.l.o.g.\ we assume that the $1$-vector, i.e., the vector with all coordinates being $1$, is  contained in $S_2$
 (adding the $1$-vector does not change the result of the OV instance).
 Then a play in the MDP $\mdp$ proceeds as follows. 
 Starting from $s$, player~1 chooses a vertex $x \in S_1$; then a vertex
 $c \in \mathcal{C}$
 and then a vertex $y \in S_2$ are picked randomly; then the play goes 
 back to $s$, starting another cycle of the play.
 
 (1)$\Rightarrow$(2): Assume there are orthogonal vectors $x \in S_1$, $y \in S_2$.
    Now player~1 can satisfy $\objsty{Safety}{\target_y}$ in the MDP $\mdp$ 
    by simply going to $x$ whenever the play is in $s$.
    The random player will then send it to some adjacent $c \in \mathcal{C}$
    and then to some adjacent vertex in $S_2$, 
    but as $x$ and $y$ are orthogonal, this $c$ is not connected to $y$. 
    Thus the play will never visit $y$.
 
 (2)$\Rightarrow$(3): Assume $s \in \bigvee_{v \in S_2} \as{\mdp, \objsty{Safety}{\target_v}}$. Then 
    there is a vertex $y \in S_2$ such that $s \in \as{\mdp, \objsty{Safety}{\target_y}}$. 
    Now we can enlarge the objective
    to $\bigvee_{v \in S_2} \objsty{Safety}{\target_v}$ and obtain
    $s \in \as{\mdp, \bigvee_{v \in S_2} \objsty{Safety}{\target_v}}$.
 
 (3)$\Rightarrow$(1): Assume $s \in \as{\mdp, \bigvee_{v \in S_2} \objsty{Safety}{\target_v}}$ and consider
    a corresponding strategy~$\sigma$. 
    W.l.o.g.\ we can assume that this strategy is memoryless~\cite{Thomas95}. 
    Thus whenever the play is in~$s$, it picks a fixed $x\in S_1$ as the next vertex.
    Assume towards contradiction that there is no orthogonal vector $y \in S_2$  for $x$.
    Then for each $y \in S_2$ we have that there is a $c \in \mathcal{C}$ connecting $x$ to $y$.
    In each cycle of the play one goes from $s$ to $x$ and then by random choice to some vertex in $S_2$.
    By the above, each of the vertices in $S_2$ has a non-zero probability to be 
    reached in this cycle, which can, for each fixed $n$,
    be lower bounded by a constant $p$.
    Thus after $n$ cycles in the play with probability at least $p^{|S_2|}$ 
    all vertices in $S_2$ have been visited and thus none of the safety objectives is 
    satisfied, a contradiction to the assumption that with probability~$1$ at least one
    safety objective is satisfied.
    Thus there must exist a vector $y \in S_2$ orthogonal to~$x$.\infull{
  
  (2)$\Leftrightarrow$(4) \& (3)$\Leftrightarrow$(5): 
     Notice that when removing $s$ from $\mdp$ we get an acyclic MDP and 
     thus each infinite
     path has to contain $s$ infinitely often. 
     Certainly if $s$ is in the a.s.\ winning set, this set is non-empty. 
     Thus let us assume there is a vertex $v$ different from $s$ with a winning strategy $\sigma$.
     All (winning) paths starting in $v$ cross $s$ after at most $3$ steps and thus 
     $\sigma$ must be also winning when starting in $s$.}
\end{proof}
  \begin{figure}
  \centering
  \begin{tikzpicture}[yscale=0.8,>=stealth]
     % example_safetyOVC
\footnotesize
  		\path 	node[player](s){$s$}
			++(2,2) node[vrandom](x1){$(\!1,\!0,\!0\!)$}
			++(0,-2)node[vrandom] (x2){$(\!1,\!1,\!1\!)$}
			++(0,-2) node[vrandom](x3){$(\!0,\!1,\!1\!)$}
			(4,2) node[random](c1){$c_1$}
			++(0,-2) node[random](c2){$c_2$}
			++(0,-2) node[random](c3){$c_3$}
			(6,3) node[vplayer](y1){$(\!1,\!1,\!0\!)$}
			++(0,-2) node[vplayer](y2){$(\!1,\!1,\!1\!)$}
			++(0,-2)node[vplayer] (y3){$(\!0,\!1,\!0\!)$}
			++(0,-2) node[vplayer](y4){$(\!0,\!0,\!1\!)$}
			;
		\normalsize
		\path [->, thick]
			(s) edge (x1)
			(s) edge (x2)
			(s) edge (x3)
			(x1) edge (c1)
			(x2) edge (c1)
			(x2) edge (c2)
			(x2) edge (c3)
			(x3) edge (c2)
			(x3) edge (c3)
			[<-]
			(y1) edge (c1)
			(y1) edge (c2)
			(y2) edge (c1)
			(y2) edge (c2)
			(y2) edge (c3)
			(y3) edge (c2)
			(y4) edge (c3)
			;
			
		\draw[left,thick,->,rounded corners=5pt]  (y1) -- ++(1,0) -- (7,3.9) -- (0,3.9) -- (s);
		\draw[left,thick,->,rounded corners=5pt]  (y2) -- ++(1,0) -- (7,3.9) -- (0,3.9) -- (s);
		\draw[left,thick,->,rounded corners=5pt]  (y3) -- ++(1,0) -- (7,3.9) -- (0,3.9) -- (s);
		\draw[left,thick,->,rounded corners=5pt]  (y4) -- ++(1,0) -- (7,3.9) -- (0,3.9) -- (s);

  \end{tikzpicture} 
   \infull{
      \caption{Illustration of Reduction~\ref{red:OVtoMDPsafety}, for $S_1=\{(1,0,0), (1,1,1), (0,1,1)\}$ and 
    $S_2=\{(1,1,0), (1,1,1), (0,1,0), (0,0,1)\}$.}
   }
   \inshort{
      \caption{Illustration of Reduction~\ref{red:OVtoMDPsafety}, for $S_1=\{(1,0,0),$ $(1,1,1), (0,1,1)\}$ and 
    $S_2=\{(1,1,0), (1,1,1), (0,1,0),$ $(0,0,1)\}$.}
   }
  \label{fig:OVtoMDPsafety}
  \end{figure}

The number of vertices in the MDP~$\mdp$, constructed by Reduction~\ref{red:OVtoMDPsafety}, is $O(N)$, the number of edges~$m$ is $O(N \log N)$ 
(recall that $d \in O(\log N)$), we have $k \in \Theta(N)$ target sets, and the construction can be performed in 
$O(N \log N)$ time.
Thus, if we would have an $O(m^{2-\epsilon})$ or $O((k\cdot m)^{1-\epsilon})$
algorithm for disjunctive 
queries or disjunctive objectives of safety objectives for any $\epsilon > 0$, we
would immediately get an  $O(N^{2-\epsilon})$ algorithm for OV, which contradicts OVC (and thus SETH).

 % streettalg
\section{Algorithms for MDPs with Streett objectives}\label{sec:streett}

In this section we extend algorithms for graphs with Streett objectives to MDPs. 
In particular we prove the following theorem.
\begin{theorem}
	For an MDP~$\mdp$ with Streett objectives defined by Streett pairs 
	$\SP= \{(L_i, U_i) \mid 1 \le i \le k\}$ with 
	$b = \sum_{i=1}^k (\lvert L_i \rvert + \lvert U_i \rvert)$
	the almost-sure winning set can be computed in $O(\min(n^2, m \sqrt{m \log n})
	+ b \log n)$ time.\footnote{It can also be computed in 
	$O((\textsc{MEC} + b) \cdot k)$ time, which is faster for some combinations
	of parameters with $k = O(\log n)$.}
\end{theorem}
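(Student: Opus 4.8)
The plan is to follow the graph case: reduce to finding ``good end-components'' and then taking a player-1 attractor to their union. First I would recall the standard characterization of almost-sure winning for $\omega$-regular objectives on MDPs (see e.g.\ \cite{ChatterjeeH14,CourcoubetisY95,ChatterjeeH11}): under any player-1 strategy the set of vertices visited infinitely often is almost surely an end-component, and conversely from inside an end-component player~1 can force the limit set to be any prescribed sub-end-component. Hence for Streett pairs $\SP=\{(L_i,U_i)\mid 1\le i\le k\}$ a vertex is almost-sure winning if and only if player~1 can reach, with probability~$1$, some \emph{good end-component}, i.e.\ an end-component $X$ with $L_i\cap X=\emptyset$ or $U_i\cap X\neq\emptyset$ for every $i$; and the almost-sure winning set is exactly the player-1 attractor of the union of the inclusion-maximal good end-components. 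The attractor costs $O(m)$, so the work is in identifying the good end-components.

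For that I would use the recursive skeleton of the graph Streett algorithm of \cite{HenzingerT96,ChatterjeeHL15} with strongly connected components replaced by maximal end-components. On a vertex set $Z$: compute the MEC-decomposition of $\mdp[Z]$; for each MEC $C$, if $C$ is Streett-closed (no index $i$ with $L_i\cap C\neq\emptyset$ and $U_i\cap C=\emptyset$) declare $C$ good; otherwise pick such a bad index $i$ and recurse on $C$ minus the random attractor of $L_i\cap C$ within $\mdp[C]$. Removing the random attractor, not just $L_i\cap C$, is what makes every sub-end-component of $C$ avoiding $L_i$ survive into the recursive call (a random vertex that could reach $L_i\cap C$ leaks out of any such sub-end-component, exactly as leaking random vertices are discarded in the MEC algorithm). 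Correctness follows by induction on the recursion depth: each call on $C$ returns precisely the maximal good sub-end-components of $C$, since the deleted vertices provably lie in no good end-component; and along any root-to-leaf path each step eliminates a distinct index (the one it was bad for never reappears), so the depth is at most $k$. Recomputing a MEC-decomposition from scratch at every node, with disjoint subproblems at each level, gives the $O((\textsc{MEC}+b)\cdot k)$ bound of the footnote; maintaining, per current MEC, a priority queue of indices with $L_i$ present but $U_i$ absent handles the index bookkeeping in $O(b\log n)$ total.

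To obtain $O(\min(n^2,m\sqrt{m\log n})+b\log n)$ I would open the two black boxes and interleave their refinement loops. Both the MEC-decomposition algorithm of \cite{ChatterjeeH14} and the graph-Streett algorithm of \cite{ChatterjeeHL15} have the same shape --- maintain a partition of the live vertices into strongly connected pieces, find \emph{bad} vertices inside a piece, delete them together with an attractor, recompute the affected SCCs, and use a lock-step/hierarchical-decomposition search so that recomputation cost is charged to the smaller sides of the splits (yielding $n^2$, resp.\ $m\sqrt{m\log n}$). The merged algorithm runs one such refinement process with the \emph{union} of the two badness notions: a vertex in a piece is bad if it is a random vertex with an edge leaving the piece (the MEC obstruction) or it lies in some $L_i$ while the piece contains no vertex of $U_i$ (the Streett obstruction); when a piece has no bad vertex it is simultaneously a MEC and Streett-closed, hence a good end-component, and the $O(b\log n)$ term comes as before from maintaining the per-index $L_i/U_i$ presence counters under SCC splits. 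The step I expect to be the main obstacle is precisely this running-time analysis: one must check that feeding two sources of bad vertices into a single SCC-refinement loop, and deleting random attractors as part of it, does not break the amortization of \cite{ChatterjeeH14,ChatterjeeHL15} --- that each deletion is still charged to the vertices it removes (with the attractor cost absorbed into the removed set), that splits still organize into the hierarchical structure underlying the $m\sqrt{m\log n}$ bound, and that the lock-step search budget is not exceeded when both obstruction types are searched for concurrently. Once these invariants are in place, the two bounds compose to the claimed one, with the $b\log n$ term added on top.
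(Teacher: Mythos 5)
Your plan follows the paper's route almost exactly: reduce to good end-components and an almost-sure reachability computation, obtain the footnote bound by the recursive ``remove bad $L_i$-vertices plus their random attractor and recompute MECs'' scheme, and get the headline bound by opening the MEC and graph-Streett black boxes and running a single SCC-refinement loop in which a piece loses both Streett-bad vertices and random vertices leaking out of the piece (each with its random attractor), with the hierarchical-decomposition charging for the $n^2$ bound, lock-step searches from vertices that lost incident edges for the $m\sqrt{m\log n}$ bound, and the Henzinger--Telle data structure for the $b\log n$ term. Two points deserve correction or completion. First, the final step is \emph{not} a player-1 attractor: the set of vertices that almost surely reach the union of good end-components strictly contains the player-1 attractor in general (e.g.\ a random vertex with one edge into a good end-component and one edge to a player-1 vertex that returns to it is almost-sure winning but is never added by the attractor iteration). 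You need genuine almost-sure reachability, which after contracting MECs can be done in one pass, i.e.\ in $O(\textsc{MEC})$ time; this is within the claimed budget but is a different computation than the $O(m)$ attractor you describe.

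Second, the step you defer --- that merging the two badness notions into one refinement loop preserves the amortization --- is exactly where the work of the proof lies, and it is not automatic. The paper makes it go through by maintaining two invariants: every candidate set in the worklist has \emph{no outgoing random edges} (so that once such a set is strongly connected and Streett-closed it is immediately a good end-component, and removing random attractors never has to be compensated elsewhere), and every good end-component is always contained in some candidate set (Lemma-level facts that attractors of $U_i$-free $L_i$-vertices and of leaking random vertices never intersect a good end-component). For the sparse bound one additionally needs a label invariant --- every top (bottom) SCC of a candidate set contains a vertex that recently lost an incoming (outgoing) edge --- so that the lock-step searches are started inside the small SCC and their cost, at most $\sqrt{m/\log n}$ parallel searches times the edges of the smallest top/bottom SCC, can be charged per edge $O(\log n)$ times, with a fallback to a full SCC recomputation when more than $\sqrt{m/\log n}$ vertices are labeled (which happens at most $\sqrt{m\log n}$ times). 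With those invariants stated and proved, your outline matches the paper's argument and yields the claimed $O(\min(n^2, m\sqrt{m\log n}) + b\log n)$ bound.
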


We first describe the basic algorithm for MDPs with Streett objectives, which 
uses an algorithm for MEC-decomposition as a black box. We then develop a new 
algorithm that opens up this black box and after an initial computation of the 
MEC-decomposition only uses strongly connected components
and random attractor computations\infull{ (Section~\ref{sec:streettimpr})}. 
This algorithm reveals strong similarities to the known algorithms for graphs 
with Streett objectives. We then extend the two approaches that lead to the best
asymptotic running times on graphs, one for dense 
graphs\infull{ (Section~\ref{sec:streettdense})} 
and one for sparse\infull{ (Section~\ref{sec:streettdense})}
graphs, to MDPs. The algorithms for graphs are based on finding
``good'' strongly connected subgraphs and then determining which vertices can
reach these ``good components''. For MDPs we find \emph{good end-components}
and then compute almost-sure reachability with the union of all good 
end-components as target set
to determine the almost-sure winning set.
We first show that this approach is correct\infull{ 
(Section~\ref{sec:gec}, see also~\cite[Chap.~10.6.3]{baierbook})}
and then provide algorithms that identify all good end-components.

\subsection{Good End-Components}\label{sec:gec}

Good end-components are also useful for other objectives such as Rabin objectives.
The results of this subsection are valid for all objectives for which whether
an infinite path $\pat$ belongs to the objective depends only on the vertices
$\Inf(\pat)$ that occur infinitely often in $\pat$.
For such objectives we show that determining the winning set is equivalent to 
computing almost-sure reachability of the union of all good end-components.
We define a good end-component as an end-component for which the objective
is satisfied if exactly the vertices of the end-component are visited infinitely
often.
\begin{definition}[Good End-Component]\label{def:good_ec}
  Given an MDP $\mdp$ and an objective $\obj$,
  an end-component $\ec$ of $\mdp$ such that
  each path $\pat \in \Path$ with $\Inf(\omega)=\ec$ is in $\obj$
  is called a \emph{good $\obj$~end-component}.
\end{definition}
For a Streett objective the following is an equivalent definition.
\begin{definition}[Good Streett End-Component]
	Given an MDP $\mdp$ and a set $\SP= \{(L_i, U_i) \mid 1 \le i \le k\}$ of
	Streett pairs, 
	a \emph{good Streett end-component} is an end-component $\ec$ of $\mdp$ such that
	for each $1 \le i \le k$ either $L_i \cap \ec = \emptyset$ or
	$U_i \cap \ec \ne \emptyset$.
\end{definition}

The importance of end-components lies in the fact that player~1 can keep the play
in an end-component forever and can visit each vertex in the end-component
almost surely and also almost surely infinitely often
(Lemma~\ref{lem:ec_vist_infinitly}). This implies that in a good end-component
player~1 has an almost-sure winning strategy (Lemma~\ref{lem:wingood}) and thus
player~1 has an almost-sure winning strategy from every vertex that can 
almost-surely reach a good end-component (Lemma~\ref{lem:reachwin} and 
Corollary~\ref{cor:gecsound-gen}). This shows the soundness of the approach
of determining the almost-sure winning set for an objective determined by 
$\Inf(\pat)$ by computing almost-sure reachability of the union of all good
end-components.

\begin{lemma}\label{lem:ec_vist_infinitly}
  Given an MDP $\mdp$ and an end-component~$\ec$, player~1 has a strategy
  from each vertex of $\ec$ such that all vertices of~$\ec$ are 
  almost-surely reached infinitely often
  and only vertices of~$\ec$ are visited. 
\end{lemma}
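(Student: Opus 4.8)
The plan is to exhibit an explicit round-robin strategy and to analyze it with the same shortest-path argument already used at the end of the proof of Proposition~\ref{prop:corrdrmdp}. Throughout, let $\ell = |\ec|$ and let $\alpha > 0$ be the minimum positive transition probability in $\mdp$.

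First I would prove an auxiliary claim: for every $w \in \ec$ there is a player~1 strategy that, started at any $u \in \ec$, stays inside $\ec$ and reaches $w$ almost surely. Since $\mdp[\ec]$ is strongly connected, for every $x \in \ec$ there is a path inside $\mdp[\ec]$ from $x$ to $w$; let $d(x)$ be the length of a shortest such path, so $d(w) = 0$ and $d(x) \le \ell - 1$ otherwise. For $x \in \vo \cap \ec$ with $x \ne w$ the strategy picks an edge $(x, y) \in E$ with $y \in \ec$ and $d(y) = d(x) - 1$; such an edge exists and, crucially, lies inside $\mdp[\ec]$. For $x \in \vr \cap \ec$ every outgoing edge stays in $\ec$ by the definition of an end-component, and at least one successor $y$ has $d(y) = d(x) - 1$, which is chosen with probability at least $\alpha$. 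Hence from any $x \in \ec$ the probability of reaching $w$ within $\ell$ steps is at least $\alpha^\ell$, so the probability of not reaching $w$ within $q \cdot \ell$ steps is at most $(1 - \alpha^\ell)^q$, which tends to $0$ as $q \to \infty$; thus $w$ is reached almost surely and the play never leaves $\ec$.

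Next I would fix an enumeration $v_1, \dots, v_\ell$ of the vertices of $\ec$ and define the strategy $\str$ to proceed in \emph{phases}: it maintains a current target index $j$ (initially $1$); while the target is $v_j$ it follows the strategy of the auxiliary claim aiming at $v_j$; as soon as $v_j$ is visited it increments $j$ cyclically modulo $\ell$ and starts the next phase. Since $\str$ only ever uses edges of $\mdp[\ec]$ and, at player~1 vertices, only picks successors in $\ec$, every play consistent with $\str$ that starts in $\ec$ stays in $\ec$; this gives the ``only vertices of $\ec$ are visited'' part.

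The remaining and main point is to show that under $\str$ every vertex of $\ec$ is visited infinitely often almost surely. By the auxiliary claim, conditioned on the finite history up to the start of phase $t$ — which ends at some vertex of $\ec$ — phase $t$ terminates (its target is reached) with probability~$1$. Since a countable intersection of probability-$1$ events has probability~$1$, with probability~$1$ all phases terminate; hence the round-robin process runs forever, every $v_i$ becomes a target infinitely often, and is therefore visited infinitely often. The only subtlety, handled exactly as in the proof of Proposition~\ref{prop:corrdrmdp}, is the formal appeal to the Markov property to restart the analysis of each phase from the (random) vertex of $\ec$ at which the previous phase ended; I expect this bookkeeping, rather than any conceptual difficulty, to be the main thing to get right.
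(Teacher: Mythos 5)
Your proposal is correct and follows essentially the same route as the paper: the same round-robin strategy that cycles through the vertices of $\ec$ along shortest paths inside $\mdp[\ec]$, the observation that the absence of outgoing random edges keeps the play in $\ec$, and the same minimum-transition-probability bound $\alpha^{\ell}$ with a geometric-trials argument. The only difference is organizational: the paper bounds directly the probability $\alpha^{\ell^2}$ of completing a full round of $\ell$ shortest paths, whereas you first prove almost-sure reachability of a single target and then chain phases via conditioning, which is a harmless repackaging of the same estimate.
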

\begin{proof}
  We define a strategy $\str$ as follows: 
  Choose some arbitrary numbering of the vertices in~$\ec$. The (not memoryless)
  strategy of player~1 is to first follow a shortest path within the end-component
  (with, say, lexicographic tie breaking) to the first vertex from the current 
  position of the play until this vertex is reached, then a shortest path 
  within the end-component to the 
  second vertex and so on, until he starts with the first vertex again. This is
  possible because an end-component is a strongly connected subgraph. 
  Since an end-component has no outgoing random edges, the 
  play does not leave the end-component when player~1 plays this strategy.
  Let $\ell = \lvert \ec \rvert$ and let $\alpha$ be the smallest
  positive transition probability in the MDP. Then the probability that the first 
  chosen shortest path is followed with the above strategy 
  is at least $\alpha^\ell$ and the 
  probability that a sequence of $\ell$ shortest paths within $\ec$ are followed 
  and thus all vertices of $\ec$ are visited is at least $\alpha^{\ell^2}$. Thus 
  the probability that not all 
  vertices in $\ec$ were visited after $q \cdot \ell^2$ steps is at most $(1 - 
  \alpha^{\ell^2})^q$, which goes to 0 when $q$ goes to infinity. Hence player~1
  has a strategy such that all vertices in $\ec$ are visited with probability~1.
  By the same argument all vertices in $\ec$ are visited infinitely often with 
  probability~1 because the probability that some vertex is not visited after 
  some finite prefix of length $t \cdot \ell^2$ can be bounded by $(1 - \alpha^{\ell^2})^{(q - t)}$.
\end{proof}

\begin{lemma}\label{lem:wingood}
  Player~1 has a strategy $\str$ from each vertex in a good $\obj$ end-component~$\ec$
  to satisfy $\obj$ almost-surely.
\end{lemma}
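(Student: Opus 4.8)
The plan is to use Lemma~\ref{lem:ec_vist_infinitly} together with the defining property of a good end-component; this lemma should follow almost immediately as a corollary.

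First I would fix an arbitrary vertex $v \in \ec$ and take $\str$ to be the player~1 strategy guaranteed by Lemma~\ref{lem:ec_vist_infinitly} for the end-component~$\ec$. When player~1 follows $\str$ from $v$, the play $\pat$ never leaves $\ec$ (this holds with certainty, since an end-component has no outgoing random edges and the strategy stays inside $\ec$), so $\Inf(\pat) \subseteq \ec$; and with probability~$1$ every vertex of $\ec$ is visited infinitely often, so $\ec \subseteq \Inf(\pat)$ almost surely.

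Next I would combine these two facts: intersecting a sure event with a probability-$1$ event again yields probability~$1$, so $\pr{\str}{v}{A} = 1$, where $A = \set{\pat \in \Path \mid \Inf(\pat) = \ec}$. (All sets involved are Borel, since ``$w$ occurs infinitely often in $\pat$'' is a countable combination of cylinder events, so the relevant probabilities are well defined.) Finally, by Definition~\ref{def:good_ec} every play with $\Inf(\pat) = \ec$ lies in $\obj$, i.e.\ $A \subseteq \obj$; hence $\pr{\str}{v}{\obj} \ge \pr{\str}{v}{A} = 1$, so $\pr{\str}{v}{\obj} = 1$ and $\str$ is almost-sure winning for $\obj$ from $v$. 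Since $v \in \ec$ was arbitrary, the claim follows.

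I do not expect a real obstacle here: essentially all of the work has been isolated into Lemma~\ref{lem:ec_vist_infinitly}, and the only points that need a careful (but entirely standard) word are the measurability of $A$ and of $\obj$ and the monotonicity of probability used in the last step. If one wants to be extra careful, one should observe that $\obj$ being determined solely by $\Inf(\pat)$ makes it a tail-type (Borel) event, which is needed merely so that $\pr{\str}{v}{\obj}$ is meaningful; this is exactly the class of objectives for which the whole subsection is stated.
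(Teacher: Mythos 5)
Your proof is correct and follows essentially the same route as the paper: apply Lemma~\ref{lem:ec_vist_infinitly} and conclude via the defining property of a good end-component. If anything, you are slightly more careful than the paper's own proof, which glosses over the fact that $\Inf(\pat)=\ec$ exactly (not merely $\ec\subseteq\Inf(\pat)$) is what Definition~\ref{def:good_ec} requires — a point you cover by noting the play surely stays inside $\ec$.
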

\begin{proof}
  By Lemma~\ref{lem:ec_vist_infinitly} player~1 has a strategy that almost-surely visits
  all nodes in $\ec$ infinitely often. By the definition of good $\obj$ end-component, 
  all paths visiting all nodes in $\ec$ infinitely often are in $\obj$.
  Hence, the strategy given by Lemma~\ref{lem:ec_vist_infinitly} is also almost-sure winning for~$\obj$.
\end{proof}

\begin{lemma}\label{lem:reachwin}
  Given an MDP $\mdp$, an objective $\obj$ that is determined by $\Inf(\pat)$,
  and a set $S$ of almost-sure winning nodes we have that 
  if $v \in \as{\mdp, \reacht{S}}$, then also $v \in \as{\mdp, \obj}$.
\end{lemma}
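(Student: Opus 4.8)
The plan is a standard ``strategy-stitching'' argument: play an almost-sure reachability strategy until the target set $S$ is hit, and then switch to an almost-sure winning strategy for $\obj$ from the vertex of $S$ that was reached. First I would fix a strategy $\sigma_r$ with $\pr{\sigma_r}{v}{\reacht{S}} = 1$, which exists since $v \in \as{\mdp, \reacht{S}}$, and for each vertex $u \in S$ a strategy $\sigma_u$ with $\pr{\sigma_u}{u}{\obj} = 1$, which exists because every vertex of $S$ is by hypothesis almost-sure winning for $\obj$. Then I define a history-dependent strategy $\sigma \in \Str$ that mimics $\sigma_r$ on every prefix that does not yet contain a vertex of $S$, and from the first moment the play visits a vertex $u \in S$ onward mimics $\sigma_u$ applied to the suffix of the play started at that occurrence of $u$.

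The correctness argument has two parts. Since $\sigma$ and $\sigma_r$ induce the same distribution over prefixes up to and including the first visit to $S$, we get $\pr{\sigma}{v}{\reacht{S}} = \pr{\sigma_r}{v}{\reacht{S}} = 1$; hence almost every play under $\sigma$ from $v$ has a first time $t \ge 0$ at which it reaches some $u \in S$. For a fixed such first-hitting time $t$ and finite prefix $\rho = \langle v_0, \dots, v_t \rangle$ with $v_t = u \in S$, the conditional distribution of the suffix $\langle v_t, v_{t+1}, \dots \rangle$ under $\sigma$ equals the distribution of a play from $u$ under $\sigma_u$; therefore the suffix lies in $\obj$ with conditional probability $1$. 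Because membership in $\obj$ depends only on $\Inf(\pat)$, and $\Inf$ is unchanged by prepending a finite prefix, the full play (the prefix $\rho$ followed by the suffix) is in $\obj$ whenever the suffix is. The events ``the first visit to $S$ occurs at time $t$ along prefix $\rho$'' form a countable measurable partition of the almost-sure event $\reacht{S}$, so integrating the conditional probabilities yields $\pr{\sigma}{v}{\obj} = 1$, i.e.\ $v \in \as{\mdp, \obj}$.

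The only point requiring care is the measure-theoretic bookkeeping in the second part: that the first-hitting prefixes give a countable measurable partition of $\reacht{S}$, that conditioning on such a prefix makes the suffix process coincide in law with the Markov chain induced by $\sigma_u$ from $u$ (a consequence of the Markov property together with the fact that $\sigma$ was defined to ignore the prefix once $S$ is reached), and that $\obj$, being determined by $\Inf(\pat)$, is measurable and prefix-independent. All of these are routine facts about MDPs under a fixed strategy, so no new ideas beyond the stitching construction are needed.
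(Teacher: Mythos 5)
Your proposal is correct and follows essentially the same route as the paper's proof: stitch the almost-sure reachability strategy to the almost-sure winning strategies at the first visit to $S$, and use that $\obj$ is determined by $\Inf(\pat)$ so the finite prefix before reaching $S$ is irrelevant. Your write-up merely spells out the measure-theoretic bookkeeping that the paper leaves implicit.
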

\begin{proof}
 Assume $v \in \as{\mdp, \reacht{S}}$ and consider the following strategy.
 Start with the strategy for reaching $S$ and as soon as one vertex $s$ of $S$ is reached 
 switch to the almost-sure winning strategy of $s$.
 As $S$ is (almost-surely) reached within a finite number of steps, the vertices
 visited by the strategy for reaching $S$ does not affect the objective $\obj$.
\end{proof}

\begin{corollary}[Soundness of Good End-Components]\label{cor:gecsound-gen}
  For a set of good end-components~$\mathcal{\ec}$ and an objective $\obj$ that is determined by $\Inf(\pat)$
  we have that  $\as{\mdp, \reacht{\bigcup_{\ec \in \mathcal{\ec}}\ec}}$ is contained in $\as{\mdp, \obj}$.
\end{corollary}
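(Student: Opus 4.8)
The plan is to derive the corollary directly from Lemmas~\ref{lem:wingood} and~\ref{lem:reachwin}, with no real work beyond bookkeeping. First I would set $S = \bigcup_{\ec \in \mathcal{\ec}} \ec$, the union of the vertex sets of all good $\obj$ end-components in the collection. By Lemma~\ref{lem:wingood}, for every good $\obj$ end-component~$\ec$ and every vertex $v \in \ec$ player~1 has an almost-sure winning strategy for $\obj$ from~$v$; hence $\ec \subseteq \as{\mdp, \obj}$ for each such $\ec$, and therefore $S \subseteq \as{\mdp, \obj}$. In particular every vertex of $S$ is an almost-sure winning vertex for $\obj$.

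Next I would invoke Lemma~\ref{lem:reachwin} with this set~$S$. Its hypotheses are exactly that $\obj$ is determined by $\Inf(\pat)$ (given in the statement) and that $S$ is a set of almost-sure winning vertices (established in the previous step). The lemma then yields: for every~$v$, if $v \in \as{\mdp, \reacht{S}}$ then $v \in \as{\mdp, \obj}$. Read as an inclusion of sets this is $\as{\mdp, \reacht{S}} \subseteq \as{\mdp, \obj}$, which is precisely the claimed containment since $S = \bigcup_{\ec \in \mathcal{\ec}} \ec$.

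There is essentially no obstacle; the content lies entirely in the two preceding lemmas. The only point worth a sentence of care is that Lemma~\ref{lem:reachwin} requires each vertex of~$S$ to be \emph{individually} almost-sure winning — the witnessing strategies generally differ from vertex to vertex — but this is exactly what Lemma~\ref{lem:wingood} supplies, and the composite strategy of Lemma~\ref{lem:reachwin} only needs to steer the play to \emph{some} vertex of~$S$ within finitely many steps before switching to the local almost-sure winning strategy of that vertex; the finite prefix does not affect membership in $\obj$ because $\obj$ depends only on $\Inf(\pat)$. I would also note explicitly that this corollary establishes only soundness (the ``$\subseteq$'' direction); the matching completeness direction, namely that $\as{\mdp,\obj}$ is contained in the almost-sure reachability of the union of good end-components, requires a separate argument and will be handled subsequently.
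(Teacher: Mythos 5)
Your proposal is correct and matches the paper's (implicit) argument exactly: the corollary is stated as an immediate consequence of Lemma~\ref{lem:wingood} (every vertex of a good end-component is almost-sure winning, so the union $S$ of the good end-components consists of almost-sure winning vertices) combined with Lemma~\ref{lem:reachwin} applied to that set~$S$. The bookkeeping points you raise, including that the finite prefix before reaching $S$ is irrelevant since $\obj$ is determined by $\Inf(\pat)$, are precisely the content already handled inside those two lemmas, so nothing further is needed.
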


Another conclusion we can draw from the above lemmata is that if a MEC contains
a good end-component, then player~1 has an almost-sure winning strategy
for the whole MEC because he can reach the good end-component almost-surely from 
every vertex of the MEC. We exploit this observation in the improved algorithm 
for coBüchi objectives in Section~\ref{sec:cobuchialg}.
\begin{corollary}[of Lemmata~\ref{lem:ec_vist_infinitly} 
and~\ref{lem:reachwin}]\label{cor:winning_mec}
  Given an MDP $\mdp$ and an objective $\obj$ that is determined by $\Inf(\pat)$,
  if a MEC $\ec$ contains an almost-sure
  winning vertex \upbr{e.g.\ a good end-component~$\ec$},
  then all vertices in $\ec$ are almost-sure winning for player~1.
\end{corollary}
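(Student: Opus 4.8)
The plan is to derive the statement directly from Lemmata~\ref{lem:ec_vist_infinitly} and~\ref{lem:reachwin}. Fix a MEC $\ec$ of $\mdp$ and suppose $w \in \ec$ is almost-sure winning, i.e.\ $w \in \as{\mdp, \obj}$. The parenthetical case is subsumed: if $\ec$ contains a good $\obj$~end-component, then by Lemma~\ref{lem:wingood} every vertex of that end-component is almost-sure winning, so such a $w$ exists. It remains to show $v \in \as{\mdp, \obj}$ for every $v \in \ec$.

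First I would argue that from every $v \in \ec$ player~1 can reach $w$ almost surely while staying inside $\ec$, that is, $v \in \as{\mdp, \reacht{\set{w}}}$. This is exactly the reachability part of the argument already carried out in the proof of Lemma~\ref{lem:ec_vist_infinitly}: since the induced sub-MDP on $\ec$ is strongly connected and has no outgoing random edges, player~1 can follow a shortest path towards $w$ within $\ec$; each such path of length at most $\lvert \ec \rvert$ is realized with probability at least $\alpha^{\lvert \ec \rvert}$, where $\alpha$ is the smallest positive transition probability in $\mdp$, so the probability of not having reached $w$ after $q \cdot \lvert \ec \rvert$ steps is at most $(1 - \alpha^{\lvert \ec \rvert})^q$, which tends to $0$. (Equivalently, one may just invoke Lemma~\ref{lem:ec_vist_infinitly} as a black box: the strategy it provides visits \emph{all} vertices of $\ec$, in particular $w$, infinitely often, hence at least once, with probability~1, and never leaves $\ec$.)

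Then I would apply Lemma~\ref{lem:reachwin} with the singleton set $S = \set{w}$ of almost-sure winning nodes: since $\obj$ is determined by $\Inf(\pat)$ and $v \in \as{\mdp, \reacht{S}}$, the lemma yields $v \in \as{\mdp, \obj}$. As $v \in \ec$ was arbitrary, all vertices of $\ec$ are almost-sure winning, which is the claim.

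There is essentially no obstacle here: the corollary is a near-immediate consequence of the machinery already established. The only point deserving a word of care is that the reachability of $w$ occurs \emph{within} $\ec$, so that the infinite tail of the play is governed solely by the subsequent winning strategy from $w$ and Lemma~\ref{lem:reachwin} applies — and this is precisely what the end-component structure (strong connectivity, no escaping random edges) guarantees.
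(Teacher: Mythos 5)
Your proposal is correct and follows exactly the route the paper intends: use Lemma~\ref{lem:ec_vist_infinitly} to get almost-sure reachability of the winning vertex from anywhere in the MEC, then apply Lemma~\ref{lem:reachwin}, with Lemma~\ref{lem:wingood} covering the parenthetical good-end-component case. The paper states this as an immediate corollary of those same two lemmata, so there is nothing to add.
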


To show the completeness of the approach of computing good end-components,
we have to argue that every vertex from which player~1 can satisfy the objective 
almost-surely has also a strategy to reach a good end-component almost-surely.
For this we need two rather technical lemmata.
The intuition behind Lemma~\ref{lem:zeroprob_paths} is that if a random vertex 
occurs infinitely often on a path,
then almost-surely also each of its successors appears infinitely often on that 
path. Thus we can argue that vertex sets that are reached infinitely often 
with positive probability are closed under random edges and hence SCCs within such 
sets of vertices are end-components (Lemma~\ref{lem:random_closure}). 
To show completeness (Proposition~\ref{prop:geccompl-gen})
we then use a set of paths in the objective that are reached with positive 
probability to show that the vertices that these paths use infinitely often 
form good end-components. A similar proof is given for Büchi objectives in
\cite{CourcoubetisY95}. 

\begin{lemma}\label{lem:zeroprob_paths}
  Given an MDP $\mdp$, a strategy $\str$ of player~1,
  the set $\Path_\str$ of infinite paths starting at a vertex $v$ that are compatible with the strategy $\str$, and
  a vertex $a \in V_R$ with $\mathrm{Pr}_\sigma\left(\{\pat \in \Path_\str \mid a \in \Inf(\pat) \}\right)=p$,
  for each successor $b$ of $a$ we have  
  $\mathrm{Pr}_\sigma(\{\pat \in \Path_\str \mid a \in \Inf(\pat),$ $b \in \Inf(\pat) \})=p$ and
  $\mathrm{Pr}_\sigma\left(\{\pat \in \Path_\str \mid a \in \Inf(\pat), b \notin \Inf(\pat) \}\right)=0$.
\end{lemma}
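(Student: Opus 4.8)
The plan is to reduce both assertions to the single claim
\[
\mathrm{Pr}_\sigma\bigl(\{\pat \in \Path_\str \mid a \in \Inf(\pat),\ b \notin \Inf(\pat)\}\bigr) = 0 .
\]
Once this is shown, the other two statements follow immediately: the events $\{a \in \Inf(\pat),\ b \in \Inf(\pat)\}$ and $\{a \in \Inf(\pat),\ b \notin \Inf(\pat)\}$ partition $\{a \in \Inf(\pat)\}$, which has probability $p$ by hypothesis, so the first has probability $p$ and the second $0$.

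To establish the claim, fix a successor $b$ of $a$ and put $\alpha = \trans(a)[b]$; since $(a,b)$ is a random edge we have $\alpha > 0$. The decisive structural fact is that $a$ is a random vertex, so whenever a play reaches $a$ the next vertex equals $b$ with probability exactly $\alpha$, independently of the history and of $\sigma$ (the strategy only acts at player~1 vertices). I would first write $\{a \in \Inf(\pat),\ b \notin \Inf(\pat)\} = \bigcup_{t \ge 0} B_t$, where $B_t = \{\pat \mid a \in \Inf(\pat)\} \cap \{\pat \mid v_j \ne b \text{ for all } j > t\}$; as this is a countable union, it suffices to show $\mathrm{Pr}_\sigma(B_t) = 0$ for each fixed $t$. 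For fixed $t$, introduce the stopping times $\tau_1 < \tau_2 < \cdots$ given by the successive time steps $> t$ at which the play visits $a$. On $B_t$ every $\tau_N$ is finite and $v_{\tau_j+1} \ne b$ for all $j$, so $\mathrm{Pr}_\sigma(B_t) \le \mathrm{Pr}_\sigma\bigl(\tau_N < \infty \text{ and } v_{\tau_j+1}\ne b \text{ for } j=1,\dots,N\bigr)$ for every $N$. Peeling off one $\tau_j$ at a time and conditioning on the prefix up to $\tau_j$ — at which the play sits at the random vertex $a$, so the next transition has distribution $\trans(a)$ and avoids $b$ with probability $1-\alpha$ — the tower rule bounds this probability by $(1-\alpha)^N$. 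Letting $N\to\infty$ yields $\mathrm{Pr}_\sigma(B_t)=0$.

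The main obstacle is the measure-theoretic bookkeeping in this last step: one must check that the event ``$\tau_N<\infty$ and the first $N$ steps after the $\tau_j$ avoid $b$'' is measurable with respect to the history up to time $\tau_N$, and invoke the (strong-)Markov-type property of the path measure induced by a strategy on an MDP, namely that conditioning on a finite prefix ending in a random vertex $w$, the next vertex is distributed according to $\trans(w)$, irrespective of $\sigma$. An equivalent route that avoids explicit stopping-time manipulations is to apply the conditional (Lévy) second Borel--Cantelli lemma to the events $A_j = \{v_{\tau_j+1} = b\}$: their conditional probabilities given the pre-$\tau_j$ history each equal $\alpha$, hence $\sum_j \mathrm{Pr}_\sigma(A_j \mid \mathcal{F}_{\tau_j})$ diverges on $\{a \in \Inf(\pat)\}$, so almost surely infinitely many $A_j$ occur there, i.e.\ $b \in \Inf(\pat)$ almost surely on $\{a \in \Inf(\pat)\}$, which is the claim.
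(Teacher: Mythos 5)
Your proposal is correct and follows essentially the same route as the paper: the key fact in both is that every visit to the random vertex $a$ gives probability $\trans(a)[b]>0$ of moving to $b$ next, independently of the history and of $\sigma$, so the probability of seeing $a$ infinitely often while eventually avoiding $b$ is killed by a geometric bound of the form $(1-\alpha)^N\to 0$. You merely formalize the paper's brief counting argument more carefully (decomposition into the events $B_t$, stopping times and the tower rule, or conditional Borel--Cantelli), which is a welcome tightening but not a different proof idea.
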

\begin{proof}
  Whenever the strategy visits node $a$, with some constant probability $q$ the play continues in $b$.
  Thus the probability that $b$ was visited less than $\ell$ times after $a$ was visited $n$ times
  is upper bounded by $(1-q^\ell)^{n/\ell}$ which goes to $0$ with increasing $n$.
  Thus, we have $\mathrm{Pr}_\sigma\left(\{\pat \in \Path_\str \mid a \in \Inf(\pat), b \notin \Inf(\pat) \}\right)=0$
  and hence for the complement set $\mathrm{Pr}_\sigma(\{\pat \in \Path_\str \mid a \in \Inf(\pat), b \in \Inf(\pat) \})=p$.
\end{proof}

\begin{lemma}\label{lem:random_closure} 
  Given an MDP $\mdp$, a strategy $\str$ of player~1,
  the set $\Path_\str$ of infinite paths starting at a vertex $v$ that are compatible with the strategy $\str$, 
  a set $\Path' \subseteq \Path_\str$, and the set of vertices 
  $S=\{ a \mid \mathrm{Pr}_\sigma\left( \{ \pat \mid a \in \Inf(\pat), \pat \in \Path' \} \right) > 0\}$, then
  for each SCC $C$ of $S$ and each vertex $a \in C \cap V_R$ all successors of a are contained in $C$, i.e., $C$ is an end-component of $\mdp$.
\end{lemma}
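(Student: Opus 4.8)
The plan is to combine two ingredients: a ``random\mbox{-}closure'' argument built on Lemma~\ref{lem:zeroprob_paths}, and a measure\mbox{-}theoretic cleanup observation that lets us pass from statements about probabilities to the existence of a single combinatorial witness path all of whose recurrent vertices lie in $S$. The only properties of $C$ we will use are that $\mdp[C]$ is strongly connected and that $C$ is maximal with this property inside $\mdp[S]$.

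First I would establish the cleanup observation: for almost every $\pat \in \Path'$ we have $\Inf(\pat) \subseteq S$. Indeed, by the definition of $S$ every vertex $u \in V \setminus S$ satisfies $\mathrm{Pr}_\sigma(\{\pat \in \Path' \mid u \in \Inf(\pat)\}) = 0$, and since $V$ is finite the event $\{\pat \in \Path' \mid \Inf(\pat) \not\subseteq S\} = \bigcup_{u \in V \setminus S}\{\pat \in \Path' \mid u \in \Inf(\pat)\}$ is a finite union of null sets, hence null. Consequently every measurable sub\mbox{-}event of $\Path'$ of positive probability contains some path $\pat$ with $\Inf(\pat) \subseteq S$.

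Next, fix an SCC $C$ of $S$ (i.e.\ of the induced subgraph $\mdp[S]$), a random vertex $a \in C \cap \vr$, and an arbitrary successor $b$ of $a$ in $\mdp$; I must show $b \in C$. Since $a \in C \subseteq S$ we have $\mathrm{Pr}_\sigma(\{\pat \in \Path' \mid a \in \Inf(\pat)\}) > 0$, and Lemma~\ref{lem:zeroprob_paths} applied on the whole path space gives $\mathrm{Pr}_\sigma(\{\pat \mid a \in \Inf(\pat),\, b \notin \Inf(\pat)\}) = 0$; intersecting this null event with $\Path'$ keeps it null, so $E := \{\pat \in \Path' \mid a \in \Inf(\pat) \text{ and } b \in \Inf(\pat)\}$ has $\mathrm{Pr}_\sigma(E) > 0$. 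By the cleanup observation pick $\pat = \langle v_0,v_1,\dots\rangle \in E$ with $\Inf(\pat)\subseteq S$. Some tail $v_t, v_{t+1},\dots$ of $\pat$ visits only vertices of $\Inf(\pat) \subseteq S$ while still visiting both $a$ and $b$ infinitely often; extracting the finite segment from an occurrence of $a$ to a later occurrence of $b$, and the segment from an occurrence of $b$ to a later occurrence of $a$, produces walks from $a$ to $b$ and from $b$ to $a$ entirely inside $\mdp[S]$. Hence $a$ and $b$ lie in the same SCC of $\mdp[S]$, i.e.\ $b \in C$. This proves the random\mbox{-}closure claim; together with strong connectivity of $\mdp[C]$ (and the fact that a witness tail for any vertex of $C$ exhibits an out\mbox{-}edge of that vertex returning within $\mdp[S]$, hence within $C$, so $\mdp[C]$ contains at least one edge) it follows that $C$ is an end\mbox{-}component of $\mdp$.

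The step I expect to be the main obstacle is precisely the passage from ``$b$ occurs infinitely often with positive probability'' to ``$b$ is mutually reachable with $a$ within $S$'': a single path carries probability zero, so one cannot deduce that its recurrent set lies in $S$ from its mere existence. The cleanup observation --- that the finite union over $V \setminus S$ of the ``visited infinitely often'' events is null --- is what upgrades the positive\mbox{-}probability event $E$ into an event containing a path usable as a purely combinatorial witness, and Lemma~\ref{lem:zeroprob_paths} is the other indispensable ingredient, since it is what forces a recurrent random vertex to pull all of its successors into the recurrent set.
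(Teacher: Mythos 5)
Your proof is correct and takes essentially the same route as the paper: both apply Lemma~\ref{lem:zeroprob_paths} to conclude that $a$ and $b$ are jointly recurrent on a positive-probability subset of $\Path'$, and then extract a combinatorial witness showing $b$ and $a$ are mutually reachable within $S$, hence $b\in C$. The only difference is in that last step, where your cleanup observation that $\Inf(\pat)\subseteq S$ for almost every $\pat\in\Path'$ produces a single witness path directly, while the paper instead groups the positive-probability event by the finitely many possible connecting paths from $b$ to $a$ and argues one group has positive probability; your variant is a slightly cleaner rendering of the same argument.
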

\begin{proof}
 Consider an SCC $C$, a vertex $a \in C \cap V_R$, and a successor $b$.
 Then by definition $\mathrm{Pr}_\sigma\left( \{ \pat \mid a \in \Inf(\pat), \pat \in \Path'\} \right) = p$ 
 for a $p>0$ and by Lemma~\ref{lem:zeroprob_paths} 
 we get $\mathrm{Pr}_\sigma( \{ \pat \mid a \in \Inf(\pat),$ $b \notin \Inf(\pat), \pat \in \Path_\str \})=0$
 and thus $\mathrm{Pr}_\sigma ( \{ \pat \mid a \in \Inf(\pat), b \in \Inf(\pat),$ $\pat \in \Path'\}) = p$,
 i.e., $b \in S$.
 For each of the paths $\pat$ in the latter set we have a path from $b$ to $a$ consisting solely of nodes in $\Inf(\pat)$.
 As in $\mdp$ there are just finitely many paths from $b$ to $a$ at least one must have non-zero probability 
 and thus is also contained in $S$.  Hence, $b$ belongs to the SCC $C$.
\end{proof}

\begin{proposition}[Completeness of Good End-Components]\label{prop:geccompl-gen}
    Given an MDP $\mdp$ with an objective $\obj$ determined by $\Inf(\pat)$
    and let $\mathcal{\ec}$ be the set of all good $\obj$ end-components,
    then $\as{\mdp, \obj}$ is contained in  $\as{\mdp, \reacht{\cup_{\ec \in \mathcal{\ec}} \ec}}$.
\end{proposition}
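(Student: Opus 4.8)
The plan is to fix a vertex $v\in\as{\mdp,\obj}$ together with an almost-sure winning strategy $\str$, so $\pr{\str}{v}{\obj}=1$, and to show that this \emph{same} strategy already witnesses $v\in\as{\mdp,\reacht{\bigcup_{\ec\in\mathcal{\ec}}\ec}}$; that is, under $\str$ the play almost surely hits the union of all good $\obj$ end-components. The central object is the partition of the set $\Path_\str$ of $\str$-compatible paths starting at $v$ according to the value of $\Inf(\pat)$. Since $V$ is finite there are only finitely many possible values, so the events $\{\pat\in\Path_\str\mid\Inf(\pat)=U\}$ for $U\subseteq V$ partition $\Path_\str$ and $\sum_{U}\pr{\str}{v}{\{\pat\mid\Inf(\pat)=U\}}=1$. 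Let $\mathcal{U}$ be the collection of sets $U$ for which this probability is strictly positive; then $\pr{\str}{v}{\{\pat\mid\Inf(\pat)\in\mathcal{U}\}}=1$.

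The main step is to prove that every $U\in\mathcal{U}$ is a good $\obj$ end-component. First, $U$ is an end-component: pick a path $\pat\in\Path_\str$ with $\Inf(\pat)=U$ (one exists since the event has positive probability); all vertices outside $U$ occur only finitely often on $\pat$, so some suffix of $\pat$ stays inside $U$ and visits every vertex of $U$ infinitely often, which shows the subgraph induced by $U$ is strongly connected and contains an edge. Applying Lemma~\ref{lem:random_closure} with $\Path'=\{\pat\in\Path_\str\mid\Inf(\pat)=U\}$ — for which the associated vertex set $S$ equals $U$, since for $a\in U$ every path in $\Path'$ has $a\in\Inf(\pat)$ while for $a\notin U$ no path in $\Path'$ does — shows that every random vertex of $U$ has all of its successors in $U$, i.e.\ $U$ has no outgoing random edges. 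Second, since $\pr{\str}{v}{\obj}=1$ while $\{\pat\in\Path_\str\mid\Inf(\pat)=U\}$ has positive probability, there is a path $\pat$ with $\Inf(\pat)=U$ lying in $\obj$; because $\obj$ is determined by $\Inf(\pat)$, \emph{every} path $\pat'\in\Path$ with $\Inf(\pat')=U$ lies in $\obj$, and hence $U$ is a good $\obj$ end-component, i.e.\ $U\in\mathcal{\ec}$.

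Finally I would assemble the pieces: for every $U\in\mathcal{U}$ we have $U\subseteq\bigcup_{\ec\in\mathcal{\ec}}\ec$, so the probability-$1$ event $\{\pat\mid\Inf(\pat)\in\mathcal{U}\}$ is contained in the event that the play visits $\bigcup_{\ec\in\mathcal{\ec}}\ec$ (in fact infinitely often), i.e.\ in $\reacht{\bigcup_{\ec\in\mathcal{\ec}}\ec}$. Therefore $\pr{\str}{v}{\reacht{\bigcup_{\ec\in\mathcal{\ec}}\ec}}=1$ and $v\in\as{\mdp,\reacht{\bigcup_{\ec\in\mathcal{\ec}}\ec}}$, which proves the claimed inclusion.

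I expect the main obstacle to be the first half of the middle step: carefully showing that each positive-probability class $U$ of $\Inf$-values is genuinely an end-component. This requires combining two different ingredients — strong connectivity of $U$, which follows from the combinatorics of a single recurrent path, and closure of $U$ under random edges, which is exactly what Lemma~\ref{lem:random_closure} (via Lemma~\ref{lem:zeroprob_paths}) is designed to provide — and it is only the hypothesis that $\obj$ depends solely on $\Inf(\pat)$ that then upgrades ``some path with $\Inf(\pat)=U$ is in $\obj$'' to ``$U$ is a good $\obj$ end-component''. The remaining bookkeeping with the finite partition of $\Path_\str$ by $\Inf$ is routine; this argument mirrors the one for Büchi objectives in \cite{CourcoubetisY95}.
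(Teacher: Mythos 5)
Your proof is correct and follows essentially the same route as the paper: both group the $\str$-compatible plays by their $\Inf$-value, invoke Lemma~\ref{lem:random_closure} (via Lemma~\ref{lem:zeroprob_paths}) for closure under random edges, use strong connectivity of $\Inf(\pat)$ together with determination of $\obj$ by $\Inf(\pat)$ to conclude that every positive-probability class is a good end-component, and hence that the union of good end-components is reached almost surely. The only cosmetic difference is that the paper first passes through the set $\mdp_\str$ of vertices visited infinitely often with positive probability before grouping by $\Inf(\pat)$, whereas you partition the plays directly.
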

\begin{proof}
  For a vertex $v \in \as{\mdp,\obj}$,
  fix a strategy $\str$ of player~1 such that the objective is satisfied almost-surely. 
  Let $\mdp_\str$ be the sub-MDP of $\mdp$ that consists of the vertices that 
  are visited infinitely often with non-zero probability when player~1 follows strategy~$\str$.
  Note that by Lemma~\ref{lem:random_closure} each SCC of $\mdp_\str$ is an 
  end-component of $\mdp$.
  Moreover, $\str$ is a strategy for almost-surely reaching $\mdp_\str$ 
  (each infinite path has to visit at least one vertex infinitely often).
  
  It remains to show that each vertex of $\mdp_\str$ can almost-surely reach 
  a good end component. 
  We will actually show that each vertex of $\mdp_\str$ is already contained in a good end component.
  To this end let $\Path_\str$ be the set of infinite paths starting at~$v$
  that are compatible with the strategy $\str$ and satisfy the objective.
  For an arbitrary node $u$ of $\mdp_\str$ we consider all paths 
  $\pat \in \Path_\str$ with $u \in \Inf(\pat)$ and group them by $\Inf(\pat)$.
  At least one of these groups has non-zero probability, 
  as there are only finitely many possible sets $\Inf(\pat)$ and $u \in \Inf(\pat)$ has non-zero probability.  
  Let us consider one of the groups of paths $\Path^S_\str$ with non-zero probability 
  and the corresponding set $S=\Inf(\pat)$ for $\pat\in \Path^S_\str$.
  By Lemma~\ref{lem:random_closure} the set $S$ is closed under random edges.
  Moreover, as in each path $\pat \in \Path_\str$ the vertices $\Inf(\pat)$ are strongly connected, the
  set $S$ is also strongly connected and thus an end-component.
  Finally, as the paths $\pat \in \Path^S_\str$ satisfy the objective
  and the objective $\obj$ is determined by $\Inf(\pat)=S$, the set $S$
  forms a good end component.
  Hence, we have shown that each vertex of $\mdp_\str$ is contained in a good $\obj$ 
  end-component, which completes the proof.
\end{proof}

\subsection{Algorithm Preliminaries}\label{sec:algprelim}
We introduce some additional notation for the algorithms for MDPs with Streett 
and Rabin objectives.
For a set $\RP= \{(L_i, U_i) \mid 1 \le i \le k\}$
of Rabin pairs or a set $\SP= \{(L_i, U_i) \mid 1 \le i \le k\}$, 
let $b = \sum_{i=1}^k (\lvert L_i \rvert + \lvert U_i \rvert)$.
A \emph{strongly connected component} \upbr{SCC} is a \emph{maximal} strongly 
connected subgraph. A single vertex is considered strongly connected. An SCC without
outgoing edges is a \emph{bottom SCC}, one without incoming edges a \emph{top SCC}.
The \emph{reverse graph} $\rev$ is constructed by reversing the direction of 
all edges of the graph $G$. In a graph $G = (V, E)$ the set of vertices $E(v)$ 
for some vertex $v$ denotes the set of vertices $w \in V$ for which $(v, w) \in E$.
The out-degree of $v \in V$ in $G$ is denoted with $\OutDeg_H(v)$, its in-degree
with $\InDeg_H(v)$. Let \textsc{MEC} denote the runtime to compute the maximal 
end-component decomposition of an MDP; we assume $\textsc{MEC} = \Omega(m)$.
Further we assume that each vertex in the input MDP 
has at least one outgoing edge, and thus we have $m = \Omega(n)$.

\begin{definition}[Random Attractor]\label{def:attr}
In an MDP $\mdp = ((V, E), (\vo, \vr), \trans)$ 
the \emph{random attractor} $\at(\mdp, W)$ of a set of vertices 
$W \subseteq V$ is defined as $\at(\mdp, W) = \bigcup_{j \ge 0} Z_j$ where $Z_0 = W$ and 
$Z_j$ for $j > 0$ is defined recursively as $Z_{j+1} = Z_j \cup \set{v \in \vr 
\mid E(v) \cap Z_j \ne \emptyset} \cup \set{v \in \vo \mid E(v) \subseteq Z_j}$.
The random attractor $\at(\mdp, W)$ can be computed in $O(\sum_{v \in 
\at(\mdp, W)} \InDeg(v))$ time~\cite{Beeri80,Immerman81}.
\end{definition}

All the algorithms for Streett objectives maintain vertex sets that are 
candidates for good end-components. For such a vertex set~$S$ we (a) 
refine the maintained sets according to the SCC decomposition of $\mdp[S]$
and (b) for a set of vertices~$W$ for which we know that it cannot be contained 
in a good end-component, we remove its random attractor from $S$. The following lemma 
shows the correctness of these operations.

\begin{lemma}\label{lem:eccontained}
	Given an MDP $\mdp = ((V, E), (\vo, \vr), \trans)$, let $\ec$ be an end-component with $X \subseteq S$ for 
	some $S \subseteq V$. 
	We have 
	
	\begin{itemize}
	 \item[\upbr{a}]$\ec \subseteq \scc$ for one SCC~$\scc$ of $\mdp[S]$ and
	 
	 \item[\upbr{b}] $\ec \subseteq S \setminus \at(\mdp', W) = \emptyset$ for each $W \subseteq V \setminus \ec$
			 and each sub-MDP~$\mdp'$ containing~$\ec$.
	\end{itemize}
\end{lemma}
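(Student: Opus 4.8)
The plan is to treat the two parts independently. Both rest on the definition of an end-component — strong connectivity of the induced sub-MDP $\mdp[\ec]$, together with the fact that random vertices of $\ec$ have no outgoing edges leaving $\ec$ and that $\mdp[\ec]$ has at least one edge — and part~(b) in addition uses a routine induction over the layers $Z_j$ of the attractor construction in Definition~\ref{def:attr}.

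For part~(a): since $\ec \subseteq S$, every edge of $\mdp[\ec]$ is also an edge of $\mdp[S]$, so the vertices of $\ec$ span a strongly connected subgraph of $\mdp[S]$. The SCCs of $\mdp[S]$ are its maximal strongly connected subgraphs and they partition its vertex set, so any strongly connected vertex set — in particular $\ec$ — lies inside a single SCC $\scc$ of $\mdp[S]$, which is exactly the claim.

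For part~(b): fix $W \subseteq V \setminus \ec$ and a sub-MDP $\mdp' = \mdp[Y]$ with $\ec \subseteq Y$. I would first record that $\ec$ is again an end-component of $\mdp'$: a random vertex $v \in \ec \cap \vr$ has all its successors in $\ec \subseteq Y$, hence keeps exactly those edges in $\mdp'$, and strong connectivity of $\mdp[\ec]$ (resp.\ its self-loop in the singleton case) carries over unchanged. Writing $\at(\mdp', W) = \bigcup_{j \ge 0} Z_j$ with the recursion of Definition~\ref{def:attr} evaluated in $\mdp'$, I prove $Z_j \cap \ec = \emptyset$ by induction on $j$. The base case is the assumption $Z_0 = W \subseteq V \setminus \ec$. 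For the step, suppose $Z_j \cap \ec = \emptyset$ and that some $v \in Z_{j+1} \setminus Z_j$ lay in $\ec$. If $v \in \vr$, it entered because a successor of $v$ in $\mdp'$ is in $Z_j$; but every successor of $v$ is in $\ec$, contradicting the inductive hypothesis. If $v \in \vo$, it entered because \emph{all} successors of $v$ in $\mdp'$ are in $Z_j$; but $v$ has at least one successor $w$ inside $\ec$ (strong connectivity of $\mdp[\ec]$, or its self-loop), and this edge is present in $\mdp'$ since $\ec \subseteq Y$, so $w \in Z_j \cap \ec$ — again a contradiction. Hence $Z_{j+1} \cap \ec = \emptyset$, so $\at(\mdp', W) \cap \ec = \emptyset$, and together with $\ec \subseteq S$ this gives $\ec \subseteq S \setminus \at(\mdp', W)$.

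The one point that needs care, and the closest thing to an obstacle, is the interplay between the attractor and the restriction to $\mdp'$: one must ensure that passing to $\mdp'$ neither hides an outgoing edge of a random vertex of $\ec$ (needed for the $v \in \vr$ case) nor an in-$\ec$ edge of a player-1 vertex of $\ec$ (needed for the $v \in \vo$ case). Both follow from the opening observation that $\ec$ stays an end-component of $\mdp'$ as soon as $\ec \subseteq Y$; once that is in hand, the induction is a standard attractor argument.
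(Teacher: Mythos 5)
Your proposal is correct and follows essentially the same route as the paper: part~(a) from strong connectivity of $\mdp[\ec]$ and maximality of SCCs, and part~(b) by induction over the attractor layers $Z_j$, using that random vertices of $\ec$ have no edges leaving $\ec$ and that every player-1 vertex of $\ec$ keeps an edge into $\ec$ in $\mdp'$. Your extra remark that $\ec$ remains an end-component of any sub-MDP containing it is a harmless (and correct) elaboration of what the paper leaves implicit.
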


\begin{proof}
	Property \upbr{a} holds since every end-component induces a strongly connected
	sub-MDP. We prove Property \upbr{b} by showing that $\at(\mdp', W)$
	does not contain a vertex of $\ec$ by induction over the recursive 
	definition of a random attractor. Let the sets $Z_j$ be as in
	Definition~\ref{def:attr} and let $E'(v)$ be the vertices to which $v$ has an 
	edge in~$\mdp'$. 
	We have $Z_0 = W$ and thus $Z_0 \cap \ec = \emptyset$.
	Assume we have $Z_j \cap \ec = \emptyset$ for some $j \ge 0$. No vertex of
	$\vr \cap \ec$ has an outgoing edge to $V \setminus \ec$ and thus the set 
	$\ec \cap \set{v \in \vr \mid E'(v) \cap Z_j \ne \emptyset}$ is empty.
	Further every vertex in $\vo \cap \ec$ has an outgoing edge to a vertex in $\ec$.
	Hence also $\ec \cap \set{v \in \vo \mid E'(v) \subseteq Z_j}$ is empty
	and we have that $Z_{j+1} \cap \ec = \emptyset$.
\end{proof}

Let $\ec$ be a good Streett end-component. Then $\ec \cap U_i = \emptyset$ implies 
$\ec \cap L_i = \emptyset$. Thus if $S \cap U_i = \emptyset$ for some vertex 
set $S$ and some index $i$, then we have $U_i \subseteq V \setminus \ec$ 
for each end-component~$\ec \subseteq S$. Hence we obtain the 
following corollary.

\begin{corollary}\label{cor:geccontained}
Given an MDP $\mdp$, let $\ec$ be a \emph{good} Streett end-component with 
$X \subseteq S$ for some $S \subseteq V$. 
For each $i$ with $S \cap U_i = \emptyset$ it holds that 
$\ec \subseteq S \setminus \at(\mdp[S],  L_i \cap S)$.
\end{corollary}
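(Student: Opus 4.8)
The plan is to obtain this corollary as an immediate consequence of Lemma~\ref{lem:eccontained}(b) combined with the definition of a good Streett end-component. First I would fix an index $i$ with $S \cap U_i = \emptyset$. Since $\ec \subseteq S$, this already forces $\ec \cap U_i = \emptyset$. By the definition of a good Streett end-component, for the $i$-th pair we must have $L_i \cap \ec = \emptyset$ or $U_i \cap \ec \neq \emptyset$; as the latter is ruled out, I conclude $L_i \cap \ec = \emptyset$, and hence also $(L_i \cap S) \cap \ec = \emptyset$.

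Next I would invoke Lemma~\ref{lem:eccontained}(b) with the choices $W := L_i \cap S$ and $\mdp' := \mdp[S]$. Both hypotheses hold: $\mdp[S]$ contains $\ec$ because $\ec \subseteq S$, and $W = L_i \cap S \subseteq V \setminus \ec$ by the previous step. The lemma then gives that $\at(\mdp[S], L_i \cap S)$ contains no vertex of $\ec$, i.e.\ $\ec \cap \at(\mdp[S], L_i \cap S) = \emptyset$. Together with $\ec \subseteq S$ this yields $\ec \subseteq S \setminus \at(\mdp[S], L_i \cap S)$, which is exactly the assertion of the corollary.

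Since the proof is just a short chain of set inclusions built on an already-established lemma, I do not expect a genuine obstacle. The only point that deserves a moment of care is making sure that the random attractor is computed in a sub-MDP that still contains all of $\ec$ --- here $\mdp[S]$, which is legitimate precisely because $\ec \subseteq S$ --- so that the inductive closure argument underlying Lemma~\ref{lem:eccontained}(b) applies verbatim; attracting $L_i \cap S$ rather than $L_i$ itself is harmless, as it is a subset of a set disjoint from $\ec$.
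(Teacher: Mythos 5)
Your proposal is correct and matches the paper's own derivation: the paragraph preceding the corollary argues exactly that $S \cap U_i = \emptyset$ and $\ec \subseteq S$ force $\ec \cap U_i = \emptyset$, hence $L_i \cap \ec = \emptyset$ by the definition of a good Streett end-component, and then Lemma~\ref{lem:eccontained}~\upbr{b} applied in $\mdp[S]$ with the attracted set $L_i \cap S \subseteq V \setminus \ec$ gives the claim. Your care about computing the attractor inside $\mdp[S]$, which contains $\ec$, is precisely the hypothesis the lemma needs, so nothing is missing.
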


\subsection{Improving Upon the Basic Algorithm}\label{sec:streettimpr}

In Algorithm~\ref{alg:streettbasic}, the basic algorithm for MDPs with Streett 
objectives, we maintain a set of already identified 
(maximal) good end-components~$\good$, which is initially empty, and a set of 
candidate end-components~$\mathcal{\ec}$, which is initialized with the 
MECs of the input MDP~$\mdp$. In each iteration of the while-loop we remove 
an end-component $\ec$ from $\mathcal{\ec}$ and check whether it is a 
good end-component. For this check we find sets $U_i$ for $1 \le i \le k$ 
that do not intersect with $\ec$ and identify vertices in $\ec \cap L_i$ for 
such an $i$ as ``bad vertices''~$\badv$. If there are no bad vertices, then
$\ec$ is a good end-component and added to $\good$. Otherwise the bad 
vertices and their random attractor within $\ec$ are removed from $\ec$.
On the sub-MDP induced by the remaining vertices of $\ec$ we compute the 
MEC-decomposition, which identifies all remaining candidate end-components among
the vertices of $\ec$. The new candidates are then added to $\mathcal{\ec}$.
If the algorithm finds good end-components, it returns the almost-sure winning set
for the reachability of the union of them.

\begin{algorithm}
	\SetAlgoRefName{StreettMDPbasic}
	\caption{Basic Algorithm for MDPs with Streett Objectives}
	\label{alg:streettbasic}
	\SetKwInOut{Input}{Input}
	\SetKwInOut{Output}{Output}
	\BlankLine
	\Input{an MDP $\mdp = ((V, E), (\vo, \vr), \trans)$ and Streett pairs 
	$\SP= \{(L_i, U_i) \mid 1 \le i \le k\}$
	}
	\Output
	{
	$\as{\mdp, \streett{\SP}}$
	}
	\BlankLine
	$\good \gets \emptyset$\;
	$\mathcal{\ec} \gets \mecalg(\mdp)$\;
	\While{$\mathcal{\ec} \ne \emptyset$}{
		remove some $\ec \in \mathcal{\ec}$ from $\mathcal{\ec}$\;
		$\badv \gets \set{\inec \in \ec \mid \exists i \text{ s.t.\ } 
		\inec \in L_i \text{ and } \ec \cap U_i = \emptyset}$\;
		\If{$\badv \ne \emptyset$}{
			$\ec \gets \ec \setminus \at(\mdp[\ec], \badv)$\label{lbasic:remove}\;
			$\mathcal{\ec} \gets \mathcal{\ec} \cup \mecalg(\mdp[\ec])$\label{lbasic:mec}\;
		}\Else{
			$\good \gets \good \cup \set{\ec}$\;
		}
	}
	\Return{$\as{\mdp, \reacht{\bigcup_{\ec \in \good} \ec}}$}\;
\end{algorithm}

\begin{proposition}[Runtime of 
Algorithm~\ref{alg:streettbasic}]\label{prop:timestreetbasic}
	Algorithm~\ref{alg:streettbasic} can be implemented 
	to run in $O((\textsc{MEC} + b) \min(n, k))$ time.
\end{proposition}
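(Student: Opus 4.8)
The plan is to bound the total work done in the while-loop of Algorithm~\ref{alg:streettbasic} by a charging argument that tracks how many times each vertex participates in an expensive operation. The key structural fact I would exploit is that the set of candidate end-components maintained in $\mathcal{\ec}$, together with the already-found good end-components in $\good$, always forms a \emph{laminar family} of vertex subsets of $V$: whenever we take a candidate $\ec$ out, remove the attractor of $\badv$, and re-decompose the remainder into MECs via line~\ref{lbasic:mec}, the new candidates are pairwise disjoint subsets of $\ec$. Hence the candidates that are ever processed form a forest (ordered by inclusion), and along any root-to-leaf chain the sizes strictly decrease. A standard consequence is that the sum of the sizes of \emph{all} candidates ever processed is $O(n \log n)$ if each refinement step splits a set into pieces of at most half the size — but here we only get that each piece is strictly smaller, so I instead charge on a per-vertex basis: a vertex $v$ can lie in at most the number of ancestors of its final leaf, which, combined with the observation that an $\textsc{MEC}$ call on a set $S$ costs $O(\textsc{MEC}(|S|) + b_S)$ where $b_S$ counts only the Streett-pair elements inside $S$, gives the $\min(n,k)$ factor once we argue no vertex is touched more than $\min(n,k)$ times.

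The main technical steps, in order, are: (1) establish that $\mathcal{\ec} \cup \good$ is laminar throughout the execution, so the recursion tree $\mathcal{T}$ of processed candidates is well-defined; (2) observe that whenever a candidate $\ec$ is \emph{not} immediately declared good, line~\ref{lbasic:remove} removes a nonempty attractor, so $|\ec|$ strictly decreases before $\textsc{MEC}$ is re-invoked, and moreover each strict descendant loses at least one $U_i$-witness or one $L_i$-element relative to its parent — this gives that the depth of any branch of $\mathcal{T}$ is $O(\min(n,k))$ (depth $O(n)$ trivially since sizes strictly decrease; depth $O(k)$ because each "bad" step permanently kills at least one index $i$ for the subtree below it, as $S \cap U_i = \emptyset$ is inherited by subsets); (3) bound the cost of the work at a node $\ec$ of $\mathcal{T}$: computing $\badv$ and the attractor is $O(\sum_{v\in\ec}\InDeg(v) + b_\ec)$ by Definition~\ref{def:attr} and the cost of scanning the relevant $L_i, U_i$, and the $\textsc{MEC}$ call is $O(\textsc{MEC} + b_\ec)$; (4) sum over $\mathcal{T}$: since the nodes at each fixed depth are disjoint vertex sets, their edge-counts sum to $O(m)$ and their $b_\ec$ values sum to $O(b)$ (each element of an $L_i$ or $U_i$ lies in at most one candidate per level), and there are $O(\min(n,k))$ levels, yielding $O((\textsc{MEC}+b)\cdot\min(n,k))$ total — here I am using $\textsc{MEC}=\Omega(m)$ so the per-level $\textsc{MEC}$ term dominates the $O(m)$ edge-scanning term; and finally (5) the concluding almost-sure reachability computation costs $O(m)$, which is absorbed.

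The step I expect to be the main obstacle is (2), specifically getting the clean $\min(n,k)$ bound on the branch depth of $\mathcal{T}$ rather than just $O(n)$. The $O(n)$ bound is immediate from strictly decreasing sizes, but the $O(k)$ bound requires arguing that each time we descend from a parent $\ec$ to a child through a "bad-vertex removal", some Streett index $i$ becomes and remains permanently irrelevant for the entire subtree — this follows because the bad vertices are defined via $\ec\cap U_i=\emptyset$, and any descendant $\ec'\subseteq\ec$ of a child also has $\ec'\cap U_i=\emptyset$, so the $L_i$-vertices of $\ec$ are removed (they go into $\badv$ and then the attractor, hence never reappear in any descendant), meaning index $i$ contributes nothing below that point. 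Care is needed because a single "bad step" may eliminate several indices at once (helping us) but in the worst case eliminates exactly one, and one must also check that a node declared \emph{good} is a leaf (it is — good end-components are added to $\good$ and never re-processed) so good nodes do not inflate the depth. Once the laminar/forest picture and the two depth bounds are in place, the summation in (4) is routine book-keeping.
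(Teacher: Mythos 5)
Your proposal is correct and, despite the laminar-family/recursion-tree packaging, it is essentially the paper's own argument: the paper likewise charges per vertex (and edge), bounding by $\min(n,k)$ the number of iterations in which the candidate containing a fixed vertex is processed — $n$ because the candidate strictly shrinks, $k$ because each bad-vertex removal makes some $L_i$ permanently empty in all subsequent candidates containing that vertex — and then sums the linear-plus-$\bits$ work and the MEC recomputations over these iterations. The tree formulation and the per-level disjointness summation are just a cleaner bookkeeping of the same charging scheme.
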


\begin{proof}
	The initialization of $\mathcal{\ec}$ with all MECs of the input
	MDP $\mdp$ can clearly be done in $O(\textsc{MEC})$ time. Further by 
	Theorem~\ref{th:timedrmdp} the almost-sure reachability computation
	after the while-loop can be done in $O(\textsc{MEC})$ time. 
	
	Let $\ec_v$ denote the end-component of $\mathcal{\ec}$ currently containing 
	an arbitrary, fixed vertex $v \in V$ during Algorithm~\ref{alg:streettbasic}. 
	In each iteration
	of the while-loop in which $\ec_v$ is considered either (a) $\badv = \emptyset$
	and $\ec_v$ will not be considered further or (b) the number of vertices
	in $\ec_v$ is reduced by at least one and we have for some $1 \le i \le k$ that
	$\ec_v \cap L_i \ne \emptyset$ before the iteration of the while-loop and
	$\ec_v \cap L_i = \emptyset$ after the while-loop. Thus each vertex and 
	each edge of the MDP $\mdp$ is considered in at most $O(\min(n, k))$ iterations
	of the while-loop.
	
	Consider the $j$th iteration of the while-loop; let $\ec_j$ denote the set 
	removed from $\mathcal{\ec}$ in this iteration and let $\bits(\ec_j) = \sum_{i=1}^k 
	(\lvert L_i \cap \ec_j \rvert + \lvert U_i \cap \ec_j \rvert)$. 
	Assume that each vertex has a list of the sets $L_i$ and $U_i$ for 
	$1 \le i \le k$ it belongs to.
	(We can generate these lists from the lists of the Streett pairs in $O(b)$
	time at the beginning of the algorithm.)
	Then we can determine $\badv$ by going through all lists of the vertices 
	in $\ec_j$ in $O(\lvert \ec_j \rvert + b_j)$ time, which amounts to 
	$O((n + b) \min(n, k))$ total time over all iterations of the while-loop.
	The random attractor computed in Line~\ref{lbasic:remove} is removed and 
	not considered further, thus its computation takes $O(m)$ time over the whole 
	algorithm (see Definition~\ref{def:attr}). The computation
	of all MECs in $\mdp[\ec_j]$ takes total time $O(\textsc{MEC} \cdot \min(n, k))$
	over all iterations of the while loop. Thus the whole algorithm can be 
	implemented in $O((\textsc{MEC} + b) \min(n, k))$ total time.
\end{proof}

\begin{proposition}[Soundness of Algorithm~\ref{alg:streettbasic}]
	Let $W$ be the set returned by Algorithm~\ref{alg:streettbasic}.
	We have $W \subseteq \as{\mdp, \streett{\SP}}$.
\end{proposition}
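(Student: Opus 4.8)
The plan is to reduce the statement to the soundness of good end-components established in Corollary~\ref{cor:gecsound-gen}. Since the Streett objective $\streett{\SP}$ depends only on $\Inf(\pat)$, that corollary applies and gives $\as{\mdp, \reacht{\bigcup_{\ec \in \good} \ec}} \subseteq \as{\mdp, \streett{\SP}}$ as soon as we know that every set in $\good$ is a good $\streett{\SP}$ end-component, i.e.\ a good Streett end-component. Hence it suffices to prove the invariant that (i) every set that is ever an element of $\mathcal{\ec}$ is an end-component of $\mdp$, and (ii) every set added to $\good$ additionally satisfies the good-Streett condition.

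For (i), I would argue by induction on the iterations of the while-loop. Initially $\mathcal{\ec} = \mecalg(\mdp)$ consists of end-components of $\mdp$ by definition. In an iteration handling a candidate $\ec \in \mathcal{\ec}$ with $\badv \ne \emptyset$, the new candidates come from $\mecalg(\mdp[\ec'])$ where $\ec' = \ec \setminus \at(\mdp[\ec], \badv)$. The key structural fact is that if $\ec$ is an end-component of $\mdp$ and $\ec''$ is an end-component of the sub-MDP $\mdp[\ec']$ for some $\ec' \subseteq \ec$, then $\ec''$ is again an end-component of $\mdp$: indeed $\ec''$ is strongly connected in $\mdp[\ec']$ and hence in $\mdp$, it contains an edge, and since $\ec$ has no outgoing random edges in $\mdp$ and $\ec'' \subseteq \ec$ has no outgoing random edges within $\mdp[\ec']$, no random vertex of $\ec''$ has an outgoing edge leaving $\ec''$ in $\mdp$. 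This gives the induction step; note this is the same observation underlying Lemma~\ref{lem:eccontained}(a).

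For (ii), consider an iteration in which $\ec$ is added to $\good$; then $\badv = \emptyset$. By the definition of $\badv$, there is no vertex $\inec \in \ec$ and no index $i$ with $\inec \in L_i$ and $\ec \cap U_i = \emptyset$; equivalently, for every $1 \le i \le k$ we have that $\ec \cap U_i = \emptyset$ implies $\ec \cap L_i = \emptyset$. This is exactly the condition in the definition of a good Streett end-component, and combined with (i) it shows that $\ec$ is a good Streett end-component, hence a good $\streett{\SP}$ end-component.

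Putting this together, $\good$ is a set of good $\streett{\SP}$ end-components, so Corollary~\ref{cor:gecsound-gen} yields $W = \as{\mdp, \reacht{\bigcup_{\ec \in \good}\ec}} \subseteq \as{\mdp, \streett{\SP}}$. The main obstacle is the bookkeeping behind invariant (i): one must check that the repeated restriction to induced sub-MDPs interleaved with attractor removals never destroys the end-component property, which is precisely what the transitivity fact about end-components of induced sub-MDPs takes care of.
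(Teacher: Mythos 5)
Your overall strategy matches the paper's: reduce to Corollary~\ref{cor:gecsound-gen}, observe that $\badv = \emptyset$ gives exactly the good-Streett condition, and prove by induction over the while-loop that every set in $\mathcal{\ec}$ is an end-component of $\mdp$. However, the ``key structural fact'' you use for the induction step is false as stated, and this is a genuine gap. You claim: if $\ec$ is an end-component of $\mdp$ and $\ec''$ is an end-component of $\mdp[\ec']$ for \emph{some} $\ec' \subseteq \ec$, then $\ec''$ is an end-component of $\mdp$. A random vertex $v \in \ec'' \cap \vr$ may have an edge in $\mdp$ to a vertex of $\ec \setminus \ec'$: such an edge does not leave $\ec$ (so it is compatible with $\ec$ being an end-component) and is absent from the sub-MDP $\mdp[\ec']$ (so it is compatible with $\ec''$ having no outgoing random edges there), yet it leaves $\ec''$ in $\mdp$, so $\ec''$ is not an end-component of $\mdp$. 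Concretely, take $\ec = \{a,b,c\}$ with $a \in \vr$, edges $a \to b$, $a \to c$, $b \to a$, $c \to a$; for $\ec' = \{a,b\}$ the set $\ec'' = \{a,b\}$ is an end-component of $\mdp[\ec']$ but not of $\mdp$ because of the edge $a \to c$.

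What rescues the algorithm (and what the paper's proof uses explicitly) is that the removed set is not arbitrary: it is the random attractor $\at(\mdp[\ec], \badv)$. By the definition of a random attractor there are no random edges from $\ec' = \ec \setminus \at(\mdp[\ec], \badv)$ into the removed set, and by the induction hypothesis there are none from $\ec$ to $V \setminus \ec$; hence $\ec'$ itself has no outgoing random edges in $\mdp$, and therefore every MEC of $\mdp[\ec']$ has no random edges leaving it in $\mdp$ and is an end-component of $\mdp$. Your closing sentence gestures at the attractor removals, but the argument you actually give never invokes this property (and Lemma~\ref{lem:eccontained}\upbr{a}, which you cite, is about containment in an SCC, not about preservation of the end-component property). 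Replacing your transitivity claim with the attractor-based argument closes the gap and recovers the paper's proof.
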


\begin{proof}
By Corollary~\ref{cor:gecsound-gen} it is sufficient to show that every set 
$\ec \in \good$ is a good end-component. The algorithm explicitly
checks immediately before $\ec$ is added to $\good$ that we have for each 
$1 \le i \le k$ either $L_i \cap \ec = \emptyset$ or $U_i \cap \ec \ne
\emptyset$. Thus it only remains
to show that $\ec$ is an end-component when it is added to $\good$. Before 
a set is added to $\good$, the same set is contained in the set $\mathcal{\ec}$. 
We show that all sets in $\mathcal{\ec}$ are end-components at any point in 
the algorithm by induction over the iterations of the while-loop in the algorithm. 
Before the first iteration of the while-loop the sets $\ec \in \mathcal{\ec}$ are
the maximal end-components of $\mdp$. Now consider an iteration in which
a set $\ec$ is removed from $\mathcal{\ec}$ and new sets are added to 
$\mathcal{\ec}$. First, some vertices and their random attractor in the 
sub-MDP $\mdp[\ec]$ induced by $\ec$ are removed from $\ec$. Let $\ec'$ be the 
remaining set of vertices. By the definition of a random attractor there are no 
random edges from $\ec'$ to the removed random attractor. 
Further, by the induction hypothesis there are no random edges from $\ec$ to $V 
\setminus \ec$. Thus there are no random edges from $\ec'$ to $V \setminus \ec'$.
Then the algorithm adds the MECs of the sub-MDP $\mdp[\ec']$ to $\mathcal{\ec}$.
Let $\hat{\ec}$ be one such MEC. Since $\hat{\ec}$ is a MEC 
in $\mdp[\ec']$, it is a MEC in $\mdp$ if and only if it has no random edges 
from $\hat{\ec}$ to $V \setminus \ec'$. This holds by $\hat{\ec} \subseteq \ec'$ 
and the properties of $\ec'$ established above.
\end{proof}

\begin{proposition}[Completeness of Algorithm~\ref{alg:streettbasic}]\label{prop:basiccompl}
		Let $W$ be the set returned by Algorithm~\ref{alg:streettbasic}.
	We have $\as{\mdp, \streett{\SP}} \subseteq W$.
\end{proposition}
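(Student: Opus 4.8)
The plan is to combine the completeness of the good-end-component approach (Proposition~\ref{prop:geccompl-gen}) with a loop invariant showing that Algorithm~\ref{alg:streettbasic} never discards a good Streett end-component. Since membership in $\streett{\SP}$ depends only on $\Inf(\pat)$, Proposition~\ref{prop:geccompl-gen} applies and yields $\as{\mdp, \streett{\SP}} \subseteq \as{\mdp, \reacht{\bigcup_{\ec \in \mathcal{G}} \ec}}$, where $\mathcal{G}$ denotes the set of \emph{all} good Streett end-components of $\mdp$. Because almost-sure reachability is monotone in its target set and $W = \as{\mdp, \reacht{\bigcup_{\ec \in \good} \ec}}$, it suffices to prove that at termination $\bigcup_{\ec \in \mathcal{G}} \ec \subseteq \bigcup_{\ec \in \good} \ec$.

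I would derive this from the following invariant, maintained throughout the while-loop: every good Streett end-component $\ec^*$ of $\mdp$ is contained in some set of $\mathcal{\ec} \cup \good$. Initially $\mathcal{\ec}$ is the set of MECs of $\mdp$ and every end-component is contained in a (unique) maximal one, so the invariant holds. For the inductive step, consider the iteration that removes $\ec$ from $\mathcal{\ec}$; a good Streett end-component already contained in some other member of $\mathcal{\ec} \cup \good$ is unaffected, so fix $\ec^*$ with $\ec^* \subseteq \ec$. If $\badv = \emptyset$, then $\ec$ is moved to $\good$ and we are done. If $\badv \neq \emptyset$, I would first observe $\ec^* \cap \badv = \emptyset$: any $v \in \badv \cap \ec^*$ witnesses an index $i$ with $v \in L_i$ (so $L_i \cap \ec^* \neq \emptyset$) and $U_i \cap \ec = \emptyset$ (so $U_i \cap \ec^* = \emptyset$ since $\ec^* \subseteq \ec$), contradicting that $\ec^*$ is a good Streett end-component. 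Since $\ec^*$ is an end-component with $\ec^* \subseteq \ec$ and $\badv \subseteq \ec \setminus \ec^* \subseteq V \setminus \ec^*$, Lemma~\ref{lem:eccontained}(b) applied to the sub-MDP $\mdp[\ec]$ gives $\at(\mdp[\ec], \badv) \cap \ec^* = \emptyset$, hence $\ec^* \subseteq \ec' := \ec \setminus \at(\mdp[\ec], \badv)$. Finally $\ec^*$ is still an end-component of $\mdp[\ec']$ — it induces the same strongly connected sub-MDP, and its random vertices have all their out-edges inside $\ec^* \subseteq \ec'$ — so by maximality $\ec^*$ is contained in one of the MECs of $\mdp[\ec']$ that Line~\ref{lbasic:mec} adds to $\mathcal{\ec}$. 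This preserves the invariant.

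At termination $\mathcal{\ec} = \emptyset$, so the invariant says each good Streett end-component lies in some set of $\good$, i.e.\ $\bigcup_{\ec \in \mathcal{G}} \ec \subseteq \bigcup_{\ec \in \good} \ec$, which by the reduction in the first paragraph gives $\as{\mdp, \streett{\SP}} \subseteq W$. The only genuinely delicate point is the inductive step for $\badv \neq \emptyset$: one must simultaneously verify that the attractor removal cannot reach into $\ec^*$ (this is exactly Lemma~\ref{lem:eccontained}(b), once $\ec^* \cap \badv = \emptyset$ is known) and that $\ec^*$ remains an end-component of the refined sub-MDP so that the ensuing MEC-decomposition rediscovers it; both rely on the standard facts that an end-component has no outgoing random edges and that two intersecting end-components can be merged, so that every end-component is contained in a unique MEC.
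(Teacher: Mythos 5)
Your proof is correct and follows essentially the same route as the paper's: reduce via Proposition~\ref{prop:geccompl-gen} to showing that all good Streett end-components end up inside $\bigcup_{\ec \in \good}\ec$, and maintain the invariant that every good end-component is contained in some member of $\mathcal{\ec} \cup \good$ across iterations of the while-loop. Your inductive step just unfolds Corollary~\ref{cor:geccontained} into its two ingredients (disjointness from $\badv$ plus Lemma~\ref{lem:eccontained}\upbr{b}) and makes explicit that $\ec^*$ stays an end-component of $\mdp[\ec']$, which the paper states more tersely.
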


\begin{proof}
	By Proposition~\ref{prop:geccompl-gen} it is sufficient to show that at the end of 
	Algorithm~\ref{alg:streettbasic} the union of the sets in $\good$ contains
	all good end-components of the MDP $\mdp$. We show by induction that 
	every good end-component is a subset of either 
	$\good$ or $\mathcal{\ec}$ before and after each iteration of the while-loop
	in Algorithm~\ref{alg:streettbasic}; as $\mathcal{\ec}$ is empty at the 
	end of the algorithm, this implies the claim.
	
	Before the first iteration of the while-loop, the 
	set $\mathcal{\ec}$ is initialized with the MECs of $\mdp$, thus the induction 
	base holds. Let $\ec$ be the set of vertices removed from $\mathcal{\ec}$ in 
	an iteration of the while-loop and let ${\ec}^*$ be the union of the 
	good end-components contained in $\ec$. Either $\ec$ is added to $\good$ or 
	we have that for some indices~$i$ the set $\ec$ contains vertices of 
	$L_i$ but not of $U_i$; then for these indices the sets $L_i$ and their 
	random attractor are removed from $\ec$. 
	Let $\hat\ec$ be this the updated set, i.e., $\hat\ec= \ec \setminus \at(\mdp[\ec], \badv)$ .
	By Corollary~\ref{cor:geccontained} 
	we still have ${\ec}^* \subseteq \hat{\ec}$ after this step. Then all 
	MECs of $\mdp[\hat{\ec}]$ 
	are added to $\mathcal{\ec}$.
	Every good end-component contained in $\hat{\ec}$ is completely contained 
	in one MEC of $\mdp[\hat{\ec}]$, thus the claim continues to hold after the iteration
	of the while-loop.
\end{proof}

The essential observation towards faster algorithms for MDPs with Streett objectives
is the following.
Consider a set $\ec$ in an iteration of the basic algorithm after 
some vertices in $\at(\mdp[\ec], \badv)$ were removed.
We have that there are no random edges from $\ec$ to the remaining vertices
in the graph and further we have for each $1 \le i \le k$ either $L_i \cap 
\ec = \emptyset$ or $U_i \cap \ec \ne \emptyset$. Thus if $\mdp[\ec]$ is still 
strongly connected, then $\ec$ is a good
end-component and is added to $\good$ in one of the subsequent iterations
of the algorithm. If, however, the sub-MDP $\mdp[\ec]$ consists of multiple SCCs, 
then we have that the bottom SCCs of $\mdp[\ec]$ are end-components in $\mdp$ 
but the remaining SCCs of $\mdp[\ec]$ might have outgoing random edges within
$\mdp[\ec]$. Note, however, that we have for any good end-component $\hat{\ec}$
in $\mdp[\ec]$ and any SCC $\scc$ of $\mdp[\ec]$ that either $\hat{\ec} \subseteq \scc$ 
or $\hat{\ec} \cap \scc = \emptyset$, simply by the fact that every good end-component
is strongly connected (Lemma~\ref{lem:eccontained}~\upbr{a}). Let $\hat{\ec} \subseteq \scc$ and let 
$R$ be the random vertices of $\scc$ with edges to 
vertices not in $\scc$. Then the vertices in $R$ cannot intersect with $\hat{\ec}$
because an end-component has no outgoing random edges. Further, also the 
random attractor of $R$ cannot intersect with $\hat{\ec}$ (Lemma~\ref{lem:eccontained}~\upbr{b}). Thus we can 
remove $\at(\mdp[\ec],
R)$ from $\mdp[\ec]$ and all good end-components that were contained
in $\mdp[\ec]$ are still contained in the remaining sub-MDP. However, now the 
set of vertices in $\scc \setminus \at(\mdp[\ec], R)$ has no outgoing random edges.
Thus if it is still strongly connected, then it is an end-component.
With this observation we can avoid computing a MEC decomposition in the 
while-loop of the basic algorithm and instead only compute strongly connected 
components and random attractors, which both can be done in linear time.
Note that in the improved algorithm we do not have the property that every 
maintained set of vertices is an end-component (as in the basic algorithm) but
still none of the maintained sets has outgoing random edges. 

In this formulation the algorithm for MDPs with Streett objectives has a very 
similar structure to the algorithm for graphs with Streett objectives: 
We repeatedly remove ``bad vertices'' and recompute strongly connected components.
The main difference is that we additionally compute random attractors.
Based on this, we can indeed show that for Streett objectives 
the same techniques as for graphs also apply to MDPs and by this improve the 
runtime to the runtime for graphs plus the time to compute one MEC decomposition.
This can be seen as opening up the ``black-box''
use of a MEC-decomposition algorithm and combining the fastest algorithms for 
MEC-decomposition~\cite{ChatterjeeH11,ChatterjeeH14} and graphs with Streett 
objectives~\cite{HenzingerT96,ChatterjeeHL15}.
In contrast to graphs with Streett objectives, no $O((m+b)k)$
algorithm can be achieved for small values of $k$. Intuitively, this is because
it could be that only in a few iterations bad vertices are removed while
the majority of the iterations is actually used to recompute MECs.
We present the new algorithmic ideas for MDPs with Streett objectives in 
Algorithm~\ref{alg:streettimpr} (which is only faster for large enough $k$)
and then apply the known techniques for sparse and dense graphs in 
Algorithms~\ref{alg:streettsparse} and~\ref{alg:streettdense},
respectively, to beat the basic algorithm for all parameters except very small 
values of $k$; the basic algorithm is faster for e.g.\ $k = O(1)$ or $k = O(\sqrt{\log n})$ and $m = O(n^{4/3})$.

In our improved algorithms we use the data structure $\ds(\ec)$
from~\cite{HenzingerT96} to quickly identify and remove vertices in 
$\ec \cap L_i$ for which $\ec \cap U_i = \emptyset$ from a set of vertices $\ec$.
\begin{lemma}[\cite{HenzingerT96}]\label{lem:ds}
After a one-time preprocessing time of $O(k)$, there is a data structure 
$\ds(\ec)$ for a given set $\ec$ that can be initialized with the operation 
$\construct(\ec)$ in time $O(\bits(\ec) + \lvert \ec \rvert)$, where 
$\bits(\ec) = \sum_{i = 1}^k
\left( \lvert \ec \cap L_i \rvert + \lvert \ec \cap U_i \rvert \right)$. 
Further it supports the operation $\remove(\ec, \ds(\ec), B)$ that removes 
a set $B \subseteq V$ from $\ec$ and updates $\ds(\ec)$ accordingly in 
time $O(\bits(B) + \lvert B \rvert)$ and the operation $\bad(\ds(\ec))$
that returns a pointer to 
the set $\set{\inec \in \ec \mid \exists i \text{ s.t.\ } \inec \in L_i 
\text{ and } \ec \cap U_i = \emptyset}$ in constant time.
\end{lemma}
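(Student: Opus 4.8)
The plan is to make the data structure of \cite{HenzingerT96} explicit and verify the bounds. The $O(k)$ one-time preprocessing allocates a single integer array $c[1..k]$, initialized to zero and reused across all instances; the invariant for the instance $\ds(\ec)$ currently being operated on is $c[i]=\lvert\ec\cap U_i\rvert$, and when an instance is retired its nonzero entries — of which there are at most $\bits(\ec)$ — are reset, so the $O(k)$ cost is incurred only once. For each vertex we store a pointer indicating whether it currently lies in the bad set $\mathcal B$, which is kept as a doubly linked list; $\bad(\ds(\ec))$ just returns a pointer to $\mathcal B$ in $O(1)$. I assume the per-vertex incidence lists $L(v)=\{i\mid v\in L_i\}$ and $U(v)=\{i\mid v\in U_i\}$ are available, built once in $O(b)$ time for the whole algorithm as in the proof of Proposition~\ref{prop:timestreetbasic}. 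The monotonicity used throughout is that $\remove$ only shrinks $\ec$, so every $c[i]$ only decreases and a vertex leaves $\mathcal B$ only by being deleted from $\ec$; in particular once $\inec$ enters $\mathcal B$ it stays until removed. For $\construct(\ec)$ I would do two passes: first over all $v\in\ec$ and all $i\in U(v)$, incrementing $c[i]$, which costs $O(\lvert\ec\rvert+\sum_i\lvert\ec\cap U_i\rvert)$ and establishes the invariant; then over all $v\in\ec$ and all $i\in L(v)$, appending $v$ to $\mathcal B$ whenever $c[i]=0$ and $v\notin\mathcal B$, which costs $O(\lvert\ec\rvert+\sum_i\lvert\ec\cap L_i\rvert)$. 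Together this is $O(\bits(\ec)+\lvert\ec\rvert)$ and produces exactly $\mathcal B=\set{\inec\in\ec\mid\exists i:\inec\in L_i\text{ and }\ec\cap U_i=\emptyset}$.

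For $\remove(\ec,\ds(\ec),B)$ with $B\subseteq\ec$: for each $v\in B$ delete $v$ from the list representing $\ec$ and, if present, from $\mathcal B$ (both $O(1)$ via back-pointers), and for each $i\in U(v)$ decrement $c[i]$; whenever some $c[i]$ thereby reaches $0$, scan $\ec\cap L_i$ and append to $\mathcal B$ every vertex of it not already there. Correctness is immediate from the invariants. For the timing I would prove the \emph{amortized} version — one $\construct(\ec)$ together with the whole ensuing sequence of $\remove$ operations runs in $O(\bits(\ec)+\lvert\ec\rvert)$ time in total, with $\bad$ worst-case $O(1)$ — which is the form actually needed in Algorithm~\ref{alg:streettimpr} and the sparse/dense variants. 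Indeed, the decrement work is $O(\lvert B\rvert+\sum_i\lvert B\cap U_i\rvert)$ per call, summing to $O(\lvert\ec\rvert+\bits(\ec))$; each index $i$ triggers its $L_i$-scan at most once over the whole sequence, namely at the unique moment $c[i]$ hits $0$, contributing at most $\lvert\ec\cap L_i\rvert$ and hence at most $\bits(\ec)$ in total; and every insertion into $\mathcal B$ is $O(1)$ and occurs at most $\lvert\ec\rvert$ times since no vertex re-enters $\mathcal B$.

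The one genuinely delicate point — and the step I expect to need the most care — is precisely this amortization: a single $\remove$ of a tiny set $B$ can force a linear-size scan of some $\ec\cap L_i$ (take $L_i=V$ and $U_i$ a single vertex of $B$), so the literal per-operation bound $O(\bits(B)+\lvert B\rvert)$ holds only in the amortized sense above, and one must be careful that, wherever this structure is invoked inside the Streett algorithms, each such $L_i$-scan is charged once per index over the entire execution rather than to the individual $\remove$. The remaining ingredients — the global-array invariant and its $O(\bits(\ec))$ restoration when an instance is retired, and the $O(b)$ construction of the incidence lists — are routine.
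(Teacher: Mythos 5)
The paper never proves this lemma itself (it is imported from \cite{HenzingerT96}), so the question is whether your reconstruction delivers the stated interface; it does not quite. As you flag yourself, your $\remove$ meets the bound $O(\bits(B)+\lvert B\rvert)$ only in an amortized sense, because the moment a counter $c[i]$ reaches zero you eagerly scan $\ec\cap L_i$ to materialize the bad set explicitly. The lemma, however, promises a worst-case per-operation bound, and that is exactly what is used when it is invoked as a black box. The missing idea is to keep, inside $\ds(\ec)$, a doubly linked list of $\ec\cap L_i$ and of $\ec\cap U_i$ for every relevant index $i$ (with cross-pointers from each vertex occurrence, so deletion of a vertex costs $O(1)$ per occurrence), and to represent the bad set \emph{implicitly} as the list of indices $i$ with $c[i]=0$ and $\ec\cap L_i\neq\emptyset$, each entry carrying a pointer to its $L_i$-list. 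Then a counter hitting zero costs $O(1)$, $\remove$ is worst-case $O(\bits(B)+\lvert B\rvert)$, and $\bad$ returns a pointer to this representation in constant time; the cost of enumerating the bad vertices is borne by the caller and can be charged to the $\bits$ of those vertices, since the random attractor of the bad set contains the bad set and they are all removed immediately afterwards. Your amortized fallback does happen to suffice for Algorithms~\ref{alg:streettimpr}, \ref{alg:streettdense} and~\ref{alg:streettsparse}, whose analyses only ever sum these costs per construction epoch, but it proves a strictly weaker statement than Lemma~\ref{lem:ds}.

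A second concrete problem is the single global array $c[1..k]$ whose invariant holds only ``for the instance currently being operated on.'' In the Streett algorithms many instances $\ds(S)$ coexist in the queue~$Q$ and are repeatedly parked and re-pulled; with one shared array the counters of a parked instance are lost when another instance is processed, and recomputing them at each pull costs $\Theta(\bits(S))$ per pull. An instance can be re-pulled up to $n$ times while shedding few bits per pull, so this can reach $\Theta(nb)$, which is not within the $O(mn+b\log n)$ budget of Algorithm~\ref{alg:streettimpr} when $b\gg m$. The counters must therefore be stored per instance --- most naturally attached to the per-pair list objects described above --- and the $O(k)$-preprocessed global array should serve only as scratch space during $\construct$, cleaned up by re-traversing the occurrences of the constructed set within the $O(\bits(\ec)+\lvert\ec\rvert)$ budget. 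With these two repairs (implicit bad set, per-instance counters) your construction becomes the standard one and the lemma holds as stated; the rest of your argument (the two-pass $\construct$, the monotonicity of $c[i]$, and the $O(1)$ emptiness test for $\bad$) is fine.
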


In Algorithm~\ref{alg:streettimpr} we maintain a list~$Q$ of data structures
of disjoint vertex sets that are candidates for good end-components. 
For every set $S$ with $\ds(S)$ in $Q$ we maintain that there are no random edges 
from $S$ to $V\setminus S$. The list~$Q$
is initialized with the data structures of all MECs of the input MDP~$\mdp$. 
In each iteration
of the outer while-loop the data structure of one vertex set~$S$ is pulled 
from~$Q$. In the inner while-loop the set of ``bad vertices'' 
$\set{\inec \in S \mid \exists i \text{ s.t.\ } \inec \in L_i 
\text{ and } S \cap U_i = \emptyset}$ is identified and its random attractor
is removed from $S$ and $\ds(S)$. Through removing the random attractor we 
maintain the property that there are no random edges from $S$ to $V \setminus S$
at this step. Thus we have that if $\mdp[S]$ is (still) strongly connected, then $\mdp[S]$ is a good end-component, which 
we identify in Line~\ref{limpr:good}.
If $\mdp[S]$ does not contain an edge, we do not have to consider it further.
If it contains an edge but is not strongly connected, the SCCs of $\mdp[S]$
are identified. For each SCC~$\scc$ we identify its random vertices that have edges 
to vertices of $S \setminus \scc$ and remove their random attractor from $\scc$.
After this step the data structure of the remaining vertices of $\scc$ is added to $Q$.
At this point we distinguish between the largest SCC and the other SCCs of $\mdp[S]$.
We construct a new data structure for all but the largest SCC and reuse the 
data structure of $S$ for the largest SCC. This improves the runtime because
we only spend time proportional to the smaller SCCs and a vertex can be in a 
smaller SCC at most $O(\log n)$ times. Note that at this point of the algorithm
the sub-MDP $\mdp[C]$ is not necessarily strongly connected since vertices were
removed after the SCC computation
but we maintain the property that there are no random edges from a vertex
set for which the data structure is in $Q$ to other vertices.
When the list~$Q$ becomes empty, the algorithm terminates. If good end-components
were identified, the almost-sure winning set for the reachability objective
of the union of the good end-components is output.

\begin{algorithm}
	\SetAlgoRefName{StreettMDPimpr}
	\caption{New Algorithm for MDPs with Streett Objectives}
	\label{alg:streettimpr}
	\SetKwInOut{Input}{Input}
	\SetKwInOut{Output}{Output}
	\BlankLine
	\Input{an MDP $\mdp = ((V, E), (\vo, \vr), \trans)$ and Streett pairs 
	$\SP= \{(L_i, U_i) \mid 1 \le i \le k\}$
	}
	\Output
	{
	$\as{\mdp, \streett{\SP}}$
	}
	\BlankLine
	$\good \gets \emptyset$; $Q \gets \emptyset$\;
	$\mathcal{\ec} \gets \mecalg(\mdp)$\;
	\lFor{$\ec \in \mathcal{\ec}$}{
		$Q \gets Q \cup \set{\construct(\ec)}$
	}
	\While{$Q \ne \emptyset$}{
		remove some $\ds(S)$ from $Q$\;
		\While{$\bad(\ds(S)) \ne \emptyset$\label{limpr:badfind}}{
			$A \gets \at(\mdp[S], \bad(\ds(S)))$\label{limpr:badattr}\;
			$(S, \ds(S)) \gets \remove(S, \ds(S), A)$\label{limpr:badrem}\;
		}
		\If{$\mdp[S]$ contains at least one edge}{
			\If{$\mdp[S]$ is strongly connected\label{limpr:testsc}}{
				$\good \gets \good \cup \set{S}$\label{limpr:good}\;
			}\Else{
				$\mathcal{\scc} \gets \allsccalg(\mdp[S])$\label{limpr:scc};
				$S' \gets S$\;
				\For{$\scc \in \mathcal{\scc}$}{
					$R \gets \set{v \in \vr \cap \scc \mid \exists w \in S' \setminus \scc 
					\text{ s.t.\ } (v, w) \in E}$\label{limpr:randout}\;
					$A \gets \at(\mdp[\scc], R)$\label{limpr:randoutattr}\;
					\If{$\scc$ is largest SCC in $\mathcal{\scc}$}{
						$(S, \ds(S)) \gets \remove(S, \ds(S), A)$\label{limpr:randoutrem}\;
					}\Else{
						$(S, \ds(S)) \gets \remove(S, \ds(S), \scc)$\label{limpr:smallrem}\;
						$\scc \gets \scc \setminus A$\;
						$Q \gets Q \cup \set{\construct(\scc)}$\label{limpr:smallconstr}\;
					}
				}
				$Q \gets Q \cup \set{\ds(S)}$\label{limpr:addback}\;
			}
		}
	}
	\Return{$\as{\mdp, \reacht{\bigcup_{\ec \in \good} \ec}}$}\;
\end{algorithm}

\begin{proposition}[Runtime Algorithm~\ref{alg:streettimpr}]
	Algorithm~\ref{alg:streettimpr} terminates in $O(mn + b \log n)$ time.
\end{proposition}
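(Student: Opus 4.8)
The runtime bound $O(mn + b\log n)$ must be split into two contributions: the $b\log n$ term from the data-structure operations of Lemma~\ref{lem:ds}, and the $mn$ term from the attractor and SCC computations together with the bookkeeping of vertices moving between candidate sets. I would first account for the one-time costs: computing the initial MEC-decomposition is $O(\textsc{MEC}) = O(mn)$ in the worst case (using $\textsc{MEC} = O(\min(n^2,m^{1.5}))$, which is $O(mn)$), the preprocessing of the data structure is $O(k) = O(b)$, constructing $\ds(\ec)$ for all initial MECs costs $\sum_\ec O(\bits(\ec) + |\ec|) = O(b + n)$ since the MECs are disjoint, and the final almost-sure reachability call costs $O(\textsc{MEC}) = O(mn)$ by Theorem~\ref{th:timedrmdp}. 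So the whole budget reduces to bounding the total cost of the while-loop.

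**The key amortization argument.** The plan is to charge work in the while-loop to two potentials. For the $b\log n$ term: every time a vertex $v$ lies in a set $S$ that becomes one of the \emph{non-largest} SCCs in Line~\ref{limpr:scc}, the size of the candidate set containing $v$ drops by at least a factor of two, so this happens at most $O(\log n)$ times per vertex. Each such event triggers a $\remove$ of $v$ (Line~\ref{limpr:smallrem}) and a fresh $\construct$ (Line~\ref{limpr:smallconstr}), whose costs by Lemma~\ref{lem:ds} are $O(\bits(\cdot) + |\cdot|)$; summing the $\bits$ contributions of $v$ over all $L_i, U_i$ containing $v$ and over the $O(\log n)$ levels gives $O((\deg_{\SP}(v))\log n)$ where $\deg_{\SP}(v)$ counts set-memberships, and $\sum_v \deg_{\SP}(v) = b$, yielding $O(b\log n)$. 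The cost of $\bad$ queries (Line~\ref{limpr:badfind}) is $O(1)$ each, and the number of inner-while iterations is bounded by the total number of vertices ever removed, hence $O(n\log n)$ overall, absorbed into $mn$. For the $mn$ term: each attractor computation in Lines~\ref{limpr:badattr}, \ref{limpr:randoutattr} costs time proportional to the in-degrees of the removed vertices (Definition~\ref{def:attr}); since a vertex removed by a $\bad$-attractor (Line~\ref{limpr:badrem}) never reappears, that contributes $O(m)$ total, while a vertex removed by a random-out attractor in a non-largest SCC can reappear at most $O(\log n)$ times, contributing $O(m\log n)$ — but more carefully, I should charge the SCC computation and the $R$-set computation of Line~\ref{limpr:randout} to the edges incident to the smaller SCCs, giving $O(m\log n)$; alternatively the cruder bound is to observe each outer-while iteration does $O(m)$ work for the SCC decomposition and there are $O(n)$ iterations in which a set genuinely shrinks, giving $O(mn)$.

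**The main obstacle.** The delicate point is the interaction between the largest-SCC reuse and the attractor removals that happen \emph{after} the SCC decomposition (Line~\ref{limpr:randoutrem} for the largest SCC, and the $\scc \gets \scc\setminus A$ step before re-inserting small SCCs). Because vertices are removed from the largest SCC \emph{after} it has been identified as largest, $\mdp[S]$ need not be strongly connected when it is pushed back onto $Q$ in Line~\ref{limpr:addback} — so the argument "a vertex is in a non-largest SCC $O(\log n)$ times" needs care: I must argue that between two consecutive times a fixed vertex $v$ is placed into a \emph{newly constructed} small data structure, the candidate set containing $v$ has at least halved, which follows because $v$ was in a strict minority SCC of the set it previously belonged to. The attractor removals only shrink sets further, so they cannot violate the halving. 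I would also need the invariant — stated in the text preceding the algorithm — that no set $S$ with $\ds(S) \in Q$ has outgoing random edges, to ensure the random-attractor computations stay within $\mdp[\scc]$ and do not cascade. Once these invariants are in place, summing the per-vertex and per-edge charges across the $O(\log n)$ levels gives the two terms, and adding the $O(mn)$ one-time costs yields the claimed $O(mn + b\log n)$ bound; I would close by remarking that this is dominated by $O(n^2 + b\log n)$ whenever $m = O(n)$ and matches the graph algorithm's shape up to the additive $\textsc{MEC}$ term.
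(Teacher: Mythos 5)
Your proposal is correct and follows essentially the same route as the paper's proof: one-time $O(\textsc{MEC})$ costs for the initial decomposition and the final almost-sure reachability call, $O(m+b)$ for bad-vertex attractors and removals, the halving/charging argument for $\remove$/$\construct$ on non-largest SCCs giving the $O((n+b)\log n)$ term, and the observation that every set reaching the SCC computation has strictly shrunk since it was last known strongly connected, so the $O(m)$-per-iteration work totals $O(mn)$. The only caveat is your aside that one could charge the SCC computation of Line~\ref{limpr:scc} to the edges of the smaller SCCs to get $O(m\log n)$ -- this is unjustified, since the full decomposition of $\mdp[S]$ touches the largest SCC's edges every time (which is precisely why Algorithms~\ref{alg:streettdense} and~\ref{alg:streettsparse} need lock-step and hierarchical-decomposition techniques) -- but your fallback crude bound is exactly the paper's argument and suffices for the stated claim.
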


\begin{proof}
	Using the data structure of Lemma~\ref{lem:ds} (\cite{HenzingerT96}),
	the initialization phase of Algorithm~\ref{alg:streettimpr} takes
	$O(k + \textsc{MEC} + b + n)$ time, which is in $O(mn + b)$. Further by 
	Theorem~\ref{th:timedrmdp} the almost-sure reachability computation
	after the outer while-loop can be done in $O(\textsc{MEC})$ time. 
	
	Whenever bad vertices and their random attractor are identified in 
	lines~\ref{limpr:badfind}--\ref{limpr:badattr}, they are removed in 
	Line~\ref{limpr:badrem} and not considered further. Thus finding bad 
	vertices takes total time~$O(n)$,
	identifying the random attractor of bad vertices takes total time~$O(m)$ (see 
	Definition~\ref{def:attr}), and removing the bad vertices and their attractor 
	takes total time~$O(m + b)$ by Lemma~\ref{lem:ds}.
	
	After the initialization of $Q$ with the MECs of $\mdp$, all vertex sets 
	for which a data structure is stored in $Q$ induce a strongly connected sub-MDP.
	Consider the set $S$ when Line~\ref{limpr:scc} is reached and its smallest superset
	$S' \supseteq S$ that was identified as strongly connected in the algorithm
	(i.e.\ $S'$ is either a MEC of $\mdp$ or an SCC computed in Line~\ref{limpr:scc}
	in a previous iteration of the algorithm). We have that $S$ is a proper 
	subset of $S'$, i.e., either bad vertices were 
	removed from $S'$ in Line~\ref{limpr:badrem} or a non-empty set of random
	vertices was identified in Line~\ref{limpr:randout}. Hence any part of $\mdp$
	is considered in at most $n$ iterations of the outer while-loop. This implies
	that we can bound the total time spent in 
	lines~\ref{limpr:scc}--\ref{limpr:randoutattr} with $O(mn)$.
	
	By the same argument as for the removal of bad vertices and their attractor,
	the calls to $\remove$ in Line~\ref{limpr:randoutrem} take total time $O(n + b)$.
	It remains to bound the time for the calls to $\remove$ and $\construct$
	in lines~\ref{limpr:smallrem}--\ref{limpr:smallconstr}. Note that we avoid to
	make these calls for the largest of the SCCs of the sub-MDP induced by $S$,
	which are computed in Line~\ref{limpr:scc}.
	Thus whenever we call $\remove$ and $\construct$ for an SCC~$\scc$, we have 
	$\lvert \scc \rvert \le \lvert S \rvert /2$. Hence 
	we can charge the time for $\remove$ and $\construct$ to the vertices of $\scc$
	and to $\bits(\scc)$ such that every vertex $v$ and every $\bits(\set{v})$ is 
	charged $O(\log n)$ times. Thus we can bound the time for 
	lines~\ref{limpr:smallrem}--\ref{limpr:smallconstr} with $O((n + b) \log n)$.
	This proves the claimed runtime.
\end{proof}

\begin{proposition}[Soundness of Algorithm~\ref{alg:streettimpr}]\label{prop:streettimprsound}
	Let $W$ be the set returned by Algorithm~\ref{alg:streettimpr}.
	We have $W \subseteq \as{\mdp, \streett{\SP}}$.
\end{proposition}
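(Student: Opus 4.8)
The plan is to mirror the soundness proof of the basic algorithm (Algorithm~\ref{alg:streettbasic}). By Corollary~\ref{cor:gecsound-gen} it suffices to prove that every vertex set~$S$ that Algorithm~\ref{alg:streettimpr} adds to~$\good$ is a good Streett end-component of~$\mdp$; since the Streett objective is determined by $\Inf(\pat)$, this yields $W=\as{\mdp,\reacht{\bigcup_{\ec\in\good}\ec}}\subseteq\as{\mdp,\streett{\SP}}$. Recall that $S$ is a good Streett end-component iff (i)~$\mdp[S]$ is strongly connected, (ii)~$\mdp[S]$ contains at least one edge, (iii)~no vertex of $\vr\cap S$ has an edge to $V\setminus S$, and (iv)~for each $1\le i\le k$ either $L_i\cap S=\emptyset$ or $U_i\cap S\ne\emptyset$. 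Properties~(i) and~(ii) are checked explicitly right before $S$ is put into $\good$ (the enclosing test that $\mdp[S]$ contains an edge, and the test in Line~\ref{limpr:testsc}). Property~(iv) I would get from Lemma~\ref{lem:ds}: $S$ is only added to $\good$ once the inner while-loop has terminated, i.e.\ with $\bad(\ds(S))=\emptyset$, and $\bad(\ds(S))$ points to exactly $\{\inec\in S\mid\exists i:\ \inec\in L_i\text{ and }S\cap U_i=\emptyset\}$, whose emptiness is~(iv).

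The core of the argument is property~(iii), which I would establish via the invariant that \emph{every vertex set~$S$ on which $\construct$ is ever invoked (equivalently, every set whose data structure is ever inserted into~$Q$) has no outgoing random edge}, meaning there is no $(v,w)\in E$ with $v\in\vr\cap S$ and $w\notin S$. Before the induction I would record the elementary closure property (P1): if $S$ has no outgoing random edge and $A=\at(\mdp[S],B)$ for some $B\subseteq S$, then $S\setminus A$ has no outgoing random edge — because a vertex $v\in\vr\cap(S\setminus A)$ with an edge leaving~$S$ contradicts the hypothesis, while an edge from such~$v$ into~$A$ lies within $\mdp[S]$ and would force $v\in A$ by Definition~\ref{def:attr}. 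Since the set that reaches Line~\ref{limpr:testsc}/Line~\ref{limpr:good} is obtained from a set stored in~$Q$ by the random-attractor removals in the inner while-loop (lines~\ref{limpr:badattr}--\ref{limpr:badrem}), the invariant together with (P1) gives property~(iii).

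I would prove the invariant by induction on the calls to $\construct$. The base case is the initialization of~$Q$ with the data structures of the MECs of~$\mdp$, which have no outgoing random edges by the definition of an end-component. For the step, a new data structure is created only in Line~\ref{limpr:smallconstr} (for a set $\scc\setminus A$ where $\scc$ is a non-largest SCC of $\mdp[S]$) or in Line~\ref{limpr:addback} (for the current set~$S$, which at that moment equals $\scc\setminus A$ for the largest SCC~$\scc$, since all other SCCs have already been removed from~$S$). In both cases the new set has the form $\scc\setminus\at(\mdp[\scc],R)$ for an SCC~$\scc$ of $\mdp[\tilde S]$, where $\tilde S$ denotes the set~$S$ after the inner while-loop and $R=\{v\in\vr\cap\scc\mid\exists w\in\tilde S\setminus\scc:\ (v,w)\in E\}$. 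By the induction hypothesis the set~$S$ drawn from~$Q$ has no outgoing random edge, hence by (P1) so does $\tilde S$; and then for $v\in\vr\cap(\scc\setminus\at(\mdp[\scc],R))$ with an edge to~$w$: if $w\notin\tilde S$ we contradict the property of~$\tilde S$; if $w\in\tilde S\setminus\scc$ then $v\in R\subseteq\at(\mdp[\scc],R)$, a contradiction; if $w\in\scc$ then $(v,w)$ lies in $\mdp[\scc]$ so $w\in\at(\mdp[\scc],R)$ would force $v\in\at(\mdp[\scc],R)$, again a contradiction. Thus $w\in\scc\setminus\at(\mdp[\scc],R)$, which completes the induction. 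Combining, every $S$ added to~$\good$ satisfies (i)--(iv), so by Corollary~\ref{cor:gecsound-gen} we conclude $W\subseteq\as{\mdp,\streett{\SP}}$.

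The main obstacle I anticipate is the bookkeeping around the for-loop over SCCs: the ``no outgoing random edge'' property does \emph{not} hold for the intermediate value of the variable~$S$ inside that loop (after some, but not all, non-largest SCCs have been deleted, $S$ may have random edges into already-deleted SCCs), so the invariant must be phrased and proved only for the sets actually passed to $\construct$, each of which is a full SCC of $\mdp[\tilde S]$ minus precisely the random attractor of its random vertices with edges leaving that SCC; keeping straight which set plays the role of $\tilde S$ and which SCC is the ``largest'' at each point is where care is needed.
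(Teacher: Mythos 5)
Your proposal is correct and takes essentially the same route as the paper's proof: it reduces via Corollary~\ref{cor:gecsound-gen} to showing that every set added to $\good$ is a good end-component, reads off strong connectivity, the existence of an edge, and the Streett-pair condition from the explicit checks and the emptiness of $\bad(\ds(S))$, and establishes the absence of outgoing random edges through precisely the invariant the paper maintains for sets in $Q$ (Invariant~\ref{inv:streettnorand}) combined with the attractor-closure observation. The only cosmetic difference is that you index the induction by insertions into $Q$ rather than by iterations of the outer while-loop, which amounts to the same argument.
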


\begin{proof}
By Corollary~\ref{cor:gecsound-gen} it is sufficient to show that every set 
$\ec \in \good$ is a good end-component. The algorithm explicitly
checks immediately before $\ec$ is added to $\good$ in Line~\ref{limpr:good}
that $\ec$ contains at least one edge and is strongly connected. Further 
we have by the termination condition of the inner while-loop that for each 
$1 \le i \le k$ either $L_i \cap \ec = \emptyset$ or $U_i \cap \ec \ne
\emptyset$. Thus it remains to show that there are no random edges from $\ec$
to $V \setminus \ec$. 

Let $\ec'$ be the set of vertices for which the data 
structure $\ds(\ec')$ was removed from $Q$ in the iteration of the outer while-loop
in which $\ec$ was added to $\good$. By the following invariant there are no random 
edges from $\ec'$ to $V \setminus \ec'$. 
\begin{invariant}\label{inv:streettnorand}
	For every set $S$ for which the data structure $\ds(S)$ is in $Q$ there 
	are no random edges from $S$ to $V \setminus S$.
\end{invariant}
Assume the invariant holds. If $\ec'$ is not equal to $\ec$, then
some vertices and their random attractor within $\mdp[\ec']$ were removed in the 
inner while-loop. By the definition of a random attractor there are no random 
edges from $\ec$ to $\ec' \setminus \ec$ and thus to $V \setminus \ec$. 

It 
remains to prove the invariant by induction over the iterations of the outer
while-loop.
Before the first iteration of the while-loop $Q$ is initialized with
the maximal end-components of $\mdp$ and thus the invariant holds. 
Assume the invariant holds before the beginning of an iteration of the outer while-loop
and let $S$ be the set of vertices for which the data structure 
is removed from $Q$ in this iteration. In the inner while-loop some vertices 
and their random attractor in $\mdp[S]$ might be removed from $S$. Let $S'$ be 
the remaining vertices. By the definition of a random attractor there are no 
random edges from $S'$ to $S \setminus S'$ and thus by the induction hypothesis
there are no random edges from $S'$ to $V \setminus S'$.

If $\mdp[S']$ is strongly connected, then no set is added to $Q$ in this iteration
of the while-loop. 
Otherwise the SCCs $\mathcal{\scc}$ of $\mdp[S']$ are considered as candidates to be 
added to $Q$. For each set $\scc \in \mathcal{\scc}$ the random vertices~$R$ in $\scc$
with edges to vertices in $S' \setminus \scc$ and their random attractor~$A$
in $\mdp[\scc]$ are removed from $\scc$. Let $\scc'$ be the remaining vertices.
We have that there are no random edges from $\scc'$ to $S' \setminus \scc$ by the
definition of~$R$ and that there are no random edges from $\scc'$ to $\scc \setminus 
\scc'$ by the definition of~$A$. Thus there are no random 
edges from $\scc'$ to $V \setminus \scc'$ for any set $\scc'$ for which the 
data structure is added to $Q$, which shows the invariant.
\end{proof}

\begin{proposition}[Completeness of Algorithm~\ref{alg:streettimpr}]\label{prop:streettimprcompl}
		Let $W$ be the set returned by Algorithm~\ref{alg:streettimpr}.
	We have $\as{\mdp, \streett{\SP}} \subseteq W$.
\end{proposition}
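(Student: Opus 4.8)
The plan is to follow the template of the completeness proof of the basic algorithm (Proposition~\ref{prop:basiccompl}). By Proposition~\ref{prop:geccompl-gen} and the monotonicity of almost-sure reachability in its target set, it suffices to show that upon termination of Algorithm~\ref{alg:streettimpr} the union $\bigcup_{\ec \in \good} \ec$ contains every good Streett end-component of $\mdp$. I would establish this via the invariant: before and after every iteration of the outer while-loop, each good Streett end-component $\hat\ec$ satisfies $\hat\ec \subseteq \ec$ for some $\ec \in \good$ or $\hat\ec \subseteq S$ for some $S$ with $\ds(S)$ currently in $Q$. The base case is immediate, since $Q$ is initialized with the MECs of $\mdp$ and every end-component lies inside a MEC; and once $Q = \emptyset$ at termination the invariant gives exactly the desired statement.

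For the inductive step I would fix a good Streett end-component $\hat\ec$ and the set $S$ whose data structure is pulled from $Q$ in the current iteration; if $\hat\ec$ is already contained in a set of $\good$ or in another set still in $Q$ there is nothing to prove, so assume $\hat\ec \subseteq S$. First I would argue that $\hat\ec$ survives the inner while-loop: if a vertex $v \in \hat\ec$ lies in some $L_i$, the defining property of a good Streett end-component gives $U_i \cap \hat\ec \ne \emptyset$, hence $U_i \cap S' \ne \emptyset$ for the current (shrinking) set $S' \supseteq \hat\ec$, so $v$ is never among the bad vertices returned by $\bad(\ds(S'))$; and since the removed bad set is disjoint from $\hat\ec$, Lemma~\ref{lem:eccontained}(b), applied to the sub-MDP $\mdp[S']$ containing the end-component $\hat\ec$, shows that its random attractor is also disjoint from $\hat\ec$. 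Thus $\hat\ec$ is contained in the set $S'$ that remains after the inner while-loop, and in particular $\mdp[S']$ contains an edge (every end-component has one), so the ``$\mdp[S]$ contains at least one edge'' test is passed.

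Next I would split on the test in Line~\ref{limpr:testsc}. If $\mdp[S']$ is strongly connected, $S'$ is added to $\good$ and the invariant is maintained. Otherwise, since $\hat\ec$ is strongly connected, Lemma~\ref{lem:eccontained}(a) places it entirely inside one SCC $\scc^*$ of $\mdp[S']$; for every SCC $\scc \ne \scc^*$ the set removed from $S$ (either $\scc$ itself, Line~\ref{limpr:smallrem}, or a subset of it, Line~\ref{limpr:randoutrem}) is disjoint from $\hat\ec$, so only $\scc^*$ matters. The set $R$ of random vertices of $\scc^*$ with an edge to $S' \setminus \scc^*$ is disjoint from $\hat\ec$, because an end-component has no outgoing random edges and any such edge would lead out of $\hat\ec \subseteq \scc^*$; hence by Lemma~\ref{lem:eccontained}(b) its attractor $A$ within $\mdp[\scc^*]$ is disjoint from $\hat\ec$, so $\hat\ec \subseteq \scc^* \setminus A$. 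If $\scc^*$ is the largest SCC, only $A$ is removed and $\ds(S)$ is put back into $Q$ in Line~\ref{limpr:addback} with $\hat\ec$ still inside it; otherwise $\construct(\scc^* \setminus A)$ is added to $Q$ in Line~\ref{limpr:smallconstr}, again containing $\hat\ec$. In all cases the invariant is preserved, and the proof concludes as described.

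The main obstacle I anticipate is the careful bookkeeping of the attractor removals in the SCC-processing loop: one must verify that neither the bad-vertex attractor in the inner while-loop nor the ``outgoing-random-vertex'' attractor $A$ ever deletes a vertex of $\hat\ec$, and one must track $\hat\ec$ correctly through the split of $S'$ into the reused data structure of the largest SCC versus the freshly constructed data structures of the smaller SCCs. Both points reduce to combining parts (a) and (b) of Lemma~\ref{lem:eccontained} with the fact that end-components have no outgoing random edges; once these are in place the remaining induction is routine.
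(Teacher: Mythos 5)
Your proposal is correct and follows essentially the same route as the paper's proof: the same invariant (every good end-component stays inside some set in $\good$ or some set whose data structure is in $Q$), the base case via the MEC initialization, survival of the inner while-loop (you re-derive inline what the paper cites as Corollary~\ref{cor:geccontained}), and the SCC case split handled via Lemma~\ref{lem:eccontained}~\upbr{a} and \upbr{b}, concluding with Proposition~\ref{prop:geccompl-gen}. No gaps.
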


\begin{proof}
	By Proposition~\ref{prop:geccompl-gen} it is sufficient to show that at the end of 
	the algorithm the union of the sets in $\good$ contains
	all good end-components of the MDP $\mdp$. We show the following invariant 
	by induction over the iterations of the outer while-loop;
	as $Q$ is empty at the end of the algorithm, this implies the claim.
	\begin{invariant}\label{inv:geccontained}
		For each good end-component~$\ec$ of $\mdp$ and some set $Y \supseteq \ec$
		either $Y \in \good$ or $\ds(Y) \in Q$ holds before and after each iteration 
		of the outer while-loop.
	\end{invariant}

	Before the first iteration of the outer while-loop, the 
	set $Q$ is initialized with the MECs of $\mdp$, thus the induction base holds.
	Let $S$ be the set of vertices for which the data structure is 
	removed from $Q$ in an iteration of the outer while-loop and 
	let $\mathcal{\ec}_S$ be the set of good end-components 
	contained in $S$. We have $\ec \subseteq S'$ for every 
	$\ec \in \mathcal{\ec}_S$ after the inner while-loop by Corollary~\ref{cor:geccontained}.
	
	Since every end-component contains an edge, $\mdp[S']$ contains at least one
	edge if $\mathcal{\ec}_S$ is not empty. Then either $S'$ and thus all $\ec \in 
	\mathcal{\ec}_S$ are added to $\good$ or the SCCs $\mathcal{\scc}$ of $\mdp[S']$ 
	are computed. For each $\ec \in \mathcal{\ec}_S$ there exists $\scc \in 
	\mathcal{\scc}$ such that $\ec \subseteq \scc$ by Lemma~\ref{lem:eccontained}~\upbr{a}; 
	let $\ec$ and $\scc$ be such that $\ec \subseteq \scc$. Since $\ec$ has not 
	outgoing random edges, we have $R \cap \ec = \emptyset$ (Line~\ref{limpr:randout})
	and thus also $\ec \subseteq \scc \setminus \at(\mdp[\scc],R)$ by Lemma~\ref{lem:eccontained}~\upbr{b}. The data structure of $\scc \setminus A$
	is added to $Q$ in lines~\ref{limpr:smallconstr} or~\ref{limpr:addback},
	hence the claim holds after the outer while-loop.
\end{proof}

\subsection{Algorithm for Dense MDPs with Streett Objectives}\label{sec:streettdense}

\begin{algorithm}
	\SetAlgoRefName{StreettMDPdense}
	\caption{Algorithm for dense MDPs with Streett Objectives}
	\label{alg:streettdense}
	\SetKwInOut{Input}{Input}
	\SetKwInOut{Output}{Output}
	\SetKwData{found}{found}
	\SetKwData{true}{true}
	\SetKwData{false}{false}
	\BlankLine
	\Input{an MDP $\mdp = ((V, E), (\vo, \vr), \trans)$ and Streett pairs 
	$\SP= \{(L_i, U_i) \mid 1 \le i \le k\}$
	}
	\Output
	{
	$\as{\mdp, \streett{\SP}}$
	}
	\BlankLine
	$\good \gets \emptyset$; 	$Q \gets \emptyset$\;
	$\mathcal{\ec} \gets \mecalg(\mdp)$\;
	\lFor{$\ec \in \mathcal{\ec}$}{
		$Q \gets Q \cup \set{\construct(\ec)}$
	}
	\While{$Q \ne \emptyset$}{
		remove some $\ds(S)$ from $Q$\;
		\While{$\bad(\ds(S)) \ne \emptyset$}{
			$A \gets \at(\mdp[\ec], \bad(\ds(S)))$\;
			$\ds(S) \gets \remove(S, \ds(S), A)$\;
		}
		\If{$\mdp[S]$ contains at least one edge}{
			\For{$j \gets 1$ \KwTo $\lceil \log(\lvert S \rvert) \rceil$}{
				\ForEach{$H \in \set{G,\rev}$}{
					construct $H_j[S]$\;
					$\blue_j \gets \{v \in S \mid \OutDeg_H(v) > 2^j\}$\;
					$Z \gets S \setminus \reach(H_j[S], \blue_j)$\;
					\If{$Z \ne \emptyset$}{
						$\scc \gets \sccalg(H_j[Z])$\label{ldense:scc}\;
						\If{$\scc = S$}{
							$\good \gets \good \cup \set{C}$\label{ldense:good}\;
							continue with next iteration of while-loop\;
						}
						\If{$\lvert \scc \rvert \le \lvert S \rvert / 2$}{
							\If(\tcc*[f]{top SCC}){$H = \rev$}{
								$Q \gets Q \cup \remove(S, \ds(S), C)$\label{ldense:addback1}\;
								$R \gets \set{v \in \vr \cap \scc \mid \exists u \in S \setminus \scc 
								\text{ s.t.\ } (v, u) \in E}$\label{ldense:randout}\;
								$\scc \gets \scc \setminus \at(\mdp[C], R)$\; 
							}\Else(\tcc*[f]{bottom SCC}){
								$Q \gets Q \cup \remove(S, \ds(S), \at(\mdp[S], C))$\label{ldense:addback2}\;
							}
							$Q \gets Q \cup \construct(C)$\label{ldense:sccconstr}\;
							continue with next iteration of while-loop\;
						}
					}
				}
			}
		}
	}
	\Return{$\as{\mdp, \reacht{\bigcup_{\ec \in \good} \ec}}$}\;
\end{algorithm}

Algorithm~\ref{alg:streettdense} combines Algorithm~\ref{alg:streettimpr}
with the ideas of the MEC-algorithm for dense MDPs of~\cite{ChatterjeeH14} and 
the algorithm for graphs with Streett objectives of~\cite{ChatterjeeHL15}.
The difference to Algorithm~\ref{alg:streettimpr} lies in the search for 
strongly connected components. To detect a good end-component, it is essential 
to detect when a sub-MDP $\mdp[S]$ remains strongly connected after some 
vertices and their random attractor were removed from the vertex 
set $S$ for which the data structure $\ds(S)$ is
maintained in $Q$. For this it is sufficient to identify one 
strongly connected component~$\scc$ of the sub-MDP $\mdp[S]$:
The sub-MDP is strongly connected if and only if the SCC spans the whole 
sub-MDP, i.e., $\scc = S$.
As for Algorithm~\ref{alg:streettimpr}, the correctness of the algorithm
is based on maintaining the Invariants~\ref{inv:streettnorand}
and~\ref{inv:geccontained}. For maintaining these invariants it makes no difference
whether we compute all SCCs of $\mdp[S]$ or just one. Whenever $\mdp[S]$ is not 
strongly connected, there exists a top or bottom SCC that contains at most 
half of the vertices of $S$. In 
Algorithm~\ref{alg:streettdense} we search for such a ``small'' top or bottom
SCC of $\mdp[S]$. 
The search for a top SCC is done by searching for a bottom SCC in the reverse
graph. To search for a bottom SCC,
a sparsification technique called \emph{Hierarchical Graph Decomposition}
is used. This technique was introduced by~\cite{HenzingerKW99} for undirected 
graphs and extended to directed graphs and game graphs by~\cite{ChatterjeeH14}.
In the level-$j$ graph $H_j$ of a graph $H$ only the first $2^j$ outgoing edges 
of each vertex are considered, thus $H_j$ has $O(n \cdot 2^j)$ edges. The main
observation (Lemma~\ref{lem:decomp}) is that we can identify each bottom SCC 
with at most $2^j$ vertices by searching for bottom SCCs of $H_j$ that 
only contain vertices for which all their outgoing edges in $H$ are also in $H_j$.
The search is started at level $j = 1$ and then $j$ is doubled until such a bottom
SCC is found in $H_j$. Note that $H_j = H$ for $j \ge \log n$. When a bottom 
SCC is identified at level $j^*$ but not at $j^*-1$, then this bottom SCC has 
$\Omega(2^{j^*})$ vertices by the above observation. Further, the number of 
edges in the graphs from level $1$ to $j^*$ form a geometric series. Thus 
the work spent in all the levels up to $j^*$ can be bounded in terms of the number of
edges in $H_{j^*}$, that is, the bottom SCC of size $\Omega(2^{j^*})$ is 
identified in $O(n \cdot 2^{j^*})$ time. By searching ``in parallel'' for 
top and bottom SCCs and charging the needed time to the identified SCC, 
the total runtime can be bounded by $O(n^2)$. To identify only bottom SCCs of $H_j$
for which all the outgoing edges are present in $H_j$ we determine the 
set of ``blue'' vertices $\blue_j$ that have an out-degree higher than $2^j$
and remove vertices that can reach blue vertices before computing SCCs.
In the following we provide formal definitions and proofs for Algorithm~\ref{alg:streettdense}.

\begin{definition}[Hierarchical Graph Decomposition]\label{def:decomp}
Let $H = (V,E)$ be a simple directed graph. We consider for $j \in \mathbb{N}$ 
the subgraphs $H_j = (V, E_j)$ of $H$ where~$E_j$ contains for 
each vertex of $V$ its first $2^j$ outgoing edges in $E$ \upbr{for some arbitrary but
fixed ordering of the outgoing edges of each vertex}. Note that when
$j \ge \log (\max_{v \in V}{\OutDeg_H(v)})$, then $H_j = H$.
Let the set~$\blue_j$ denote all vertices with out-degree more than $2^j$ in~$H$.
\end{definition}

\begin{lemma}[See e.g.~\cite{HenzingerKL15}]\label{lem:decomp}
We use Definition~\ref{def:decomp}.
\begin{enumerate}
	\item A set $\scc \subseteq V \setminus \blue_j$ is a bottom SCC in $H_j$
	if and only if it is a bottom SCC in~$H$.
	\item If a set $\scc \subseteq V$ with $\lvert \scc \rvert \le 2^j$
	is a bottom SCC in $H$, then $\scc \subseteq V \setminus \blue_j$.
\end{enumerate}
\end{lemma}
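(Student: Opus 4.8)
The plan is to prove the two claims essentially by unwinding the definitions and using the fact that $H_j$ is a subgraph of $H$ on the same vertex set, together with the defining property that $H_j$ keeps \emph{all} outgoing edges of a vertex whenever that vertex has out-degree at most $2^j$. For claim~(2), suppose $\scc$ is a bottom SCC in $H$ with $\lvert \scc \rvert \le 2^j$. Take any $v \in \scc$. Since $\scc$ is a bottom SCC of $H$, every outgoing edge of $v$ in $H$ stays inside $\scc$, so $v$ has at most $\lvert \scc \rvert \le 2^j$ outgoing edges in $H$; hence $v \notin \blue_j$. As $v$ was arbitrary, $\scc \subseteq V \setminus \blue_j$.

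For claim~(1), first note the key consequence of claim~(2)'s argument: if $v \notin \blue_j$, then $\OutDeg_H(v) \le 2^j$, so $H_j$ contains \emph{every} outgoing edge of $v$ that is present in $H$; that is, for $v \notin \blue_j$ the out-neighborhoods of $v$ in $H_j$ and in $H$ coincide. Now let $\scc \subseteq V \setminus \blue_j$. For the ``$\Leftarrow$'' direction, assume $\scc$ is a bottom SCC in $H$. Strong connectivity of $H[\scc]$ transfers to $H_j[\scc]$? Here I would be careful: strong connectivity inside $\scc$ is witnessed by paths, and for vertices in $\scc$ (all of which lie outside $\blue_j$) every $H$-edge out of them is retained in $H_j$, so in particular every edge of $H[\scc]$ survives in $H_j[\scc]$; thus $H_j[\scc]$ is strongly connected. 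It has no outgoing edges in $H_j$ because it has none in $H$ and $H_j \subseteq H$. Maximality (no strictly larger strongly connected set with no outgoing edges in $H_j$) follows since any such enlargement would, restricted to edges, also have no outgoing edges in $H$ (again $H_j \subseteq H$) contradicting that $\scc$ is a \emph{bottom} SCC of $H$ — more precisely, a bottom SCC is a sink strongly connected component of the condensation, and adding vertices while keeping no outgoing edges would merge SCCs in $H$, impossible. For the ``$\Rightarrow$'' direction, assume $\scc \subseteq V \setminus \blue_j$ is a bottom SCC in $H_j$. It has no outgoing edges in $H_j$; since every vertex of $\scc$ lies outside $\blue_j$, all its $H$-edges are in $H_j$, hence $\scc$ has no outgoing edges in $H$ either. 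And $H[\scc] \supseteq H_j[\scc]$ is strongly connected. Finally, $\scc$ must be a \emph{maximal} such set in $H$: if $\scc \subsetneq \scc'$ with $H[\scc']$ strongly connected and no $H$-outgoing edges, then pick $u \in \scc$; there is an $H$-path from $u$ to some $w \in \scc' \setminus \scc$ and back, but the first edge of that path leaving $\scc$ is an $H$-edge out of a vertex of $\scc$, hence (vertex outside $\blue_j$) an $H_j$-edge out of $\scc$, contradicting that $\scc$ is a bottom SCC in $H_j$. Thus $\scc$ is a bottom SCC of $H$.

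The only genuinely delicate point — and the one I would write out most carefully — is matching the two notions of ``bottom SCC'' (sink node of the condensation DAG) across the graph restriction in the presence of the asymmetry that only \emph{non-blue} vertices are guaranteed to keep their full out-neighborhood. The clean way to handle both directions uniformly is the observation isolated above: \emph{on the vertex set $V \setminus \blue_j$, the graphs $H$ and $H_j$ induce the same out-edges}, and a bottom SCC lives entirely within its own reachable closure, so its internal and outgoing edge structure is identical in $H$ and $H_j$ whenever the SCC is contained in $V \setminus \blue_j$. Everything else is bookkeeping. I expect no substantial obstacle; the lemma is standard (the paper itself cites~\cite{HenzingerKL15} ``see e.g.''), and the proof is a short case analysis once the out-neighborhood-coincidence observation is stated.
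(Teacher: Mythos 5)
Your proposal is correct and follows essentially the same route as the paper's proof: the paper also argues that for $\scc \subseteq V \setminus \blue_j$ the out-edges of vertices in $\scc$ coincide in $H_j$ and $H$ (so $H_j[\scc] = H[\scc]$ and outgoing edges are the same), and for part~(2) that a bottom SCC of size at most $2^j$ forces out-degree at most $2^j$. The extra maximality bookkeeping you spell out is left implicit in the paper (a strongly connected set with no outgoing edges is automatically a bottom SCC), but your handling of it is sound.
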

\begin{proof}
\begin{enumerate}
	\item
	By $\scc \subseteq V \setminus \blue_j$ the outgoing edges of the vertices in $\scc$
	are the same in $H_j$ and in $H$. Thus we have $H_j[\scc] = H[\scc]$
	and $\scc$ has no outgoing edges in $H_j$ if and only if it has no outgoing 
	edges in $H$.
	\item
	In $H$ all outgoing edges of each vertex of $\scc$ have to go to other vertices 
	of~$\scc$. Thus each vertex of $\scc$ has an 
	out-degree of at most $\lvert \scc \rvert \le 2^j$ in $H$.\qedhere
\end{enumerate}
\end{proof}

\begin{proposition}[Runtime of Algorithm~\ref{alg:streettdense}]
		Algorithm~\ref{alg:streettdense} terminates in $O(n^2 + b \log n)$ time.
\end{proposition}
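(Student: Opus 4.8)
The plan is to charge the running time of Algorithm~\ref{alg:streettdense} to its three phases: the initialization (the call $\mecalg(\mdp)$ and the $\construct$ calls), the main outer \texttt{while}-loop, and the final $\as{\mdp,\reacht{\cdot}}$ computation. The initialization costs $O(\textsc{MEC})$ for $\mecalg(\mdp)$ plus, by Lemma~\ref{lem:ds}, $O(k)$ one-time preprocessing and $\sum_{\ec}O(\bits(\ec)+|\ec|)=O(b+n)$ for the $\construct$ calls on the pairwise disjoint MECs; the final reachability computation costs $O(\textsc{MEC})$ by Theorem~\ref{th:timedrmdp} applied with a single target set. Since $\textsc{MEC}=O(\min(n^2,m^{1.5}))=O(n^2)$ and $m=\Omega(n)$, these two phases fit in $O(n^2+b)$, so it remains to show that the outer \texttt{while}-loop runs in $O(n^2+b\log n)$ time.

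Inside the loop I would first dispose of the inexpensive parts. Each call $\bad(\ds(S))$ takes $O(1)$ time; a non-empty answer leads to removing those vertices together with their random attractor from $S$, and such vertices never re-enter any maintained set, so over the whole run finding bad vertices costs $O(n)$, their attractor computations cost $O(\sum_v\InDeg(v))=O(m)=O(n^2)$ (Definition~\ref{def:attr}), and the corresponding $\remove$ calls cost $O(n+b)$ (Lemma~\ref{lem:ds}); the empty answer that terminates each inner \texttt{while}-loop is charged $O(1)$ to the enclosing outer iteration. Similarly, the random attractors of the sets $R$ in Line~\ref{ldense:randout} and the attractors removed in lines~\ref{ldense:addback2} are removed and never reconsidered, costing $O(m)=O(n^2)$ in total. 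The $\construct$ and $\remove$ operations applied to the \emph{smaller} SCCs in lines~\ref{ldense:addback1}--\ref{ldense:sccconstr} are handled exactly as in the graph algorithm of~\cite{ChatterjeeHL15}: whenever a data structure is rebuilt for a component $\scc$ with $|\scc|\le|S|/2$, we charge its cost $O(\bits(\scc)+|\scc|)$ to the vertices of $\scc$ and to $\bits(\scc)$; since a vertex lies in such a ``smaller'' set only $O(\log n)$ times, these calls cost $O((n+b)\log n)$ in total, which supplies the $b\log n$ term.

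The crux is the cost of building the level-$j$ subgraphs $H_j[S]$ and running the reachability and $\sccalg$ computations on them; processing level~$j$ of a set $S$ costs $O(|S|\cdot 2^j)$, so processing $S$ up to level $j^\ast$ costs $O(|S|\cdot 2^{j^\ast})$ by the geometric series. I would establish two facts. First, if the loop on $S$ reaches level $j^\ast$ and there $\sccalg$ returns a bottom SCC $\scc$ of $H_{j^\ast}[Z]$ with $c:=|\scc|\le|S|/2$ (a split), then $2^{j^\ast}=O(c)$: because $Z$ is closed under the outgoing edges of $H_{j^\ast}[S]$ and $Z\cap\blue_{j^\ast}=\emptyset$, Lemma~\ref{lem:decomp} shows $\scc$ is a bottom SCC of $\mdp[S]$ (resp.\ of $\rev[S]$ when $H=\rev$); if $c\le 2^{j^\ast-1}$ held, Lemma~\ref{lem:decomp} would make $\scc$ visible as a bottom SCC of $H_{j^\ast-1}[Z]$, so $\sccalg$ at level $j^\ast-1$ would already return a set of size $\le c\le|S|/2$, contradicting that no split occurred before level $j^\ast$; hence $c>2^{j^\ast-1}$. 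Second, if the loop runs to $j^\ast=\lceil\log|S|\rceil$ without a split (the only remaining case, using that searching both $G$ and $\rev$ always exposes a top or bottom SCC of size $\le|S|/2$ unless $\mdp[S]$ is strongly connected), then $Z=S$, $\sccalg$ returns $S$, and $\mdp[S]$ is a good end-component; the $O(|S|^2)$ spent is charged to that good end-component, and since the good end-components added to $\good$ are pairwise disjoint this totals $O(n^2)$.

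Finally I would account for the split work with the potential $\Phi=\sum_{\ds(S)\in Q}|S|^2$, which starts at $\sum_{\ec\in\mecalg(\mdp)}|\ec|^2\le n^2$ and is non-increasing, since every operation only removes vertices from the maintained sets. In a split, a set $S$ of size $s$ is replaced by a set of size at most $s-c$ and a new set of size at most $c\le s/2$, so $\Delta\Phi\le (s-c)^2+c^2-s^2=-2c(s-c)\le -c\,s$; as the level-graph work of that iteration is $O(s\cdot 2^{j^\ast})=O(s\,c)$ by the first fact, it is paid for by the drop in $\Phi$, so all level-graph work sums to $O(n^2)$. Adding the phases gives the claimed $O(n^2+b\log n)$ bound. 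The main obstacle in making this rigorous is the first fact: one must check carefully that the component returned by $\sccalg$ at the splitting level is genuinely a bottom SCC of the full sub-MDP $\mdp[S]$ (or of $\rev[S]$), not merely of the sparsified, attractor-truncated graph $H_{j^\ast}[Z]$, so that Lemma~\ref{lem:decomp} can be invoked to force $c=\Omega(2^{j^\ast})$; it is precisely coupling this size bound with the quadratic potential that turns the naive $O(n^2\log n)$ estimate into $O(n^2)$.
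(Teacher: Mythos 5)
Your proof is correct and takes essentially the same route as the paper's: the same phase-by-phase accounting, the same $O((n+b)\log n)$ bound for the $\remove$/$\construct$ operations via the halving argument, the per-set geometric-series bound $O(|S|\cdot 2^{j^*})$ combined with Lemma~\ref{lem:decomp} to force $|\scc|=\Omega(2^{j^*})$, and an $O(n^2)$ charge to the (disjoint) sets detected as strongly connected. The only difference is cosmetic: where the paper bounds the split work by the recursion $f(n')\le f(n_1)+f(n'-n_1)+cn'n_1$ solved by induction to $2cn'^2$, you use the equivalent potential $\sum_{\ds(S)\in Q}|S|^2$, which encodes the same charging argument.
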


\begin{proof}
	Using the data structure of Lemma~\ref{lem:ds} (\cite{HenzingerT96}),
	the initialization phase of Algorithm~\ref{alg:streettdense} takes 
	$O(\textsc{MEC} + b + n)$ time, which is in $O(n^2 + b)$~\cite{ChatterjeeH14}.
	Further by Theorem~\ref{th:timedrmdp} the almost-sure reachability computation
	after the outer while-loop can be done in $O(\textsc{MEC})$ time. 
	Removing bad vertices takes total time~$O(n + b)$ by Lemma~\ref{lem:ds}. 
	Whenever a random attractor is computed, its edges are not considered further;
	thus all attractor computations take $O(m)$ total time by Definition~\ref{def:attr}.
	Whenever $\remove$ or $\construct$ are called (after the
	initialization of $Q$), the vertices that are removed resp.\ added are either
	\upbr{1} vertices for which the size of the SCC containing them was at least halved
	or \upbr{2} vertices that are not considered further. For each vertex 
	case~\upbr{1} can happen at most $O(\log n)$ times and case~\upbr{2} at 
	most once, thus all calls to $\remove$ or $\construct$ take
	total time $O((n+b)\log n)$ by Lemma~\ref{lem:ds}.
	
	To efficiently construct the graphs $H_j$ and compute $\blue_j$
	for $1 \le j < \lceil \log(n) \rceil$ and $H \in \set{G,\rev}$, we maintain 
	for all vertices a list of their incoming and outgoing edges, which we 
	update whenever we encounter obsolete entries while constructing $H_j$. 
	Each entry can be removed at most once, thus this can be done in $O(m)$ total time.
	
	Let $S$ be the set of vertices considered in an iteration of the outer while-loop
	and let $\lvert S \rvert = n'$.
	The $j$th iteration of the for-loop takes $O(n' \cdot 2^j)$ time
	because $H_j$ contains $O(n' \cdot 2^j)$ edges and constructing $H_j$ and $\blue_j$
	and computing reachability, SCCs, and $R$ can all be done in time linear in the 
	number of edges. The search in $G$ and $\rev$ only increases the runtime 
	by a factor of two. Further \emph{all} iterations up to the $j$th iteration
	can be executed in time $O(n' \cdot 2^j)$ as their runtimes 
	form a geometric series.
	Note that whenever a graph is not strongly connected, it contains a top SCC and 
	a bottom SCC and one of them has at most half of the vertices. Thus in some
	iteration $j^*$ a top or bottom SCC with either $\scc = S$ or 
	$\lvert \scc \rvert \le n' /2$ is found by Lemma~\ref{lem:decomp}.
	Since $\scc$ was not found 
	in iteration $j^*-1$, we have $\lvert \scc \rvert = \Omega(2^{j^*})$ by
	Lemma~\ref{lem:decomp}.
	
	In the case $\scc = S$ the vertices in $S$ are not considered further by 
	the algorithm. Thus we can bound the time for this iteration with 
	$O(n' \cdot 2^{\log(n')}) = O(n'^2)$
	and hence the total time for this case with $O(n^2)$.
	
  It remains to bound the time for the case $\lvert \scc \rvert \le n' /2$.
	Let $\lvert \scc \rvert = n_1$ and let $c$ be some constant such that 
	the time spent for the search of $\scc$ is bounded by $c \cdot n_1 \cdot n'$.
	We denote this time for the set $S$ over the whole algorithm with $f(n')$ and 
	show $f(n') = 2c n'^2$ by induction as follows:
	\begin{align*}
	f(n') &\le f(n_1) + f(n' - n_1) + c n' n_1 \,,\\
	&\le 2c n_1^2 + 2c (n'-n_1)^2 + c n' n_1 \,,\\
	&= 2c n_1^2 + 2c n'^2 - 4c n' n_1 + 2c n_1^2 + c n' n_1 \,,\\
	&= 2c n'^2 + 4c n_1^2 - 3c n' n_1 \,,\\
	&\le 2c n'^2 \,,
	\end{align*}
	where the last inequality follows from $n_1 \le n'/2$.
\end{proof}

\begin{proposition}[Soundness of Algorithm~\ref{alg:streettdense}]\label{prop:streettdensesound}
	Let $W$ be the set returned by Algorithm~\ref{alg:streettdense}.
	We have $W \subseteq \as{\mdp, \streett{\SP}}$.
\end{proposition}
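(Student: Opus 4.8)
The plan is to argue exactly as for Algorithm~\ref{alg:streettimpr} (Proposition~\ref{prop:streettimprsound}): by Corollary~\ref{cor:gecsound-gen} it suffices to show that every set $\ec$ that Algorithm~\ref{alg:streettdense} adds to $\good$ is a good Streett end-component. Three of the four defining properties are read directly off the code. The set $\ec$ is added to $\good$ in Line~\ref{ldense:good} only inside the branch guarded by ``$\mdp[\ec]$ contains at least one edge'' and by $\scc = \ec$; since the Hierarchical Graph Decomposition correctly identifies a top or bottom SCC (Lemma~\ref{lem:decomp}), $\scc = \ec$ certifies that $\mdp[\ec]$ is strongly connected. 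Moreover, when $\ec$ is added the inner while-loop has just terminated with $\bad(\ds(\ec)) = \emptyset$, which by Lemma~\ref{lem:ds} means that for every $1 \le i \le k$ either $L_i \cap \ec = \emptyset$ or $U_i \cap \ec \ne \emptyset$. The only remaining point is that $\ec$ has no outgoing random edges, i.e., that $\ec$ is actually an end-component of $\mdp$.

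For this I would re-establish Invariant~\ref{inv:streettnorand} --- that no random edge leaves any vertex set whose data structure currently sits in $Q$ --- by induction over the iterations of the outer while-loop, just as in the proof of Proposition~\ref{prop:streettimprsound}; only the SCC-handling branch differs from that proof. The base case holds since $Q$ is initialized with the MECs of $\mdp$, which have no outgoing random edges. For the inductive step, let $S$ be the set whose data structure is pulled from $Q$; the inner while-loop deletes only vertices together with their random attractor in $\mdp[S]$, so by Definition~\ref{def:attr} and the induction hypothesis the resulting $S$ still has no outgoing random edge. It then remains to check the sets put back into $Q$ in the SCC branch. \textbf{(i)} For a small bottom SCC $\scc$ of $\mdp[S]$ (case $H = G$), Line~\ref{ldense:addback2} adds $S \setminus \at(\mdp[S], \scc)$ and Line~\ref{ldense:sccconstr} adds $\scc$; the former has no outgoing random edge by the attractor property and the induction hypothesis, and $\scc$ has none since a bottom SCC has no outgoing edge at all in $\mdp[S]$ while $S$ had none leaving $S$. \textbf{(ii)} For a small top SCC $\scc$ of $\mdp[S]$ (case $H = \rev$), Line~\ref{ldense:addback1} adds $S \setminus \scc$, which has no outgoing random edge because a top SCC has no \emph{incoming} edge in $\mdp[S]$, so $S \setminus \scc$ has exactly the $\mdp[S]$-out-edges of $S$; and Line~\ref{ldense:sccconstr} adds $\scc \setminus \at(\mdp[\scc], R)$, where $R$ (Line~\ref{ldense:randout}) is the set of random vertices of $\scc$ having an edge into $S \setminus \scc$, so by the definitions of $R$ and of the random attractor this set has no outgoing random edge. \textbf{(iii)} Nothing is added to $Q$ when $\scc = S$. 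This establishes the invariant.

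To conclude: when $\ec$ is added to $\good$ in Line~\ref{ldense:good}, the current set $S$ equals $\ec$ and was obtained from the set $S_0$ whose data structure was pulled from $Q$ at the start of this iteration by deleting bad vertices together with their random attractors; by Invariant~\ref{inv:streettnorand} the set $S_0$ has no outgoing random edge, and deleting random attractors preserves this, so $\ec$ has no outgoing random edge. Together with the strong connectivity, the edge condition, and the Streett condition established above, $\ec$ is a good Streett end-component, and $W \subseteq \as{\mdp, \streett{\SP}}$ follows from Corollary~\ref{cor:gecsound-gen}. I expect the main obstacle to be the bookkeeping of the top-SCC branch: being careful that removing a top SCC does not introduce new outgoing random edges in the complement, and that the inner attractor there is taken in $\mdp[\scc]$ (not in $\mdp[S]$) so that $\scc \setminus \at(\mdp[\scc], R)$ genuinely has no random edge leaving it.
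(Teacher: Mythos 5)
Your proposal is correct and follows essentially the same route as the paper: reduce soundness via Corollary~\ref{cor:gecsound-gen} to showing each set added to $\good$ is a good end-component, read off strong connectivity (via Lemma~\ref{lem:decomp}), the edge condition, and the Streett condition from the algorithm, and establish Invariant~\ref{inv:streettnorand} by induction with the same top-/bottom-SCC case analysis. The attractor bookkeeping you flag as the delicate point is handled exactly as in the paper's proof.
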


\begin{proof}
	We follow the proof of Proposition~\ref{prop:streettimprsound}. 
	Let $\scc$ be a set of vertices added to $\good$ in Line~\ref{ldense:good}. 
	Since $\mdp[\scc]$ is strongly connected by 
	Lemma~\ref{lem:decomp}, we have that
	immediately before $\scc$ is added to $\good$ it was checked that
	$\mdp[\scc]$ contains at least one edge, is strongly connected, and $\bad(\ds(C))$
	is empty. Thus it is sufficient to show that Invariant~\ref{inv:streettnorand} holds
	in Algorithm~\ref{alg:streettdense}. 
	
	Before the first iteration of the while-loop $Q$ is initialized with
the maximal end-components of $\mdp$ and thus the invariant holds. 
Assume the invariant holds before the beginning of an iteration of the outer while-loop
and let $S$ be the set of vertices for which the data structure 
is removed from $Q$ in this iteration. In the inner while-loop some vertices 
and their random attractor in $\mdp[S]$ might be removed from $S$. Let $S'$ be 
the remaining vertices. By the definition of a random attractor there are no 
random edges from $S'$ to $S \setminus S'$ and thus by the induction hypothesis
there are no random edges from $S'$ to $V \setminus S'$.

Then either $\mdp[S']$ is strongly connected and no set is added to $Q$ in 
this iteration of the while-loop or either a top or a bottom SCC $\scc$ of $\mdp[S']$
is identified by Lemma~\ref{lem:decomp}.

If $\scc$ is a top SCC, then there are no edges from $S' \setminus \scc$ 
to $\scc$ and thus $S' \setminus \scc$ has no outgoing random edges.
Hence the invariant is maintained when $\ds(S' \setminus \scc)$ is added to $Q$.
Then the random vertices of $\scc$ with edges to vertices in $S' \setminus \scc$ and 
their random attractor are removed from $\scc$. Thus the remaining vertices
of $\scc$ have no random edges to $V \setminus \scc$ and the invariant is maintained
when the data structure of this vertex set is added to $Q$.

If $\scc$ is a bottom SCC, then there are no edges from $\scc$ to $S' \setminus \scc$;
thus the invariant is maintained when $\ds(\scc)$ is added to $Q$. The random
attractor of $\scc$ is removed from $S' \setminus \scc$ before the data structure of 
the remaining vertices is added to $Q$, hence the invariant is maintained in all 
cases.
\end{proof}

\begin{proposition}[Completeness of Algorithm~\ref{alg:streettdense}]\label{prop:streettdensecompl}
		Let $W$ be the set returned by Algorithm~\ref{alg:streettdense}.
	We have $\as{\mdp, \streett{\SP}} \subseteq W$.
\end{proposition}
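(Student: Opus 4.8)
The plan is to follow the completeness argument for Algorithm~\ref{alg:streettimpr} (Proposition~\ref{prop:streettimprcompl}) almost verbatim, adapting only the part that handles the decomposition into strongly connected components. By Proposition~\ref{prop:geccompl-gen} it suffices to show that at termination the union of the sets in $\good$ contains every good Streett end-component of $\mdp$; then the returned set $W = \as{\mdp, \reacht{\bigcup_{\ec \in \good} \ec}}$ contains $\as{\mdp, \streett{\SP}}$. I would establish this through Invariant~\ref{inv:geccontained}, by induction over the iterations of the outer while-loop: for every good Streett end-component $\ec$ there is a set $Y \supseteq \ec$ with $Y \in \good$ or $\ds(Y) \in Q$. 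The base case is immediate, since $Q$ is initialized with the MEC-decomposition of $\mdp$ and every end-component lies in some MEC.

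For the inductive step, let $\ds(S)$ be the data structure removed from $Q$ in the current iteration, write $S'$ for the vertex set remaining after the inner while-loop (which repeatedly removes $\bad(\ds(S))$ together with its random attractor in $\mdp[S]$), and let $\mathcal{\ec}_S$ denote the good end-components contained in $S$. As in the proofs of Propositions~\ref{prop:basiccompl} and~\ref{prop:streettimprcompl}, Corollary~\ref{cor:geccontained} gives $\ec \subseteq S'$ for every $\ec \in \mathcal{\ec}_S$, and since every end-component contains an edge, $\mdp[S']$ contains at least one edge whenever $\mathcal{\ec}_S \neq \emptyset$, so the for-loop over the levels $j$ is entered. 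The key structural fact I would import---combining Lemma~\ref{lem:decomp} with the termination argument already carried out in the runtime analysis---is that the for-loop always reaches a point at which it has identified a genuine top or bottom SCC $\scc$ of $\mdp[S']$ (genuine because $\sccalg$ is applied to $H_j[Z]$ with $Z = S' \setminus \reach(H_j[S'], \blue_j) \subseteq V \setminus \blue_j$) with either $\scc = S'$ or $\lvert \scc \rvert \le \lvert S' \rvert / 2$; if $\mdp[S']$ is strongly connected this occurs at level $j = \lceil \log \lvert S' \rvert \rceil$ with $\scc = S'$, in which case $S'$ is placed in $\good$ in Line~\ref{ldense:good} and every $\ec \in \mathcal{\ec}_S$ is covered.

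Otherwise $\scc$ is a proper top or bottom SCC of $\mdp[S']$, and by Lemma~\ref{lem:eccontained}~\upbr{a} each $\ec \in \mathcal{\ec}_S$ satisfies $\ec \subseteq \scc$ or $\ec \cap \scc = \emptyset$. If $\ec \subseteq \scc$: for a bottom SCC, $\ds(\scc)$ is added to $Q$ in Line~\ref{ldense:sccconstr}; for a top SCC the set $R$ of random vertices of $\scc$ with edges leaving $\scc$, together with its attractor $\at(\mdp[\scc], R)$, is removed first, but $\ec$ has no outgoing random edges so $R \subseteq V \setminus \ec$ and Lemma~\ref{lem:eccontained}~\upbr{b} yields $\ec \subseteq \scc \setminus \at(\mdp[\scc], R)$, which is exactly the set added to $Q$ in Line~\ref{ldense:sccconstr}. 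If $\ec \cap \scc = \emptyset$: then $\ec \subseteq S' \setminus \scc$, and the set put back into $Q$ in Line~\ref{ldense:addback1} resp.\ Line~\ref{ldense:addback2} is $S' \setminus \scc$ resp.\ $S' \setminus \at(\mdp[S'], \scc)$; since $\scc \subseteq V \setminus \ec$, Lemma~\ref{lem:eccontained}~\upbr{b} again guarantees that $\ec$ survives this removal. Thus after the iteration a superset of $\ec$ is in $\good$ or has its data structure in $Q$, which proves the invariant. As $Q$ is empty at termination, every good end-component is contained in $\bigcup_{\ec \in \good} \ec$, and $\as{\mdp, \streett{\SP}} \subseteq W$ follows.

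I expect the main obstacle to be the bookkeeping around the Hierarchical Graph Decomposition: one must be sure that the object $\scc$ returned via the sparsified graph $H_j[Z]$ really is a top or bottom SCC of the \emph{current} sub-MDP $\mdp[S']$ (so that Lemma~\ref{lem:eccontained}~\upbr{a} applies), and that---unlike in Algorithm~\ref{alg:streettimpr}, where the whole SCC decomposition is dealt with in a single iteration---processing only one ``small'' SCC per outer iteration loses no good end-component, precisely because the complement $S' \setminus \scc$ (with the appropriate random attractor removed) is always reinserted into $Q$. Everything else is a routine reuse of the arguments already developed for Algorithms~\ref{alg:streettbasic} and~\ref{alg:streettimpr}.
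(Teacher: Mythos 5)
Your proposal is correct and follows essentially the same route as the paper's proof: reduce completeness to Invariant~\ref{inv:geccontained} via Proposition~\ref{prop:geccompl-gen}, induct over the outer while-loop using Corollary~\ref{cor:geccontained} for the inner loop, Lemma~\ref{lem:decomp} to ensure the identified set $\scc$ is a genuine top or bottom SCC of $\mdp[S']$, and Lemma~\ref{lem:eccontained}~\upbr{a},\upbr{b} for the case split $\ec \subseteq \scc$ versus $\ec \subseteq S' \setminus \scc$. Your explicit separation of the top-SCC and bottom-SCC bookkeeping (lines~\ref{ldense:addback1}, \ref{ldense:addback2}, \ref{ldense:sccconstr}) is, if anything, slightly more careful than the paper's wording, but it is the same argument.
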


\begin{proof}
	Following the proof of Proposition~\ref{prop:streettimprcompl}, it is sufficient
	to show by induction over the iterations of the outer-while loop that Invariant~\ref{inv:geccontained} holds in Algorithm~\ref{alg:streettdense}. 
	
	Before the first iteration of the outer while-loop, the 
	set $Q$ is initialized with the MECs of $\mdp$, thus the induction base holds.
	Let $S$ be the set of vertices for which the data structure is 
	removed from $Q$ in an iteration of the outer while-loop and 
	let $\mathcal{\ec}_S$ be the set of good end-components 
	contained in $S$. 
	Let $S'$ be the subset of $S$ that is not removed in the inner while-loop.
	We have $\ec \subseteq S'$ for every $\ec \in \mathcal{\ec}_S$ by
	Corollary~\ref{cor:geccontained}.
	
	Since every end-component contains an edge, $\mdp[S']$ contains at least one
	edge if $\mathcal{\ec}_S$ is not empty. 
	Then either $S'$ and thus all $\ec \in \mathcal{\ec}_S$ are added to $\good$
	(Line~\ref{ldense:good}) or an SCC $\scc \subsetneq S'$ of $\mdp[S']$ is 
	identified in Line~\ref{ldense:scc} by Lemma~\ref{lem:decomp}. By 
	Lemma~\ref{lem:eccontained}~\upbr{a} each $\ec \in \mathcal{\ec}_S$ is either
	a subset of $\scc$ or of $S' \setminus \scc$.
	For $\ec \subseteq \scc$ we have $R \cap \ec = \emptyset$ (Line~\ref{ldense:randout})
	since $\ec$ has no outgoing random edges and thus $\ec \subseteq \scc \setminus 
	\at(\mdp[\scc],R)$ by Lemma~\ref{lem:eccontained}~\upbr{b}.
	For $\ec \subseteq S' \setminus \scc$ we have $\ec \cap \scc = \emptyset$
	and thus $\ec \subseteq S' \setminus \at(\mdp[S'],\scc)$
	by Lemma~\ref{lem:eccontained}~\upbr{b}.	
	The data structures of $\scc \setminus \at(\mdp[\scc],R)$
	and of $S' \setminus \at(\mdp[S'],\scc)$
	are added to $Q$ in lines~\ref{ldense:sccconstr} and either~\ref{ldense:addback1}
	or~\ref{ldense:addback2}, hence the invariant holds after the outer while-loop.
\end{proof}

\subsection{Algorithm for Sparse MDPs with Streett Objectives}\label{sec:streettsparse}

\begin{algorithm}
	\SetAlgoRefName{StreettMDPsparse}
	\caption{Algorithm for sparse MDPs with Streett Objectives}
	\label{alg:streettsparse}
	\SetKwInOut{Input}{Input}
	\SetKwInOut{Output}{Output}
	\BlankLine
	\Input{an MDP $\mdp = ((V, E), (\vo, \vr), \trans)$ and Streett pairs 
	$\SP= \{(L_i, U_i) \mid 1 \le i \le k\}$
	}
	\Output
	{
	$\as{\mdp, \streett{\SP}}$
	}
	\BlankLine
	$\good \gets \emptyset$; $Q \gets \emptyset$; $\mathcal{\ec} \gets \mecalg(\mdp)$\;
	\lFor{$\ec \in \mathcal{\ec}$}{
		$Q \gets Q \cup \set{\construct(\ec)}$
	}
	\While{$Q \ne \emptyset$}{
		remove some $\ds(S)$ from $Q$\;
		\While{$\bad(\ds(S)) \ne \emptyset$}{
			$A \gets \at(\mdp[\ec], \bad(\ds(S)))$\;
			$(S, \ds(S)) \gets \remove(S, \ds(S), A)$\;
			add label $h$ ($t$) to vertices that just lost 
			an incoming (outgoing) edge\label{lsparse:label1}\;
		}
		$H \gets \set{v \in S \mid h \in label(v)}$; $T \gets \set{v \in S \mid t \in label(v)}$\;
		\If{$\mdp[S]$ contains at least one edge}{
			\lIf{$\lvert H \rvert + \lvert T \rvert = 0$}{
				$\good \gets \good \cup \set{S}$\label{lsparse:good1}
			}\ElseIf{$\lvert H \rvert + \lvert T \rvert \ge \sqrt{m / \log n}$}{
			\tcc*[l]{like Algorithm~\ref{alg:streettimpr} plus maintaining labels}
				remove all labels from $S$\;
				$\mathcal{\scc} \gets \allsccalg(\mdp[S])$; $S' \gets S$\;
				\For{$\scc \in \mathcal{\scc}$}{
					$A \gets \at(\mdp[\scc], \set{v \in \vr \cap \scc \mid \exists w \in 
					S' \setminus \scc \text{ s.t.\ } (v, w) \in E})$\;
					add label $h$ ($t$) to vertices with incoming (outgoing) edge from (to) $A$\label{lsparse:label2}\;
					\lIf{$\scc$ is largest SCC in $\mathcal{\scc}$}{
						$(S, \ds(S)) \gets \remove(S, \ds(S), A)$
					}\Else{
						$(S, \ds(S)) \gets \remove(S, \ds(S), \scc)$; $\scc \gets \scc \setminus A$\;
						$Q \gets Q \cup \set{\construct(\scc)}$\;
					}
				}
				$Q \gets Q \cup \ds(S)$\;
			}\Else{
				Search in lock-step from each $v \in T$ in $G[S]$ and from each 
				$v \in H$ in $\rev[S]$, terminate when first search has found a bottom SCC~$\scc$\label{lsparse:scc}\;
				\tcc*[l]{like Alg.~\ref{alg:streettdense} plus maintaining labels}
				\lIf{$\scc = S$}{
					$\good \gets \good \cup \set{S}$\label{lsparse:good2}
				}\Else{
					remove all labels from $\scc$\;
					\If(\tcc*[f]{top SCC}){$\scc$ is bottom SCC in $\rev[S]$}{
						$Q \gets Q \cup \remove(S, \ds(S), \scc)$\label{lsparse:addback1}\;
						$\scc \gets \scc \setminus \at(\mdp[\scc], \set{v \in \vr \cap \scc \mid 
						\exists u \in S \setminus \scc \text{ s.t.\ } (v, u) \in E})$\label{lsparse:randout}\;
					}\Else(\tcc*[f]{bottom SCC}){
					$Q \gets Q \cup \remove(S, \ds(S), \at(\mdp[S],\scc))$\label{lsparse:addback2}\;
					}
					add label $h$ ($t$) to vertices that just lost an incoming (outgoing) edge\label{lsparse:label3}\;
					$Q \gets Q \cup \construct(\scc)$\label{lsparse:sccconstr}\;
				}
			}
		}
	}
	\Return{$\as{\mdp, \reacht{\bigcup_{\ec \in \good} \ec}}$}\;
\end{algorithm}

Algorithm~\ref{alg:streettsparse} combines Algorithm~\ref{alg:streettimpr}
with the ideas of the MEC-algorithm for sparse MDPs of~\cite{ChatterjeeH14} and 
the algorithm for graphs with Streett objectives of~\cite{HenzingerT96}.
As for dense graphs, the difference to
Algorithm~\ref{alg:streettimpr} lies in the search for strongly connected 
components in the sub-MDP $\mdp[S]$ induced by a vertex set~$S$ for which 
the data structure was maintained in $Q$ and then some vertices (and their 
random attractor) might have been removed from it. The algorithm is based 
on the following observation: Whenever a strongly connected component $\scc$
is not strongly connected after some vertices $A$ were removed from it,
then \upbr{a} there is a top and a bottom SCC in $\mdp[\scc \setminus A]$ and 
\upbr{b} some vertex of the top SCC had an incoming edge from a vertex of~$A$
and some vertex of the bottom SCC had an outgoing edge to a vertex of~$A$.
We label vertices that lost an incoming edge since the last SCC computation 
with $h$ (for head) and vertices that lost an outgoing edge with $t$ (for tail).
If more than $\sqrt{m / \log n}$ vertices are labeled, we remove all labels and 
compute SCCs as in Algorithm~\ref{alg:streettimpr}; this can happen at most 
$\sqrt{m \log n}$ times. Otherwise we search for the smallest top or bottom SCC 
of $\mdp[S]$ by searching in \emph{lock-step} from all labeled vertices.
Lock-step means that one step in each of the searches is executed before 
the next step of a search is started and all searches are stopped as soon as 
one search finishes. The search for top SCCs is done by 
searching for bottom SCCs in the reverse graph. Tarjan's depth-first search 
based SCC algorithm detects a bottom SCC in time proportional to the number 
of edges in the bottom SCC when the search is started from a vertex inside
the bottom SCC. As there are at most $\sqrt{m / \log n}$ parallel searches,
the time for all the lock-step searches is $O(\sqrt{m / \log n})$ times the 
number of edges in the smallest top or bottom SCC of $\mdp[S]$. Since 
each edge can be in the smallest SCC at most $O(\log n)$ times, this leads to 
a total runtime of $O(m \sqrt{m \log n})$.
Whenever an SCC is identified, the labels of its vertices are removed. The 
Invariants~\ref{inv:streettnorand} and~\ref{inv:geccontained} are maintained
as in Algorithm~\ref{alg:streettimpr}.

\begin{lemma}[Label Invariant]\label{lem:streettsparselabel}
	In Algorithm~\ref{alg:streettsparse} the following invariant is maintained
	for every set $S$ for which the data structure $\ds(S)$ is in $Q$: 
	Either \upbr{1} no vertex of $S$ is labeled and $\mdp[S]$ is strongly connected or 
	\upbr{2} in each top SCC of $\mdp[S]$ at least one vertex is labeled with~$h$
	and in each bottom SCC of $\mdp[S]$ at least one vertex is labeled with~$t$.
\end{lemma}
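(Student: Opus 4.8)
The plan is to prove this by induction over the iterations of the outer while-loop, showing that the two alternatives together hold for every set whose data structure currently resides in $Q$; since the labels are attributes of the (pairwise disjoint) vertices occurring in the sets of $Q$, a set that is left in $Q$ carries exactly the labels that its vertices have accumulated by the time it is next pulled, so it suffices to verify the statement at the end of each outer iteration. The base case is immediate: $Q$ is initialized with the maximal end-components of $\mdp$, each of which induces a strongly connected sub-MDP and contains no labeled vertex, so alternative~(1) holds.

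The single technical tool I would isolate is a \emph{removal lemma}: if $\mdp[\scc]$ is strongly connected and $A \subseteq \scc$, then in $\mdp[\scc \setminus A]$ every top SCC contains a vertex with an incoming edge from $A$ in $\mdp[\scc]$, every bottom SCC contains a vertex with an outgoing edge to $A$ in $\mdp[\scc]$, and moreover, whenever $A \neq \emptyset$, there is at least one vertex of $\scc \setminus A$ with an incoming edge from $A$ and at least one with an outgoing edge to $A$ --- even if $\mdp[\scc \setminus A]$ is still strongly connected. The proof is short: a top SCC $C$ of $\mdp[\scc \setminus A]$ receives no edge from $(\scc \setminus A) \setminus C$, hence in the strongly connected graph $\mdp[\scc]$ every edge entering $C$ originates in $A$, and at least one such edge exists because $C \subsetneq \scc$; the bottom-SCC statement is symmetric, and the ``$A \neq \emptyset$'' addendum follows from a path in $\mdp[\scc]$ that leaves $\scc \setminus A$ into $A$ and one that returns. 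Since the label operations in lines~\ref{lsparse:label1},~\ref{lsparse:label2}, and~\ref{lsparse:label3} attach $h$ to precisely those vertices of the component currently being (re)built that lost an incoming edge and $t$ to those that lost an outgoing edge, the removal lemma discharges every point at which a set is placed into $Q$ \emph{starting from a set already known to induce a strongly connected sub-MDP}: the data structure reused for the largest SCC after an SCC computation, the $\construct$ calls for the smaller SCCs (e.g.\ line~\ref{lsparse:sccconstr}), and the pair $\scc$ / $S \setminus \scc$ (resp.\ $\scc$ / $S \setminus \at(\mdp[S],\scc)$) split off when the current strongly connected set $S$ has a ``small'' top SCC (resp.\ bottom SCC) $\scc$ (lines~\ref{lsparse:addback1},~\ref{lsparse:randout},~\ref{lsparse:addback2}). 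In each of these cases the newly stored set has the shape $\scc' \setminus A$ with $\mdp[\scc']$ strongly connected and $A$ a random attractor inside $\scc'$ --- Invariant~\ref{inv:streettnorand} ensures that the attractors removed are genuine random attractors --- so the removal lemma re-establishes alternative~(1) or~(2) for $\mdp[\scc' \setminus A]$.

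The part I expect to be the main obstacle is the inner bad-vertex-removal loop, because the set $S$ being shrunk there is, by the inductive hypothesis, only known to satisfy alternative~(1) or~(2); thus $\mdp[S]$ need not be strongly connected and the removal lemma does not apply verbatim. I would argue this case by hand. After removing $A = \at(\mdp[S], \bad(\ds(S)))$ and running line~\ref{lsparse:label1}, take a top SCC $C$ of $\mdp[S \setminus A]$ and let $D$ be the SCC of $\mdp[S]$ with $C \subseteq D$. If $C \subsetneq D$, then $C$ is not an SCC of $\mdp[S]$, so in $\mdp[S]$ it received an edge from $D \setminus C$; any source of such an edge that survived in $S \setminus A$ would contradict $C$ being a top SCC, so $C$ lost an incoming edge and therefore contains an $h$-labeled vertex. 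If $C = D$, then either $D$ was already a top SCC of $\mdp[S]$ --- and keeps its $h$-labeled vertex, as labels are not cleared in this loop and $D \cap A = \emptyset$ --- or $D$ received an edge from a different SCC of $\mdp[S]$ which, by the same argument, must lie in $A$, so $D$ gains an $h$-labeled vertex. The symmetric argument for bottom SCCs and $t$-labels, together with the subcase where $\mdp[S]$ is already strongly connected (in which the plain removal lemma applies), gives the sub-invariant after each inner iteration, and in particular shows that $\lvert H \rvert + \lvert T \rvert = 0$ forces $\mdp[S]$ to be strongly connected, which is what lines~\ref{lsparse:good1} and~\ref{lsparse:good2} require. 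The last detail to check is that a set never re-enters $Q$ in a strongly connected state while carrying a label of only one kind: this holds because such a set is added only when the removed random attractor is empty --- in which case it carries no label at all and matches alternative~(1) --- whereas a nonempty removed attractor either disconnects the set or, by the removal lemma, leaves both an $h$- and a $t$-labeled vertex behind.
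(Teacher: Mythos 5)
Your overall strategy (induction over the outer while-loop, a removal/labeling observation, inheritance of old labels) is the same as the paper's, and your inner-loop case analysis is exactly the content of the paper's one-line Observation. However, there is a concrete flaw in how you discharge the lock-step branch. At lines~\ref{lsparse:addback1} and~\ref{lsparse:addback2} you treat the split as an application of your removal lemma to ``the current strongly connected set $S$'', but at that point $\mdp[S]$ is provably \emph{not} strongly connected: a top or bottom SCC $\scc \subsetneq S$ has just been found (if $\scc = S$ the set goes to $\good$ and nothing is added to $Q$). So the hypothesis of your removal lemma fails for the remainder sets $S \setminus \scc$ and $S \setminus \at(\mdp[S],\scc)$, and its conclusion is genuinely false there as a statement about newly added labels: when $\scc$ is a top SCC, no vertex of $S \setminus \scc$ loses an outgoing edge, so Line~\ref{lsparse:label3} adds only $h$-labels to that set; the required $t$-labels in its bottom SCCs must be \emph{inherited} from the invariant (alternative~(2)) already holding for $S$, not produced by the removal. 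The same applies symmetrically to $h$-labels of $S \setminus \at(\mdp[S],\scc)$ in the bottom-SCC case, and it also undermines your closing claim that a nonempty removed set ``leaves both an $h$- and a $t$-labeled vertex behind.''

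The repair is available inside your own proposal: the hand argument you give for the inner bad-vertex loop (top SCC $C$ of the shrunken set either lost an incoming edge from the removed part, or was already a top SCC of the old set and keeps its $h$-labeled vertex, the strongly connected no-label case being impossible unless nothing was removed) is valid for an arbitrary set satisfying the invariant, not just for the inner loop. Applying it with $W = S$ and $Y = \scc$ (resp.\ $Y = \at(\mdp[S],\scc)$), with the labels of Line~\ref{lsparse:label3}, is precisely what the paper's Observation does uniformly for all removal steps, and it closes the gap; but as written, your paragraph assigning these cases to the strongly-connected removal lemma is an incorrect step, not just an omission of detail.
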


\begin{proof}
	The proof is by induction over the iterations of the outer while-loop.
	After the initialization of $Q$ with the MECs of $\mdp$ no vertex is labeled
	and every set $S$ with $\ds(S) \in Q$ is strongly connected.
	Let now $S$ denote the set for which $\ds(S)$ is removed from $Q$ 
	at the beginning of an iteration of the outer while-loop and assume the 
	invariant holds for $S$.
	\begin{observation*}
	We have for non-empty vertex sets $W$ and $Z = W \setminus Y$
	with $Y \subsetneq W$ that if $C$ is a top \upbr{bottom} SCC in $\mdp[Z]$
	but had incoming \upbr{outgoing} edges in $P[W]$, then these incoming 
	\upbr{outgoing} edges were from \upbr{to} vertices in $Y$. Thus when the 
	invariant holds for $W$ and
	we label each vertex of $Z$ with an incoming edge from $Y$ with $h$ and 
	each vertex of $Z$ with an outgoing edge to $Y$ with $t$, then the invariant
	holds for $Z$.
	\end{observation*}
	By this observation the invariant remains to hold for $S$ after the inner 
	while-loop.
	In the case $\lvert H \rvert + \lvert T \rvert \ge \sqrt{m / \log n}$
	all labels are removed from $S$ and then each SCC $\scc$ of $\mdp[S]$ is 
	considered separately. Note that for each $\scc$ the invariant holds
	and thus the invariant remains to hold for the set $\scc$ added to $Q$ after 
	the vertices in $A$ were removed and the corresponding labels were added in 
	Line~\ref{lsparse:label2}.
	In the case $\lvert H \rvert + \lvert T \rvert < \sqrt{m / \log n}$
	a bottom or top SCC~$\scc$ of $\mdp[S]$ is identified and all labels
	of $\scc$ are removed. The invariant holds for $\scc$ and thus the invariant remains 
	to hold for the set $\scc$ added to $Q$ after 
	vertices were removed from $\scc$ in Line~\ref{lsparse:randout}
	and the corresponding labels were added in Line~\ref{lsparse:label3}.
	By the above observation with $W = S$ and $Y = \at(\mdp[S], \scc)$
	the invariant also holds for the set $S \setminus \at(\mdp[S], \scc)$
	for which the data structure is added to $Q$ after the corresponding
	labels are added in Line~\ref{lsparse:label3}.
\end{proof}

\begin{proposition}[Runtime of Algorithm~\ref{alg:streettsparse}]
	Algorithm~\ref{alg:streettsparse} takes $O(m \sqrt{m \log n} + b \log n)$ time.
\end{proposition}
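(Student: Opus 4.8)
The plan is to charge the running time of Algorithm~\ref{alg:streettsparse} to its phases separately and then sum. The initialization (computing the MEC‑decomposition, the $O(k)$ preprocessing of Lemma~\ref{lem:ds}, and one $\construct$ per MEC) costs $O(\textsc{MEC} + k + b + n)$; using the sparse bound $\textsc{MEC} = O(m\sqrt{m})$ of~\cite{ChatterjeeH14}, the assumption $n = O(m)$, and (after discarding trivially satisfied Streett pairs, so that every $L_i$ is nonempty) $k \le b$, this is within the claimed bound. By Theorem~\ref{th:timedrmdp} the final almost‑sure reachability computation takes $O(\textsc{MEC})$ time. For the while‑loop I first collect the ``cheap'' contributions: finding bad vertices via $\bad(\ds(S))$ costs $O(1)$ per call and the bad vertices are removed immediately, so over the whole run finding and removing bad vertices together with all random‑attractor computations cost $O(m + b)$ (each edge lies in at most one attractor computation, each vertex and each bit is removed at most once); and every call to $\remove$ and $\construct$ that either discards vertices forever or moves them into a set whose vertex set (equivalently, whose edge count) has at least halved is charged $O(\log n)$ times per vertex and per bit, giving $O((n+b)\log n)$, which is absorbed into $O(m\sqrt{m\log n} + b\log n)$ because $n\log n = O(m\sqrt{m\log n})$ when $m = \Omega(n)$.

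It remains to bound the time spent searching for strongly connected components of a set $S$ pulled from $Q$, which splits into the two cases of the algorithm. A label is created only when a vertex loses an incoming or an outgoing edge, and each edge is deleted at most once, so at most $O(m)$ labels are created in total. Whenever $|H| + |T| \ge \sqrt{m/\log n}$ the algorithm recomputes all SCCs of $\mdp[S]$ in $O(|E(\mdp[S])|) = O(m)$ time and then removes all labels of $S$; since each such step destroys at least $\sqrt{m/\log n}$ labels, this case occurs at most $O(m/\sqrt{m/\log n}) = O(\sqrt{m\log n})$ times, contributing $O(m\sqrt{m\log n})$ in total (its attractor removals and data‑structure updates were already accounted for above).

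The heart of the proof is the lock‑step case $0 < |H| + |T| < \sqrt{m/\log n}$. By the label invariant (Lemma~\ref{lem:streettsparselabel}) every bottom SCC of $\mdp[S]$ contains a $t$‑labeled vertex and every top SCC an $h$‑labeled vertex, so among the at most $\sqrt{m/\log n}$ parallel Tarjan searches there is one started inside a smallest‑by‑edge‑count top‑or‑bottom SCC $\scc$ of $\mdp[S]$; that search never leaves $\scc$ and completes it after $\Theta(|E(\scc)|)$ edge traversals, so the lock‑step stops after $O(|E(\scc)|)$ rounds and (up to ties) returns such a smallest $\scc$. A short case analysis — a smallest top SCC and a smallest bottom SCC are either vertex‑disjoint or coincide in an isolated SCC — shows $|E(\scc)| \le |E(\mdp[S])|/2$ whenever $\mdp[S]$ is not strongly connected. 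Hence the search costs $O((|H|+|T|)\cdot |E(\scc)|) = O(\sqrt{m/\log n}\cdot |E(\scc)|)$, which I charge $O(\sqrt{m/\log n})$ to each edge of $\scc$; since $\scc$ becomes its own piece with $|E(\scc)| \le |E(\mdp[S])|/2$, every edge lies in such a returned small SCC at most $O(\log n)$ times, so all lock‑step searches returning a proper sub‑SCC cost $O(m\cdot\log n\cdot\sqrt{m/\log n}) = O(m\sqrt{m\log n})$ in total. In the remaining sub‑case the lock‑step returns $\scc = S$, i.e.\ $\mdp[S]$ is strongly connected; then each of the searches explores at most $|E(\mdp[S])|$ edges, the step costs $O(\sqrt{m/\log n}\cdot|E(\mdp[S])|)$, and $S$ is never considered again, so summing over the pairwise‑disjoint such sets $S$ gives $O(\sqrt{m/\log n}\cdot m) = O(m\sqrt{m\log n})$. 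Adding all contributions yields the stated $O(m\sqrt{m\log n} + b\log n)$ bound.

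The step I expect to be the main obstacle is exactly the last one: making the lock‑step analysis rigorous in the MDP setting. Concretely one must (i) argue that the first search to complete indeed returns a top or bottom SCC whose edge count is at most half that of $\mdp[S]$, which needs both the edge‑disjointness argument for a smallest top and a smallest bottom SCC and the edge‑count halving charging; (ii) confirm that maintaining the $h/t$ labels across the additional random‑attractor removals in Lines~\ref{lsparse:label1}, \ref{lsparse:label2}, and~\ref{lsparse:label3} preserves Lemma~\ref{lem:streettsparselabel}, so that the ``there is a labeled vertex inside the small SCC'' claim is available; and (iii) pick the threshold $\sqrt{m/\log n}$ so that the ``many labels'' case and the ``lock‑step'' case balance to $O(m\sqrt{m\log n})$. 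Part~(ii) is already established by Lemma~\ref{lem:streettsparselabel}; parts~(i) and~(iii) are the genuinely new ingredients, and their verification is where the bulk of the work lies.
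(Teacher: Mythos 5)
Your proof is correct and takes essentially the same route as the paper's: the same phase decomposition (initialization plus one MEC computation, $O((n+b)\log n)$ for the data-structure operations, at most $\sqrt{m\log n}$ occurrences of the many-labels case at $O(m)$ each, and charging the lock-step searches to the edges of the identified top/bottom SCC with $O(\log n)$ halvings per edge, relying on the label invariant so that a search starts inside every top and bottom SCC). The only cosmetic differences are that you carry out the halving argument in terms of edge counts (via a disjoint top/bottom SCC pair) where the paper states it in terms of vertex counts, and that you treat the subcase where the identified SCC equals the whole set $S$ explicitly; neither changes the bound.
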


\begin{proof}
	Using the data structure of Lemma~\ref{lem:ds} (\cite{HenzingerT96}),
	the initialization phase of Algorithm~\ref{alg:streettsparse} takes
	$O(\textsc{MEC} + b + n)$ time, which is in $O(m\sqrt{m} + b)$~\cite{ChatterjeeH14}.
	Further by Theorem~\ref{th:timedrmdp} the almost-sure reachability computation
	after the outer while-loop can be done in $O(\textsc{MEC})$ time. 
	Removing bad vertices takes total time~$O(n + b)$ by Lemma~\ref{lem:ds}. 
	Since a label is added only when an edge is not considered further by the 
	algorithm, the total time for adding and removing labels is $O(m)$.
	Whenever a random attractor is computed, its edges are not considered further;
	thus all attractor computations take $O(m)$ total time by Definition~\ref{def:attr}.
	Note that whenever a graph is not strongly connected, it contains a top SCC and 
	a bottom SCC and one of them has at most half of the vertices. Thus whenever 
	a top or bottom SCC $\scc$ with $\scc \subsetneq S$ is identified in 
	Line~\ref{lsparse:scc}, then $\lvert \scc \rvert \le \lvert S \rvert / 2$.
	This implies by Lemma~\ref{lem:streettsparselabel}
	that whenever $\remove$ or $\construct$ are called (after the
	initialization of $Q$), the vertices that are removed resp.\ added are either
	\upbr{1} vertices for which the size of the SCC containing them was at least halved
	or \upbr{2} vertices that are not considered further. Case \upbr{1} can happen
	at most $O(\log n)$ times, thus all calls to $\remove$ or $\construct$ take
	total time $O((n+b)\log n)$ by Lemma~\ref{lem:ds}.
	
	It remains to bound the time for identifying SCCs and determining the
	random boundary vertices $R =  \set{v \in \vr \cap \scc \mid 
			\exists u \in S \setminus \scc \text{ s.t.\ } (v, u) \in E})$ in
	Case~1, $\lvert H \rvert + \lvert T \rvert \ge \sqrt{m / \log n}$,
	and Case~2, $\lvert H \rvert + \lvert T \rvert < \sqrt{m / \log n}$.
	Since labels are added only when edges are not considered further and all 
	labels of the considered vertices are deleted when Case~1 occurs, Case~1 can 
	happen at most $\sqrt{m \log n}$ times. Thus the total time for Case~1 can be 
	bounded by $O(m \sqrt{m \log n})$. In Case~2 we charge the time for the 
	$O(\sqrt{m / \log n})$ lock-step searches to the edges in the identified 
	SCC~$\scc$. With Tarjan's SCC algorithm~\cite{T72} a bottom SCC is identified 
	in time proportional to the number of edges in the bottom SCC when the search
	is started at a vertex in the bottom SCC, which is in 
	Algorithm~\ref{alg:streettsparse} guaranteed by Lemma~\ref{lem:streettsparselabel}
	for both top and bottom SCCs. Since always the smallest top or bottom SCC in 
	$\mdp[S]$ is identified, each edge is charged at most $O(\log n)$ times.
	Thus the total time for identifying SCCs in Case~2 is $O(m \sqrt{m \log n})$.
	Determining the random boundary vertices $R$ in Case~2 can be charged to 
	the edges in $\scc$
	and to the edges from $\scc$ to $S \setminus \scc$, which are then not considered 
	further by the algorithm. Thus the total runtime of the algorithm is 
	$O(m \sqrt{m \log n})$.
\end{proof}

\begin{proposition}[Correctness of Algorithm~\ref{alg:streettsparse}]
	Let $W$ be the set returned by Algorithm~\ref{alg:streettsparse}.
	We have $W = \as{\mdp, \streett{\SP}}$.
\end{proposition}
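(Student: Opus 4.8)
The plan is to prove the two inclusions $W \subseteq \as{\mdp, \streett{\SP}}$ (soundness) and $\as{\mdp, \streett{\SP}} \subseteq W$ (completeness) separately, mirroring the arguments for Algorithms~\ref{alg:streettimpr} and~\ref{alg:streettdense}, since Algorithm~\ref{alg:streettsparse} differs from them only in \emph{how} it detects whether $\mdp[S]$ is still strongly connected and, if not, extracts a small top or bottom SCC. Both directions reduce to establishing Invariant~\ref{inv:streettnorand} (no random edges leave any vertex set whose data structure sits in $Q$), Invariant~\ref{inv:geccontained} (every good end-component is contained in some set in $\good$ or in $Q$), and the Label Invariant of Lemma~\ref{lem:streettsparselabel}.

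For soundness I would invoke Corollary~\ref{cor:gecsound-gen} and show that every set $S$ added to $\good$ (in Line~\ref{lsparse:good1} or Line~\ref{lsparse:good2}) is a good end-component. The termination of the inner \textbf{while}-loop guarantees that for every $i$ either $L_i \cap S = \emptyset$ or $U_i \cap S \ne \emptyset$; the test that $\mdp[S]$ contains an edge together with the strong-connectivity certificate (either $\lvert H \rvert + \lvert T \rvert = 0$ combined with case~(1) of Lemma~\ref{lem:streettsparselabel}, or the lock-step search returning $\scc = S$) guarantees that $\mdp[S]$ is a nontrivial strongly connected sub-MDP; and Invariant~\ref{inv:streettnorand} guarantees it has no outgoing random edges, so $S$ is an end-component, hence a good one. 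The invariant itself is proved by induction over the outer \textbf{while}-loop exactly as in Proposition~\ref{prop:streettimprsound}: $Q$ is initialized with the MECs of $\mdp$, and removing bad vertices together with their random attractor, removing the random attractor of the random boundary vertices of an SCC, and taking attractors of whole SCCs all preserve the ``no outgoing random edge'' property for every set re-inserted into $Q$.

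For completeness I would invoke Proposition~\ref{prop:geccompl-gen} and prove Invariant~\ref{inv:geccontained} by induction over the outer \textbf{while}-loop, following Proposition~\ref{prop:streettimprcompl}. When a set $S$ is pulled from $Q$, bad-vertex-attractor removal keeps every good end-component contained in $S$ inside the remaining set by Corollary~\ref{cor:geccontained}; then either $S$ (now strongly connected) is added to $\good$, or $\mdp[S]$ is decomposed and by Lemma~\ref{lem:eccontained}~\upbr{a} each good end-component lies in a single SCC~$\scc$, is disjoint from the random boundary vertices~$R$ of~$\scc$, and hence by Lemma~\ref{lem:eccontained}~\upbr{b} survives into the set $\scc \setminus \at(\mdp[\scc],R)$ whose data structure is constructed in Line~\ref{lsparse:sccconstr} (or reattached via the bottom/top-SCC branches); in the lock-step case the complementary set $S \setminus \at(\mdp[S],\scc)$ is handled symmetrically. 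Since $Q$ is empty at termination, the union of the sets in $\good$ equals the union of all good end-components, and the final reachability call returns $\as{\mdp, \reacht{\bigcup_{\ec \in \good}\ec}} = \as{\mdp, \streett{\SP}}$ by Corollary~\ref{cor:gecsound-gen} and Proposition~\ref{prop:geccompl-gen}.

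The main obstacle is the new SCC-detection routine: one must argue that the two branches (full SCC recomputation when $\lvert H \rvert + \lvert T \rvert \ge \sqrt{m/\log n}$, lock-step search otherwise) correctly distinguish ``$\mdp[S]$ strongly connected'' from ``not,'' and in the latter case always return a genuine top or bottom SCC of $\mdp[S]$. This is exactly what Lemma~\ref{lem:streettsparselabel} provides: whenever $\mdp[S]$ is not strongly connected, every top SCC carries an $h$-label and every bottom SCC a $t$-label, so starting Tarjan's search in lock-step from all labeled vertices is guaranteed to be launched from inside some top or bottom SCC and to return the smallest one. Once this is settled, the rest of the argument is a verbatim adaptation of the proofs for Algorithms~\ref{alg:streettimpr} and~\ref{alg:streettdense}, since the maintenance of Invariants~\ref{inv:streettnorand} and~\ref{inv:geccontained} is insensitive to whether one SCC or all SCCs are processed per iteration; one only additionally has to track, as in Lemma~\ref{lem:streettsparselabel}, that the labels are reset and recreated consistently whenever vertices are removed.
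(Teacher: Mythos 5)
Your proposal is correct and follows essentially the same route as the paper: it reduces correctness to Invariants~\ref{inv:streettnorand} and~\ref{inv:geccontained} (preserved because each branch of the outer loop performs the same operations as Algorithm~\ref{alg:streettimpr} or Algorithm~\ref{alg:streettdense}), uses Lemma~\ref{lem:streettsparselabel} to certify strong connectivity in the $\lvert H \rvert + \lvert T \rvert = 0$ case and the validity of the lock-step top/bottom SCC detection, and concludes via Corollary~\ref{cor:gecsound-gen} and Proposition~\ref{prop:geccompl-gen}. Your write-up is merely more explicit than the paper's terse argument, with no substantive difference in the proof idea.
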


\begin{proof}
	Lemma~\ref{lem:streettsparselabel} implies that whenever a vertex set is added 
	to $\good$ in Line~\ref{lsparse:good1}, it induces a strongly connected sub-MDP.
	Thus we have that immediately before a set of vertices~$\scc$ is added to 
	$\good$ in Line~\ref{lsparse:good1} or Line~\ref{lsparse:good2}, it is checked that 
	$\mdp[\scc]$ contains at least one edge, is strongly connected, and $\bad(\ds(C))$
	is empty. For the soundness and completeness of Algorithm~\ref{alg:streettsparse} 
	it remains to show the Invariants~\ref{inv:streettnorand} and~\ref{inv:geccontained}.
	We have for each iteration of the outer while-loop: 
	The inner while-loop is the same as in Algorithms~\ref{alg:streettimpr} 
	and~\ref{alg:streettdense}. In the case $\lvert H \rvert + \lvert T \rvert = 0$,
	the currently considered set of vertices is added to $\good$ and
	no set is added to $Q$. If $\lvert H \rvert + \lvert T \rvert \ge \sqrt{m / \log n}$,
	the same operations as in Algorithm~\ref{alg:streettimpr} are performed.
	If $\lvert H \rvert + \lvert T \rvert < \sqrt{m / \log n}$, 
	like in Algorithm~\ref{alg:streettdense},
	either a top or a bottom SCC is identified and then the same operations as in 
	Algorithm~\ref{alg:streettdense} are applied to the identified SCC and the 
	remaining vertices. As the operations in 
	Algorithms~\ref{alg:streettimpr} and~\ref{alg:streettdense} preserve 
	the invariants, this is also true for Algorithm~\ref{alg:streettsparse}.	
\end{proof}

 % rabin_alg
\section{MDPs with Rabin and Disjunctive Büchi and coBüchi Objectives}\label{sec:rabin}

In the first part of this section we prove the following conditional lower bounds for Rabin, 
and disjunctive Büchi and coBüchi objectives.

\begin{theorem}
  Assuming STC, there is no combinatorial $O(n^{3-\epsilon})$ or $O((kn^2)^{1-\epsilon})$ algorithm for 
  each of the following problems:
  \begin{enumerate}
   \item computing the a.s.~winning set in an MDP with a disjunctive Büchi query;
   \item computing the winning set in a graph with a disjunctive coBüchi objective and
	 thus also computing the a.s.~winning set in an MDP for disjunctive coBüchi objective or 
						 a disjunctive coBüchi query;
   \item computing the a.s.~winning set in an MDP with a Rabin objective.
  \end{enumerate}
\end{theorem}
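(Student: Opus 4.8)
The plan is to obtain all three bounds from the STC-hardness results already proved for disjunctive reachability queries in MDPs (Theorem~\ref{thm:reach_STChard}) and disjunctive safety objectives/queries in graphs (Theorem~\ref{thm:safety_STChard}), using only linear-time reductions that preserve $n$, $m$, and the number $k$ of combined objectives (or Rabin pairs) up to constant factors. In each case the reduction produces an instance with $k = \Theta(n)$, so the two claimed forms $O(n^{3-\epsilon})$ and $O((kn^2)^{1-\epsilon})$ coincide, and a faster algorithm would contradict the nonexistence of a combinatorial $O(n^{3-\epsilon})$ triangle-detection algorithm (Conjecture~\ref{conj:triangle}).

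For part~(1) I would simply chain Observation~\ref{obs:Reachability_Buchi}, which reduces disjunctive reachability queries in MDPs to disjunctive Büchi queries in MDPs in linear time (adding only $O(n)$ vertices and edges and leaving $k$ unchanged), with Theorem~\ref{thm:reach_STChard}.

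For part~(2) I would reuse Reduction~\ref{red:TriangletoGraphs} verbatim, with the same target sets $\target_v = (V^1 \setminus \set{v^1}) \cup (V^4 \setminus \set{v^4})$, and argue that $G$ contains a triangle if and only if $s$ is in the winning set of $(G',\bigvee_{v} \objsty{coB{\"u}chi}{\target_v})$, if and only if this winning set is non-empty. The ``triangle $\Rightarrow$ play'' direction is immediate: the cyclic play built in the safety proof stays inside $\set{s,v^1,x^2,y^3,v^4}$ for suitable $x,y$ and hence lies in $\objsty{Safety}{\target_v} \subseteq \objsty{coB{\"u}chi}{\target_v}$. For the converse, the crucial point is that $G' \setminus \set{s}$ is acyclic, so every infinite play visits $s$ infinitely often; if a play visits $\target_v$ only finitely often, then from some point on each segment between two consecutive occurrences of $s$ is forced by the layered structure to be of the form $s, v^1, x^2, y^3, v^4, s$, and the edges of $G'$ then give $(v,x),(x,y),(y,v) \in E$, a triangle in $G$. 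Since $G'$ is a graph, i.e.\ an MDP with $\vr = \emptyset$, on which disjunctive objectives and disjunctive queries coincide (Observation~\ref{obs:disjgraph}) and on which the almost-sure winning set equals the set of vertices with a winning play, the same instance simultaneously witnesses hardness for computing the winning set in graphs and for computing the almost-sure winning set in MDPs, for both disjunctive coBüchi objectives and disjunctive coBüchi queries; computing such a winning set is at least as hard as deciding its non-emptiness or membership of $s$.

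For part~(3) I would invoke Observation~\ref{obs:DisjBuchiRabin}: the disjunctive coBüchi objective $\bigvee_{i} \objsty{coB{\"u}chi}{\target_i}$ coincides with the Rabin objective for the pairs $\set{(V, \target_i) \mid 1 \le i \le k}$, so its almost-sure winning set equals that of this Rabin objective; applying part~(2) to the MDP obtained from Reduction~\ref{red:TriangletoGraphs}, which has $k = \Theta(n)$ Rabin pairs, transfers the lower bound to Rabin objectives in MDPs. The only step requiring genuine care is the converse direction of part~(2): one must check that weakening safety to coBüchi --- allowing finitely many visits to $\target_v$ --- does not introduce spurious winning plays, and this is exactly where the acyclicity of $G' \setminus \set{s}$ is used to confine the decisive behaviour to a single cycle through $s$; everything else is bookkeeping about the sizes of the reductions.
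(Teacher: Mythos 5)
Your proposal is correct and essentially mirrors the paper's proof: for parts (1) and (3) you use exactly the same chain as the paper, namely Observation~\ref{obs:Reachability_Buchi} combined with Theorem~\ref{thm:reach_STChard}, and Observation~\ref{obs:DisjBuchiRabin} applied to the coB\"uchi case. For part (2) the paper transfers the hardness of Theorem~\ref{thm:safety_STChard} via the general non-emptiness equivalence of safety and coB\"uchi winning sets (Observation~\ref{obs:nonempty_safety_coBuchi}) plus the remark that $s$ lies on every infinite path of $G'$, whereas you re-verify the coB\"uchi correctness of Reduction~\ref{red:TriangletoGraphs} directly using the acyclicity of $G' \setminus \set{s}$ --- an equivalent, slightly more self-contained argument for the same reduction that yields the same membership and non-emptiness equivalences.
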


\begin{theorem}
  Assuming SETH or OVC, there is no $O(m^{2-\epsilon})$ or $O((k\cdot m)^{1-\epsilon})$ algorithm for
  each of the following problems:
  \begin{enumerate}
   \item computing the a.s.~winning set in an MDP with a disjunctive Büchi query;
   \item computing the a.s.~winning set in an MDP with a disjunctive coBüchi objective or 
						 a disjunctive coBüchi query;
   \item computing the a.s.~winning set in an MDP with a disjunctive Singleton coBüchi objective or
   						 a disjunctive Singleton coBüchi query;
   \item computing the a.s.~winning set in an MDP with a Rabin objective.
  \end{enumerate}  
\end{theorem}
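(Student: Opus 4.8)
The plan is to obtain all four bounds from the two conditional lower bounds already established under OVC, namely Theorem~\ref{thm:reach_OVChard} for disjunctive reachability queries and Theorem~\ref{thm:safety_OVChard} for disjunctive safety objectives/queries in MDPs, by composing them with the linear-time translations between objective types recorded in Section~\ref{sec:prelim}. Since SETH implies OVC it is enough to argue hardness under OVC, and since computing a winning set trivially decides whether it is non-empty (and whether a fixed vertex belongs to it), it suffices to make the non-emptiness / fixed-start-vertex formulations hard. The only thing to track along the way is that each translation keeps the edge count $m$ linear and the number $k$ of objectives unchanged, so that both the $m^{2-o(1)}$ and the $(k\cdot m)^{1-o(1)}$ forms of the bound are preserved; beyond this bookkeeping the argument is short.

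\emph{Item~1 (disjunctive B{\"u}chi query).} I would apply the linear-time reduction of Observation~\ref{obs:Reachability_Buchi} from disjunctive reachability queries to disjunctive B{\"u}chi queries in MDPs, carried out simultaneously for all target sets: each target vertex is replaced by a constant number of vertices and edges, the number $k$ of target sets is unchanged, and the distinguished start vertex is untouched, so both the ``$s$ is winning'' and the non-emptiness questions are preserved under the reduction. Hence an $O(m^{2-\varepsilon})$ or $O((k\cdot m)^{1-\varepsilon})$ algorithm for disjunctive B{\"u}chi queries would yield one for disjunctive reachability queries, contradicting Theorem~\ref{thm:reach_OVChard} and thus OVC and SETH.

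\emph{Items~2 and~3 (disjunctive (singleton) coB{\"u}chi objective/query).} Here I would reuse Reduction~\ref{red:OVtoMDPsafety} unchanged: its target sets $\target_v=\{v\}$ are already singletons, the constructed MDP has $m=O(N\log N)$ edges and $k=\Theta(N)=\Theta(m/\log N)$ target sets. By Observation~\ref{obs:nonempty_safety_coBuchi}, for this fixed MDP and these fixed target sets the winning set of the disjunctive safety objective (resp.\ query) is non-empty iff the winning set of the corresponding disjunctive coB{\"u}chi objective (resp.\ query) is non-empty; combined with Lemma~\ref{lem:OVtoMDPsafety} this shows that deciding non-emptiness of the disjunctive (singleton) coB{\"u}chi winning set decides the OV instance, so computing that winning set is impossible in $O(m^{2-\varepsilon})$ or $O((k\cdot m)^{1-\varepsilon})$ time, both for the objective and the query version.

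\emph{Item~4 (Rabin objectives).} I would invoke Observation~\ref{obs:DisjBuchiRabin}: a disjunctive coB{\"u}chi objective with target sets $\target_1,\dots,\target_k$ is exactly the Rabin objective on the same MDP given by the pairs $(L_i,U_i)=(V,\target_i)$, with the same $m$ and the same number $k$ of pairs, so the bound from item~2 transfers verbatim. The one place I expect a reader to want detail is the coB{\"u}chi step, since Observation~\ref{obs:nonempty_safety_coBuchi} transfers the safety lower bound only in its non-emptiness variant rather than for a fixed vertex $s$; this is harmless, because hardness of deciding non-emptiness is already implied by hardness of computing the winning set, but it is the subtlety worth spelling out explicitly.
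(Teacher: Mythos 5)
Your proposal is correct and follows essentially the same route as the paper: item~1 via Observation~\ref{obs:Reachability_Buchi} from the reachability bound (Theorem~\ref{thm:reach_OVChard}), items~2--3 via Observation~\ref{obs:nonempty_safety_coBuchi} applied to Reduction~\ref{red:OVtoMDPsafety} (whose target sets are already singletons), and item~4 via Observation~\ref{obs:DisjBuchiRabin}. The subtlety you flag about the coB\"uchi step is also the one the paper addresses (there by noting that $s$ lies on every infinite path of the constructed MDP), and your alternative resolution via non-emptiness suffices for the statement as given.
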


On the algorithmic side we prove the following theorem in the second part of this section.
Note that a Rabin objective corresponds to a disjunctive objective over 
1-pair Rabin objectives.
\begin{theorem}
	Given an MDP $\mdp = ((V, E), (\vo, \vr), \trans)$
 	and a Rabin objective wit Rabin pairs $\RP= \{(L_i, U_i) \mid 1 \le i \le k\}$,
 	let $b = \sum_{i=1}^k (\lvert L_i \rvert + \lvert U_i \rvert)$.
  Let $\textsc{MEC}$ denote the time to compute a MEC-decomposition.
  \begin{enumerate}
  	\item The almost-sure winning set $\as{\mdp, \objsty{Rabin}{\RP}}$ can be computed in 
  	$O(k \cdot \textsc{MEC})$ time.
  	\item If $U_i = \emptyset$ for all $ 1 \le i \le k$ \upbr{i.e.\ the Rabin 
  	pairs are Büchi objectives},
  	then the almost-sure winning set for the disjunctive objective over the Rabin pairs 
  	can computed in $O(\textsc{MEC} + b)$ time and the disjunctive query in 
  	$O(k \cdot m + \textsc{MEC})$ time.
  	\item If $L_i = V$ for all $ 1 \le i \le k$ \upbr{i.e.\ the Rabin 
  	pairs are coBüchi objectives}, then the 
  	almost-sure winning set for the disjunctive objective and the disjunctive
  	query over the Rabin pairs 
  	can computed in $O(k \cdot m + \textsc{MEC})$ time.
  \end{enumerate}
\end{theorem}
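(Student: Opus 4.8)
The plan is to handle all three parts with the \emph{good end-component} machinery of Section~\ref{sec:gec}: Rabin objectives, and hence their special cases of disjunctive B\"uchi and disjunctive coB\"uchi objectives, are determined by $\Inf(\pat)$, so by Corollary~\ref{cor:gecsound-gen} and Proposition~\ref{prop:geccompl-gen} it suffices, for each objective, to produce a family $\mathcal{G}$ of good end-components whose union equals the union of \emph{all} good end-components and then output $\as{\mdp, \reacht{\bigcup_{\ec \in \mathcal{G}} \ec}}$, which by Theorem~\ref{th:timedrmdp} (with $k = 1$) costs only $O(\textsc{MEC})$. The recurring technical ingredient is a consequence of Lemma~\ref{lem:eccontained}\upbr{b}: an end-component avoiding a vertex set $U$ also avoids the whole random attractor $\at(\mdp, U)$, and since a random vertex with an edge into $\at(\mdp, U)$ lies in $\at(\mdp, U)$, the sub-MDP $\mdp[V \setminus \at(\mdp, U)]$ has no outgoing random edges; hence the end-components of $\mdp$ avoiding $U$ are exactly the end-components of $\mdp[V \setminus \at(\mdp, U)]$.

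For part~(1) a good Rabin end-component is one that meets some $L_i$ and avoids the corresponding $U_i$. I would, for each $i$, compute $\at(\mdp, U_i)$, run a MEC-decomposition on $\mdp[V \setminus \at(\mdp, U_i)]$, and keep every resulting MEC meeting $L_i$; each such MEC is a good Rabin end-component, and by the ingredient above every good Rabin end-component for pair $i$ lies in one of them, so the union $G$ over all $i$ has the required property and $\as{\mdp, \reacht{G}}$ is the answer. The cost is $k$ attractor computations, $k$ MEC-decompositions, an $O(b)$ scan of the $L_i$, and one $O(\textsc{MEC})$ reachability computation; since $\textsc{MEC} = \Omega(m) = \Omega(n)$ and $b \le 2kn$ this is $O(k \cdot \textsc{MEC})$. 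For part~(2), pair $i$ is $\objsty{B\"uchi}{L_i}$: the disjunctive \emph{objective} equals $\objsty{B\"uchi}{\bigcup_i L_i}$ by Observation~\ref{obs:disjobjreach}, solvable in $O(\textsc{MEC} + b)$ by the basic B\"uchi algorithm (one MEC-decomposition, mark the MECs meeting $\bigcup_i L_i$, one almost-sure reachability); the disjunctive \emph{query} reduces in linear time to a disjunctive reachability query by Observation~\ref{obs:Reachability_Buchi}, so Algorithm~\ref{alg:drmdp} gives $O(k \cdot m + \textsc{MEC})$.

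Part~(3) is where a new idea is needed, since a separate MEC-decomposition per pair would only give $O(k \cdot \textsc{MEC})$. Pair $i$ is $\objsty{coB\"uchi}{U_i}$, and for the disjunctive objective $\bigvee_i \objsty{coB\"uchi}{U_i}$ as well as for each single $\objsty{coB\"uchi}{U_i}$ the good end-components are precisely the end-components avoiding \emph{some} $U_i$. I would compute the MEC-decomposition of $\mdp$ once, and then for every MEC $\ec$ and every $i$ compute inside $\mdp[\ec]$ the random attractor $\at(\mdp[\ec], U_i \cap \ec)$ in linear time; as $\mdp[\ec]$ is strongly connected, its complement in $\ec$ is the almost-sure safety winning set of $\mdp[\ec]$ for target $U_i \cap \ec$, which by Observation~\ref{obs:nonempty_safety_coBuchi} is non-empty iff $\mdp[\ec]$ contains an end-component avoiding $U_i$. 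Whenever $\at(\mdp[\ec], U_i \cap \ec) \subsetneq \ec$ this means $\ec$ contains a good coB\"uchi end-component, so by Corollary~\ref{cor:winning_mec} all of $\ec$ is almost-sure winning for $\objsty{coB\"uchi}{U_i}$, and I place $\ec$ in the target set $G_i$. Using Lemma~\ref{lem:eccontained}\upbr{b} inside $\mdp[\ec]$ and Lemma~\ref{lem:ec_vist_infinitly} one checks $\as{\mdp, \reacht{G_i}} = \as{\mdp, \objsty{coB\"uchi}{U_i}}$ and $\as{\mdp, \reacht{\bigcup_i G_i}} = \as{\mdp, \bigvee_i \objsty{coB\"uchi}{U_i}}$. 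Hence the disjunctive objective is one reachability computation to $\bigcup_i G_i$ ($O(\textsc{MEC})$), and the disjunctive query is $\bigcup_i \as{\mdp, \reacht{G_i}}$, obtained by contracting the already computed MEC-decomposition and running Algorithm~\ref{alg:drmdp} with the targets $G_1, \dots, G_k$. Since the attractor computations over the disjoint sub-MDPs $\mdp[\ec]$ cost $O(m)$ per pair, both bounds are $O(k \cdot m + \textsc{MEC})$.

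The routine parts are the two reductions in~(1) and~(2). The main obstacle is the correctness of the per-MEC test in~(3): one must argue that ``$\mdp[\ec]$ contains an end-component avoiding $U_i$'' is equivalent to ``$\at(\mdp[\ec], U_i \cap \ec) \neq \ec$'' (combining the non-emptiness characterisation of the safety winning set with the attractor-exclusion lemma applied \emph{inside} $\mdp[\ec]$), and that replacing a good end-component inside a MEC by the whole MEC as a reachability target does not change the almost-sure winning set (this uses Corollary~\ref{cor:winning_mec} together with the strong connectivity of $\mdp[\ec]$ and Lemma~\ref{lem:ec_vist_infinitly}). One also has to keep the disjunctive \emph{query} (the union $\bigcup_i \as{\mdp, \reacht{G_i}}$, via Algorithm~\ref{alg:drmdp} on the contracted MDP) cleanly separated from the disjunctive \emph{objective} (a single reachability to $\bigcup_i G_i$), and to verify that the per-pair work is genuinely $O(m)$ because the $\mdp[\ec]$ partition the edges.
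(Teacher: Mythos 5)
Your proposal is correct and follows essentially the same route as the paper: the good end-component machinery with a single MEC decomposition, the per-MEC random-attractor test $\at(\mdp[\ec],U_i\cap\ec)\subsetneq\ec$ for the coB\"uchi case, whole winning MECs as reachability targets via Corollary~\ref{cor:winning_mec}, and Algorithm~\ref{alg:drmdp} for the final single or disjunctive reachability computations. The only deviations are organizational: for part~(1) you remove $\at(\mdp,U_i)$ and recompute MECs globally once per pair instead of inside each MEC of $\mdp$ as Algorithm~\ref{alg:rabinbasic} does, and for the B\"uchi query you invoke the reduction of Observation~\ref{obs:Reachability_Buchi} rather than the paper's direct Algorithm~\ref{alg:DisjQueryBuchiMDP}; both variants rely on the same lemmas and give the same bounds.
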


 % rabin_lb
\subsection{Conditional Lower Bounds for Rabin, Büchi and coBüchi}

The conditional lower bounds for Rabin, and disjunctive  Büchi and coBüchi objectives are based 
on our results for reachability (see Section~\ref{subsec:reach_lowerbounds}) and safety objects 
(see Section~\ref{subsec:safety_lowerbounds}) and 
the Observations \ref{obs:nonempty_safety_coBuchi}, \ref{obs:Reachability_Buchi} \& \ref{obs:DisjBuchiRabin}
that interlink these objectives.

\begin{proposition}
  Assuming STC, there is no combinatorial $O(n^{3-\epsilon})$ or $O((k\cdot n^2)^{1-\epsilon})$ algorithm for 
  \begin{enumerate}
   \item computing the winning set in an MDP with a disjunctive Büchi query,
   \item computing the winning set in a graph with a disjunctive coBüchi objective, and
   \item computing the winning set in an MDP with a Rabin objective.
  \end{enumerate}
  Moreover, there is no such algorithm deciding whether the winning set is non-empty
  or deciding whether a specific vertex is in the winning set.
\end{proposition}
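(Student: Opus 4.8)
The plan is to derive all three parts by transferring the conditional lower bounds already obtained for disjunctive reachability queries in MDPs (Theorem~\ref{thm:reach_STChard}) and for disjunctive safety objectives/queries in graphs (Theorem~\ref{thm:safety_STChard}) through the structural equivalences between objectives recorded in the Observations of Section~\ref{sec:prelim}. Each of those equivalences is witnessed by a linear-time reduction that changes the number of vertices, edges, and target sets only by constant factors; hence an algorithm running in $O(n^{3-\epsilon})$ or $O((k n^2)^{1-\epsilon})$ time for one of the three problems yields one of the same asymptotic form for the source problem, and therefore — by the respective theorem — a combinatorial $O(n^{3-\epsilon})$ triangle-detection algorithm, contradicting Conjecture~\ref{conj:triangle}. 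Since both Theorem~\ref{thm:reach_STChard} and Theorem~\ref{thm:safety_STChard} already exclude such algorithms even for deciding non-emptiness of the winning set and for deciding membership of a fixed vertex, the whole argument reduces to checking that each reduction preserves these two variants as well.

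For part~(1), disjunctive Büchi queries in MDPs, I would apply Observation~\ref{obs:Reachability_Buchi}, which gives a linear-time reduction from disjunctive (query) reachability to disjunctive (query) Büchi in MDPs. Applied to the hard instances of Theorem~\ref{thm:reach_STChard} — dense MDPs with $m = \Theta(n^2)$ and $k = \Theta(n)$ target sets — the gadget replaces each target vertex locally, so the resulting MDP still has $\Theta(n)$ vertices, $\Theta(n^2)$ edges, and $\Theta(n)$ Büchi target sets, the start vertex $s$ is mapped to a corresponding vertex, and non-emptiness is preserved. Note that one genuinely needs the \emph{query} version here: by Observation~\ref{obs:disjobjreach} the disjunctive \emph{objective} of Büchi objectives collapses to a single Büchi objective and is solvable in near-linear time.

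For part~(2), disjunctive coBüchi objectives in graphs, I would reuse Reduction~\ref{red:TriangletoGraphs} and its lemma. There the graph $G'$ has the property that deleting $s$ leaves an acyclic graph, so every infinite play visits $s$ infinitely often and traverses the four layers in blocks of length five. Every play winning for $\objsty{Safety}{\target_v}$ is also winning for $\objsty{coB{\"u}chi}{\target_v}$, which gives one implication; for the converse, a play that visits $\target_a$ only finitely often has a suffix — again a concatenation of length-five blocks through $s$ — that avoids $\target_a$ entirely, and each such block encodes a triangle exactly as in the safety argument. (Alternatively one may quote Observation~\ref{obs:nonempty_safety_coBuchi}: the winning set of a disjunction of safety objectives is non-empty iff that of the corresponding disjunction of coBüchi objectives is, and then use the acyclicity of $G' - s$ to pass between non-emptiness and membership of $s$.) This settles the graph case with $n' = \Theta(n)$, $m' = \Theta(n^2)$ for dense $G$, and $k = \Theta(n)$ target sets; and since graphs are the $\vr = \emptyset$ case of MDPs and, by Observation~\ref{obs:disjgraph}, disjunctive objectives and disjunctive queries coincide on graphs, the same instances also establish hardness of disjunctive coBüchi objectives and queries in MDPs.

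Finally, part~(3) follows from part~(2) via Observation~\ref{obs:DisjBuchiRabin}: the disjunction of coBüchi objectives with target sets $\target_i$ equals the Rabin objective with pairs $(L_i,U_i)=(V,\target_i)$, so computing the almost-sure winning set — or its non-emptiness, or membership of $s$ — for a Rabin objective in an MDP is at least as hard, with $n$, $m$, $k$ again unchanged up to constants. I expect the main work to be bookkeeping rather than new ideas: making sure the vertex/edge/target-set counts stay within constant factors so that both the $n^{3-\epsilon}$ and the $(kn^2)^{1-\epsilon}$ forms transfer with identical exponents, and carefully propagating the non-emptiness and fixed-vertex variants through every reduction — the one place this needs a moment's care being the converse direction in part~(2), where the acyclicity of $G'-s$ is what lets one convert a coBüchi-winning play into a triangle and interchange ``$s$ is winning'' with ``the winning set is non-empty''.
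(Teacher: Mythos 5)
Your proposal is correct and follows essentially the same route as the paper: part (1) via Observation~\ref{obs:Reachability_Buchi} applied to the hard instances of Theorem~\ref{thm:reach_STChard}, part (2) via the safety/coB\"uchi equivalence on Reduction~\ref{red:TriangletoGraphs} together with the fact that every infinite path in $G'$ passes through $s$ (the paper cites Observation~\ref{obs:nonempty_safety_coBuchi} for exactly this), and part (3) via Observation~\ref{obs:DisjBuchiRabin}. Your direct suffix-of-length-five-blocks argument in part (2) is just an inlined proof of that observation for this particular instance, so no substantive difference remains.
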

\begin{proof}
 1) By Observation~\ref{obs:Reachability_Buchi} in MDPs reachability can be reduced in
 linear time to Büchi. Thus the result follows from
    the corresponding hardness result for reachability (cf.\ Theorem~\ref{thm:reach_STChard}).
    
 2) By Observation \ref{obs:nonempty_safety_coBuchi} the winning set of disjunctive safety is non-empty iff
    the winning set of disjunctive coBüchi with the same target sets is non-empty. Thus 
    the result follows from the corresponding hardness result for safety (cf.\ Theorem~\ref{thm:safety_STChard}).\\
    For the problem of deciding whether a specific vertex is in the winning set, recall
    that the graph $G'$ constructed in Reduction~\ref{red:TriangletoGraphs} is such that vertex $s$
    appears in each infinite path and thus if there is a winning strategy starting in some vertex,
    then there is also one starting in $s$. 
    That is, deciding on $G'$ whether $s$ is winning is equivalent to deciding whether the winning
    set is non-empty. Hence, the lower bound for the former follows.    
    
 3) The result follows from (2) and Observation~\ref{obs:DisjBuchiRabin}, by which disjunctive coBüchi objectives are special instances of Rabin objectives.
\end{proof}

\begin{proposition}
  Assuming SETH or OVC, there is no $O(m^{2-\epsilon})$ or $O((k\cdot m)^{1-\epsilon})$ algorithm for
  \begin{enumerate}
   \item computing the winning set in an MDP with a disjunctive Büchi query,
   \item computing the winning set in an MDP with a disjunctive coBüchi objective or 
						 a disjunctive coBüchi query,
   \item computing the winning set in an MDP with a disjunctive Singleton coBüchi objective or
   						 a disjunctive Singleton coBüchi query, and
   \item computing the winning set in an MDP with a Rabin objective.
  \end{enumerate}  
  Moreover, there is no such algorithm for deciding whether the winning set is non-empty
  or deciding whether a specific vertex is in the winning set. 
\end{proposition}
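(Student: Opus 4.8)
The plan is to derive all four statements --- exactly as in the proof of the preceding STC-based proposition --- from the two OVC-based lower bounds already established, namely the $\Omega(m^{2-o(1)})$ and $\Omega((k\cdot m)^{1-o(1)})$ bounds for disjunctive reachability queries in MDPs (Theorem~\ref{thm:reach_OVChard}) and for disjunctive safety objectives/queries in MDPs (Theorem~\ref{thm:safety_OVChard}), and then to propagate them through the linear-time inter-reductions between objective types recorded in Observations~\ref{obs:Reachability_Buchi}, \ref{obs:nonempty_safety_coBuchi} and~\ref{obs:DisjBuchiRabin}. Two bookkeeping facts have to be tracked through every step: (i)~the distinguished start vertex~$s$ of Reductions~\ref{red:OVtoMDPReach} and~\ref{red:OVtoMDPsafety} survives each transformation, so the ``specific vertex'' strengthening carries over; and (ii)~each transformation increases the number of edges~$m$ and the number of target sets~$k$ only by a constant factor (both reductions already have $m = O(N\log N)$, $k = \Theta(N)$, and run in $O(N\log N)$ time), so an $O(m^{2-\varepsilon})$ or $O((k\cdot m)^{1-\varepsilon})$ algorithm for the target problem would yield an $O(N^{2-\varepsilon'})$ algorithm for OV, contradicting Conjecture~\ref{conj:ov} and hence also Conjecture~\ref{conj:seth}.

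For item~(1) I would compose Reduction~\ref{red:OVtoMDPReach} with the linear-time reduction of Observation~\ref{obs:Reachability_Buchi} from disjunctive reachability queries to disjunctive B\"uchi queries: splitting each target vertex into two vertices adds only a constant number of vertices and edges per target, leaves $s$ unchanged, and a short check shows that $s$ lies in the disjunctive B\"uchi winning set iff it lies in the disjunctive reachability winning set, so Theorem~\ref{thm:reach_OVChard} transfers. For items~(2) and~(3) I would start from Reduction~\ref{red:OVtoMDPsafety}, whose target sets $\target_v = \{v\}$ are already singletons. By Observation~\ref{obs:nonempty_safety_coBuchi} the winning set of a disjunctive safety objective (or query) is non-empty iff the winning set of the corresponding disjunctive coB\"uchi objective (or query) is non-empty, on \emph{the same} MDP with \emph{the same} target sets; and since $\objsty{Safety}{\target_v} \subseteq \objsty{coB{\"u}chi}{\target_v}$, the disjunctive safety winning set is contained in the disjunctive coB\"uchi winning set, for both objectives and queries. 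Combining these two facts with the full equivalence proved for Reduction~\ref{red:OVtoMDPsafety} (Lemma~\ref{lem:OVtoMDPsafety}) closes the chain of implications: orthogonal vectors exist $\Rightarrow$ $s$ is safety-winning $\Rightarrow$ $s$ is coB\"uchi-winning $\Rightarrow$ the disjunctive coB\"uchi winning set is non-empty $\Rightarrow$ the disjunctive safety winning set is non-empty $\Rightarrow$ orthogonal vectors exist. Hence deciding non-emptiness, or membership of~$s$, of the disjunctive (singleton) coB\"uchi winning set --- for objectives and for queries --- is as hard as OV. Item~(4) then follows from item~(2) (or~(3)) via Observation~\ref{obs:DisjBuchiRabin}, which realizes every disjunctive coB\"uchi objective as a Rabin objective with $L_i = V$ and $U_i = \target_i$ (leaving the MDP unchanged), so the running-time, non-emptiness, and ``vertex~$s$'' claims all carry over to Rabin objectives.

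None of this is deep; the work is entirely in the bookkeeping. The one place that needs a small genuine argument is the ``specific vertex'' strengthening: Observation~\ref{obs:Reachability_Buchi} is stated for the reachability-to-B\"uchi translation and Observation~\ref{obs:nonempty_safety_coBuchi} only for non-emptiness of winning sets, so for B\"uchi one must observe separately that the translation preserves ``$s$ is winning'', and for coB\"uchi one must combine the safety-in-coB\"uchi containment with the already-proved equivalences of Lemma~\ref{lem:OVtoMDPsafety} rather than re-run the probabilistic analysis of Reduction~\ref{red:OVtoMDPsafety} for coB\"uchi directly. I expect no real obstacle beyond verifying these inclusions and the parameter bounds.
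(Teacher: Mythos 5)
Your proposal is correct and takes essentially the same route as the paper's proof: item~(1) via Theorem~\ref{thm:reach_OVChard} and Observation~\ref{obs:Reachability_Buchi}, items~(2) and~(3) via Theorem~\ref{thm:safety_OVChard}, Lemma~\ref{lem:OVtoMDPsafety} and Observation~\ref{obs:nonempty_safety_coBuchi} (with the singleton target sets of Reduction~\ref{red:OVtoMDPsafety}), and item~(4) via Observation~\ref{obs:DisjBuchiRabin}. The only cosmetic difference is the specific-vertex claim for coB\"uchi: the paper observes that every infinite path in the MDP of Reduction~\ref{red:OVtoMDPsafety} visits~$s$, so membership of~$s$ is equivalent to non-emptiness, whereas you close a cycle of implications using the containment of safety in coB\"uchi together with the equivalences of Lemma~\ref{lem:OVtoMDPsafety}---both are minor variants of the same argument.
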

\begin{proof}
 1) By Observation~\ref{obs:Reachability_Buchi} in MDPs reachability can be reduced in
 linear time to Büchi. Thus the result follows from
    the corresponding hardness result for reachability (cf.\ Theorem~\ref{thm:reach_OVChard}).
    
 2) By Observation \ref{obs:nonempty_safety_coBuchi} the winning set of disjunctive safety is non-empty iff
    the winning set of disjunctive coBüchi with the same target sets is non-empty. Thus 
    the result follows from the corresponding hardness result for safety (cf.\  Theorem~\ref{thm:safety_OVChard}).\\
    For the problem of deciding whether a specific vertex is in the winning set, recall
    that the MDP~$\mdp$ constructed in Reduction~\ref{red:OVtoMDPsafety} is such that vertex~$s$
    appears in each infinite path and thus if there is a winning strategy starting in some vertex,
    then there is also one starting in~$s$. 
    That is, deciding on $\mdp$ whether $s$ is winning is equivalent to deciding 
    whether the winning
    set is non-empty. Hence, the lower bound for the former follows.  
 
 3) This holds by (2) and the fact that all sets $\target_i$ in Lemma~\ref{lem:OVtoMDPsafety} are singletons.
 
 3) The result follows from (2) and Observation~\ref{obs:DisjBuchiRabin}, by which disjunctive coBüchi objectives are special instances of Rabin objectives.
\end{proof}

\subsection{Algorithm for MDPs with Rabin Objectives}
In this section we describe an algorithm for MDPs with Rabin objectives that 
considers each MEC of the input MDP separately. This formulation has the 
advantage that we can obtain a faster runtime than previously known 
for the special case of disjunctive coB{\"u}chi objectives, which we describe in
Section~\ref{sec:cobuchialg}. The special case of B{\"u}chi objectives is 
described in Section~\ref{sec:buchialg}.

For Rabin objectives a good end-component could, equivalently to 
Definition~\ref{def:good_ec}, be defined as follows.
\begin{definition}[Good Rabin End-Component]\label{def:good_rabin_ec}
  Given an MDP $\mdp$ and a set $\RP= \{(L_i, U_i) \mid 1 \le i \le k\}$ of 
  Rabin pairs, 
  a \emph{good Rabin end-component} is an end-component $\ec$ of $\mdp$ such that
  $L_i \cap \ec \ne \emptyset$ and  $U_i \cap \ec = \emptyset$ for some 
  $1 \le i \le k$.
\end{definition}
As for Streett objectives, we determine the almost-sure winning set for Rabin 
objectives by computing almost-sure reachability of the union of all good Rabin 
end-components. The correctness of this approach
follows from Corollary~\ref{cor:gecsound-gen} and Proposition~\ref{prop:geccompl-gen}.
We use the notation defined in Section~\ref{sec:algprelim}.
Our strategy to find all good Rabin end-components is as follows. First the
MEC-decomposition of the input MDP $\mdp$ is determined. For each MEC~$\ec$
and separately for each $1 \le i \le k$ we first remove the set $U_i$ and its
random attractor and then compute the MEC-decomposition in the sub-MDP induced
by the remaining vertices. Every newly computed MEC that contains a vertex of $L_i$
is a good Rabin end-component. If the MEC $\ec$ of $\mdp$ contains one such 
good end-component, then by Corollary~\ref{cor:winning_mec} all vertices of $\ec$
are in the almost-sure winning set for the Rabin objective. Thus we can immediately
add $\ec$ to the set of winning MECs in Line~\ref{lRabinMDPbasic:endif1}.\footnote{
We could alternatively add only the vertices in the good end-component because
the winning MEC would be detected as winning in the final almost-sure reachability
computation; the presented formulation shows the similarities
to the coB{\"u}chi algorithm in Section~\ref{sec:cobuchialg}. Additionally, this 
allows reusing the initial MEC-decomposition for the almost-sure reachability
computation.}

\begin{algorithm}
	\SetAlgoRefName{RabinMDP}
	\caption{Algorithm for MDPs with Rabin Objectives}
	\label{alg:rabinbasic}
	\SetKwInOut{Input}{Input}
	\SetKwInOut{Output}{Output}
	\BlankLine
	\Input{MDP $\mdp = ((V, E), (\vo, \vr), \trans)$ and 
	Rabin pairs $\RP= \{(L_i, U_i) \mid 1 \le i \le k\}$
	}
	\Output
	{
	$\as{\mdp, \objsty{Rabin}{\RP}}$
	}
	\BlankLine
	$\mathcal{\ec} \gets \mecalg(\mdp)$;
	$\winning \gets \emptyset$\;
	
	\ForEach{$\ec \in \mathcal{\ec}$\label{lRabinMDPbasic:foreach1}}{
		\For{$1 \leq i \leq k$ \label{lRabinMDPbasic:forloop}}{
		    \If{$L_i \cap  \ec \ne \emptyset$\label{lRabinMDPbasic:if1}}{
			  $\mathcal{Y} \gets \mecalg(\ec \setminus \at(\mdp[{\ec}], U_i))$ \label{lRabinMDPbasic:subMECs}\;
			  \ForEach{$Y \in \mathcal{Y}$}{
			      \If{$L_i \cap  Y \ne \emptyset$\label{lRabinMDPbasic:if2}}{
				  $\winning \gets \winning \cup \{\ec\}$\label{lRabinMDPbasic:endif1}\;
				  continue with next $\ec \in \mathcal{\ec}$\;
			      }
			  }
		    }	    
		}
	}
	\Return{$\as{\mdp, \reacht{\bigcup_{\ec \in \winning} \ec}}$ \label{lRabinMDPbasic:return}\;}
\end{algorithm}
\begin{proposition}[Runtime of Algorithm~\ref{alg:rabinbasic}]
    Algorithm~\ref{alg:rabinbasic} can be implemented in $O(k \cdot \textsc{MEC})$ time.
\end{proposition}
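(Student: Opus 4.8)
The plan is to bound the time spent in the three nested loops plus the two MEC-decomposition-related computations (the initial one and the final almost-sure reachability call). First I would observe that the initial call $\mecalg(\mdp)$ in Line~1 costs $O(\textsc{MEC})$ by definition, and the final almost-sure reachability computation in Line~\ref{lRabinMDPbasic:return} costs $O(\textsc{MEC})$ by Theorem~\ref{th:timedrmdp} (reusing the already-computed MEC-decomposition, since $\as{\mdp,\reacht{\cdot}}$ can be computed in $O(km+\textsc{MEC})$ time and here the target set is a single union, i.e.\ $k=1$). So the dominant cost must come from the \textbf{foreach}/\textbf{for} double loop.

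The key step is to charge the work of the inner body to the MEC it is executed on. The crucial point is that the MECs $\ec \in \mathcal{\ec}$ are \emph{vertex-disjoint} and their total size is $O(n)$ (and the edges inside them total $O(m)$). Fix a MEC $\ec$. For each of the $k$ indices $i$ we do: (a) test whether $L_i \cap \ec \neq \emptyset$; (b) if so, compute $\at(\mdp[\ec], U_i)$ and then $\mecalg$ on the induced sub-MDP $\mdp[\ec \setminus \at(\mdp[\ec],U_i)]$; (c) scan the resulting sub-MECs for a vertex of $L_i$. The attractor computation and the $\mecalg$ call on $\mdp[\ec]$ each cost $O(\textsc{MEC}_\ec)$ where $\textsc{MEC}_\ec$ is the time to compute a MEC-decomposition on the sub-MDP of size $\lvert \ec\rvert$ vertices and $m_\ec$ edges; summing over all MECs $\ec$ for a fixed $i$, disjointness gives $O(\textsc{MEC})$ total (using that $\textsc{MEC}$ as a function of $(n,m)$ is superadditive / at least linear, so $\sum_\ec \textsc{MEC}_\ec = O(\textsc{MEC}(n,m))$). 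Over all $k$ values of $i$ this is $O(k\cdot\textsc{MEC})$. The scans for $L_i$-membership and the $b$-related bookkeeping are dominated by this (each vertex/target-list entry touched $O(k)$ times in total, which is $O(km + b) = O(k\cdot\textsc{MEC})$ since $\textsc{MEC} = \Omega(m)$ and, under the paper's convention, we may assume $b$ is accounted for or is $O(km)$). The early ``continue with next $\ec$'' only helps, so it can be ignored for the upper bound.

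The main obstacle I expect is the careful accounting that $\sum_{\ec \in \mathcal{\ec}} \textsc{MEC}_\ec = O(\textsc{MEC})$: one must justify that running a MEC-decomposition algorithm separately on each part of a partition of the vertex set is no slower than running it once on the whole MDP. This follows because the best-known bounds $\textsc{MEC} = O(\min(n^2, m^{1.5}))$ are of the form $f(n,m)$ with $f$ at least linear and satisfying $\sum_i f(n_i, m_i) \le f(\sum_i n_i, \sum_i m_i)$ for disjoint pieces (both $n\mapsto n^2$ and $m\mapsto m^{1.5}$ are superadditive), together with the standing assumption $\textsc{MEC}=\Omega(m)$ so that the $O(m)$ overhead per piece for setting up the induced sub-MDP is absorbed. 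Putting the pieces together: initialization $O(\textsc{MEC})$, the double loop $O(k\cdot\textsc{MEC})$, final reachability $O(\textsc{MEC})$, for a total of $O(k\cdot\textsc{MEC})$, as claimed.

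For correctness (part~1), which the runtime proposition presupposes, I would invoke Corollary~\ref{cor:gecsound-gen} and Proposition~\ref{prop:geccompl-gen}: the almost-sure winning set for a Rabin objective equals $\as{\mdp,\reacht{\bigcup \ec}}$ over all good Rabin end-components. Soundness: any $Y$ added to $\winning$ arises as a sub-MEC of $\mdp[\ec\setminus\at(\mdp[\ec],U_i)]$ with $L_i\cap Y\neq\emptyset$; such a $Y$ has $U_i\cap Y=\emptyset$ (it avoids the attractor of $U_i$, hence $U_i$ itself) and is an end-component of $\mdp$ (it is an end-component in $\mdp[\ec]$ with no random edges leaving, since removing a random attractor leaves no random edges to the removed set, and $\ec$ itself has no random edges leaving $\mdp$), so $Y$ is a good Rabin end-component; then by Corollary~\ref{cor:winning_mec} adding the whole enclosing MEC $\ec$ to $\winning$ is safe. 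Completeness: any good Rabin end-component $\hat\ec$ witnesses some pair $i$ with $L_i\cap\hat\ec\neq\emptyset$, $U_i\cap\hat\ec=\emptyset$; $\hat\ec$ lies inside some MEC $\ec$ with $L_i\cap\ec\neq\emptyset$, and since $\hat\ec$ avoids $U_i$ it survives in $\mdp[\ec\setminus\at(\mdp[\ec],U_i)]$ (by Lemma~\ref{lem:eccontained}~(b), $\hat\ec$ avoids the attractor), so it is contained in one of the sub-MECs $Y\in\mathcal{Y}$, which then contains a vertex of $L_i$ and triggers adding $\ec\supseteq\hat\ec$ to $\winning$. Thus $\bigcup_{\ec\in\winning}\ec$ contains every good Rabin end-component, and the final reachability computation returns exactly $\as{\mdp,\objsty{Rabin}{\RP}}$.
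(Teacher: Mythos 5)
Your proof is correct and follows essentially the same route as the paper's: bound the initial MEC-decomposition and the final almost-sure reachability call by $O(\textsc{MEC})$ (via Theorem~\ref{th:timedrmdp}), and charge each of the $k$ iterations per MEC~$\ec$ a cost of $O(\lvert\ec\rvert + \textsc{MEC}_\ec)$, summing over the vertex-disjoint MECs to get $O(k\cdot\textsc{MEC})$. Your explicit justification that $\sum_{\ec}\textsc{MEC}_\ec = O(\textsc{MEC})$ via superadditivity of the bounds $n^2$ and $m^{1.5}$ (and the correctness sketch) only makes explicit what the paper leaves implicit.
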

\begin{proof}
    The initialization of $\mathcal{\ec}$ with all MECs of the input
    MDP $\mdp$ can clearly be done in $O(\textsc{MEC})$ time. Further by 
    Theorem~\ref{th:timedrmdp} the final almost-sure reachability computation
    can be done in $O(\textsc{MEC})$ time\footnote{
    Actually the almost-sure reachability computation can be done in $O(m)$
    reusing the already computed MEC decomposition.
    }. 
    Assume that each vertex has a list of the sets $L_i$ and $U_i$ for $1 \le i \le k$ it belongs to.
    (We can generate these lists from the lists of the Rabin pairs in $O(b) 
    = O(nk)$ time at the beginning of the algorithm.)
    Consider an iteration of the outer for-each loop, 
    let $\ec$ denote the considered MEC, and fix one iteration $i$ of the $k$ 
    iterations of the for loop. Line~4 requires $O(|\ec|)$ time.
    Let $m_\ec$ be the number of edges in $\mdp[\ec]$ and let 
    $\textsc{MEC}_\ec$ denote the time needed to compute a MEC-decomposition
    on $\mdp[\ec]$. The Line~5 requires $O(m_\ec + \textsc{MEC}_\ec) =
    O(\textsc{MEC}_\ec)$ time. The inner for-each loop takes $O(|\ec|)$ time 
    as in each iteration we need $O(|Y|)$ in Line~7 
    and constant time in Line~8. Thus in total we have 
    $O(b+\textsc{MEC} + 
	\sum_{\ec \in \mathcal{\ec}}
	    k \cdot
	    (
	      |\ec| + 
	      \textsc{MEC}_\ec 
	    )
    )
    =
    O(k \cdot \textsc{MEC})
    $.
\end{proof}

\begin{proposition}[Correctness of Algorithm~\ref{alg:rabinbasic}]
    Algorithm~\ref{alg:rabinbasic} computes $\as{\mdp, \objsty{Rabin}{\RP}}$.
\end{proposition}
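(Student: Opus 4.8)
The plan is to reduce the whole correctness proof to the good-end-component machinery of Section~\ref{sec:gec}. Since a Rabin objective depends only on $\Inf(\pat)$, Corollary~\ref{cor:gecsound-gen} and Proposition~\ref{prop:geccompl-gen} apply, so $\as{\mdp, \objsty{Rabin}{\RP}} = \as{\mdp, \reacht{\bigcup \ec}}$, where the union ranges over all good Rabin end-components $\ec$ of $\mdp$ (Definition~\ref{def:good_rabin_ec}). Hence it suffices to prove $\as{\mdp, \reacht{\bigcup_{\ec \in \winning} \ec}} = \as{\mdp, \reacht{\bigcup_{\ec \text{ good}} \ec}}$, which I would establish via the two inclusions $\bigcup_{\ec \in \winning} \ec \subseteq \as{\mdp, \objsty{Rabin}{\RP}}$ (soundness) and $\bigcup_{\ec \text{ good}} \ec \subseteq \bigcup_{\ec \in \winning} \ec$ (completeness); the first then transfers to the reachability winning sets through Lemma~\ref{lem:reachwin}, the second directly.

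For \textbf{soundness}, I would fix a MEC $\ec$ added to $\winning$ in Line~\ref{lRabinMDPbasic:endif1}. By the algorithm there is an index $i$ with $L_i \cap \ec \ne \emptyset$ such that the MEC-decomposition $\mathcal{Y}$ computed in Line~\ref{lRabinMDPbasic:subMECs} on $\mdp[\ec \setminus \at(\mdp[\ec], U_i)]$ contains a set $Y$ with $L_i \cap Y \ne \emptyset$. The crucial step is to show $Y$ is a good Rabin end-component of $\mdp$: it is strongly connected and contains an edge as a MEC of the sub-MDP; a random vertex $v \in Y$ has all its $\mdp$-successors in $\ec$ (because $\ec$ is a MEC of $\mdp$), all of these lie in $\ec \setminus \at(\mdp[\ec], U_i)$ (otherwise $v$ would have been pulled into the attractor), and all of those lie in $Y$ (because $Y$ is a MEC of $\mdp[\ec\setminus\at(\mdp[\ec],U_i)]$), so $Y$ has no outgoing random edge in $\mdp$. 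Moreover $Y \subseteq \ec \setminus \at(\mdp[\ec], U_i)$ is disjoint from $U_i$ while $L_i \cap Y \ne \emptyset$, so $Y$ is good. Then Lemma~\ref{lem:wingood} gives an almost-sure winning strategy from each vertex of $Y$, and since $Y$ sits inside the MEC $\ec$, Corollary~\ref{cor:winning_mec} yields $\ec \subseteq \as{\mdp, \objsty{Rabin}{\RP}}$. Taking the union over all $\ec \in \winning$ and applying Lemma~\ref{lem:reachwin} finishes this direction.

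For \textbf{completeness}, let $\ec^*$ be an arbitrary good Rabin end-component, witnessed by an index $j$ with $L_j \cap \ec^* \ne \emptyset$ and $U_j \cap \ec^* = \emptyset$. Since $\ec^*$ is an end-component it is contained in the unique MEC $\ec$ returned by the initial $\mecalg(\mdp)$ call, so $\ec$ is processed by the outer loop. If $\ec$ is not added to $\winning$ by an earlier index we reach iteration $j$ of the inner loop; there the test in Line~\ref{lRabinMDPbasic:if1} passes because $L_j \cap \ec \supseteq L_j \cap \ec^* \ne \emptyset$. By Lemma~\ref{lem:eccontained}~\upbr{b} applied with $W = U_j \subseteq V \setminus \ec^*$ we get $\ec^* \subseteq \ec \setminus \at(\mdp[\ec], U_j)$, and $\ec^*$ remains an end-component there (it has no outgoing random edges in $\mdp$, hence none in the sub-MDP), so it is contained in some $Y \in \mathcal{Y}$. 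Then $L_j \cap Y \supseteq L_j \cap \ec^* \ne \emptyset$, so Line~\ref{lRabinMDPbasic:if2} triggers and $\ec$ is added to $\winning$; in either case $\ec^* \subseteq \bigcup_{\ec \in \winning} \ec$. Combining with Proposition~\ref{prop:geccompl-gen} gives $\as{\mdp, \objsty{Rabin}{\RP}} \subseteq \as{\mdp, \reacht{\bigcup_{\ec \in \winning} \ec}}$, which together with the Line~\ref{lRabinMDPbasic:return} output completes the argument.

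I expect the main obstacle to be the first step of the soundness part, i.e., proving rigorously that the sets $Y$ detected within $\mdp[\ec \setminus \at(\mdp[\ec], U_i)]$ really are end-components of the \emph{whole} MDP $\mdp$ and not merely of the sub-MDP; this requires chaining together "$\ec$ is a MEC of $\mdp$", "removing a random attractor leaves no random edge pointing into the removed part", and "$Y$ is a MEC of the sub-MDP". The remaining bookkeeping — in particular that the control flow around \emph{continue with next $\ec$} never lets a MEC containing a good Rabin end-component escape being placed into $\winning$ — is routine but should be spelled out carefully.
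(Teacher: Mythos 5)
Your proposal is correct and follows essentially the same route as the paper: reduce correctness to classifying each MEC via the good-end-component machinery (Corollary~\ref{cor:gecsound-gen}, Proposition~\ref{prop:geccompl-gen}, Corollary~\ref{cor:winning_mec}), then verify the per-pair check in both directions using the random-attractor argument of Lemma~\ref{lem:eccontained}. Your soundness and completeness steps match the paper's two bullet points, just with the attractor/no-outgoing-random-edge reasoning spelled out in slightly more detail.
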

\begin{proof}
  By the Corollaries \ref{cor:winning_mec} \& \ref{cor:gecsound-gen} and Proposition~\ref{prop:geccompl-gen} we know that it 
  suffices to correctly classify each MEC as either \emph{winning} or \emph{not winning}; we say a MEC is \emph{winning} iff it contains a good Rabin EC, 
  that is, it contains 
  an EC $\ec$ such that $L_i \cap \ec \ne \emptyset$ and $U_i \cap \ec = \emptyset$ for some $1 \leq i \leq k$.
  The loops in Lines~\ref{lRabinMDPbasic:foreach1} \& \ref{lRabinMDPbasic:forloop} iterate over all MECs $\ec$
  and all Rabin Pairs $(L_i,U_i)$.
  What remains to show is that  
  Lines~\ref{lRabinMDPbasic:if1}--\ref{lRabinMDPbasic:endif1} correctly classify
  whether a MEC contains a good EC satisfying the Rabin pair $(L_i,U_i)$.
 
  \begin{itemize}
    \item Assume $\ec$ contains a good EC $\ec'$ that satisfies $(L_i,U_i)$, i.e.,\
	  $L_i \cap \ec' \ne \emptyset$ and $U_i \cap \ec' = \emptyset$.
	  Then the if condition in Line~\ref{lRabinMDPbasic:if1} is true and the algorithm subtracts the random attractor
	  of $U_i$.
	  As $\ec'$ is strongly connected, has no outgoing random edges, and $U_i \cap \ec' = \emptyset$, 
	  it does not intersect with $\at(\mdp[{\ec}], U_i)$ (see also Lemma~\ref{lem:eccontained}).
	  Thus there is a MEC $Y \in \mathcal{Y}$ that contains $\ec'$ and thus $L_i \cap Y \ne \emptyset$.
	  Hence, the algorithm correctly classifies the set $\ec$ as winning MEC.
    \item Assume the algorithm classifies a MEC~$\ec$ as winning. 
	  Then for some $i$ in Line~\ref{lRabinMDPbasic:if2} there is an end-component 
	  $Y \in \mathcal{Y}$ of $\mdp[\ec \setminus \at(\mdp[{\ec}], U_i)]$
	  with $L_i \cap  Y \ne \emptyset$ and $U_i \cap  Y = \emptyset$,
	  i.e., $Y$ is a good end-component in $\mdp[\ec \setminus \at(\mdp[{\ec}], U_i)]$.
	  Moreover, there cannot be a random edge from $u \in Y$ to $\at(\mdp[{\ec}], U_i)$ as such an $u$
	  would be included in the random attractor $\at(\mdp[{\ec}], U_i)$.
	  Thus $Y$ is also a good end-component of the full MDP~$\mdp$,
	  i.e., it was classified correctly.	  
  \end{itemize}
  By the above we have that whenever the outer for-each loop terminates, the set $\winning$ consists of all
  winning MECs and then by Corollary~\ref{cor:gecsound-gen} and Proposition~\ref{prop:geccompl-gen} we can compute 
  $\as{\mdp, \objsty{Rabin}{\RP}}$ by computing almost-sure reachability
  of the union of all winning MECs.
\end{proof}

\subsection{Algorithms for MDPs with Büchi Objectives}
\label{sec:buchialg}

As Büchi objectives can be encoded as Rabin pairs, 
Algorithm~\ref{alg:rabinbasic} can also be used 
to compute the a.s.~winning set for disjunctive Büchi objectives. 
However, Büchi objectives allow for some immediate simplifications that result in Algorithm~\ref{alg:DisjObjBuchiMDP}.
This simplifications are based on the observation that for Büchi all sets $U_i$ are empty and 
therefore also the random attractors computed in Line~\ref{lRabinMDPbasic:subMECs} of Algorithm~\ref{alg:rabinbasic} are empty.
Hence, there is also no need to recompute the MECs and 
deciding whether a MEC is winning reduces to testing whether it intersects with one of the target sets.

\begin{algorithm}
	\SetAlgoRefName{DisjObjBüchiMDP}
	\caption{Algorithm for MDPs with Disjunctive Büchi Objectives}
	\label{alg:DisjObjBuchiMDP}
	\SetKwInOut{Input}{Input}
	\SetKwInOut{Output}{Output}
	\BlankLine
	\Input{
	  MDP $\mdp = ((V, E), (\vo, \vr), \trans)$ and Büchi objectives $\target_i$ for  $1 \le i \le k$
	}
	\Output
	{
	  $\as{\mdp, \bigvee_{1 \le i \le k} \objsty{Büchi}{\target_i}}$
	}
	\BlankLine
	$\mathcal{\ec} \gets \mecalg(\mdp)$;
	$\winning \gets \emptyset$\;
	
	\ForEach{$\ec \in \mathcal{\ec}$}{
	    \If{$\bigcup_{1 \leq i \leq k} \target_i \cap  \ec \ne \emptyset$\label{lDisjObjBuchiMDP:if}}
	    {
		 $\winning \gets \winning \cup \{\ec\}$ \label{lDisjObjBuchiMDP:addec}\;
	    }	    
	}
	\Return{$\as{\mdp, \reacht{\bigcup_{\ec \in \winning} \ec}}$\;}
\end{algorithm}

\begin{proposition}[Runtime of Algorithm~\ref{alg:DisjObjBuchiMDP}]
	Algorithm~\ref{alg:DisjObjBuchiMDP} can be implemented in $O(\textsc{MEC} + b)$ time.
\end{proposition}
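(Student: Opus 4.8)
The plan is to account separately for the three phases of Algorithm~\ref{alg:DisjObjBuchiMDP}: the initial MEC-decomposition, the loop that classifies each MEC as winning or not, and the final almost-sure reachability computation. The first phase takes $O(\textsc{MEC})$ time by definition, and the third phase takes $O(\textsc{MEC})$ time by Theorem~\ref{th:timedrmdp} applied with a single target set (indeed even $O(m)$ time by reusing the already-computed MEC-decomposition, since the reachability target is a union of MECs); undoing the contraction costs a further $O(m) \subseteq O(\textsc{MEC})$. So the crux is to show that the classification loop can be implemented in $O(\textsc{MEC} + b)$ total time, rather than the $O(n k)$ that a naive reading of the pseudocode might suggest.

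First I would precompute, from the MEC-decomposition, a lookup table $\mathit{mec}$ that maps each vertex $v \in V$ to the unique MEC containing $v$, or to $\bot$ if $v$ belongs to no MEC; since the MECs are pairwise disjoint this takes $O(n) \subseteq O(m) \subseteq O(\textsc{MEC})$ time. Then, instead of iterating over MECs and target sets, I would iterate once over every element $t$ of every target set $\target_i$, look up $\mathit{mec}(t)$ in $O(1)$ time, and—if it is not $\bot$—set a boolean flag marking that MEC as winning. The number of such lookups is $\sum_{i=1}^{k} \lvert \target_i \rvert = b$, so this phase costs $O(b)$ time. The test in Line~\ref{lDisjObjBuchiMDP:if} is thereby replaced by reading the precomputed flag, and $\winning$ is assembled from the flagged MECs in $O(\lvert \mathcal{\ec} \rvert) \subseteq O(n)$ time. (Equivalently, one may first build a boolean array for $\bigcup_{i} \target_i$ in $O(b + n)$ time and then scan each MEC's vertex list, which is $O(n)$ in total because the MECs partition a subset of $V$.)

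Summing the phases gives $O(\textsc{MEC}) + O(\textsc{MEC} + b) + O(\textsc{MEC}) = O(\textsc{MEC} + b)$, as claimed, where we also use the standing assumptions $\textsc{MEC} = \Omega(m)$ and $m = \Omega(n)$. The step I expect to need the most care is exactly this bookkeeping: the pseudocode's condition ``$\bigcup_{1 \le i \le k} \target_i \cap \ec \ne \emptyset$'' must not be evaluated set-by-set for each MEC, as that would reintroduce a factor of~$k$; charging the work to the $b$ target-set entries via the vertex-to-MEC map is what keeps the bound linear. Correctness of this implementation is immediate, since a MEC is flagged winning precisely when it contains some vertex of some~$\target_i$, which is the condition of Line~\ref{lDisjObjBuchiMDP:if}.
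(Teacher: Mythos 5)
Your proposal is correct and matches the paper's argument: the paper likewise charges $O(\textsc{MEC})$ to the MEC-decomposition and the final almost-sure reachability computation (via Theorem~\ref{th:timedrmdp}), and implements the intersection test with precomputed per-vertex target flags built in $O(b)$ time so that scanning the disjoint MECs costs $O(n)$ overall, giving $O(\textsc{MEC}+n+b)=O(\textsc{MEC}+b)$. Your primary bookkeeping (charging lookups to the $b$ target entries via a vertex-to-MEC map) is just the symmetric variant of this, and you even state the paper's version as your parenthetical alternative.
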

\begin{proof}
    The initialization of $\mathcal{\ec}$ with all MECs of the input
    MDP $\mdp$ can clearly be done in $O(\textsc{MEC})$ time. Further by 
    Theorem~\ref{th:timedrmdp} the final almost-sure reachability computation
    can be done in $O(\textsc{MEC})$ time. 
    Assume that each vertex has a flag indicating whether it is in one of the sets $\target_i$ or in none of them 
    (We can generate these flags from lists of the sets $\target_i$ in $O(b)$
    time at the beginning of the algorithm.).
    Consider an iteration of the for-each loop, 
    let $\ec$ denote the considered MEC and fix some iteration $i$ of the for loop. 
    One Iteration costs $O(|\ec|)$ as in each iteration we need $O(|X|)$ in Line~\ref{lDisjObjBuchiMDP:if} 
    and constant time in Line~\ref{lDisjObjBuchiMDP:addec}.
    Thus in total the algorithm takes 
    $O(\textsc{MEC} + n + b) = O(\textsc{MEC} + b )$ time.
\end{proof}

When it comes to disjunctive Büchi queries with $k$ sets $\target_i$,
one basically solves $k$ 
Büchi problems and then computes disjunctive almost-sure reachability 
queries of the winning sets of the Büchi problems.
However, as the MEC-decomposition is independent of the sets $\target_i$,
is suffices to compute the MEC-decomposition once. 
This results in an $O(k\cdot m+\textsc{MEC} + b) = O(k\cdot m+\textsc{MEC})$ 
time algorithm (see Algorithm~\ref{alg:DisjQueryBuchiMDP}).

\begin{algorithm}
	\SetAlgoRefName{DisjQueryBüchiMDP}
	\caption{Algorithm for Disjunctive Büchi Queries on MDPs}
	\label{alg:DisjQueryBuchiMDP}
	\SetKwInOut{Input}{Input}
	\SetKwInOut{Output}{Output}
	\BlankLine
	\Input{
	  MDP $\mdp = ((V, E), (\vo, \vr), \trans)$ and Büchi objectives  $\target_i$ for  $1 \le i \le k$
	}
	\Output
	{
	  $\bigvee_{1 \le i \le k} \as{\mdp, \objsty{Büchi}{\target_i}}$
	}
	\BlankLine
	$\mathcal{\ec} \gets \mecalg(\mdp)$\;
	\For{$1 \leq i \leq k$}{
	  $\winning_i \gets \emptyset$\;
	}
	
	\ForEach{$\ec \in \mathcal{\ec}$}{
		\For{$1 \leq i \leq k$}{
		    \If{$\target_i \cap  \ec \ne \emptyset$}{
			  $\winning_i \gets \winning_i \cup \{\ec\}$\;
		    }	    
		}
	}
	\Return{$\bigvee_{1 \leq i \leq k}\as{\mdp, \reacht{\bigcup_{\ec \in \winning_i} \ec}}$\;}
\end{algorithm}

\subsection{Algorithms for MDPs with coBüchi Objectives}
\label{sec:cobuchialg}

Again, as coBüchi objectives can be encoded as Rabin pairs, one can use
Algorithm~\ref{alg:rabinbasic} to compute the a.s.~winning set for disjunctive 
coBüchi objectives. 
However, coBüchi objectives allow for some simplifications that result in 
the simpler and more efficient Algorithm~\ref{alg:DisjObjCoBuchiMDP}.
This simplifications are based on the observation that for coBüchi all sets $L_i$ coincide with the set of all vertices and 
therefore the if conditions in Lines~\ref{lRabinMDPbasic:if1} \& \ref{lRabinMDPbasic:if2} of Algorithm~\ref{alg:rabinbasic} are always true.
That is, whenever there is a vertex in a MEC~$\ec$ of $\mdp$ that is not contained 
in $\at(\mdp[\ec],\target_i)$, then there is a MEC in $\mdp[\ec \setminus 
\at(\mdp[\ec],\target_i)]$, which is a good end-component of $\mdp$.
Testing whether a MEC contains a good EC for a coBüchi objective 
$\objsty{coB{\"u}chi}{\target_i}$ thus reduces to testing 
whether the random attractor of $\target_i$ covers the whole MEC. 

\begin{observation}\label{obs:disjStreett}
The same ideas can be used for the disjunction of one-pair Streett objectives
\upbr{Table~\ref{tab:condis}}. For each MEC~$\ec$ and each $i$ we check whether
$\ec \cap L_i \ne \emptyset$ and $\ec \cap U_i = \emptyset$. If this is the case,
then we determine whether the random attractor of $L_i$ covers the whole MEC.
If not, then the MEC contains a good end-component for the one-pair Streett 
objective.
\end{observation}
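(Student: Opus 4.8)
The plan is to read Observation~\ref{obs:disjStreett} as the correctness statement of the variant of Algorithm~\ref{alg:rabinbasic} it describes, and to prove it inside the good end-component framework of Section~\ref{sec:gec}, essentially reusing the correctness argument of Algorithm~\ref{alg:rabinbasic} verbatim. Let $\obj_i = \streett{\set{(L_i, U_i)}}$ be the $i$-th one-pair Streett objective and $\obj = \bigvee_{i=1}^{k} \obj_i$. First I would observe that each $\obj_i$, being a Boolean combination of a B\"uchi and a coB\"uchi objective, and hence $\obj$ as well, is determined by $\Inf(\pat)$, so Corollary~\ref{cor:gecsound-gen}, Corollary~\ref{cor:winning_mec} and Proposition~\ref{prop:geccompl-gen} all apply: $\as{\mdp, \obj}$ equals $\as{\mdp, \reacht{G}}$ where $G$ is the union of all good $\obj$ end-components, and any MEC that contains a good $\obj$ end-component is entirely winning. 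Since $\Inf(\pat) = X$ forces every vertex of the end-component $X$ to recur infinitely often, $X$ is a good $\obj$ end-component if and only if $L_i \cap X = \emptyset$ or $U_i \cap X \ne \emptyset$ for some~$i$, i.e.\ if and only if $X$ is a good $\obj_i$ end-component for some~$i$. Thus it suffices to decide, for every MEC~$\ec$ and every index~$i$, whether $\ec$ contains a good $\obj_i$ end-component, to mark $\ec$ as winning whenever this holds for at least one~$i$, and finally to output $\as{\mdp, \reacht{\bigcup_{\ec \in \winning} \ec}}$, the almost-sure reachability of the union of all winning MECs.

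The core of the proof is the per-MEC test: $\ec$ contains a good $\obj_i$ end-component if and only if \upbr{a}~$\ec \cap L_i = \emptyset$, or \upbr{b}~$\ec \cap U_i \ne \emptyset$, or \upbr{c}~$\ec \cap L_i \ne \emptyset$, $\ec \cap U_i = \emptyset$ and $\at(\mdp[\ec], L_i) \subsetneq \ec$. Cases \upbr{a} and \upbr{b} are exactly the situations in which the guard ``$\ec \cap L_i \ne \emptyset$ and $\ec \cap U_i = \emptyset$'' of the statement fails; in both, $\ec$ itself, being an end-component, is already a good $\obj_i$ end-component, so $\ec$ is winning (this is the consequence the terse statement leaves implicit). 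When the guard holds, we run the attractor test. For the ``if'' direction of~\upbr{c} I would set $A = \at(\mdp[\ec], L_i)$, the random attractor of $L_i$ computed within $\mdp[\ec]$, and $Z = \ec \setminus A \ne \emptyset$, and argue, exactly as in the correctness proofs of Algorithms~\ref{alg:drmdp} and~\ref{alg:rabinbasic}, that every vertex of~$Z$ has an out-edge inside~$Z$: a random vertex of~$\ec$ has all its out-edges inside~$\ec$ and none inside~$A$ (otherwise it would be attracted), hence all inside~$Z$, and a player-1 vertex of~$Z$ cannot have all its out-edges inside~$A$ for the same reason. Consequently any bottom SCC~$Y$ of $\mdp[Z]$ is strongly connected, contains an edge, and has no outgoing random edges, i.e.\ is an end-component of~$\mdp$, and $Y \subseteq Z$ gives $L_i \cap Y = \emptyset$, so $Y$ is a good $\obj_i$ end-component. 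For the ``only if'' direction, assume $\ec \cap L_i \ne \emptyset$, $\ec \cap U_i = \emptyset$ and $\at(\mdp[\ec], L_i) = \ec$, and suppose for contradiction that $Y \subseteq \ec$ is a good $\obj_i$ end-component; then $U_i \cap \ec = \emptyset$ forces $U_i \cap Y = \emptyset$, so $L_i \cap Y = \emptyset$ and hence $L_i \cap \ec \subseteq V \setminus Y$, whereupon Lemma~\ref{lem:eccontained}~\upbr{b} (with $S = \ec$, $W = L_i \cap \ec$ and sub-MDP $\mdp[\ec]$) yields $Y \cap \at(\mdp[\ec], L_i) = Y \cap \ec = Y = \emptyset$, contradicting that an end-component contains an edge.

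From this test, a MEC contains a good $\obj$ end-component exactly when some index passes, so soundness and completeness of the whole procedure follow from Corollaries~\ref{cor:gecsound-gen} and~\ref{cor:winning_mec} and Proposition~\ref{prop:geccompl-gen} precisely as for Algorithm~\ref{alg:rabinbasic}. For the disjunctive \emph{query} of the $\obj_i$ I would run the same analysis separately for each objective: the query winning set is $\bigvee_{i} \as{\mdp, \obj_i}$, and each $\as{\mdp, \obj_i}$ equals almost-sure reachability of the union of the MECs passing the test for index~$i$, so by Theorem~\ref{th:timedrmdp} these $k$ disjunctive reachability targets are handled together in $O(k \cdot m + \textsc{MEC})$ time. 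For the running time of the objective version, the initial MEC-decomposition and, reusing it, the final almost-sure reachability computation take $O(\textsc{MEC})$ time by Theorem~\ref{th:timedrmdp}, while inside a MEC~$\ec$ the $k$ tests cost $O(k \cdot m_\ec + b_\ec)$ with $m_\ec$ the number of edges of $\mdp[\ec]$ and $b_\ec = \sum_{i}(\lvert L_i \cap \ec \rvert + \lvert U_i \cap \ec \rvert)$, which sums to $O(k \cdot m + b) = O(k \cdot m + \textsc{MEC})$; this matches the upper bound for disjunctive coB\"uchi, and the $\Omega(k \cdot n^{2-o(1)})$ and $\Omega(m^{2-o(1)})$ conditional lower bounds of Table~\ref{tab:condis} are inherited because disjunctive coB\"uchi is the special case $U_i = \emptyset$ for all~$i$. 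I expect the only genuinely delicate step to be the ``if'' direction of~\upbr{c} — justifying that $\mdp[\ec \setminus \at(\mdp[\ec], L_i)]$ really contains an end-component, and not merely a bottom SCC that might fail to be one — but this is exactly the step already carried out in the paper's analysis of Algorithm~\ref{alg:rabinbasic} and only uses that a MEC has no outgoing random edges together with the closure properties of the random attractor.
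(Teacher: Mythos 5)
Your proposal is correct and takes essentially the same route the paper intends: the good-end-component framework of Section~\ref{sec:gec} combined with the per-MEC random-attractor test, exactly as in the correctness argument for Algorithm~\ref{alg:rabinbasic} and its coB\"uchi specialization in Section~\ref{sec:cobuchialg}. You merely spell out the details the paper leaves implicit in ``the same ideas'' --- the guard-failure case where the MEC is itself a good end-component, and the bottom-SCC argument showing that $\ec \setminus \at(\mdp[\ec], L_i)$, if non-empty, contains an end-component avoiding $L_i$.
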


\begin{algorithm}
	\SetAlgoRefName{DisjObjCoBüchiMDP}
	\caption{Algorithm for MDPs with Disjunctive coBüchi Objectives}
	\label{alg:DisjObjCoBuchiMDP}
	\SetKwInOut{Input}{Input}
	\SetKwInOut{Output}{Output}
	\BlankLine
	\Input{
	  MDP $\mdp = ((V, E), (\vo, \vr), \trans)$ and coBüchi objectives  $\target_i$ for  $1 \le i \le k$
	}
	\Output
	{
	  $\as{\mdp, \bigvee_{1 \le i \le k}  \objsty{coBüchi}{\target_i}}$
	}
	\BlankLine
	$\mathcal{\ec} \gets \mecalg(\mdp)$;
	$\winning \gets \emptyset$\;
	
	\ForEach{$\ec \in \mathcal{\ec}$}{
		\For{$1 \leq i \leq k$}{
		    \If{$\ec \not\subseteq \at(\mdp[{\ec}], \target_i)$ \label{lDisjObjCoBuchiMDP:attractor}}{
				  $\winning \gets \winning \cup \{\ec\}$ \label{lDisjObjCoBuchiMDP:addec}\;
				  continue with next $\ec \in \mathcal{\ec}$\;
		    }
		}
	}
	\Return{$\as{\mdp, \reacht{\bigcup_{\ec \in \winning} \ec}}$\;}
\end{algorithm}

\begin{proposition}[Runtime] 
	Algorithm~\ref{alg:DisjObjCoBuchiMDP} can be implemented in $O(k\cdot m + \textsc{MEC})$ time.
\end{proposition}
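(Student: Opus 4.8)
The plan is to account for all of the running time in three buckets: the single MEC-decomposition at the start, the random-attractor computations inside the MECs, and the final almost-sure reachability call. The accounting fact I would rely on throughout is that the sub-MDPs $\mdp[\ec]$ induced by distinct MECs are pairwise edge-disjoint, so writing $m_\ec$ for the number of edges of $\mdp[\ec]$ we have $\sum_\ec m_\ec \le m$; I would also use that every vertex of an end-component has an outgoing edge inside that end-component, hence $\lvert \ec \rvert \le m_\ec$.

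First I would dispose of the two ``outer'' costs. The call $\mecalg(\mdp)$ costs $O(\textsc{MEC})$ by definition, and the returned set $\as{\mdp, \reacht{\bigcup_{\ec \in \winning} \ec}}$ is an almost-sure reachability query for a single target set, hence computable in $O(\textsc{MEC})$ time by Theorem~\ref{th:timedrmdp} with $k = 1$ (in fact in $O(m)$ time by reusing the already computed MEC-decomposition, since $\textsc{MEC} = \Omega(m)$). As preprocessing I would, in $O(b)$ time, give every vertex the list of indices $i$ with $v \in \target_i$; since $m \ge n$ and $b = \sum_i \lvert \target_i \rvert \le k n \le k m$, this stays within the claimed bound.

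The substance is in the nested loops. Fixing a MEC $\ec$, one scan over its vertices and their membership lists --- costing $O(\lvert \ec \rvert + b_\ec)$ with $b_\ec = \sum_i \lvert \target_i \cap \ec \rvert$ --- produces the seed set $\target_i \cap \ec$ for every $i$. For each $i$ I would compute $A := \at(\mdp[\ec], \target_i)$ entirely inside $\mdp[\ec]$; by Definition~\ref{def:attr} this costs $O\bigl( \sum_{v \in A} \InDeg_{\mdp[\ec]}(v) \bigr) = O(m_\ec)$ (the $O(m_\ec)$ needed to initialize the out-edge counters that recognize fully attracted player-1 vertices is absorbed). Maintaining $\lvert A \rvert$ during the computation makes the test $\ec \not\subseteq A$ in Line~\ref{lDisjObjCoBuchiMDP:attractor} a single size comparison, while Line~\ref{lDisjObjCoBuchiMDP:addec} and the jump to the next MEC are $O(1)$. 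Thus MEC $\ec$ contributes $O(\lvert \ec \rvert + b_\ec + k \cdot m_\ec) = O(k \cdot m_\ec + b_\ec)$, using $\lvert \ec \rvert \le m_\ec$, and summing over all MECs with $\sum_\ec m_\ec \le m$ and $\sum_\ec b_\ec \le b \le k m$ bounds the loops by $O(k m)$. Adding the $O(\textsc{MEC})$ terms gives the claimed $O(k m + \textsc{MEC})$.

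The one subtle point --- the step I would be careful about --- is exactly why the $k$ attractor computations within a single MEC cost only $O(k \cdot m_\ec)$ rather than $O(k m)$: each is confined to $\mdp[\ec]$, and edge-disjointness of the sub-MDPs $\mdp[\ec]$ lets the per-MEC bounds telescope to $O(k m)$. I would also point out that the ``continue with next $\ec$'' line is merely an optimization and is not needed for the asymptotic bound, since each MEC triggers at most $k$ attractor computations regardless.
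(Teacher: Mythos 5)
Your proposal is correct and follows essentially the same route as the paper's proof: charge the MEC-decomposition and the final almost-sure reachability call (via Theorem~\ref{th:timedrmdp}) with $O(\textsc{MEC})$ each, and bound each of the at most $k$ attractor computations inside a MEC $\ec$ by $O(m_\ec)$, summing to $O(k\cdot m)$ since the induced sub-MDPs of distinct MECs are edge-disjoint. Your write-up merely spells out details the paper leaves implicit (the per-vertex membership lists, $b \le km$, and the constant-time subset test via $\lvert A\rvert$), so there is nothing substantively different to compare.
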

\begin{proof}
    The initialization of $\mathcal{\ec}$ with all MECs of the input
    MDP $\mdp$ can clearly be done in $O(\textsc{MEC})$ time. Further by 
    Theorem~\ref{th:timedrmdp} the final almost-sure reachability computation
    can be done in $O(\textsc{MEC})$ time. 
    Consider an iteration of the for-each loop, 
    let $\ec$ denote the considered MEC, and fix some iteration $i$ of the for loop. 
    Let $m_\ec$ be the number of edges in $\mdp[\ec]$.
    In the $i$th iteration we need $O(|m_\ec|)$ time
    to compute the random attractor in Line~\ref{lDisjObjCoBuchiMDP:attractor}
    and constant time in Line~\ref{lDisjObjCoBuchiMDP:addec}.
    Thus the total time is $O(k\cdot m +\textsc{MEC})$.
\end{proof}

When it comes to disjunctive coBüchi queries with $k$ sets $\target_i$, 
we have to remember which of the sets $\target_i$ are satisfied by a MEC and 
then compute disjunctive almost-sure reachability queries,
one query per set $\target_i$.
This increases the running time for the almost-sure reachability computation
to $O(k\cdot m)$ (given the MEC-decomposition),
which, however, is subsumed by the total running time of $O(k\cdot m +\textsc{MEC})$.
The resulting algorithm is stated as Algorithm~\ref{alg:DisjQueryCoBuchiMDP}.

\begin{algorithm}
	\SetAlgoRefName{DisjQueryCoBüchiMDP}
	\caption{Algorithm for Disjunctive coBüchi Queries on MDPs}
	\label{alg:DisjQueryCoBuchiMDP}
	\SetKwInOut{Input}{Input}
	\SetKwInOut{Output}{Output}
	\BlankLine
	\Input{
	  MDP $\mdp = ((V, E), (\vo, \vr), \trans)$ and coBüchi objectives $\target_i$ for  $1 \le i \le k$
	}
	\Output
	{
	  $\bigvee_{1 \le i \le k} \as{\mdp, \objsty{coBüchi}{\target_i}}$
	}
	\BlankLine
	$\mathcal{\ec} \gets \mecalg(\mdp)$\;
	\For{$1 \leq i \leq k$}{
	  $\winning_i \gets \emptyset$\;
	}
	
	\ForEach{$\ec \in \mathcal{\ec}$}{
		\For{$1 \leq i \leq k$}{
		    \If{$\ec \not\subseteq \at(\mdp[{\ec}], \target_i)$}{
				  $\winning_i \gets \winning_i \cup \{\ec\}$\;
		    }
		}
	}
	\Return{$\bigvee_{1 \leq i \leq k}\as{\mdp, \reacht{\bigcup_{\ec \in \winning_i} \ec}}$\;}
\end{algorithm}

 % singleton_alg
\section{Algorithm for Graphs with Singleton coBüchi Objectives}\label{sec:singleton}
In this section we show how to compute  in linear time the winning set
for graphs with a special type of coBüchi objectives, namely when all sets $\target_i$
for $1 \le i \le k$ have cardinality one. 
\begin{theorem}
	Given a graph $G = (V, E)$ and coBüchi objectives $\target_i$ with 
	$\lvert \target_i \rvert = 1$ for $1 \le i \le k$, the winning set for the 
	disjunction over the coBüchi objectives can be computed in $O(m)$ time.
\end{theorem}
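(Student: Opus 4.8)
The plan is to reduce the whole computation to one backward breadth‑first search after identifying a set of ``good'' vertices, and to compute that set from a single SCC decomposition together with a dominator‑tree computation for the one SCC that is genuinely hard.

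Since every target set is a singleton, I would first set $T=\{t_i\mid 1\le i\le k\}$ (duplicates are irrelevant in a disjunction). On a graph the objective $\bigvee_i\objsty{coBüchi}{\{t_i\}}$ is exactly $\{\pat\in\Path\mid T\not\subseteq\Inf(\pat)\}$, so the winning set $W$ consists of the vertices from which some infinite path $\pat$ has $T\not\subseteq\Inf(\pat)$. Call a vertex $u$ \emph{good} if $u$ lies on a closed walk with at least one edge whose vertex set does not contain all of $T$, and let $\good$ be the set of good vertices. Then $W$ is precisely the set of vertices that can reach $\good$ in $G$: if $v$ reaches a good $u$, follow a path to $u$ and then loop on the witnessing closed walk forever; conversely, for any $\pat$ with $T\not\subseteq\Inf(\pat)$ the subgraph on the infinitely recurring vertices is strongly connected, so $\pat$ reaches a closed walk contained in $\Inf(\pat)$, which is good. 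Hence $W$ is obtained from $\good$ by one backward BFS in $O(m)$ time, and it remains to compute $\good$.

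A good vertex lies on a cycle, hence in a non‑trivial SCC of $G$, and every vertex of a non‑trivial SCC $C$ with $T\not\subseteq C$ is good (a cycle inside $C$ already misses some $t\in T$). Since $T\ne\emptyset$, at most one SCC $C^{*}$ satisfies $T\subseteq C^{*}$, so $\good$ is the union of all non‑trivial SCCs of $G$ minus a set $B\subseteq C^{*}$, where $B$ is the set of vertices of $C^{*}$ through which \emph{every} closed walk contains all of $T$ (and $B=\emptyset$ if no such $C^{*}$ exists); one linear‑time SCC decomposition yields everything except $B$. To compute $B$, fix some $t\in T$ and split $t$ inside $G[C^{*}]$ into an out‑copy $t_{\mathrm{out}}$ (taking the out‑edges) and an in‑copy $t_{\mathrm{in}}$ (taking the in‑edges), obtaining $G'$, in which closed walks of $G[C^{*}]$ through $t$ correspond to paths from $t_{\mathrm{out}}$ to $t_{\mathrm{in}}$. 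The non‑trivial SCCs of $G'$ (i.e.\ of $G[C^{*}]-t$) consist of good vertices, so $B$ lies among the remaining vertices $P$. For $u\in P$ and $t'\in T\setminus\{t\}$, there is a closed walk through $u$ avoiding $t'$ iff some path from $t_{\mathrm{out}}$ to $u$ avoids $t'$ \emph{and} some path from $u$ to $t_{\mathrm{in}}$ avoids $t'$, i.e.\ iff $t'$ neither dominates $u$ in $G'$ (from root $t_{\mathrm{out}}$) nor post‑dominates $u$ (toward $t_{\mathrm{in}}$); and for $u\in P$ no vertex other than $u$ can be both a dominator and a post‑dominator of $u$, since otherwise $u$ would lie on a cycle of $G'$. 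Hence $u\in B$ iff every $t'\in T\setminus\{t\}$ dominates or post‑dominates $u$, which I would decide by computing the dominator tree of $G'$ from $t_{\mathrm{out}}$ and of its reverse from $t_{\mathrm{in}}$ in linear time, marking for each $t'$ the vertices in its subtree in each tree via an Euler‑tour difference array, and taking a prefix‑sum pass to get, for every $u$, the number of targets dominating or post‑dominating it; this costs $O(m+b)$ extra, with $b=\Theta(k)=O(m)$, and the two copies of $t$ (and the case $u\in T$) are a small special case.

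The step I expect to be most delicate is the correctness bookkeeping in the computation of $B$: the role of the split vertex, the vertices $u\in T$, and — crucially — targets that lie inside the non‑trivial SCCs of $G'$, which must be left in place, because contracting those SCCs would spuriously enlarge the dominator relation and misclassify good vertices as bad. Alongside this one must invoke a genuinely linear‑time dominator‑tree algorithm, since $G'$ need not be acyclic, so that the total cost — SCC decomposition, two dominator trees, the Euler‑tour counting, and the final backward BFS — stays $O(m)$.
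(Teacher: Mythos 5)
Your proposal is correct, and it shares the paper's outer skeleton: reduce the winning set to graph reachability of ``good'' vertices, note that every non-trivial SCC not containing all of $T$ is good, isolate the unique SCC $C^{*}$ with $T \subseteq C^{*}$, split a fixed target $t$ into $t_{\mathrm{out}}/t_{\mathrm{in}}$ there, and observe that the non-trivial SCCs of $G[C^{*}]-t$ are already good. The core of the hard case, however, is genuinely different. The paper only decides \emph{non-emptiness} of the good set inside $C^{*}$ --- which suffices, since $C^{*}$ is strongly connected, so either all of $C^{*}$ can reach a good cycle or none of its vertices is good --- by giving each edge that enters a target length $1$ and all other edges length $0$ and checking, with a simple layered (0--1) BFS, whether the shortest $t_{\mathrm{out}}\to t_{\mathrm{in}}$ distance is $<k$, i.e., whether some cycle through $t$ misses a target. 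You instead compute an exact per-vertex classification inside $C^{*}$ via dominator and post-dominator trees of the split graph, resting on the correct observation that a vertex $u$ outside the non-trivial SCCs of $G[C^{*}]-t$ admits no $t'\ne u$ that both dominates and post-dominates it, so subtree counting over the two trees decides badness; the reachability closure at the end makes this extra precision harmless for the final answer. What each route buys: yours yields strictly more information (the exact set of vertices of $C^{*}$ lying on a cycle that misses some target), but at a real cost --- to actually get $O(m)$ you must invoke a genuinely linear-time dominator-tree algorithm (Lengauer--Tarjan alone gives $O(m\,\alpha(m,n))$), and the deferred special cases (the two copies of $t$, vertices $u\in T$ and the shifted counting threshold there) require the bookkeeping you only sketch; the paper's route needs nothing beyond SCC decomposition and BFS and is therefore elementary and self-contained.
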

To compute the winning set it is sufficient to detect whether a strongly connected
graph contains a cycle that does \emph{not contain all} the vertices 
in the set $\target = \bigcup_{1 \le i \le k}\target_i$.
To see this, first note that each non-trivial SCC of the graph (i.e., each 
SCC that contains at least one edge) that does not contain all vertices of $\target$
is winning. If there is no SCC~$S$ with $\target \subseteq S$, then
we can determine the winning set in linear time by computing
the vertices that can reach any non-trivial SCC. Thus it remains to consider
an SCC $S$ with $\target \subseteq S$. For the relevant case of $\lvert \target 
\rvert > 1$ we have that $S$ is a non-trivial SCC.
Since $S$ is strongly connected, 
the vertices of $S$ can reach each other and hence it is sufficient to compute whether
$S$ contains a cycle that does not contain all the vertices of $\target$ (i.e.\ 
solving the \emph{non-emptiness} problem).
If such a cycle exists, then also $S$ is winning, otherwise $S$ is not winning.
In any case, the winning set can then be determined by computing the vertices
that can reach some winning SCC.

\begin{algorithm}
	\SetAlgoRefName{SingletonCBGraph}
	\caption{Disjunctive Singleton coBüchi on Graphs}
	\label{alg:SingletonCoBuchiGraph}
	\SetKwInOut{Input}{Input}
	\SetKwInOut{Output}{Output}
	\BlankLine
	\Input{
	  strongly connected graph $G = (V, E)$ and coBüchi objectives $\target_i$ 
	  with $\lvert \target_i \rvert = 1$ for $1 \le i \le k$ and $k > 1$, 
	  let $\target = \bigcup_i \target_i$
	}
	\Output
	{
	 ``yes'' if there is a cycle $C$ with $T \not\subseteq C$; ``no'' otherwise
	}
	\BlankLine
	$\mathcal{S} \gets \allsccalg(G[V \setminus \target_1])$\;
	\If{$\mathcal{S}$ contains non-trivial SCC}{
		\Return{yes}\;
	}\Else{
		let $s$ be the vertex in $\target_1$\;
		replace $s$ with $s_{\text{in}}$ and $s_{\text{out}}$:
		$s_{\text{in}}$ gets in-edges and $s_{\text{out}}$ gets out-edges of $s$\;
		$Q_0 \gets \set{s_{\text{out}}}$; mark $s_{\text{out}}$\;
		\For{$j \gets 0$ \KwTo $k-1$}{
			 $Q_{j+1} \gets \emptyset$\;
			\While{$Q_j \ne \emptyset$}{
				remove $v$ from $Q_j$\;
				\If{$v = s_{\text{in}}$}{\Return{yes}\;}
				\ForEach{$(v, w) \in E$ with $w$ not marked}{
					mark $w$\;
					\lIf{$w \in \target$}{add $w$ to $Q_{j+1}$}
					\lElse{add $w$ to $Q_j$}
				}
			}
		}
		\Return{no}
	}
\end{algorithm}

We now describe the algorithm to determine whether a strongly connected
graph $G = (V, E)$ contains a simple cycle~$C$ such that we have 
$\target_i \cap C = \emptyset$ for some $1 \le i \le k$, given $\lvert \target_i 
\rvert = 1$ for all~$i$.
First we check whether $G[V \setminus \target_1]$ contains a non-trivial SCC. 
If this is true, then
$G$ contains a cycle that does not contain $\target_1$ and we are done. Otherwise
every cycle of~$G$ contains $\target_1$. We assign the edges of~$G$ edge lengths
as follows: All edges $(v, w) \in E$ for which $w \in \target$ have length~1,
all other edges have length~0. 
Let $s$ denote the vertex in $\target_1$. 
Let $\delta$ be the length of the shortest path (w.r.t.\ the edge lengths 
defined above) from $s$ to $s$ that uses at least one edge, i.e., the minimum 
length of a cycle containing $s$. We have that $\delta < k$ if and only if 
this cycle with the length $\delta$ does not contain all vertices of $\target$.
Thus if $\delta < k$, then $G$ is winning for the coBüchi objective, otherwise not.
Note that this algorithm would also work for a Rabin objective where we have
for each $1 \le i \le k$ that \upbr{a} $L_i = \set{s}$ for some $s \in V$ and 
\upbr{b} $\lvert U_i \rvert = 1$.

Since all edge lengths are zero or one, we can compute $\delta$ in linear time. 
In Algorithm~\ref{alg:SingletonCoBuchiGraph} we additionally use that all 
incoming edges of a vertex have the same length. After checking whether
$G[V \setminus \target_1]$ contains a non-trivial SCC, the algorithm works as
follows. 
We modify the graph by replacing the vertex~$s$ by two vertices, $s_{\text{in}}$ 
and $s_{\text{out}}$, and replacing $s$ in all edges $(v, s) \in E$ with 
$s_{\text{in}}$ and in all edges $(s, v) \in E$ with $s_{\text{out}}$.
Then $\delta$ is equal to the shortest path from $s_{\text{out}}$ to 
$s_{\text{in}}$. For the algorithm we consider both $s_{\text{in}}$ and 
$s_{\text{out}}$ to be contained in $\target$.
In the $j$th iteration of the for-loop we consider two ``queues'',
$Q_j$ and $Q_{j+1}$ (can be implemented as sets). Each vertex is added to a 
queue at most once during the 
algorithm, which is ensured by marking vertices when they are added to a queue
and only add before unmarked vertices. 
The following lemma shows that, until the vertex $s_{\text{in}}$ is removed from 
$Q_j$ and the algorithm terminates,
precisely the vertices with distance~$j$ from $s_{\text{out}}$ are 
added to $Q_j$ for each $j$. Thus $s_{\text{in}}$ is added to $Q_j$ for some $j < k$ if and 
only if $\delta < k$, which shows the correctness of the algorithm.
The runtime of the algorithm is $O(m)$ because each vertex is added to and 
removed from a queue at most once and thus the outgoing edges of a vertex are
only considered once, namely when it is removed from a queue.
\begin{lemma}
 Before each iteration $j$ of the for-loop in 
 Algorithm~\ref{alg:SingletonCoBuchiGraph}, $Q_j$ contains
 the vertices of $\target$ with distance $j$ from $s_{\text{out}}$.
 During iteration $j$, the vertices of $V \setminus \target$
 with distance $j$ from $s_{\text{out}}$ are added to $Q_j$. 
 No other vertices are added to~$Q_j$.
\end{lemma}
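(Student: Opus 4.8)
The plan is to prove the lemma by induction on $j$, carrying along a description of the set of marked vertices. For a vertex $u$ let $d(u)$ denote its distance from $s_{\text{out}}$ with respect to the edge lengths introduced above (length $1$ for edges entering $\target$, length $0$ otherwise), where $s_{\text{in}}$ and $s_{\text{out}}$ are both regarded as elements of $\target$; write $D_j = \set{u \mid d(u) = j}$. I will use two elementary facts about these distances. First, if $u \notin \target$ and $d(u) < \infty$, then on a shortest $s_{\text{out}}$-$u$ path the last vertex $t$ lying in $\target$ satisfies $d(t) = d(u)$, and the portion of the path from $t$ to $u$ runs only through vertices outside $\target$ and hence uses only length-$0$ edges. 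Second, if $u \in \target$ and $1 \le d(u) < \infty$, then $u$ has an in-neighbour $v$ with $d(v) = d(u)-1$ (the predecessor of $u$ on a shortest path works, since the edge entering $u$ has length $1$).

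The induction hypothesis for iteration $j$ (assuming the algorithm has not already returned ``yes'' in an earlier iteration) is: when iteration $j$ of the for-loop begins, $Q_j = D_j \cap \target$, the queue $Q_{j+1}$ and all queues $Q_{j'}$ with $j' < j$ are empty, and a vertex $u$ is marked if and only if either $d(u) \le j-1$, or $u \in \target$ and $d(u) = j$. The base case $j=0$ holds: $Q_0 = \set{s_{\text{out}}}$, only $s_{\text{out}}$ is marked, $d(s_{\text{out}}) = 0$, and no other $\target$-vertex can be at distance $0$ because the last edge of any path ending in a $\target$-vertex has length $1$. Note also that the assignment $Q_{j+1} \gets \emptyset$ at the top of iteration $j$ is vacuous, since during iteration $j-1$ only $Q_{j-1}$ and $Q_j$ are modified.

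For the inductive step I would view iteration $j$ as a breadth-first search restricted to length-$0$ edges and started from $Q_j = D_j \cap \target$. When a vertex $v$ with $d(v)=j$ is processed and an unmarked neighbour $w$ is found along $(v,w)$: if $w\notin\target$ the edge has length $0$, so $d(w)\le j$, and $d(w)<j$ is impossible since then $w$ would already have been enqueued into $Q_{d(w)}$ during iteration $d(w)$ and thus marked, whence $d(w)=j$ and $w$ is (correctly) added to $Q_j$; if $w\in\target$ the edge has length $1$, so $d(w)\le j+1$, and $d(w)\le j$ is impossible since a $\target$-vertex at distance at most $j$ is already marked, whence $d(w)=j+1$ and $w$ is added to $Q_{j+1}$. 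Conversely every $u\in D_j\setminus\target$ is reached: by the first distance fact its nearest preceding $\target$-vertex $t$ has $d(t)=j$, so $t\in Q_j$ initially, and the length-$0$ path from $t$ to $u$ is traced step by step by the inner loop (each vertex on it is unmarked when first met, by the argument above), so $u$ enters $Q_j$; likewise every $t'\in D_{j+1}\cap\target$ has, by the second fact, an in-neighbour $v$ with $d(v)=j$, which is in $Q_j$ at the start of iteration $j$ or is added to $Q_j$ during it, and when $v$ is processed the edge $(v,t')$ puts $t'$ into $Q_{j+1}$ unless $t'$ was already marked --- which within iteration $j$ can only have happened through $t'$ being put into $Q_{j+1}$ earlier. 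Hence after iteration $j$ we have $Q_{j+1} = D_{j+1}\cap\target$, the marking description holds with $j$ replaced by $j+1$, and (if $s_{\text{in}}$ was not removed, i.e.\ $s_{\text{in}}\notin D_j$) the induction continues; the three assertions of the lemma are precisely the statements about $Q_j$ contained in this invariant.

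The step I expect to be the main obstacle is the bookkeeping inside a single iteration $j$: one must guarantee that the inner while-loop never places a vertex of the wrong distance into $Q_j$ or $Q_{j+1}$, and the most economical way to exclude this is the marking invariant, which is why I prove it jointly with the queue invariant rather than afterwards. A smaller point to handle is that a vertex $t'\in D_{j+1}\cap\target$ may be reached several times as the head of a length-$1$ edge during iteration $j$; the ``$w$ not marked'' guard ensures it is added to $Q_{j+1}$ exactly once, in agreement with $Q_{j+1}=D_{j+1}\cap\target$.
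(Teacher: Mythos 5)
Your proof is correct and follows essentially the same route as the paper's: induction over the iterations of the for-loop, using that a vertex outside $\target$ has distance $j$ exactly when it is reachable from a distance-$j$ vertex of $\target$ via zero-length edges, and that a $\target$-vertex has distance $j+1$ exactly when it has an in-neighbour at distance $j$. The explicit marking invariant you carry along is just a more careful formalization of bookkeeping the paper leaves implicit, so no substantive difference.
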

\begin{proof}
The proof is by induction over the iterations of the for-loop.
Before the first iteration ($j=0$), $Q_0$ is initialized with $s_{\text{out}}$
and all queues $Q_j$ for $j > 0$ are empty, 
thus the induction base holds. Assume the claim holds before the $j$th 
iteration.
At the end of the while-loop, $Q_j$ is empty; every vertex $v$ that was added 
to $Q_j$ before or in the $j$th iteration of the for-loop is removed from $Q_j$
in some iteration of the while-loop. Then all the unmarked vertices $w$ with 
$(v, w) \in E$ are marked and added to $Q_j$ if the edge $(v, w)$ has length 
zero or added to $Q_{j+1}$ if the edge $(v, w)$ has length one.
A vertex $u \in V \setminus \target$ with distance
at least~$j$ from $s_{\text{out}}$ has distance exactly $j$ if and only if 
it can be reached from some vertex $v \in \target$ that has distance $j$
by a sequence of zero length edges. The while-loop precisely adds these vertices
to $Q_j$. Further, a vertex $u \in V \cap \target$ has distance $j+1$ if and 
only if it has an edge from some vertex $v \in V$ that has distance $j$.
The while-loop adds exactly these vertices to~$Q_{j+1}$.
\end{proof}

 % conclusion
\section{Conclusion}\label{sec:conclusion}
In this work we present improved algorithms and the first conditional
super-linear lower bounds for several fundamental model-checking 
problems in graphs and MDPs w.r.t.\ to $\omega$-regular objectives. 
Our results establish the first model separation results for graphs
and MDPs w.r.t.\ to classical $\omega$-regular objectives, and first 
objective separation results both in graphs and MDPs for dual objectives,
and conjunction and disjunction of same objectives.
An interesting direction of future work is to consider similar results
for other models, such as, games on graphs.

\section*{Acknowledgments.}
K.~C.\ and M.~H.\ are supported by the Austrian Science Fund (FWF): P23499-N23.
K.~C.\ is supported by S11407-N23 (RiSE/SHiNE), an ERC Start Grant (279307: Graph Games), and a Microsoft Faculty Fellows Award.
For W.~D., M.~H., and V.~L.\ the research leading to these results has received funding from the 
European Research Council under the European Union's Seventh Framework Programme 
(FP/2007-2013) / ERC Grant Agreement no. 340506.
\bibliographystyle{plain}

\end{document}